\documentclass[11pt]{article}
\usepackage[margin=1in]{geometry}
\usepackage{amsmath, amssymb, amsfonts, braket, complexity, enumerate, mathrsfs, tikz, caption, subcaption, xcolor, braket, dirtytalk, bm, bbm, float}
\usepackage{authblk}
\usepackage{mathtools}
\usepackage{amsthm}
\usepackage{thm-restate}
\usepackage{tabularx}

\usepackage[linesnumbered,ruled,vlined]{algorithm2e}
\SetKwInput{KwInput}{Input}
\SetKwInput{KwOutput}{Output}
\SetKwFor{RepTimes}{repeat}{times}{end}

\usepackage[pagebackref,colorlinks=true,citecolor=blue,linkcolor=magenta,bookmarks=true]{hyperref}
\usepackage{orcidlink}
\usepackage[nameinlink,capitalize]{cleveref}
\usepackage{enumitem}
\usepackage{import}

\newcommand{\wt}[1]{\widetilde{#1}}

\newcommand{\F}{\mathbb{F}}
\renewcommand{\hat}{\widehat}

\theoremstyle{plain}
\newtheorem{theorem}{Theorem}[section]
\newtheorem{corollary}[theorem]{Corollary}
\newtheorem{definition}[theorem]{Definition}
\newtheorem{proposition}[theorem]{Proposition}
\newtheorem{lemma}[theorem]{Lemma}
\newtheorem{claim}[theorem]{Claim}
\newtheorem{fact}[theorem]{Fact}

\newtheorem{remark}[theorem]{Remark}

\newtheorem{question}[theorem]{Question}

\newcommand{\eps}{\varepsilon}
\newcommand{\ri}{\mathrm{i}}
\newcommand{\rd}{\mathrm{d}}

\newcommand{\sgn}{\func{sgn}}

\renewcommand{\R}{\mathbb{R}}
\renewcommand{\C}{\mathbb{C}}
\newcommand{\N}{\mathbb{N}}
\newcommand{\majorana}{\gamma}

\newcommand{\identity}{\mathbb{I}}

\newcommand{\testproj}{\Pi_{\mathrm{test}}}

\newcommand{\symproj}{\Pi_{\mathrm{sym}}}

\newcommand{\SWAP}{\mathrm{SWAP}}
\newcommand{\CNOT}{\mathrm{CNOT}}
\newcommand{\CZ}{\mathrm{CZ}}

\newcommand{\indic}[1]{\bm{1}\{#1\}}

\newcommand{\beamsplitter}{U_{\mathrm{BS}}}
\newcommand{\coherentstateset}{\mathrm{Coh}}

\newcommand{\fubiniStudyDistance}{\mathsf{dist}_{\mathrm{FS}}}
\newcommand{\fubiniStudyMetric}{g_{\mathrm{FS}}}

\newcommand{\fubiniStudyLineElement}{ds_{\mathrm{FS}}}

\DeclareMathOperator{\argmax}{arg\,max}

\newcommand{\fermionicgaussianset}{\mathcal{G}_{\mathrm{f}}}
\newcommand{\slaterdeterminantset}{\mathcal{S}}

\newcommand{\bosonicgaussianset}{\mathcal{G}_{\mathrm{b}}}
\newcommand{\zeromeanbosonicgaussianset}{\mathcal{G}_{\mathrm{b},0}}

\newcommand{\zeromeanbosonictestproj}{\Pi_{\mathrm{test},\zeromeanbosonicgaussianset}}
\newcommand{\bosonictestproj}{\Pi_{\mathrm{test},\bosonicgaussianset}}
\newcommand{\coherenttestproj}{\Pi_{\mathrm{test},\coherentstateset}}
\newcommand{\fermionicgaussiantestproj}{\Pi_{\mathrm{test}, \fermionicgaussianset}}
\newcommand{\slatertestproj}{\Pi_{\mathrm{test}, \slaterdeterminantset}}

\newcommand{\accept}{\textup{\texttt{ACCEPT}} }
\newcommand{\reject}{\textup{\texttt{REJECT}} }

\newcommand{\dist}{\func{dist}}

\renewcommand{\Pr}{\mathop{\bf Pr\/}}
\newcommand{\Ex}{\mathop{\bf E\/}}

\renewcommand\Re{\mathop{\mathrm{Re}}}

\newcommand{\linspan}{\mathop{\mathrm{span}}}
\renewcommand{\ker}{\mathop{\mathrm{ker}}}
\newcommand{\spec}{\mathrm{spec}}

\newcommand{\tr}{\mathrm{tr}}

\DeclareMathOperator{\sech}{sech}
\DeclareMathOperator{\rank}{rank}

\newcommand{\calA}{\mathcal{A}}

\newcommand{\calC}{\mathcal{C}}

\newcommand{\calH}{\mathcal{H}}

\newcommand{\calK}{\mathcal{K}}

\newcommand{\calP}{\mathcal{P}}

\newcommand{\calV}{\mathcal{V}}

\newcommand{\calY}{\mathcal{Y}}

\newcommand{\ketbra}[2]{\ket{#1}\!\!\bra{#2}}
\renewcommand{\ket}[1]{\left|#1\right\rangle}

\DeclareMathOperator{\genlingrp}{GL}
\DeclareMathOperator{\spllingrp}{SL}
\DeclareMathOperator{\orthgrp}{O}
\DeclareMathOperator{\splorthgrp}{SO}
\DeclareMathOperator{\unitgrp}{U}

\DeclareMathOperator{\symplgrp}{Sp}
\DeclareMathOperator{\spingrp}{Spin}

\DeclareMathOperator{\spllinalg}{\mathfrak{sl}}
\DeclareMathOperator{\genlinalg}{\mathfrak{gl}}
\DeclareMathOperator{\splunitalg}{\mathfrak{su}}

\newcommand{\diag}{\mathrm{diag}}

\DeclareMathOperator{\Jacobi}{J}

\DeclareMathOperator{\Span}{Span}
\DeclareMathOperator{\Sym}{Sym}
\DeclareMathOperator{\Mp}{Mp}

\DeclareMathOperator{\Heisenberg}{H}

\DeclareMathOperator{\Orth}{O}

\DeclareMathOperator{\Stab}{Stab}
\newcommand{\transpose}{\mathrm{T}}

\DeclarePairedDelimiter{\braces}{\{}{\}}
\DeclarePairedDelimiter{\bracket}{[}{]}
\DeclarePairedDelimiter{\abs}{\lvert}{\rvert}
\DeclarePairedDelimiter{\norm}{\lVert}{\rVert}
\DeclarePairedDelimiter{\paren}{(}{)}
\DeclarePairedDelimiterX{\bracebar}[2]{\{}{\}}{#1 \,\delimsize\vert\, #2}

\newcommand{\ReynoldsOp}{\mathcal R}

\newcommand{\lojasiewicz}{{\L}ojasiewicz}

\title{Optimal testing of fermionic and bosonic Gaussian states}

\date{\today}

\author[1,2,3]{Vishnu Iyer\,\orcidlink{0000-0001-8072-1390}}
\author[2]{Bryan O'Gorman\,\orcidlink{0000-0001-5164-8083}}
\author[4]{Ojas Parekh\,\orcidlink{0000-0003-2689-9264}}
\author[4]{Kevin Thompson\,\orcidlink{0000-0001-5669-2200}}
\author[5]{Andrew Zhao\,\orcidlink{0000-0002-0299-0277}}

\affil[1]{University of Texas at Austin, Austin, TX}
\affil[2]{IBM Research, Cambridge, MA}
\affil[3]{QuICS, University of Maryland, College Park, MD}
\affil[4]{Sandia National Laboratories, Albuquerque, NM}
\affil[5]{Sandia National Laboratories, Livermore, CA}

\usepackage{todonotes}
\usepackage{comment}

\numberwithin{equation}{section}

\begin{document}

\maketitle

\begin{abstract}
We study the problem of property testing quantum states: given a pure state $|\psi\rangle$ that either (A) belongs to some class $\mathcal{C}$ of pure states, or (B) is $\varepsilon$-far in trace distance from all states in $\mathcal{C}$, determine which is the case with high probability. It is known that there exists classes $\mathcal{C}$ for which the sample complexity of property testing necessarily scales with the size of the system. In this work, we identify classes for which testing only requires a size-independent number of samples. We prove a tight sample complexity of $\Theta(1/\varepsilon^2)$ for the property testing of five important classes of Gaussian quantum states:
\begin{enumerate}
    \item Fermionic Gaussian states
    \item Slater determinants
    \item Bosonic Gaussian states
    \item Zero-mean bosonic Gaussian states
    \item Bosonic coherent states
\end{enumerate}
While property testers for some of these classes have been studied previously, those analyses only gave upper bounds scaling polynomially in the number of modes. In each case, our tester is the projection onto the ``top'' irrep of two or three copies. We get to constant sample complexity by lower bounding how quickly the rejection probability increases with distance from the class; the main technical component involves bounding the spectral gap of symmetric extension of the test projector. We then establish optimality by proving matching lower bounds based on binary state discrimination for each class. Our testers and optimality claims also extend to the tolerant setting, wherein $|\psi\rangle$ may be $O(\varepsilon)$-close to $\mathcal{C}$ in case (A), rather than lying exactly inside.
\end{abstract}

\clearpage

\tableofcontents

\section{Introduction}

Quantum property testing is the task of deciding whether an input state possesses some structural property of interest~\cite{montanaro2013survey}. More precisely, given a class of states $\calC$ and copies of an input state $\rho$, the goal is to distinguish, with high probability, the case in which $\rho \in \calC$ from the case in which $\rho$ is $\eps$-far from every state in $\calC$ (say, in trace distance). Of course, one desires to accomplish this as efficiently as possible, especially with respect to the number of copies of $\rho$ consumed. This problem arises naturally in the context of learning theory~\cite{mele2025efficient}, verification~\cite{aolita2015reliable}, sensing~\cite{apellaniz2017optimal}, and more \cite{montanaro2016survey}.
Beyond such applications, property testing also probes the very nature of the structure of quantum states: how much quantum data is required to certify the presence of a global property or symmetry within a state of very large dimension?

In the absence of additional structure, there is no reason to expect the copy complexity of property testing to be independent of the dimension \cite{odonnell2015quantum,buadescu2019quantum}. This is not surprising, as arbitrary classes can encode essentially all the information of their constituent states without any unifying structure. As such, deciding membership in the class can be as expensive as full state tomography itself, which scales exponentially in system size \cite{odonnell2016efficient,haah2016sample}. Nonetheless, for sufficiently structured families, property testing can be dramatically easier than learning the input state \cite{harrow2013testing,montanaro2016survey,gross2021schur}.
This allows for the possibility that checking membership in the class can be performed with a number of copies that is not just reasonable, but completely independent of the size of the input state.

Only a handful of families are known to exhibit this phenomenon. Product states can be tested via a two-copy protocol due to Harrow and Montanaro \cite{harrow2013testing}, and stabilizer states can be tested via a six-copy protocol due to Gross, Nezami, and Walter \cite{gross2021schur}. So far, these are the only nontrivial classes of states that have been demonstrated to possess property testers with constant sample complexity. These state classes also exemplify a remarkable balance: although they possess enough structure to be tested very efficiently, product states and (especially) stabilizer states retain enough expressiveness to drive a myriad of algorithmic, cryptographic, and information-theoretic applications \cite{hein2006entanglement}. This motivates the central question behind this paper:

\begin{question}
    What other interesting classes of quantum states can be tested with $O(1)$ sample complexity?
\end{question}

To explore this question, we will focus on fermionic and bosonic systems. In particular, we study the property testing problem for four important classes of states:
 
\begin{itemize}
    \item \textbf{Slater determinants} are fermionic analogues of product states \cite{slater1929theory}.  In second quantization, they are formed by populating a set of orthonormal single-particle orbitals, with one fermion in each mode.  Slater determinants admit an efficient classical description: they are uniquely specified by their one-particle reduced density matrix.  These states form a key ensemble across physics and chemistry;  for example, the Hartree-Fock method makes a mean-field approximation of interacting fermionic systems in terms of Slater determinants \cite{helgaker2013molecular}, serving as the first step for virtually all electronic-structure methods.
    
    \item \textbf{Fermionic Gaussian states} are a generalization of Slater determinants beyond fixed particle number or purity \cite{bardeen1957theory,bloch1962canonical}.  They are similarly described by an efficient classical representation: a covariance matrix encoding all fermionic one-body observables. Originally introduced to explain the pairing nature of BCS theory \cite{bardeen1957theory}, these days many other physical systems of electrons and nuclei are also understood using fermionic Gaussian states, such as non-interacting metals and topological insulators \cite{ring2004nuclear,coleman2015introduction}. This is related to the generalization of Hartree-Fock to fermionic Gaussian states, which can enable more accurate mean-field approximations in certain systems \cite{bach1994generalized}.
    
    \item \textbf{Bosonic Gaussian states} are multimode states whose Wigner functions are classical Gaussian distributions \cite{robinson1965ground,manuceau1968quasi}.  Equivalently, they are the Gibbs states of quadratic bosonic Hamiltonians, and they are also efficiently described by a covariance matrix and mean vector.  Gaussian states are a central class within the modern theory quantum optics \cite{glauber1963coherent}. Within quantum information \cite{weedbrook2012gaussian}, they are turnkey resources for many well-regarded proposals for fault-tolerant quantum computation \cite{menicucci2006universal}, boson sampling \cite{hamilton2017gaussian}, and quantum sensing \cite{caves1981quantum}. They have also been used as mean-field ansatz for cold atoms and other physical systems of interest \cite{jaksch1998cold}.
    
    \item \textbf{Coherent states} are a subset of bosonic Gaussian states that can be described extremely succinctly: they are displacements of the vacuum in phase space \cite{schrodinger1926stetige,glauber1963coherent}. Equivalently, this class can be described as all pure bosonic Gaussian states with zero squeezing. Coherent states constitute the standard quantum description of laser light, and therefore underpin much of modern optical communication and photonic technology \cite{mandel1996optical,zhang1990coherent}. Besides optics, coherent states have also been used to describe superfluidity \cite{langer1968coherent} and lattice vibrations via phonons \cite{lasher1968coherent}.
\end{itemize}

For each class above, we give an algorithm that distinguishes pure states in the class from pure states that are $\eps$-far in trace distance from all states in the class with sample complexity $O(1/\eps^2)$, independent of the number of modes. In addition, we prove matching lower bounds, and so our algorithms are asymptotically optimal. All of our algorithms are computationally efficient and only act on a few entangled copies of the input state at a time.

Besides revealing useful structural properties, our results may also enable a number of useful applications for testing Gaussianity.  
For example, physical systems known to exhibit Gaussian behavior \cite{ring2004nuclear,coleman2015introduction}, or quantum circuits designed to produce Gaussian states \cite{dallaire2019low, menicucci2006universal}, could be tested to determine if there are undesired interactions in the physical hardware. 
Determining the Gaussianity of a state also has implications for its resourcefulness toward universal quantum computation \cite{hebenstreit2019all}.
Gaussian state testing can also be used to determine if a physically accessible system can be effectively described by Gaussian theory, or if it otherwise possesses some ``hidden'' Gaussian structure \cite{lieb1961two,schultz1964two,kitaev2006anyons, fendley2019free,elman2021free}.
This is especially useful for systems where the ground state is only approximately Gaussian, since in those cases there might not be any exact mapping to a Gaussian theory.
In this case we may be able to find interesting physical systems with succinct classical descriptions. 

\subsection{Main results}

We begin with a formal definition of quantum property testing. We restrict our attention throughout to testing properties of \emph{pure} quantum states, and use the trace distance $\dist(\cdot, \cdot)$ as the metric.

\begin{definition}[Quantum property testing]\label{def:quantum-state-testing}
Let $\calH$ be a Hilbert space and $\mathcal C$ a class of pure states on $\calH$.
For $0 \leq \eps_1 < \eps_2$, $0 \leq \delta < \frac{1}{2}$, and integer $t$, a $t$-copy $(\eps_1, \eps_2, \delta)$-tester for $\mathcal C$ is an algorithm $\mathcal A: \Sym^t(\calH) \to \braces*{\accept, \reject}$ such that for all $\ket{\psi} \in \calH$,
\begin{enumerate}
    \item if $\dist\paren*{\ket{\psi}, \mathcal C} \leq \eps_1$, then 
        $\Pr\bracket*{\mathcal A\paren*{\ket{\psi}^{\otimes t}} = \accept} \geq 1 - \delta$; and
    \item if $\dist\paren*{\ket{\psi}, \mathcal C} \geq \eps_2$, then 
        $\Pr\bracket*{\mathcal A\paren*{\ket{\psi}^{\otimes t}} = \accept} \leq \delta$.
\end{enumerate}
\end{definition}

\begin{remark}[Tester as a projector]
Without loss of generality, we may assume the algorithm $\calA$ has the form of an orthogonal projector $\testproj$ on $\Sym^t(\calH)$ (possibly appended with some ancillary space). That is, the algorithm performs the two-outcome measurement $\braces*{\testproj, \identity - \testproj}$, accepting outcome $\testproj$ and rejecting on outcome $\identity - \testproj$.
\end{remark}

Our main result is that for both fermionic and bosonic gaussian states, and important subclasses thereof, there are tolerant testers with size-independent sample complexity. Furthermore, all of these testers act on at most two or three copies at a time, and they are computationally efficient.

\begin{theorem}[Testing upper bounds]\label{thm:general-testing-upper-bound}
    Let $\mathcal C$ be one of the following classes of pure states:
\begin{itemize}
    \item Fermionic Gaussian states,
    \item Slater determinants,
    \item Bosonic Gaussian states,
    \item Zero-mean bosonic Gaussian states,
    \item Coherent states,
\end{itemize}
each as a subset of the appropriate (fermionic or bosonic) Fock space.
Then there are constants $\beta \geq 1$ and $k \in \{2,3\}$, dependent on the class $\mathcal C$, such that for any $\delta \in (0, \frac{1}{2})$ and $0 \leq \eps_1 < \eps_2 < 1$ with $\eps_1 < \eps_2/\beta$, there is a 
$t$-copy $(\eps_1, \eps_2, \delta)$-tester for $\mathcal C$, acting on $k$ copies of the state at a time, with
\begin{equation}
    t = O\paren*{\frac{\log(1/\delta)}{\paren*{\eps_2 - \beta \eps_1}^2}}.
\end{equation}
\end{theorem}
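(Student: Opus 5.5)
The plan is to reduce every class to a single structural statement about a $k$-copy test projector $\testproj$: the projector onto the ``top'' irrep inside $\Sym^k(\calH)$ of the symmetry group naturally attached to the class. For fermionic Gaussian states this is $\spingrp(2n)$ or $\orthgrp(2n)$ acting on two copies, i.e.\ the kernel of the quadratic Casimir $\sum_{j}\majorana_j\otimes\majorana_j$. For Slater determinants it is $\unitgrp(n)$ at fixed particle number. For bosonic Gaussian states it is the Heisenberg--metaplectic group on three copies. For zero-mean bosonic Gaussian states and coherent states it is the metaplectic group, respectively the Heisenberg group, on two copies (for coherent states, essentially a beam-splitter followed by a vacuum check on the difference mode).

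The first step is \emph{perfect completeness}: every state in $\calC$ lies in an orbit of a highest-weight vector, so $\ket{\psi}^{\otimes k}$ lies in the top irrep and is accepted with probability $1$.

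The second step is \emph{soundness with linear growth}. Write $p_{\mathrm{acc}}(\psi)=\bra{\psi}^{\otimes k}\testproj\ket{\psi}^{\otimes k}$. I aim to prove that there is a constant $c>0$, independent of the number of modes, such that
\begin{equation}
1-p_{\mathrm{acc}}(\psi)\ \ge\ c\,\dist\paren*{\ket{\psi},\calC}^2 .
\end{equation}
To get this, I expand $\ket{\psi}$ around its closest point $\ket{\phi}\in\calC$. Using the group action, I move $\ket{\phi}$ to a reference state (the vacuum), so that $\ket{\psi}=\sqrt{1-\eta}\ket{\phi}+\sqrt{\eta}\ket{\phi^\perp}$ with $\ket{\phi^\perp}$ orthogonal to the tangent space of the orbit at $\ket{\phi}$; this orthogonality holds by first-order optimality of $\ket{\phi}$.

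The main obstacle is to show that the rejection amplitude produced by the cross terms $\ket{\phi}^{\otimes(k-1)}\otimes\ket{\phi^\perp}$ and the higher terms is bounded below by a constant times $\eta$ uniformly in $n$. This amounts to a spectral gap statement. I would lift $\testproj$ to its symmetric extension on $\Sym^k$ and express $\identity-\testproj$ via the Casimir operator $C$. Then I would show that $C$, restricted to vectors of the form $\ket{\phi}^{\otimes(k-1)}\ket{\chi}$ with $\chi\perp T_\phi\calC$, has eigenvalue gap $\Omega(1)$ from the top-irrep value. Concretely, I would decompose $\ket{\chi}$ by ``excitation number'' relative to the vacuum, with $4$-, $6$-, $\dots$ Majorana or creation operators (or higher-order non-quadratic terms for bosons). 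I would then compute that the Casimir acts on each sector with an $n$-independent gap, since it only sees the pairing structure and not the number of modes. Controlling the mixed terms where several copies deviate is where I expect the real difficulty. I would bound them with a Cauchy--Schwarz plus \lojasiewicz-type argument, or restrict to small $\eta$ and handle large $\eta$ by continuity/compactness-free monotonicity.

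The third step is \emph{tolerance and amplification}. If $\dist(\psi,\calC)\le\eps_1$, then $1-p_{\mathrm{acc}}\le C'\eps_1^2$ by the trace-distance continuity of a $k$-copy measurement. If $\dist\ge\eps_2$, then $1-p_{\mathrm{acc}}\ge c\,\eps_2^2$. With $\beta=\sqrt{C'/c}$ these two regimes are separated by $\Omega((\eps_2-\beta\eps_1)^2)$, up to adjusting constants via $\eps_2^2-\beta^2\eps_1^2\ge(\eps_2-\beta\eps_1)^2$. Repeating the $k$-copy test $O(\log(1/\delta)/(\eps_2-\beta\eps_1)^2)$ times and thresholding the empirical rejection rate at the midpoint then gives the claimed $t$ by a Chernoff--Hoeffding bound.
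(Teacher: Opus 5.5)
\paragraph{Main gap: soundness far from the class.} Your Step~2 contains no mechanism that works uniformly in $n$ for states at constant distance from $\calC$. Expanding around the nearest point $\ket{\phi}$ and bounding the one-deviation term $\mathrm{Sym}(\ket{\phi}^{\otimes(k-1)}\otimes\ket{\phi^\perp})$ gives at best $1-p_{\mathrm{acc}}\ge c\,\eta$ for $\eta\le\eta_0$. In that regime the terms with several deviating copies are only $O(\eta^{3/2})$ and are harmless.

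For $\eta$ of order one you need an $n$-independent constant rejection probability. Neither tool you name supplies it:
\begin{itemize}
\item Compactness is unavailable in bosonic Fock space, and for fermions it only gives $n$-dependent constants.
\item There is no known monotonicity of $p_{\mathrm{acc}}$ along the geodesic from $\ket{\phi}$ to $\ket{\psi}$; ``continuity/compactness-free monotonicity'' is not an argument.
\end{itemize}

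The local step is not secured either. The vectors $\ket{\phi}^{\otimes(k-1)}\ket{\chi}$ do not form an invariant subspace, so a ``Casimir eigenvalue gap'' restricted to them is not defined. Moreover, turning a lower bound on $\langle v|(C_{\mathrm{top}}-C)|v\rangle$ into one on $\langle v|(\identity-\testproj)|v\rangle$ requires dividing by the largest Casimir difference. That difference grows with $n$, and is unbounded for bosons.

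\paragraph{What the paper does instead.} The paper avoids the nearest-point expansion altogether. It computes the spectral gap $\Delta$ of $\Xi=\symproj\testproj\symproj$ on $m>k$ registers exactly ($m=3$ for the two-copy tests, $m=7$ for the three-copy bosonic test). It then uses the identity
\[
\bra{\psi}^{\otimes m}T^2\ket{\psi}^{\otimes m}=\sum_j\binom{k}{j}\binom{m}{j}^{-1}a_j ,
\]
where $T=\symproj(\identity-\testproj)\symproj$ and $a_j$ is the weight of $(\identity-\testproj)\ket{\psi}^{\otimes k}$ on components with $j$ registers orthogonal to $\ket{\psi}$. This yields a \lojasiewicz{} inequality $\|\nabla q\|^2\ge 4q(c_0-c_1q)$ at \emph{every} state, which is nontrivial once $\Delta>k(k-1)/(m(m-1))$. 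Integrating it along the gradient flow of $q$ gives $q\ge c'\eps^2$ globally, with no small-distance restriction. The class-specific work is then computing these exact spectra, not a Casimir gap on tangent-orthogonal vectors.

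\paragraph{Secondary issues.} For Slater determinants the class lives in the full Fock space, so you must also handle superpositions of particle numbers. The paper does this by showing $\ker G$ is trivial on sectors $\calH_a\otimes\calH_b$ with $a>b$ (from $[G,G^\dagger]=N_2-N_1$) and then reducing to fixed particle number. Your ``$\unitgrp(n)$ at fixed particle number'' does not address this.

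In Step~3, two justifications are wrong even though the conclusions are right.
\begin{itemize}
\item Trace-distance continuity only gives $1-p_{\mathrm{acc}}\le\sqrt{1-(1-\eps_1^2)^k}=O(\eps_1)$. The quadratic bound instead comes from $\testproj\succeq(\ketbra{\phi}{\phi})^{\otimes k}$, which gives $1-p_{\mathrm{acc}}\le 1-(1-\eps_1^2)^k\le k\eps_1^2$.
\item A mean separation of order $(\eps_2-\beta\eps_1)^2$ combined with additive Hoeffding only yields $t=O((\eps_2-\beta\eps_1)^{-4})$. You must exploit that the means are themselves $O(\eps^2)$, either via a multiplicative Chernoff bound or via the paper's Bernoulli lemma, whose rate is $(\sqrt{p_2}-\sqrt{p_1})^2\asymp(\eps_2-\beta\eps_1)^2$.
\end{itemize}
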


In fact, only the full class of bosonic Gaussian states require $k = 3$; the other four testers measure $k = 2$ copies at a time. These results are displayed in \cref{table:results}.

\begin{table}
    \centering
    \begin{tabular}{l c c c l}
        \noalign{\hrule height 0.7pt}
        \textbf{Class of states}
        & \textbf{Theorem}
        & \textbf{Copies}
        & \textbf{Rounds}
        & \textbf{Previous upper bound} \\
        \hline
        \noalign{\vskip 1pt}
        Fermionic Gaussian
        & \cref{thm:testing-fermionic-gaussians}
        & $2$
        & $\Theta(1/\eps^2)$
        & $\widetilde{O}(n/\eps^2)$~\cite{leone2026fermionicentropy,haug2026practical}
        \\
        Slater determinant
        & \cref{thm:testing-slater}
        & $2$
        & $\Theta(1/\eps^2)$
        & $\widetilde{O}(n/\eps^2)$~\cite{leone2026fermionicentropy,haug2026practical}
        \\
        Bosonic Gaussian
        & \cref{thm:bosonic-gaussian-tester}
        & $3$
        & $\Theta(1/\eps^2)$
        & $O(n^4/\eps^2)$~\cite{girardi2025gaussiantesting}
        \\
        Zero-mean bosonic Gaussian
        & \cref{thm:testing-zero-mean-bosonic-gaussians}
        & $2$
        & $\Theta(1/\eps^2)$
        & $O(n^4/\eps^2)$~\cite{girardi2025gaussiantesting}
        \\
        Coherent
        & \cref{thm:testing-coherent-states}
        & $2$
        & $\Theta(1/\eps^2)$
        & $O(n/\eps^2)$
        \\
        \noalign{\hrule height 0.7pt}
    \end{tabular}
    \caption{Classes of states for which we obtain property testers with sample complexity $\Theta(1/\eps^2)$ along with the associated theorems in the manuscript. We specify the number of copies jointly measured in each round of the tester, and each of our algorithms consists of $\Theta(1/\eps^2)$ rounds. In the last column, we contrast our bounds to previous state-of-the-art upper bounds ($n$ is the number of fermionic/bosonic modes). The upper bound for testing coherent states is implicit in the literature, as coherent states can be learned via $O(n/\eps^2)$ homodyne measurements. Matching lower bounds for all classes are proved in \cref{sec:lower_bounds}.}
    \label{table:results}
\end{table}

\begin{remark}
One could also consider the task of testing these classes with some additional promise on the input.
For example, consider the task of testing whether or not a given $n$-mode, $\eta$-fermion state is (or is close to) a Slater determinant or not, as compared to the same task for a general $n$-mode state.
Our results are stronger in that they automatically encompass such promises but do not require them.
At best, additional promises on the input could only improve the constant factors.
\end{remark}

We also give lower bounds demonstrating that our result is essentially optimal.

\begin{theorem}[Testing lower bounds]\label{thm:general-testing-lower-bound}
Let $\mathcal C$ be one of the classes listed in \cref{thm:general-testing-upper-bound}. If there is a $t$-copy $(\eps_1, \eps_2, \frac{1}{3})$-tester for $\mathcal C$, then $t = \Omega\paren*{\frac{1}{\paren*{\eps_2 - \eps_1}^2}}$.
\end{theorem}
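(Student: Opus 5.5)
The plan is to reduce testing to binary pure-state discrimination. For each class $\mathcal C$ we construct two pure states: $\ket{\phi_0}$ with $\dist(\ket{\phi_0},\mathcal C)\le \eps_1$, and $\ket{\phi_1}$ with $\dist(\ket{\phi_1},\mathcal C)\ge \eps_2$, such that $\abs{\braket{\phi_0|\phi_1}}^2 \ge 1 - O((\eps_2-\eps_1)^2)$. A $t$-copy $(\eps_1,\eps_2,\frac13)$-tester then distinguishes $\ket{\phi_0}^{\otimes t}$ from $\ket{\phi_1}^{\otimes t}$ with success probability at least $2/3$. By Helstrom, this requires
\[
\sqrt{1-\abs{\braket{\phi_0|\phi_1}}^{2t}}\ \ge\ \tfrac13,
\]
that is, $\abs{\braket{\phi_0|\phi_1}}^{2t}\le 8/9$. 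Substituting the overlap bound gives $t=\Omega(1/(\eps_2-\eps_1)^2)$. For pure states the trace distance is $\sqrt{1-\abs{\braket{\cdot|\cdot}}^2}$, which we use throughout.

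To build the pair, we fix a state $\ket{g}\in\mathcal C$ and a unit vector $\ket{v}$ that is far from $\mathcal C$. We then take the interpolating family $\ket{\phi(\theta)}=\cos\theta\ket{g}+\sin\theta\ket{v}$, with $\ket{v}\perp\ket{g}$ (or a suitable non-orthogonal version, normalized). Let $f(\theta)=\dist(\ket{\phi(\theta)},\mathcal C)$. Then $f(0)=0$ and $f$ is $1$-Lipschitz in $\theta$, since the trace distance between $\ket{\phi(\theta)}$ and $\ket{\phi(\theta')}$ is at most $\abs{\sin(\theta-\theta')}$. If $f$ reaches $\eps_2$ at some $\theta_2$, the intermediate value theorem gives $\theta_1\le\theta_2$ with $f(\theta_1)=\eps_1$. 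We choose the \emph{largest} such $\theta_1$ below $\theta_2$, so that $f$ rises from $\eps_1$ to $\eps_2$ on $[\theta_1,\theta_2]$. By the Lipschitz bound, $\theta_2-\theta_1\ge\eps_2-\eps_1$, but we need the reverse inequality $\theta_2-\theta_1=O(\eps_2-\eps_1)$, since the overlap is $\cos(\theta_2-\theta_1)$.

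This reverse inequality is where class-specific structure enters, and it is the main obstacle. I would choose $\ket{v}$ so that $f(\theta)\ge c\,\theta$ for a constant $c>0$ and all $\theta\in[0,\theta_{\max}]$, with $f(\theta_{\max})\ge\eps_2$. Combined with $f(\theta)\le\theta$, this linear sandwich makes the region where $f$ climbs from $\eps_1$ to $\eps_2$ short only if $f$ is nearly exactly linear. A cleaner route is to pick families where $f$ can be computed exactly, say $f(\theta)=\sin\theta$ up to constants. Then $\sin\theta_i=\eps_i$ (scaled), and $\theta_2-\theta_1=O(\eps_2-\eps_1)$ holds as long as $\eps_2<1-\Omega(1)$. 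Near $\eps_2\to 1$, a separate trivial constant lower bound suffices.

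The concrete candidates are as follows.
\begin{itemize}
\item Fermionic Gaussian states and Slater determinants: on four modes, take $\ket{g}=\ket{0000}$ and $\ket{v}=\ket{1111}$. Then $\cos\theta\ket{0000}+\sin\theta\ket{1111}$ has computable maximal overlap with Gaussian states. Using the pairing structure, that overlap should equal $\max(\cos^2\theta,\sin^2\theta)$ or a closed form thereof. For Slater determinants, fix the particle number and use $\cos\theta\ket{1100}+\sin\theta\ket{0011}$.
\item Coherent and Gaussian states on one mode: take the vacuum plus a Fock-state admixture, e.g.\ $\cos\theta\ket0+\sin\theta\ket{3}$ for the Gaussian classes and $\ket{1}$ or $\ket{2}$ for coherent states. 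Then bound the fidelity with Gaussian states by $\cos^2\theta+O(\theta^3)$ or an exact expression.
\end{itemize}
Embedding into $n$ modes by tensoring with the vacuum keeps the distances unchanged, since partial overlap arguments with Gaussian marginals only decrease fidelity.

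The hard step is rigorously computing or lower-bounding $\dist(\ket{\phi(\theta)},\mathcal C)$ for the chosen family, uniformly in $\theta$, especially for bosonic classes where the optimization over Gaussian states is non-compact. For this I would first try symmetry (phase rotation invariance) to restrict the optimal Gaussian state to a small parametrized family and then optimize explicitly.
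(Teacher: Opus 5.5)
Your reduction matches the paper's. Helstrom forces $F^t\le 8/9$, and everything rests on a family $\ket{\Psi_\theta}=\cos\theta\ket{A}+\sin\theta\ket{B}$ with $\dist(\Psi_\theta,\calC)=\sin\theta$ exactly on a constant range of $\theta$. Your fermionic instances ($\ket{0000}$ versus $\ket{1111}$, and disjoint-support Fock states for Slater determinants) are also the paper's. The gap is that you never prove this exactness, and it is where all the class-specific work lies.

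For the fermionic classes the paper uses the criterion $\abs{\braket{A|\phi}}+\abs{\braket{B|\phi}}\le 1$ for all $\ket{\phi}\in\calC$. This gives $\abs{\braket{\phi|\Psi_\theta}}\le\cos\theta$ for $\theta\le\pi/4$, with equality at $\ket{A}$. The paper verifies the criterion as follows:
\begin{itemize}
\item Gaussian states: write four-mode states in Bloch--Messiah form $(u_1+v_1a_1^\dagger a_2^\dagger)(u_2+v_2a_3^\dagger a_4^\dagger)\ket{0^4}$ and apply Cauchy--Schwarz to get $\abs{u_1u_2}+\abs{v_1v_2}\le 1$. It reduces from $n$ modes to four because the four-mode marginal of a Gaussian state is a mixture of pure Gaussian states.
\item Slater determinants: it combines $\abs{\det u[A]}\le \tr(u[A]^\dagger u[A])/\eta$ with $u[A]^\dagger u[A]+u[B]^\dagger u[B]\preceq I$.
\end{itemize}
Your guess that the fidelity ``should equal $\max(\cos^2\theta,\sin^2\theta)$'' is the right value, but it is stated without any of this.

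For the bosonic classes, interpolating from the vacuum does not work as proposed.
\begin{itemize}
\item Coherent states with $\ket{1}$: $\ket{1}=a^\dagger\ket{0}$ is tangent to the class at the vacuum. The coherent state $\ket{\alpha}$ with $\alpha=\tan\theta$ has fidelity $1-\theta^4/2+O(\theta^6)$ with $\cos\theta\ket{0}+\sin\theta\ket{1}$. So the distance is $O(\theta^2)$, and this family can only certify $t=\Omega(1/\eps_2)$ in the intolerant case.
\item All Gaussian states with $\ket{3}$: the criterion above fails. For example, $\ket{\phi}\propto\exp(\tfrac{3}{20}a^{\dagger 2}+\tfrac{3}{10}a^\dagger)\ket{0}$ has $\abs{\braket{0|\phi}}+\abs{\braket{3|\phi}}\approx 1.03$. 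So the vacuum is not the nearest Gaussian state for $\theta$ near $\pi/4$. Even on a smaller range, getting $\dist=\sin\theta$ needs exactly the global optimization over the non-compact Gaussian manifold that you defer.
\item Zero-mean Gaussian states with $\ket{3}$: this does work, trivially, because these states have even parity and so $\braket{3|\phi}=0$.
\end{itemize}

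The missing idea is the paper's second interpolation lemma. Start from a far state $\ket{X}$ and compute only $q=\sup_{\phi\in\calC}\abs{\braket{\phi|X}}^2$. Let $\ket{A}$ be a maximizer and $\ket{B}\propto\ket{X}-\sqrt{q}\ket{A}$, and set $T=\arccos\sqrt{q}$. Then $\fubiniStudyDistance(\Psi_\theta,X)=T-\theta$, while $\fubiniStudyDistance(\phi,X)\ge T$ for every $\phi\in\calC$. The triangle inequality therefore gives $\fubiniStudyDistance(\phi,\Psi_\theta)\ge\theta$, i.e.\ $\dist(\Psi_\theta,\calC)=\sin\theta$ for $\theta\le T$, and $T\ge\pi/4$ once $q\le 1/2$. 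The paper's choices are:
\begin{itemize}
\item $\ket{X}=\ket{1,0^{n-1}}$, with $q=3\sqrt{3}/(4e)$ for Gaussian states and $q=e^{-1}$ for coherent states;
\item $\ket{X}=\ket{2,0^{n-1}}$, with $q=1/(3\sqrt{3})$ for zero-mean Gaussian states.
\end{itemize}
A single one-mode supremum thus replaces your family-wide optimization.

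Separately, a ``trivial constant bound'' does not handle $\eps_1<\eps_2$ that are both near $1$ and close together. The paper's argument only covers $\eps_2\le 1/\sqrt{2}$.
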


For intolerant testing ($\eps_1 = 0, \eps_2 \equiv \eps$), this proves the optimal sample complexity of $\Theta(1/\eps^2)$ with any constant probability for all of the classes considered. For tolerant testing, it is tight up to the constant $\beta$ appearing in the admissible gap between $\eps_1$ and $\eps_2$.

\subsection{Technical overview}\label{sec:technical_overview}

\paragraph{Upper bounds.} The classes we consider have well-known testers with perfect completeness. The main technical hurdle is to establish the soundness of the protocols. For all of our testers, we prove soundness by developing a new \say{symmetrized spectral gap} technique.
We overview the main steps of the argument below:

\begin{enumerate}
    \item\label{itm:proj_comm} \textbf{Project onto the commutant:}
    Fix a property $\calC$ and let $\Pi_{\mathrm{test}}$ be the projector onto the set $\{\ket{\psi}^{\otimes k}: \ket{\psi} \in \calC\}$. This projector will be the acceptance projector of our property testing algorithm. It can be easily argued that such a projector is complete, optimal and essentially unique among $k$-copy POVMs with perfect completeness~\cite{lovitz2026nearly,tarabunga2026fermionic}. 
    
    \item \textbf{Symmetrize:} For the purpose of the analysis, introduce an additional $m-k$ registers and let $\symproj^{(1,\ldots,m)}$ be the projector onto the symmetric subspace on all $m$ registers.
    Define the symmetrized test operator
    \[
    \Xi = \symproj^{(1,\ldots,m)}  \testproj^{(1,\ldots,k)}\symproj^{(1,\ldots,m)},
    \]
    where $\Pi_{\mathrm{test}}^{(1,\ldots,k)}$ is the projector from the previous step acting on the ``physical'' registers $1,\ldots,k$. The motivation for introducing the extra registers is the fact that we will take the gradient of the rejection probability $q(\psi) = 1 - \tr(\testproj \psi^{\otimes k})$ as a function of $\psi$; introducing extra registers and symmetrizing over them allows us more easily analyze the product rule unpacked within the trace. Note that we can safely symmetrize over all $m$ registers because the tester itself is already symmetric.
    
    \item \textbf{Prove a spectral gap:} It is clear that the maximum eigenvalue of $\Xi$ is $1$. Let $\lambda$ be the largest eigenvalue of $\Xi$ that is strictly less than $1$. The goal is to lower bound the gap $\Delta = 1 - \lambda$ by a function of $k$. This is the step that varies the most across different classes. For those considered in this work, there are nice Lie-group symmetries and so the eigenvalues of $\Xi$ can be analyzed from the representation theory of the commutant.

    \item \textbf{Gradient flow:} Define a gradient flow which begins at the input state $\ket{\psi}$ and moves in the direction of the class $\mathcal{C}$.  The length of the curve along this path is lower-bounded by a function of the distance between $\ket{\psi}$ and the class $\mathcal{C}$.  Meanwhile, the spectral gap of $\Xi$ implies a lower bound on the gradient of the curve along this path.  This yields an upper bound on the path length in terms of the failure probability $q(\psi)$ of the test.  Combining these bounds we can bound the distance between the input state and $\mathcal{C}$ in terms of this failure probability, which is precisely the soundness guarantee we need. In particular, if $\Delta = \Theta(1)$ is independent of dimension and larger than some $k$-dependent constant, then we get constant soundness.
\end{enumerate}

This general framework is applicable quite broadly, and we record the main technical conclusion as a master theorem in \cref{thm:rejection-infidelity-bounds}. Here, we now provide an overview the specific details involved in each of the classes that we consider in this paper.

\paragraph{Fermionic Gaussian states.} We analyze the following two-copy operator introduced by Bravyi \cite{bravyi2004lagrangian}, which has been considered for this problem in many previous studies:
\[
    \Lambda = \sum_{i=1}^n \left( c_i \otimes c_i^\dagger + c_i^\dagger \otimes c_i \right),
\]
where the $c_i$ and $c_i^\dagger$ are the fermionic lowering and raising operators, respectively. It is well-known that the set of states $\ket{\psi}^{\otimes 2}$ within $\ker \Lambda$ is exactly the class of pure Gaussian states, hence $\Pi_{\mathrm{test}}$ is the projector onto the kernel of $\Lambda$. For this class, we bound the spectral gap $\Delta = 5/12$ of $\Xi$ for $m=3$. We give two proofs of this spectral gap: one by explicitly computing the eigenbasis of $\Lambda$, and one by recognizing that $\Xi$ block-diagonalizes into the irreducible representations of tensor powers of the spin group.

\paragraph{Slater determinants.} For this class, we project onto the kernel of 
\[
K \coloneqq G^\dagger G, \quad \text{where } G \coloneqq \sum_{j=1}^n c_j \otimes c_j^\dagger.
\]
Note that this is related to the Bravyi operator via $\Lambda = G + G^\dagger$. However, it is Hermitianized by multiplication, rather than addition, as this ensures the resulting operator $K$ is particle-preserving in each register. Like $\Lambda$, it can be shown that $K\ket{\psi}^{\otimes 2} = 0$ if and only if $\ket{\psi}$ is a Slater determinant \cite{semenyakin2025classifying}. We also take $m = 3$ for this class, obtaining a spectral gap of $\Delta = 4/9$. Our analysis proceeds by diagonalizing an auxiliary operator sharing the eigenspectrum of $\Xi$ according to the irreducible representations of the general linear group in dimension $3$.

\paragraph{Bosonic Gaussian States.}

The $n$-mode bosonic Gaussian states are the orbit of a representation $U$ of the Jacobi group $\Jacobi(n) = \Heisenberg(n) \rtimes \Mp(2n)$.
The commutant of $U^{\otimes k}$ is $\Orth(k-1)$; this is a subgroup of the commutant $\Orth(k)$ for just $\Mp(2n)$ (corresponding to zero-mean Gaussian states).
The Gaussian states span the ``top'' irrep of $\Jacobi(n)$, which is paired with the trivial irrep of $\Orth(k-1)$.
For $k=2$, this includes all symmetric states, which necessitates going to $k=3$ copies to test for (general) Gaussianity.

Operationally, the three-copy test involves projecting onto the kernel of a quadratic operator $R$. This is easily done by applying a Gaussian change of basis in which $R$ is diagonal with eigenvalues $\paren*{\lambda_i}_i$, measuring in that Fock basis, and accepting if and only if $\sum_{i} \lambda_i x_i = 0$; it exploits the fact that the state $\ket{\psi}^{\otimes 3}$ to be tested is in the symmetric subspace.

\paragraph{Zero-mean bosonic Gaussian states.} In this case, the zero-mean assumption allows us to reduce the number of joint copies from three to two \cite{girardi2025gaussiantesting}. We consider a very close analogue of the fermionic Gaussian operator:
\[
    \Gamma = \ri \sum_{i=1}^n \left( a_i \otimes a_i^\dagger - a_i^\dagger \otimes a_i \right),
\]
where $a_i, a_i^\dagger$ are the bosonic ladder operators. The two-copy pure states $\ket{\psi}^{\otimes 2}$ that lie in $\ker \Gamma$ are exactly the bosonic Gaussian states with mean zero. For this class, we bound the spectral gap $\Delta = 5/12$ of $\Xi$ for $m=3$. Our analysis of this case is via the representation theory of the orthogonal group in dimension $3$.

\paragraph{Coherent states.} Our tester relies on a well-known identity governing the action of beam splitters on coherent states. Concretely, let $\beamsplitter$ be the evenly mixing beam splitter applied to each pair of modes across a central cut (the exact definition is given in \cref{sec:coherent-state-testing}). For a pair of coherent states $\ket{\alpha}, \ket{\beta}$, the beam splitter acts on them as follows:
\[
    \beamsplitter \ket{\alpha}\!\ket{\beta} = \left|\frac{\alpha + \beta}{\sqrt{2}}\right\rangle\!\left|\frac{\alpha - \beta}{\sqrt{2}}\right\rangle.
\]
In particular, applying $\beamsplitter$ to $\ket{\alpha}^{\otimes 2}$ leaves the second register in the vacuum state. Therefore, our tester simply applies the beam splitter and checks whether the second register is in the vacuum state. Perfect completeness then follows.

To show the spectral gap for coherent states, we are able to directly diagonalize an operator whose eigenspectrum is related to $\Xi$. 
Since the class of coherent states is product, the spectral gap analysis readily separates mode-by-mode. For each mode, we consider the action of $\beamsplitter$ on the raising and lowering operator on each mode, show that the corresponding $\testproj$ projects out one of the generators, and concretely compute the eigenvectors of the operator restricted to this subspace.\footnote{An alternative approach to proving soundness shows that the acceptance probability of the tester is proportional to the Husimi $Q$-function for single-mode coherent states, uses stability results for the Husimi function due to \cite{frank2023wehrl}, and extends this analysis to multiple modes. We decided to omit this approach as the analysis of \cite{frank2023wehrl} does not give an explicit constant, and the spectral gap approach is in accord with the rest of the results.}

\paragraph{Lower bounds.} We prove tightness of our upper bounds by constructing hard instances for testing. Specifically, any algorithm that can tolerantly test for class $\calC$ must at least be able to distinguish between two states, $\ket{\psi_A}$ which is $\eps_1$-close to $\calC$ and $\ket{\psi_B}$ which is $\eps_2$-far. The information-theoretic bound is Helstrom's \cite{helstrom1969quantum}: any two-outcome POVM on $\calH^{\otimes t}$ that successfully distinguishes between the two cases, with high probability, must have $t = \Omega(1/\log(1/F))$ where $F = \abs{\braket{\psi_A | \psi_B}}^2$ is the fidelity between the two states. Then, we construct the two states in a fairly general manner: suppose there exists some $\ket{A} \in \calC$ and another state $\ket{B}$, not necessarily in $\calC$, and orthogonal to $\ket{A}$. Then we can define a family of states
\[
\ket{\Psi_\theta} = \cos\theta\ket{A} + \sin\theta\ket{B}
\]
such that, as long as it holds that $|{\braket{A | \phi}}| + |{\braket{B | \phi}}| \leq 1$ for every $\ket{\phi} \in \calC$, then $\dist(\Psi_\theta, \calC) = \sin\theta$. Thus we can construct the desired hard instances $\ket{\psi_A}$, $\ket{\psi_B}$ by choosing small $\theta_1 = \arcsin \eps_1$, $\theta_2 = \arcsin \eps_2$. For such states, some elementary inequalities give us $\log(1/F) = O((\eps_2 - \eps_1)^2)$, and hence the lower bound on $t$ follows.

Note that even if $\calC' \subset \calC$, a lower bound for testing $\calC'$ does not imply one for $\calC$ (or vice versa) because the rejection probability is not monotone under set inclusion. Therefore, we have to exhibit hard instances for each class we consider. For fermionic Gaussian states, it suffices to take $\ket{A} = \ket{0^n}$ and $\ket{B} = \ket{1^4,0^{n-4}}$. For Slater determinants, any $\eta$-particle Fock states $\ket{A}, \ket{B}$ with disjoint supports works. Note that some mild restrictions are required for $n$ and $\eta$; every pure state on $n \leq 3$ modes is Gaussian \cite{bravyi2005classical}, while every pure fixed-particle state with $\eta \in \{0, 1, n-1, n\}$ is a Slater determinant.

For the bosonic cases, we use a slightly relaxed condition to construct the hard instances. Take some $\ket{X} \notin \calC$ such that
\[
    q \coloneqq \sup_{\ket{\phi} \in \calC} \abs{\braket{\phi | X}}^2 \leq \frac{1}{2}.
\]
Then the pair $\ket{A}, \ket{B}$ can be found by taking $\ket{A} \in \calC$ a maximizer to $q$, and
\[
\ket{B} = \frac{\ket{X} - \sqrt{q} \ket{A}}{\sqrt{1 - q}}.
\]
The family $\ket{\Psi_\theta}$ then induces the same type of hard instances. For both arbitrary bosonic Gaussian states and coherent states, the one-boson Fock state $\ket{X} = \ket{1, 0^{n-1}}$ suffices. For zero-mean Gaussian states we require even parity, so in that case we can simply use $\ket{X} = \ket{2, 0^{n-1}}$.

\subsection{Related work}

\paragraph{Classical property testing.} 
The notion of testing distributions originates from the classical literature \cite{goldreich1998property}, where some central questions include goodness-of-fit and independence testing \cite{canonne2022topics}. The classical problem analogous to our quantum one is testing an unknown distribution for Gaussianity with respect to total variation distance. However, the sample complexity of this problem is literally infinite \cite{rubinfeld2023testing}, in stark contrast to the quantum setting. Thus the problem of quantum Gaussian testing should be not conflated with or considered a strict generalization of classical Gaussian testing. Instead, the classical literature focuses on different properties such as whether the distribution, promised to be Gaussian, has zero mean, or if some finite subset of moments is consistent with Gaussians  \cite{ebner2020tests,narayanan2022private,diakonikolas2023gaussian}. For many applications of classical Gaussian testing, these less stringent features often suffice \cite{rubinfeld2023testing}.

\paragraph{Fermionic Gaussian testing.} 
The tester for this class that we study was first introduced by Bravyi \cite{bravyi2004lagrangian}, who showed that it has perfect completeness.  The underlying algebraic structure of our test has been studied in the past by de Melo, Ćwikliński, and Terhal \cite{melo2013power}, with a closely related recent line of inquiry aimed at understanding the commutant of Gaussian unitaries \cite{sierant2026theory, braccia2026commutant,lastres2026geometry}.  Historically, the observable associated with the tester was used to understand nonlocal quantum correlations via its Clifford-algebraic structure \cite{tsirel1987quantum}.  There are many alternative tests proposed in the literature including fermionic convolution \cite{lyu2024convolution}, fermionic antiflatness \cite{haug2026practical}, and simply performing full Gaussian tomography \cite{mele2025efficient,bittel2025optimal,chen2026optimal}.  However, these alternative tests are known to be suboptimal for two-copy tests with perfect completeness \cite{tarabunga2026fermionic}, suggesting that Bravyi's original operator is indeed the right choice to study.  For constant soundness, prior state-of-the-art results required a number of copies scaling no better than linearly with the number of modes, both for the Bravyi tester \cite{haug2026practical, tarabunga2026fermionic} and for other candidate testers \cite{leone2026fermionicentropy, walter2025random}. The related problem of testing fermionic Gaussian \textit{unitaries} was studied by Iyer \cite{iyer2025mildly}, who gave an algorithm to perform this task scaling polynomially in the number of modes; it was left open whether constant sample complexity was attainable in this context. 

For Slater determinants, a tester with perfect completeness has appeared many times in the literature \cite{kus2009classical, kotowski2010universal, oszmaniec2013universal, majtey2016multipartite, semenyakin2025classifying}, but no soundness results had been known.  The fermionic Gaussian tomography algorithms mentioned above can be applied to certify Slater determinants, but again require polynomially many copies of the input state.  There are also versions specifically adapted to Slater determinants, where the sample complexity is refined to depend on both number of modes and particle numbers \cite{aaronson_et_al:LIPIcs.TQC.2023.12,o2022fermionic,christensen2026learning}.

\paragraph{Bosonic Gaussian testing.} 
The most well-studied property tester for bosonic Gaussian states is due to Girardi \textit{et al}.~\cite{girardi2025gaussiantesting}. The proof of its soundness guarantee comes with a caveat: the input state must have a minimum fidelity (i.e., above some constant threshold) with some bosonic Gaussian state for the proof to claim dimension-independent testing. In the low-fidelity regime, the analysis only yields a polynomial scaling in system size (and mean energy).  The literature also contains a number of characterizations of Gaussianity \cite{hudson1974wigner,genoni2008quantifying,dai2026uncovering}, some of which can be converted to complete tests for bosonic Gaussianity \cite{hertz2019multicopy,springer2009conditions,cuesta2020stable,bu2025efficient, hahn2025measuring} (i.e., via \cite{harrow2013testing}), but none have known soundness.  Bosonic Gaussian states also have quantum tomography algorithms \cite{smithey1993measurement, mele2025learning, fanizza2411efficient, bittel2025energy}, but again these approaches scale with the system size.  

For multimode coherent states, the operator for the tester we analyze has been studied previously \cite{andersson2006experimentally, sedlak2007unambiguous, springer2009conditions, cuesta2020stable}, but no known soundness guarantees were established.  

\paragraph{Testing other quantum properties.} 

One of the earliest works to consider quantum property testing was that of Childs, Harrow, and Wocjan \cite{childs2007weak}, whose quantum collision problem gave rise to an early form of quantum spectrum and mixedness testing. O'Donnell and Wright \cite{odonnell2015quantum} formalized this problem and gave the optimal sample complexity: $\Theta(d)$ copies are necessary and sufficient to test if a $d$-dimensional state is maximally mixed, and $\Theta(r^2)$ to test if it has rank $\leq r$. A related problem is identity testing, wherein the concept class of states is a singleton set. B{\u{a}}descu, O'Donnell, and Wright \cite{buadescu2019quantum} showed that this problem can be solved in $O(d)$ copies. This can be exponentially improved if the target state is pure: Huang, Preskill, and Soleimanifar \cite{huang2025certifying} gave an algorithm with sample complexity $O(n^2)$ to certify a target $n$-qubit state, applicable to a $(1 - e^{-\Omega(n)})$-fraction of all pure states. This was later enhanced by Gupta, He, and O'Donnell \cite{gupta2026few}, who gave an $O(n)$-sample algorithm that provably worked for all pure states.

Arguably the first nontrivial constant-copy quantum property tester was given by Harrow and Montanaro \cite{harrow2013testing} for product states. That is, their tester can distinguish if a given pure state on $n$ qudits with local dimension $d$ is product or not, using only $O(1)$ copies independent of $n$ and $d$. Since this cornerstone result, follow-up works have sharpened the analysis of the Harrow-Montanaro tester, giving an optimal characterization of the acceptance probability for states far from product \cite{soleimanifar2022testing,beckey2026optimal}. This direction has been extended in various ways, for example testing for genuine multipartite entanglement \cite{harrow2017sequential}.

A natural conceptual extension of product states are matrix product states (MPS). Ostensibly strong candidates to demonstrate constant-copy testability, say for constant bond dimension, they were nevertheless shown by Soleimanifar and Wright \cite{soleimanifar2022testing} to require at least $\Omega(\sqrt{n})$ samples for any bond dimension $r \geq 2$. For upper bounds, they gave an $O(nr^2)$-sample algorithm to test $n$-qubit MPS of bond dimension $r$. Later work by Lovitz and Lowe \cite{lovitz2026nearly} nearly closed this gap in the superconstant bond dimension regime by considering a generalization of MPS, tree tensor network states (TTNS). There, they gave an upper bound also of $O(nr^2)$ copies to test TTNS, and a lower bound of $\Omega(nr^2/\log n)$ copies whenever $r \geq 2 + \log n$. For $r$ below this threshold, the best known lower bound is $\Omega(\sqrt{nr} + r)$ due to Chen, Wang, and Zhang \cite{chen2026local}.

Perhaps no class has been as well-studied in the context of property testing as stabilizer states.
The field of stabilizer testing was spearheaded by the seminal work of Gross, Nezami, and Walter \cite{gross2021schur}, who gave a six-copy test for stabilizerness .
A rich line of follow-up work has leveraged machinery from additive combinatorics to demonstrate that the Gross-Nezami-Walter test has strong tolerance properties \cite{arunachalam2025polynomial, grewal2024improved,bao2025tolerant, mehraban2025improved}. In particular, it can be shown that, for some constant $C > 1$, the test accepts with high probability if a state has fidelity at least $\eps$ with a stabilizer state and rejects with high probability if a state has fidelity at most $\eps^C$ with all stabilizer states, promised that one of the two cases holds.
These tolerant testing results have also been extended in a fairly black-box manner to the case of possibly mixed input states \cite{iyer2025tolerant}.

\section*{Acknowledgments}
\addcontentsline{toc}{section}{Acknowledgments}

We thank Carlos Bravo-Prieto, Arkopal Dutt, Vinayak Kumar, Michael Jaber, Hari Krovi, Antonio Anna Mele, Francesco Anna Mele, Geoffrey Mon, and Barak Nehoran for helpful conversations. Part of this work was completed while VI was supported by an NSF Graduate Research Fellowship, and while AZ was supported by the Laboratory Directed Research and Development program at Sandia National Laboratories under the Gil Herrera Fellowship in Quantum Information Science. AZ is also supported by the National Nuclear Security Administration’s Advanced Simulation and Computing Program.  OP and KT are supported by the U.S.\ Department of Energy, Office of Science, Accelerated Research in Quantum Computing, Fundamental Algorithmic Research toward Quantum Utility (FAR-Qu). 

This article has been authored by an employee of National Technology \& Engineering Solutions of Sandia, LLC under Contract No.\ DE-NA0003525 with the U.S. Department of Energy (DOE). The employee owns all right, title and interest in and to the article and is solely responsible for its contents. The United States Government retains and the publisher, by accepting the article for publication, acknowledges that the United States Government retains a non-exclusive, paid-up, irrevocable, world-wide license to publish or reproduce the published form of this article or allow others to do so, for United States Government purposes. The DOE will provide public access to these results of federally sponsored research in accordance with the DOE Public Access Plan \url{https://www.energy.gov/downloads/doe-public-access-plan}.

\paragraph{Concurrent work.}
Toward the completion of this manuscript, we became aware of concurrent and independent work by Yaghubi Rad, Puig, Hassanandani, Coffman,
Eisert, Bravo-Prieto, and Mele \cite{yaghubi2026testing}. Their work also proves dimension-independent sample complexity for testing the classes of Gaussian states that we consider here. We have planned for both manuscripts to appear on the arXiv simultaneously.

\paragraph{AI usage disclosure.}
The authors acknowledge the use of ChatGPT 5.6 Sol Ultra in the preparation of this manuscript. 
We provided the model with a large document of structural lemmas and calculations as well as a list of possible approaches to complete the argument, aimed squarely at the fermionic Gaussian instance.
In response, GPT 5.6 proposed a proof technique resembling the spectral gap technique outlined in \cref{sec:framework}, along with a candidate proof for the spectral gap for the fermionic Gaussian tester. The latter invoked techniques from algebraic geometry that are considerably foreign to the intended audience of this work.
In this work, we provide two alternative proofs of this spectral gap that we believe are much more accessible to the quantum information community: one using representation theory and the other using carefully chosen decompositions of fermionic Fock space.
Furthermore, we further developed and generalized the symmetrized spectral gap technique, and subsequently applied it to the other classes of states present in this manuscript (Slater determinants, bosonic Gaussian state, zero-mean bosonic Gaussian states).

All proofs were carefully verified and written down entirely by the authors, and we claim full responsibility for the correctness of the results presented in this work. 
Lighter models of ChatGPT 5.6 and Claude Opus 5 were used in sharpening calculations and refining proofs.

\section{Preliminaries}\label{sec:preliminaries}

\paragraph{Notation.} The set $\{1,2,\ldots,n\}$ is denoted by $[n]$. The Dirac delta function $\delta_{xy}$ is equal to $1$ if $x = y$ and $0$ otherwise. For a Boolean event $A$, $\indic{A}$ is the indicator function of $A$: it is $1$ when $A$ is true and $0$ when $A$ is false. For a polynomial $p$ in a formal variable $z$ we write $[z^d]p(z)$ to mean the coefficient of $z^d$ in $p$. $S^{d-1}$ denotes the $(d-1)$-dimensional unit sphere in $\R^{d}$.

The $n \times n$ identity matrix is denoted $\identity_n$. Often, we will drop the subscript when the dimension is apparent. We denote by $e_i$ the unit vector in $\R^n$ where the $i$th coordinate is $1$ and all other entries are $0$. The commutator of matrices $A$ and $B$ is denoted $[A,B] = AB - BA$. The $k$th order nested commutator of $A$ and $B$, denoted $[A,B]_k$, is defined as $[A,B]_0 = B$ and $[A,B]_{k+1} = [A, [A,B]_{k}]$ for $k \geq 0$. We denote the set of eigenvalues, or spectrum, of an operator $A$ as $\spec(A)$.

We use $\dist(\rho, \sigma) = \frac{1}{2} \|\rho - \sigma\|_1$ for the trace distance between two density operators $\rho, \sigma$. For pure states $\ket{\psi}$ we frequently write $\psi$ for its density operator, whence $\dist(\psi, \phi) = \sqrt{1 - \abs{\braket{\psi | \phi}}^2}$.

\paragraph{Multilinear algebra.} For a vector space $V$ we write $\Sym^k(V)$ and $\bigwedge^k V$ for the symmetric and antisymmetric subspaces of $V^{\otimes k}$, respectively. The Fock space on $V$ is the direct sum over all $k$:
\[
\bigvee V \equiv \bigoplus_{k=0}^\infty \Sym^k(V) \quad \text{and} \quad \bigwedge V \equiv \bigoplus_{k=0}^{\dim V} {\bigwedge}^k V.
\]
Fock basis elements are typically written in their occupation-number representation, e.g., the simple antisymmetric $k$-vector $e_{i_1} \wedge \cdots \wedge e_{i_k}$ corresponds to $\ket{x}$ where $x \in \{0,1\}^{\dim V}$ has $x_i = 1$ if and only if $i \in S \coloneqq \{i_1, \ldots, i_k\}$. Equivalently, we also write $\ket{S} \equiv \ket{x}$. The index order for such states is assumed to be increasing. Similar conventions hold for the symmetric setting $e_{i_1} \vee \cdots \vee e_{i_k}$ where now $S$ can be a weakly increasing multiset.

\subsection{Representation theory}

We follow the presentation of standard texts \cite{fulton2004representation,goodman2009symmetry}. Write $\genlingrp(d, \C), \unitgrp(d), \symplgrp(2d, \R)$, etc.\ for the classical (matrix) Lie groups and $\genlinalg(d, \C)$ etc.\ for their Lie algebras.
Our convention for symplectic group $\symplgrp(2d, \R)$ is that $S \in \symplgrp(2d, \R)$ if and only if $S^\transpose \Omega S = \Omega$ where
\begin{equation}\label{eq:symplectic-form}
    \Omega = \begin{pmatrix}
        0 & \identity_d \\ -\identity_d & 0
    \end{pmatrix}.
\end{equation}
$\spingrp(d)$ is the double cover of $\splorthgrp(d)$.

If $G$ is a group and $V$ is a vector space, a continuous map $\rho : G \to \genlingrp(V)$ is called a representation if it preserves group multiplication. A representation is said to be irreducible (an irrep) if it has no nontrivial subrepresentations. When the group and map are clear from context, we may refer to a representation simply by its vector space $V$ and use $g \cdot v \equiv \rho(g)v$, $g \in G$, for its group action.

\begin{fact}[Schur's lemma]
    If $V$ is an irrep and $T : V \to V$ is linear, then $T$ commutes with $V$ if and only if $T$ is a scalar multiple of the identity.
\end{fact}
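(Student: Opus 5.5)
We prove both directions of the equivalence. The irrep hypothesis is used only in the ``only if'' direction, and there only through the existence of an eigenvalue and the absence of nontrivial invariant subspaces. We work over $\C$, as is implicit throughout (all representations in this paper are complex), and we take $V$ finite-dimensional, as the paper's uses will be. Fix a representation $\rho : G \to \genlingrp(V)$ that is irreducible, and a linear map $T : V \to V$. Here ``$T$ commutes with $V$'' means $T\rho(g) = \rho(g) T$ for all $g \in G$.

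\emph{The ``if'' direction.} Suppose $T = c\,\identity$ for some $c \in \C$. Then for every $g \in G$,
\[
T\rho(g) = c\,\rho(g) = \rho(g)\,T ,
\]
so $T$ commutes with $V$. This direction does not need irreducibility.

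\emph{The ``only if'' direction.} Suppose $T$ commutes with every $\rho(g)$.

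First, $T$ has an eigenvalue. Since $V$ is a nonzero finite-dimensional complex vector space, the characteristic polynomial of $T$ has a root $\lambda \in \C$. This step is the only place where algebraic closedness and finite dimension enter.

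Next, set $S \coloneqq T - \lambda\identity$. It commutes with every $\rho(g)$, because both $T$ and $\identity$ do.

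The kernel of $S$ is a subrepresentation. Take $v \in \ker S$ and $g \in G$. Then
\[
S\rho(g)v = \rho(g) S v = 0,
\]
so $\rho(g)v \in \ker S$, and $\ker S$ is $G$-invariant. It is nonzero because $\lambda$ is an eigenvalue.

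Irreducibility now gives $\ker S = V$. Hence $S = 0$, that is, $T = \lambda\,\identity$.

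\emph{Where the hypotheses matter.} The main thing to watch is the eigenvalue step. Over $\R$ the conclusion can fail: rotations of $\R^2$ give an irrep whose commutant contains non-scalar rotations. In infinite dimensions one needs a spectral-theorem version of the argument instead. So the proof relies on the standing conventions that representations are complex and, for this fact, finite-dimensional.
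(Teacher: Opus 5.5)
Your proof is correct and is the standard argument: you take an eigenvalue $\lambda$ of $T$ (using that $V$ is a nonzero finite-dimensional complex space), observe that $\ker(T-\lambda\identity)$ is a nonzero $G$-invariant subspace, and conclude $T=\lambda\identity$ by irreducibility. The paper gives no proof of its own, stating this as a Fact from the textbooks it cites, where this same argument appears; your finite-dimensional restriction costs nothing, since every place the paper invokes Schur's lemma concerns finite-dimensional irreps of $\spingrp(2n)$, $\genlingrp(3,\C)$, or $\Orth(d)$.
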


A linear-algebraic group is a matrix group defined by polynomial equations in the matrix entries. A representation $\rho$ of a linear-algebraic group $G$ is said to be rational if it is finite-dimensional and, for all $g \in G$, the entries of $\rho(g)$ are polynomials in $g_{ij}$ and $\det(g)^{-1}$. A rational representation $V$ of $G$ is completely reducible if every $G$-invariant subspace $W \subset V$ has a some other invariant subspace $U \subset V$ such that $V = W \oplus U$. If every rational representation of such $G$ is completely reducible, then $G$ is said to be reductive.

Let $T$ be a maximal torus (connected abelian subgroup) of a reductive $G$. A weight of a rational representation $W$ is a character $\lambda : T\to\C^\times$ for which the corresponding weight space
\[
    W[\lambda] \coloneqq \{v\in W : t \cdot v = \lambda(t) v \text{ for all } t\in T\}
\]
is nonzero. Weights can be put into partial order and every irrep has a unique maximal weight, called its highest weight, which uniquely specifies the irrep up to isomorphism. We write $V_\lambda$ for the irrep of highest weight $\lambda$.

\begin{fact}[Highest-weight decomposition {\cite[Propositions 4.1.15 and 12.1.1]{goodman2009symmetry}}]\label{prop:highest-weight-decomp}
    Every rational representation $W$ of a reductive group $G$ decomposes as
    \[
    W \cong \bigoplus_\lambda V_\lambda \otimes M_\lambda
    \]
    where $\lambda$ runs over the irreducible types that occur and $M_\lambda$ is a multiplicity space on which $G$ acts trivially. 
\end{fact}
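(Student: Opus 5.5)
The plan is to derive the decomposition from complete reducibility together with Schur's lemma, and to use the theorem of the highest weight only to label the isotypic pieces. Let $W$ be a rational representation of the reductive group $G$. Since $W$ is finite-dimensional and completely reducible by definition of reductivity, I will argue by induction on $\dim W$. If $W$ is irreducible there is nothing to do. Otherwise, pick a nonzero proper invariant subspace $U$. Complete reducibility gives an invariant complement $U'$. Both $U$ and $U'$ are rational, because they are subrepresentations of a rational representation, and they are again completely reducible. Induction then writes $W = \bigoplus_i W_i$ with each $W_i$ an irrep. Each $W_i$ has a unique highest weight $\lambda_i$, and it is isomorphic to $V_{\lambda_i}$ by the highest-weight classification recalled just before the statement.

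Next I would regroup the summands by isomorphism type. For each $\lambda$ that occurs, define the multiplicity space $M_\lambda \coloneqq \mathrm{Hom}_G(V_\lambda, W)$, and let $G$ act trivially on it. Then consider the evaluation map
\[
    \Phi : \bigoplus_\lambda V_\lambda \otimes M_\lambda \to W, \qquad v \otimes f \mapsto f(v).
\]
It is $G$-equivariant because each $f$ intertwines the $G$-actions, and $G$ acts trivially on $M_\lambda$.

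It remains to show that $\Phi$ is an isomorphism. I would first compute $\mathrm{Hom}_G(V_\lambda, W) = \bigoplus_i \mathrm{Hom}_G(V_\lambda, W_i)$. By Schur's lemma over $\C$, each summand is $\C$ when $W_i \cong V_\lambda$ and $0$ otherwise. Hence $\dim M_\lambda$ equals the number of $i$ with $\lambda_i = \lambda$, and so $\dim$ of the source of $\Phi$ equals $\dim W$. Surjectivity is immediate: each $W_i$ is the image of $V_{\lambda_i} \otimes \C\iota_i$, where $\iota_i : V_{\lambda_i} \to W$ is the chosen embedding. Equal dimensions then give bijectivity.

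The main obstacle is the step that uses the highest-weight classification. I need irreps with distinct highest weights to be non-isomorphic, so that the $\mathrm{Hom}$ spaces between different types vanish. I also need Schur's lemma in the form $\mathrm{End}_G(V_\lambda) = \C$, which requires algebraic closedness of $\C$ (an eigenvalue of an intertwiner gives a nonzero kernel). I will take the classification itself, namely existence and uniqueness of highest weights for rational irreps of reductive $G$, as given from the cited text rather than reprove it.
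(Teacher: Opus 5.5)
Your argument is correct, and it is essentially the standard isotypic-decomposition proof in the cited Goodman--Wallach propositions: complete reducibility, Schur's lemma, $M_\lambda = \mathrm{Hom}_G(V_\lambda, W)$, and the evaluation map $v \otimes f \mapsto f(v)$. The paper itself quotes that result without proof. One correction to your framing: the direction ``distinct highest weights $\Rightarrow$ non-isomorphic'' is trivial, since an isomorphism preserves weight spaces, whereas the substantive input you actually use is that irreps with equal highest weight are isomorphic (when you assert $W_i \cong V_{\lambda_i}$). You could avoid the classification entirely by indexing the sum by isomorphism classes.
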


Finally, the determinant character on a matrix group $G$ is denoted by $\det : G \to \C^\times$, so that ${\det}^k$ is the one-dimensional representation $g \mapsto (\det g)^k$.

\subsubsection{General linear group}

Let $V$ be a representation of $\genlingrp(d, \C)$. For diagonal $t = \diag(t_1, \ldots, t_d) \in \genlingrp(d, \C)$, a vector $v \in V$ has weight $\lambda \equiv (\lambda_1, \ldots, \lambda_d)$ if $t \cdot v = t_1^{\lambda_1} \cdots t_d^{\lambda_d} v$. The weight space $V[\lambda]$ is the subspace of $V$ spanning only those vectors of weight $\lambda$. If $E_{ij}$ is the elementary matrix in the Lie algebra $\genlinalg(d, \C)$, then a weight vector $v$ is said to be of highest weight if $E_{ij} \cdot v = 0$ for all $i < j$. We write $V_\lambda$ for the unique irrep of $\genlingrp(d, \C)$ of highest weight $\lambda$; by the theorem of highest weight, it is generated by repeatedly applying $E_{ij}$ for all $i > j$ to a highest-weight vector of weight $\lambda$ (for example, see \cite[Proposition 12.11]{fulton2004representation}).

\begin{fact}[Highest weights for $\genlingrp(d)$ and $\spllingrp(d)$ {\cite[Theorem 5.5.22]{goodman2009symmetry}}]\label{prop:GL-highest-weight}
    The rational irreps $V_\lambda$ of $\genlingrp(d, \C)$ are labeled by weakly decreasing tuples of integers $\lambda = (\lambda_1, \ldots, \lambda_d)$. The restriction of $V_\lambda$ to $\spllingrp(d, \C)$ has highest weight $\mu = (\mu_1, \ldots, \mu_{d-1})$ where $\mu_i = \lambda_i - \lambda_{i+1}$.
\end{fact}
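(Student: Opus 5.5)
The plan is to prove the two halves separately: first the classification of rational irreps of $\genlingrp(d,\C)$ by dominant integral weights, then a computation of how the maximal torus changes when we restrict to $\spllingrp(d,\C)$.

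\textbf{Classification.} Let $V$ be a rational irrep. The diagonal torus $T \cong (\C^\times)^d$ acts diagonalizably, and rationality forces its characters to be Laurent monomials $t_1^{\lambda_1}\cdots t_d^{\lambda_d}$ with $\lambda \in \mathbb{Z}^d$. This gives a weight decomposition $V = \bigoplus_\lambda V[\lambda]$.

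Since $E_{ij}$ shifts weights by $e_i - e_j$ and there are finitely many weights, some nonzero weight vector is killed by all $E_{ij}$ with $i<j$. This is a highest-weight vector. The subspace generated from it by the lowering operators $E_{ij}$ ($i>j$) is invariant (PBW), so by irreducibility it equals $V$. Hence the highest weight occurs with multiplicity one and determines $V$ up to isomorphism: two irreps with the same highest weight $\lambda$ embed diagonally into their direct sum as the submodule generated by $(v,v')$, which is irreducible and projects isomorphically onto both.

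For weak decrease, restrict to the $\spllinalg(2)$-triple $(E_{i,i+1}, E_{i+1,i}, E_{ii}-E_{i+1,i+1})$. A highest-weight vector for this triple in a finite-dimensional module has nonnegative eigenvalue $\lambda_i - \lambda_{i+1}$ under the Cartan element, so $\lambda_1 \ge \cdots \ge \lambda_d$.

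For existence, given weakly decreasing $\lambda$, set $k = \lambda_d$ and $\nu = \lambda - k(1,\ldots,1)$, which is a partition. Take the Schur module $\mathbb{S}_\nu(\C^d) \subset (\C^d)^{\otimes|\nu|}$, obtained by Young symmetrizing. In this space, $e_1^{\otimes \nu_1'}$-type column wedges $(e_1\wedge\cdots\wedge e_{\nu'_1})\otimes\cdots$ give a highest-weight vector of weight $\nu$. Tensoring with ${\det}^{k}$ gives weight $\lambda$, and the result is rational and irreducible (Schur--Weyl duality).

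\textbf{Restriction.} Since $\genlingrp(d,\C) = \C^\times \cdot \spllingrp(d,\C)$ and the scalars act on an irrep by a scalar (Schur's lemma), $V_\lambda$ stays irreducible on $\spllingrp(d,\C)$. The same highest-weight vector works, because the raising operators $E_{ij}$, $i<j$, lie in $\spllinalg(d)$.

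The maximal torus of $\spllingrp(d,\C)$ is $\{t : t_1\cdots t_d = 1\}$. Parametrize it by $s_i = t_1\cdots t_i$ for $i<d$, so that $t_i = s_i/s_{i-1}$ with $s_0 = s_d = 1$; the $s_i$ are exactly the fundamental weights $\omega_i$. Substituting,
\[
\prod_{i=1}^{d} t_i^{\lambda_i} = \prod_{i=1}^{d-1} s_i^{\lambda_i - \lambda_{i+1}}.
\]
This is Abel summation, using $s_d = 1$ to drop the last factor. Thus the highest weight in fundamental-weight coordinates is $\mu_i = \lambda_i-\lambda_{i+1}$.

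\textbf{Main obstacle.} The main difficulty is the existence/irreducibility step for $\mathbb{S}_\nu$, i.e.\ showing the Young-symmetrized space is irreducible. I would cite Schur--Weyl duality for this rather than reprove it. The remaining steps are standard $\spllinalg(2)$ and weight bookkeeping.
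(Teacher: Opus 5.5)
The paper gives no proof of this fact; it imports it from \cite[Theorem 5.5.22]{goodman2009symmetry}. Your outline therefore supplies the argument the paper outsources, and it is correct. The torus weight decomposition, the highest-weight vector with $U(\mathfrak{n}^-)v=V$, the diagonal-submodule uniqueness argument, the $\spllinalg(2)$ argument for $\lambda_1\ge\cdots\ge\lambda_d$, the ${\det}^{\lambda_d}$ twist, irreducibility of the restriction via $\genlingrp(d,\C)=\C^\times\cdot\spllingrp(d,\C)$, and the substitution $s_i=t_1\cdots t_i$ all check out. Two routine points deserve a sentence each: rational torus representations are diagonalizable, and passing between $\genlinalg(d)$-invariance and $\genlingrp(d)$-invariance uses connectedness of $\genlingrp(d,\C)$.

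The only place you work harder than necessary is the step you flag as the main obstacle.

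\begin{itemize}
\item Your column-wedge vector $v$ is the tensor product, over $1\le j<d$, of $\lambda_j-\lambda_{j+1}$ copies of $e_1\wedge\cdots\wedge e_j$. So it already lies in the rational module $\bigotimes_{j<d}({\bigwedge}^j\C^d)^{\otimes(\lambda_j-\lambda_{j+1})}$.
\item The cyclic submodule $W=U(\mathfrak{n}^-)v$ is then automatically irreducible. Since $W[\nu]=\C v$, in any decomposition of $W$ into irreducibles (available because $\genlingrp(d,\C)$ is reductive) $v$ lies in a single summand. That summand must then be all of $W$. This is the same one-dimensional top-weight-space argument you already use for uniqueness.
\item This is the standard Cartan-product construction from fundamental representations, the same tool the paper later borrows via \cite[Proposition 5.5.19]{goodman2009symmetry}.
\end{itemize}

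This route has two advantages over yours. It avoids Schur--Weyl duality and any question of which Young-symmetrizer convention puts your vector inside $\mathbb{S}_\nu(\C^d)$. It also gives the $\spllingrp$ half immediately: $e_1\wedge\cdots\wedge e_j$ has $\spllingrp$-weight $\omega_j$, so $v$ has weight $\sum_j(\lambda_j-\lambda_{j+1})\omega_j$.
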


\begin{fact}[Branching rule for $\genlingrp(d) \downarrow \genlingrp(d-1)$ {\cite[Theorem 8.1.1]{goodman2009symmetry}}]\label{prop:GL-interlacing}
    Let $V_\lambda^{(d)}$ and $V_\mu^{(d-1)}$ be irreps of highest weight for $\genlingrp(d, \C)$ and $\genlingrp(d-1, \C)$, respectively. In the restriction of $V_\lambda^{(d)}$ to $\genlingrp(d-1, \C)$ (in the sense of the embedding $A \mapsto \diag(A, 1)$), the irrep $V_\mu^{(d-1)}$ occurs with multiplicity $1$ exactly when the interlacing condition
    \[
    \lambda_1 \geq \mu_1 \geq \lambda_2 \geq \cdots \geq \lambda_{d-1} \geq \mu_{d-1} \geq \lambda_d
    \]
    is met. Otherwise, it does not occur. 
\end{fact}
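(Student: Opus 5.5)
We will prove the branching rule by a character computation. Characters determine rational representations of the reductive group $\genlingrp(d-1,\C)$ up to isomorphism (by \cref{prop:highest-weight-decomp} together with linear independence of irreducible characters). So it suffices to show that the character of $V^{(d)}_\lambda$, restricted to the torus $\diag(x_1,\ldots,x_{d-1},1)$, equals $\sum_\mu \chi_{V^{(d-1)}_\mu}(x_1,\ldots,x_{d-1})$, where $\mu$ ranges over tuples interlacing $\lambda$ and each appears exactly once.

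\textbf{Step 1 (reduction to polynomial representations).} Tensoring with ${\det}^k$ shifts every $\lambda_i$ by $k$. Restricted to $\genlingrp(d-1)$ it also shifts every $\mu_i$ by $k$, since $\det \diag(A,1) = \det A$. Interlacing is preserved under this common shift, so we may assume $\lambda_d \ge 0$. Then $V_\lambda$ is polynomial and all candidate $\mu$ have $\mu_{d-1}\ge\lambda_d\ge 0$.

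\textbf{Step 2 (character formula).} We take as known, via the Weyl character formula for $\genlingrp(d)$, that the character of $V_\lambda$ on the diagonal torus is the Schur polynomial
\[
s_\lambda(x_1,\ldots,x_d)=\frac{\det\bigl(x_i^{\lambda_j+d-j}\bigr)}{\det\bigl(x_i^{d-j}\bigr)}.
\]
Equivalently, $s_\lambda$ is the generating function $\sum_T x^{T}$ over semistandard Young tableaux $T$ of shape $\lambda$ with entries in $[d]$.

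\textbf{Step 3 (combinatorial branching).} In a semistandard tableau $T$ with entries in $[d]$, the boxes containing $d$ form a horizontal strip at the ends of rows. Deleting them leaves a semistandard tableau of some shape $\mu$ with entries in $[d-1]$. The skew shape $\lambda/\mu$ is a horizontal strip exactly when $\lambda_1\ge\mu_1\ge\lambda_2\ge\cdots\ge\mu_{d-1}\ge\lambda_d$. Conversely, each such pair ($\mu$, tableau of shape $\mu$) arises from exactly one $T$. This bijection gives
\[
s_\lambda(x_1,\ldots,x_d)=\sum_{\mu\ \text{interlacing}\ \lambda} s_\mu(x_1,\ldots,x_{d-1})\,x_d^{|\lambda|-|\mu|}.
\]
Setting $x_d=1$ yields the desired character identity with every coefficient equal to $1$. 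Step 1 then removes the assumption $\lambda_d\ge0$.

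\textbf{Main obstacle.} The main obstacle is Step 2, namely identifying the character with the tableau-generating function. If we only want to rely on the bialternant form, we would instead prove the branching identity directly. To do this, set $x_d=1$ in the numerator determinant, subtract consecutive rows (or expand along the last variable), and telescope the geometric sums $\sum_{m=\lambda_{j+1}+d-j-1}^{\lambda_j+d-j}$. This produces exactly the sum over interlacing exponents $\mu_j+d-1-j$. Careful bookkeeping of signs and index ranges in this determinant manipulation is the delicate part. The rest follows from complete reducibility and the fact that characters separate irreps.
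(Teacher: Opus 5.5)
Your argument is correct. The paper does not prove this fact; it imports it from Goodman--Wallach (Theorem 8.1.1). There the classical branching laws are obtained from the Weyl character formula through a partition-function (Kostant-type) branching multiplicity formula, a method that treats $\genlingrp$, $\spingrp$ and $\symplgrp$ uniformly.

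Your route instead uses the combinatorial model of the character. The determinant twist reduces to $\lambda_d\ge 0$, and the character becomes a generating function of semistandard tableaux. Stripping off the horizontal strip of $d$'s gives $s_\lambda(x_1,\dots,x_{d-1},1)=\sum_\mu s_\mu(x_1,\dots,x_{d-1})$ over interlacing $\mu$, and linear independence of the $s_\mu$ pins every multiplicity to $1$ or $0$. This is elementary and makes the Gelfand--Tsetlin structure visible. Be aware, though, that nearly all the content sits in identifying the Weyl bialternant with the tableau sum. One standard proof of that identification runs precisely through the bialternant form of the branching identity, so citing it is close to citing the statement itself.

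Your fallback, working directly with the bialternant, is the genuinely self-contained version and does go through, with two corrections.
\begin{enumerate}
\item With rows indexed by variables, setting $x_d=1$ makes the last \emph{row} all ones. So you subtract consecutive \emph{columns} and expand along that row.
\item The geometric sums are $(x^{\alpha_j}-x^{\alpha_{j+1}})/(x-1)=\sum_{m=\alpha_{j+1}}^{\alpha_j-1}x^m$ with $\alpha_j=\lambda_j+d-j$. So $m$ runs up to $\lambda_j+d-j-1$, not $\lambda_j+d-j$; your upper limit would admit $\mu_j=\lambda_j+1$.
\end{enumerate}
With these fixes, writing $m_j=\mu_j+(d-1)-j$ gives exactly $\lambda_{j+1}\le\mu_j\le\lambda_j$. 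These ranges force $m_1>m_2>\cdots$, so every term is a genuine bialternant and nothing cancels. Finally, the factor $\prod_{i<d}(x_i-1)$ pulled out of the rows cancels against the Vandermonde evaluated at $x_d=1$.
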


\begin{fact}[Determinant twist rule {\cite[Section 15.5]{fulton2004representation}}]\label{prop:det-twist}
    For any integer $k$,
    \[
    V_{(\lambda_1+k, \ldots, \lambda_d+k)} \cong V_{(\lambda_1, \ldots, \lambda_d)} \otimes {\det}^k.
    \]
\end{fact}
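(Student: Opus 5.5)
The plan is to identify $V_{(\lambda_1,\ldots,\lambda_d)} \otimes {\det}^k$ as an irreducible rational representation of $\genlingrp(d,\C)$ and compute its highest weight. By the uniqueness part of the theorem of highest weight (as recalled before \cref{prop:GL-highest-weight}), an irrep is determined up to isomorphism by its highest weight. So it suffices to show that this irrep has highest weight $(\lambda_1+k,\ldots,\lambda_d+k)$.

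\textbf{Step 1 (rational and irreducible).} The representation $g \mapsto \rho_\lambda(g)\,(\det g)^k$ is still rational: its matrix entries are polynomials in $g_{ij}$ and $\det(g)^{-1}$, because $(\det g)^k$ is such a polynomial for every integer $k$, including $k<0$. It acts on the same underlying space as $V_\lambda$, and each operator is a nonzero scalar multiple of the corresponding operator on $V_\lambda$. Hence the invariant subspaces of the twisted representation are exactly those of $V_\lambda$, and irreducibility is inherited.

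\textbf{Step 2 (weights shift uniformly).} For $t=\diag(t_1,\ldots,t_d)$ we have $\det t = t_1\cdots t_d$. So a weight vector $v \in V_\lambda[\mu]$ satisfies
\[
t\cdot (v\otimes 1) = t_1^{\mu_1+k}\cdots t_d^{\mu_d+k}\,(v\otimes 1).
\]
Thus the weights of $V_\lambda\otimes{\det}^k$ are exactly the weights of $V_\lambda$ shifted by $(k,\ldots,k)$, with weight spaces of the same dimension. This shift preserves the partial order on weights, so the unique maximal weight becomes $\lambda+(k,\ldots,k)$.

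\textbf{Step 3 (highest-weight vector, Lie algebra check).} Alternatively, one can verify the highest-weight condition directly. The Lie algebra acts on ${\det}^k$ by $X \mapsto k\,\tr(X)$. For $i\neq j$ we have $\tr(E_{ij})=0$, so by the Leibniz rule
\[
E_{ij}\cdot(v\otimes 1) = (E_{ij}\cdot v)\otimes 1 .
\]
Taking $v$ the highest-weight vector of $V_\lambda$, the vector $v\otimes1$ is annihilated by all $E_{ij}$ with $i<j$ and has weight $\lambda+(k,\ldots,k)$. The resulting tuple is still weakly decreasing, so it is a valid label in the sense of \cref{prop:GL-highest-weight}. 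Uniqueness then gives $V_{(\lambda_1+k,\ldots,\lambda_d+k)}\cong V_\lambda\otimes{\det}^k$.

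There is no real obstacle here; the only point needing care is negative $k$. It is handled because rational representations allow $\det^{-1}$ and the weight computation is insensitive to the sign of $k$.
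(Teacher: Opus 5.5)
Your argument is correct and complete, given the uniqueness part of the theorem of highest weight, which the paper itself takes as known.

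The three ingredients are all sound:
\begin{enumerate}
\item Twisting by ${\det}^k$ keeps the representation rational, including for $k<0$, and keeps it irreducible, since each operator only changes by a nonzero scalar.
\item It shifts every weight by $(k,\ldots,k)$, which preserves the partial order on weights.
\item The operators $E_{ij}$ with $i\neq j$ act unchanged because $\tr E_{ij}=0$.
\end{enumerate}
Either your Step 2 or your Step 3 alone already pins down the highest weight $\lambda+(k,\ldots,k)$, after which uniqueness gives the isomorphism.

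The paper does not prove this fact; it only cites Fulton--Harris, so there is no in-paper argument to compare against. A standard alternative route is via characters. The character of $V_\lambda$ is the bialternant $a_{\lambda+\delta}/a_\delta$ with $a_\alpha=\det\bigl(t_i^{\alpha_j}\bigr)$. Factoring $t_i^k$ out of row $i$ gives $a_{\lambda+\delta+(k,\ldots,k)}=(t_1\cdots t_d)^k\,a_{\lambda+\delta}$, so both sides have the same character. That route needs the Weyl character formula and the fact that rational representations of $\genlingrp(d,\C)$ are determined by their characters. Yours needs only highest-weight theory, which fits better with how the paper later uses the rule, for instance $U_{(s,s)}\cong{\det}^s$ and $V_{(\eta+r,\eta,\eta-r)}\cong V_{(r,0,-r)}\otimes{\det}^\eta$.
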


Finally, we will make particular use of the Lie algebra for the \emph{special} linear group in $d = 2$. We can characterize when its modules (vector spaces with an $\spllinalg(2, \C)$-action that preserves the Lie bracket $[A, B]$) are irreducible by using the following fact.

\begin{fact}[Characterization of irreducible $\spllinalg(2)$-modules {\cite[Section 11.1]{fulton2004representation}}]\label{prop:sl2-module}
    Every irreducible finite-dimensional $\spllinalg(2, \C)$-module contains, for some integer $k \geq 0$, only vectors of weights
    \[
    k, k-2, \ldots, -k+2, -k.
    \]
    Applying $E_{21}$ maps each weight vector to the next according to the string above, until hitting the lowest weight.
\end{fact}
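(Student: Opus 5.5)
The statement to prove is the characterization of irreducible $\spllinalg(2,\C)$-modules (\cref{prop:sl2-module}). I will use the standard basis $H = E_{11}-E_{22}$, $E = E_{12}$, $F = E_{21}$ of $\spllinalg(2,\C)$. It satisfies the relations $[H,E]=2E$, $[H,F]=-2F$, and $[E,F]=H$. The plan is the classical highest-weight argument, carried out directly on an irreducible finite-dimensional module $W$.

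\textbf{Step 1: find a highest-weight vector.}
\begin{itemize}
\item Since $W$ is finite-dimensional over $\C$, the operator $H$ has an eigenvector $v$, say $Hv = av$.
\item From the bracket relations, $H(Ev) = (a+2)Ev$. So $E$ shifts $H$-eigenvalues up by $2$.
\item The eigenvalues $a, a+2, a+4, \ldots$ are distinct, and there are only finitely many eigenvalues. Hence $E^j v = 0$ for some $j$.
\item Taking the last nonzero iterate gives $v_0 \neq 0$ with $Hv_0 = \lambda v_0$ and $Ev_0 = 0$.
\end{itemize}

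\textbf{Step 2: build the string below $v_0$.}
\begin{itemize}
\item Set $v_i = F^i v_0$. Then $Hv_i = (\lambda-2i)v_i$.
\item By induction on $i$, using $EF = FE + H$, one gets $Ev_i = i(\lambda - i + 1)v_{i-1}$. This is the one computation to do carefully.
\item The nonzero $v_i$ have distinct $H$-eigenvalues, so they are linearly independent. Finite dimensionality then forces some smallest $m$ with $v_{m+1}=0$ and $v_m \neq 0$.
\item Applying $E$ to $v_{m+1}=0$ gives $0 = (m+1)(\lambda - m)v_m$. Since $v_m \neq 0$ and $m+1 \neq 0$, we get $\lambda = m =: k \ge 0$, an integer.
\end{itemize}

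\textbf{Step 3: use irreducibility.}
\begin{itemize}
\item The span of $v_0, \ldots, v_k$ is stable under $H$, $F$ and $E$ by the formulas of Step 2. It is nonzero, so by irreducibility it equals $W$.
\item Therefore the weights of $W$ are exactly $k, k-2, \ldots, -k$, each with multiplicity one.
\item $F = E_{21}$ sends $v_i \mapsto v_{i+1}$, which is nonzero for $i<k$. So $F$ moves each weight vector to the next one in the string until it reaches the lowest weight $-k$, where $Fv_k = v_{k+1} = 0$.
\end{itemize}

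The only mild obstacle is the inductive identity for $Ev_i$ in Step 2. Everything else is bookkeeping. Because this is a standard fact, I would present it briefly and cite \cite{fulton2004representation}.
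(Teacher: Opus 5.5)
Your argument is correct. It is the standard highest-weight string argument of Fulton--Harris, Section 11.1, which is exactly what the paper cites in place of giving its own proof, and the key identity $E v_i = i(\lambda - i + 1) v_{i-1}$ follows by induction from $EF = FE + H$ as you say. The only cosmetic difference from the cited source is that you start from a single $H$-eigenvector and let irreducibility supply the diagonalizability of $H$, rather than first decomposing the module into $H$-eigenspaces.
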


Upon reaching the lowest-weight vector, the lowering operator $E_{21}$ then annihilates it. An easy consequence of this fact is that a weight-zero vector is killed by $E_{21}$ (and also $E_{12}$) if and only if the module has $k = 0$.

\subsubsection{Spin group}

\begin{fact}[Schur decompositions under $\spingrp$]\label{fct:schur-decompositions}
Let $n$ be a positive integer and $V_{\lambda}$ be the irrep of $\spingrp(2n)$ with highest weights $\lambda = (\lambda_1, \ldots, \lambda_n)$.
Set $S_{\pm} = V_{\left(\frac12, \ldots, \frac12, \pm\frac12\right)}$ and $S = S_+ \oplus S_-$.
Then
\begin{align}
S^{\otimes k} &\cong 
\bigoplus_{\substack{
    \frac{k}{2} \geq \lambda_1 \geq \cdots \lambda_{n-1} \geq \abs{\lambda_n} \\
    \lambda_1, \ldots, \lambda_n \in \mathbb Z + \frac{k}{2}
}} 
V_{\lambda} \otimes V_{\lambda'},
\end{align}
where
\begin{align}
\lambda' &= (\lambda'_1, \ldots, \lambda'_{\lfloor k/2 \rfloor}),\\
\lambda'_j &= \abs{\left\{
        \lfloor \abs{\lambda_i} \rfloor < \lfloor k /2 \rfloor + 1 -j
\right\}}
\end{align}

\begin{align}
S_+^{\otimes k} &\cong 
\bigoplus_{\substack{
    \frac{k}{2} \geq \lambda_1 \geq \cdots \lambda_{n-1} \geq \abs{\lambda_n} \\
    \lambda_1, \ldots, \lambda_n \in \mathbb Z + \frac{k}{2} \\
    \sum_{i=1}^n \lambda_i \in 2 {\mathbb Z} + \frac{nk}{2} \\
    \sum_{i=1}^{n-1} \lambda_i - \lambda_n \leq \frac{nk}{2} - k
}} V_{\lambda} \otimes M_{\lambda},
\end{align}
where $M_{\lambda}$ is the multiplicity space.

\begin{align}
\mathrm{Sym}^{k+1}S_+ &\cong
\bigoplus_{\substack{
    \frac{k+1}{2} \geq \lambda_1 \geq \cdots \geq 0 \\
    \lambda_1, \ldots, \lambda_n \in \mathbb Z + \frac{k+1}{2}
}} 
V_{\lambda} \otimes M_{\lambda},
\end{align}
where $M_{\lambda}$ is the multiplicity space.
\begin{align}
V_{(\frac{k}{2}, \ldots, \frac{k}{2})} \otimes S_+ &\cong 
\bigoplus_{l=0}^{\lfloor n / 2 \rfloor}
V_{\left(
    \underbrace{\frac{k+1}{2}, \ldots, \frac{k+1}{2}}_{n - 2l}, 
    \underbrace{\frac{k-1}{2}, \ldots, \frac{k-1}{2}}_{2l}
\right)}
\end{align}
\end{fact}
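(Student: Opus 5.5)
The plan is to realize everything inside a fermionic Fock space and read off the decompositions from a Howe duality. The starting point is the identification $S \cong \bigwedge \C^n$, the fermionic Fock space on $n$ modes, with $S_\pm$ its even and odd parity sectors. Then $S^{\otimes k} \cong \bigwedge(\C^n \otimes \C^k)$ (up to the usual Jordan--Wigner sign twisting) is the spinor module of $\spingrp(2nk)$. Inside $\mathfrak{so}(2nk)$, the subalgebras $\mathfrak{so}(2n)$ (acting diagonally on the $k$ copies) and $\mathfrak{so}(k)$ (mixing copies through the real quadratic operators $\sum_i (c_i^{(a)\dagger} c_i^{(b)} + \dots)$) form a dual pair. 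The copy-mixing operators are exactly the Bravyi-type operators that appear in the tester. The skew $(\spingrp(2n), \mathrm{Pin}(k))$ Howe duality then gives a multiplicity-free decomposition $\bigoplus_\lambda V_\lambda \otimes V_{\lambda'}$. I would take this from the literature (e.g.\ Howe's ``Remarks on classical invariant theory'', or Goodman--Wallach's treatment of spinor dualities) rather than reprove it.

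What must be checked by hand is the index bookkeeping. Each $V_\lambda$ occurs with highest weight entries in $\Z + \tfrac k2$ and bounded by $\tfrac k2$, since each weight of $S^{\otimes k}$ is a sum of $k$ vectors in $\{\pm\tfrac12\}^n$. To identify the paired label $\lambda'$, I would exhibit joint highest-weight vectors explicitly. These are products of creation operators filling the first $\lambda_i + k/2$ ``copy slots'' of mode $i$ in a fixed order. Computing their $\mathfrak{so}(k)$ weight gives $\lambda'_j$ as the stated count of $\lfloor|\lambda_i|\rfloor$ below a threshold. Exhaustiveness I would confirm by a dimension or character count (Weyl dimension formula on both sides against $2^{nk}$).

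The remaining three decompositions I would derive from the first.

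\textbf{$S_+^{\otimes k}$.} Restrict to the subspace where every copy has even parity. The copy-wise parities are the elements $\pm 1$ of the $\mathrm{Pin}(k)$ side (reflection-type elements), so fixing them selects which $\lambda'$ survive. Translated back through the weight of the highest-weight vector, this yields the congruence $\sum_i \lambda_i \in 2\Z + \tfrac{nk}{2}$ and the inequality $\sum_{i<n}\lambda_i - \lambda_n \le \tfrac{nk}{2} - k$. Here $M_\lambda$ is whatever $\mathrm{Pin}(k)$-module remains, and its precise form is not needed.

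\textbf{$\Sym^{k+1} S_+$.} Take $S_n$-invariants. Since $S_{k+1}$ sits inside $\orthgrp(k+1)$ as permutation matrices, this amounts to asking which $V_{\lambda'}$ contain an $S_{k+1}$-fixed vector. I would argue that only those with $\lambda_n \ge 0$ (no sign flip) can occur, and that all such labels do occur. This is the step I expect to be the main obstacle, since branching $\orthgrp(k+1) \downarrow S_{k+1}$ is not clean. If it fails, the fallback is a direct argument: $\Sym^{k+1} S_+$ is spanned by $\ket{\psi}^{\otimes(k+1)}$, and on the open orbit of pure Gaussian states these span $V_{((k+1)/2,\dots)}$. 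I would then control the other components through the weights of symmetrized Fock basis vectors, where only multiplicity being nonzero matters.

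\textbf{$V_{(k/2,\dots,k/2)} \otimes S_+$.} Use Brauer--Klimyk. The weights of $S_+$ are the vectors $(\pm\tfrac12)^n$ with an even number of minus signs, each of multiplicity one. Adding such a weight to $\mu = (\tfrac k2, \dots, \tfrac k2)$ and sorting gives dominant weights $(\tfrac{k+1}{2}^{\,n-2l}, \tfrac{k-1}{2}^{\,2l})$, with no cancellations because $\mu$ is far from the walls when $k \ge 1$. This yields the final multiplicity-free formula directly.
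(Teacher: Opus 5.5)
The paper records this as a Fact and gives no proof, so the question is only whether your plan works. It fails on the third display, and that display cannot be proved as written: it is false already for $n=9$.

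Your reduction is misstated. $\mathrm{Sym}^{k+1}S_+$ is not the $S_{k+1}$-fixed part of $S^{\otimes(k+1)}$. Under $S^{\otimes(k+1)}\cong\bigwedge(\mathbb{C}^n\otimes\mathbb{C}^{k+1})$ it is the subspace on which the signed permutation group $B_{k+1}=\{\pm1\}^{k+1}\rtimes S_{k+1}\subset\mathrm{O}(k+1)$ acts by a fixed character. That character is trivial or $\det$ according to the parity of the sector carrying $S_+$: sign flips are the copy parities, and Jordan--Wigner reordering of odd blocks turns permutations into $\mathrm{sgn}$. So the multiplicity of $V_\lambda$ is the dimension of a $B_{k+1}$-semi-invariant subspace of its $\mathrm{O}(k+1)$-partner.

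Both sub-claims you intend to prove are false.
\begin{itemize}
\item Not every label with $\lambda_n\ge0$ occurs: $V_{(1^{n-2},0,0)}$ lies in $\bigwedge^2S_+$, not in $\mathrm{Sym}^2S_+$.
\item Labels with $\lambda_n<0$ do occur. Take $n=9$, $k+1=3$. The weight constraints force every $V_\lambda\subset S_+^{\otimes3}$ with $\lambda_9\ge0$ to be $V_{((3/2)^{9-2l},(1/2)^{2l})}$ with $0\le l\le4$. In Weyl's formula each of these has a numerator factor $17$, namely $\langle\lambda+\rho,e_1+e_3\rangle$, or $\langle\lambda+\rho,e_1+e_2\rangle$ when $l=4$. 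All denominators satisfy $\langle\rho,\alpha\rangle\le15$, so all five dimensions are divisible by $17$. But $\dim\mathrm{Sym}^3S_+=\binom{258}{3}=2^8\cdot43\cdot257$ is not. The extra summand is $S_-=V_{((1/2)^8,-1/2)}$, coming from the degree-$9$ $B_3$-semi-invariant harmonic $xyz(x^2-y^2)(y^2-z^2)(z^2-x^2)$.
\end{itemize}

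Your fallback does not rescue this. For $n\ge4$, pure spinors form a proper closed subvariety of $S_+$, not an open orbit. Their $(k+1)$-st powers span only the top irrep, so they say nothing about the other components. What is true, and all the later trace lemma uses, is weaker: components of $\mathrm{Sym}^{k+1}S_+$ not of the types in the fourth display are killed by any intertwiner into $V_{(k/2)^n}\otimes S_+$, such as $P_{[k]}\Pi_{\mathrm{Sym}^{k+1}}$.

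There are also smaller problems elsewhere.
\begin{itemize}
\item The partner group is $\mathrm{O}(k)$, not $\mathrm{Pin}(k)$. $\mathfrak{so}(k)$ sits in $\mathfrak{so}(2nk)$ as $2n$ copies of its vector representation, so the $2\pi$-rotation lifts to $(-1)^{2n}=1$.
\item The copy parities are the coordinate reflections of $\mathrm{O}(k)$, not $\pm1$. Fixing them cuts each partner $V_{\lambda'}$ down to a $\{\pm1\}^k$-isotypic subspace rather than selecting whole labels. So $M_\lambda$ is not an $\mathrm{O}(k)$-module, and the congruence and inequality in the second display are plain weight constraints that give only necessity.
\item The displayed $\lambda'$ depends only on $|\lambda_i|$, so it cannot distinguish $V_\nu$ from $V_\nu\otimes\det$. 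For $k=2$, $(1^n)$ and $(1^{n-1},-1)$ both get $\lambda'=(0)$ but pair with the trivial and determinant characters of $\mathrm{O}(2)$.
\item Your explicit joint highest-weight vectors must be built in an isotropic basis of $\mathbb{C}^k$. Filling standard copy slots is in general not killed by $\sum_a c^\dagger_{ia}c^\dagger_{ja}$.
\item In the fourth display, $\mu=(k/2,\dots,k/2)$ lies on every wall $e_i-e_{i+1}$, not far from them. So ``add and sort'' is wrong; it would give multiplicity $\binom{n}{2l}$. The correct reason is that when $w$ has a $-\tfrac12$ before a $+\tfrac12$, $\mu+w+\rho$ has two equal adjacent coordinates and contributes zero in Brauer--Klimyk. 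Equivalently, use the tensor-product rule for the minuscule $S_+$.
\end{itemize}
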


\begin{fact}[$\spingrp(2n)$ irrep dimensions]\label{fct:irrep-dims}
\begin{align}
\dim V_{{\left(\frac{k}{2}\right)}^n} &= \prod_{1 \leq i < j \leq n} \frac{k + 2n - i - j}{2n - i - j},\\
\dim V_{
{\left(\frac{k+1}{2}\right)}^{2(m-i)}
{\left(\frac{k-1}{2}\right)}^{2i}
} &= 
\dim V_{{\left(\frac{k}{2}\right)}^n}
\binom{2n}{2i}
\frac{
    {\left(\frac{k}{2}\right)}^{\overline{i}}
    {\left(\frac{k + n + 2i}{2}\right)}^{\overline{m-i}}
}{
    {\left(\frac{k+2i - 1}{2}\right)}^{\overline{i}}
    {\left(\frac{k + 4i + 1}{2}\right)}^{\overline{m-i}}
}
.
\end{align}
\end{fact}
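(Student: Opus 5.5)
The plan is to derive both formulas from the Weyl dimension formula for $\spingrp(2n)$, i.e.\ for the root system $D_n$. Its positive roots are $e_i \pm e_j$ for $i<j$, and the half-sum of positive roots is $\rho = (n-1, n-2, \ldots, 1, 0)$. Write $\ell_i = \lambda_i + n - i$ and $r_i = n-i$. The formula then reads
\[
\dim V_\lambda = \prod_{1\le i<j\le n} \frac{(\ell_i - \ell_j)(\ell_i+\ell_j)}{(r_i - r_j)(r_i + r_j)}
= \prod_{i<j}\frac{(\lambda_i-\lambda_j + j - i)(\lambda_i+\lambda_j + 2n-i-j)}{(j-i)(2n-i-j)} .
\]
It is valid for the half-integral (spinorial) weights appearing in \cref{fct:schur-decompositions}, since these are dominant integral weights of $\spingrp(2n)$.

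\textbf{First formula.} For $\lambda = (k/2)^n$ every difference factor satisfies $\lambda_i - \lambda_j = 0$, so these cancel against $(j-i)$. Every sum factor becomes $(k + 2n - i - j)/(2n-i-j)$. This is exactly the first displayed identity, and it needs only this direct substitution.

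\textbf{Second formula.} Let $\mu$ have its first $p$ entries equal to $\frac{k+1}{2}$ and its last $q = 2i$ entries equal to $\frac{k-1}{2}$, with $p+q=n$. Here I read the exponent $2(m-i)$ as $n-2i$; I would fix this indexing at the outset, matching the decomposition in \cref{fct:schur-decompositions}. I will compute $\dim V_\mu / \dim V_{(k/2)^n}$ factor by factor, since the denominators cancel. Pairs $i<j$ fall into three classes.
\begin{enumerate}[label=(\roman*)]
\item Both indices in the top block: the difference factor is unchanged, and the sum factor goes from $k+2n-i-j$ to $k+1+2n-i-j$.
\item Both indices in the bottom block: the difference factor is unchanged, and the sum factor drops to $k-1+2n-i-j$.
\item One index in each block: the sum factor is unchanged, and the difference factor becomes $(1 + j - i)/(j-i)$.
\end{enumerate}

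Next I evaluate each class in closed form.
\begin{itemize}
\item Class (iii) telescopes over $j$ for each fixed $i$, giving $\prod_{i\le p}\frac{n-i+1}{p-i+1}$. Combined over $i$, this should produce $\binom{n}{q}$, or $\binom{2n}{2i}$ after the author's normalisation.
\item Classes (i) and (ii) are products of ratios of consecutive integers. For fixed $i$ and $j$ ranging over a block, each telescopes to a ratio of two linear terms. The remaining product over $i$ then becomes a ratio of Pochhammer symbols $x^{\overline{r}} = x(x+1)\cdots(x+r-1)$ with half-integer starting points. These come from pairing even and odd arguments, which is why factors like $\frac{k}{2}$ and $\frac{k+2i-1}{2}$ appear.
\end{itemize}

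\textbf{Main obstacle.} Steps (i)--(iii) are mechanical. The real difficulty is the bookkeeping that turns the telescoped expressions for classes (i)--(ii) into precisely the stated rising factorials. This requires splitting products over $i+j$ by parity and collecting into rising factorials of step $1$ in the halved variable.

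I would first try to verify the target formula numerically for small cases, for example $n=2,3,4$ with $q\in\{0,2\}$ and small $k$. This pins down the intended meaning of $m$ and the indices. Afterwards I would reverse-engineer the grouping: write each block product as $\Gamma$-ratios $\Gamma(\frac{k+a}{2})/\Gamma(\frac{k+b}{2})$, using the duplication formula to pass between integer and half-integer arguments, and simplify.

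A cross-check is available for $k=0$ (or $k=1$). There the Pochhammer factors are degenerate and the formula must reduce to known dimensions of $S_\pm$ or of exterior powers, which constrains any sign or shift errors.
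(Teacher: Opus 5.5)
Your route through the Weyl dimension formula is the natural one. The paper states this Fact without proof, but it records exactly the $D_n$ roots and Weyl vector you use. Your derivation of the first formula is correct, and so is your class-(iii) factor $\binom{n}{2i}$. The gap is the sentence claiming this becomes ``$\binom{2n}{2i}$ after the author's normalisation.'' No such step exists, because classes (i) and (ii) contribute no further binomial factor. Carrying out your telescoping with $p=n-2i$ gives
\[
\text{(i)}=\prod_{j=1}^{p}\frac{k+2n-2j}{k+2n-p-j},\qquad
\text{(ii)}=\prod_{u=1}^{2i-1}\frac{k-1+u}{k-1+2u}.
\]
Cancel the common even-offset factors in (i) and the common odd-offset factors in (ii); no duplication formula is needed. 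When $n=2m$, this turns the two products into exactly $\frac{((k+n+2i)/2)^{\overline{m-i}}}{((k+4i+1)/2)^{\overline{m-i}}}$ and $\frac{(k/2)^{\overline{i}}}{((k+2i-1)/2)^{\overline{i}}}$. So your argument, carried through, proves the identity with $\binom{n}{2i}$ in place of $\binom{2n}{2i}$. That means it refutes the formula as printed whenever $i\geq 1$.

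Two concrete checks show this:
\begin{itemize}
\item For $n=2$, $m=i=1$, $k=1$, the left side is $\dim V_{(0,0)}=1$. The printed right side is $2\cdot\binom{4}{2}\cdot\frac{1/2}{1}=6$.
\item For $n=4$, $m=2$, $i=1$, $k=1$, the printed right side is $8\cdot\binom{8}{2}\cdot\frac{7}{12}$, which is not even an integer. Using $\binom{4}{2}$ instead gives $28=\dim\bigwedge^2\C^8$, the correct dimension for highest weight $(1,1,0,0)$.
\end{itemize}
The small-case numerical check you proposed would expose this immediately. The right conclusion is that the statement must be amended: replace $\binom{2n}{2i}$ by $\binom{n}{2i}$, and note that the second formula only makes sense for $n=2m$, since the weight has $2m$ entries. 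No normalisation reconciles the two versions.
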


\begin{fact}[$\spingrp(d)$ branching]\label{fct:branching}
    Let $\lambda$ be the highest weights of a $\spingrp(d)$ irrep. Then on restriction to $\spingrp(d-1)$,
    \begin{align}
        V_{\lambda}^{\spingrp(d)} \downarrow_{\spingrp(d-1)} &
        \overset{\spingrp(d-1)}{\cong}
        \bigotimes_{\lambda' \preceq \lambda} V_{\lambda'}^{\spingrp(d-1)},
    \end{align}
where $\lambda'$ is the highest weights of a $\spingrp(d-1)$ irrep and
   \begin{align}
  \lambda' \preceq \lambda & \Leftrightarrow \begin{cases}
      \lambda_1 \geq \lambda'_1 \geq \lambda_2 \geq \cdots \geq \lambda'_{n-1} \geq \lambda_n \geq \abs{\lambda'_n}
      , &d=2n+1,\\
      \lambda_1 \geq \lambda'_1 \geq \lambda_2 \geq \cdots \geq \lambda'_{n-1} \geq \lambda'_{n-1} \geq \lambda_n
      , &d=2n.
  \end{cases}
  \end{align}
\end{fact}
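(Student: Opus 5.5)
This is the classical Murnaghan–Zhelobenko branching rule for the orthogonal pair $\mathfrak{so}(d)\supset\mathfrak{so}(d-1)$, and I would prove it with Weyl characters. First I will record the intended reading of the statement. The restriction is a direct sum (the $\bigotimes$ should be $\bigoplus$), and each $V_{\lambda'}$ occurs with multiplicity one. In the even case $d=2n$, the subgroup $\spingrp(2n-1)$ has rank $n-1$, so the interlacing condition should read
\[
\lambda_1\ge\lambda'_1\ge\lambda_2\ge\cdots\ge\lambda'_{n-1}\ge|\lambda_n|.
\]
In both cases the entries of $\lambda'$ lie in the same coset ($\mathbb Z$ or $\mathbb Z+\tfrac12$) as those of $\lambda$. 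Since $\spingrp$ is connected and simply connected, representations are determined by their $\mathfrak{so}(d,\C)$-module structure. It therefore suffices to compare characters on a maximal torus of $\spingrp(d-1)$, which I take inside a maximal torus of $\spingrp(d)$ via the standard coordinates $x_1,\dots,x_n$.

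\textbf{Odd case $d=2n+1\to 2n$.} Both groups have rank $n$ and share the maximal torus. The Weyl character formula for type $B_n$ writes $\chi_\lambda$ as a ratio of determinants $\det\bigl(x_j^{l_i+1/2}-x_j^{-l_i-1/2}\bigr)$ over the same expression with $l_i$ replaced by $n-i$, where $l_i=\lambda_i+n-i$. For type $D_n$ the numerator is $\tfrac12\bigl[\det(x_j^{m_i}+x_j^{-m_i})+\det(x_j^{m_i}-x_j^{-m_i})\bigr]$ with $m_i=\lambda'_i+n-i$. I will multiply $\chi^{B}_\lambda$ by the ratio of the two Weyl denominators, which is $\prod_j(x_j^{1/2}-x_j^{-1/2})^{-1}$ up to sign. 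The key step is then the identity
\[
\frac{x^{l+1/2}-x^{-l-1/2}}{x^{1/2}-x^{-1/2}}=\sum_{m=-l}^{l}x^{m}.
\]
Expanding the numerator determinant multilinearly, column by column, and regrouping into $D_n$ numerators gives a sum over $\lambda'$ satisfying $\lambda_i\ge\lambda'_i\ge\lambda_{i+1}$ and $\lambda_n\ge|\lambda'_n|$. The terms violating interlacing cancel in pairs by alternation, exactly as in the $\genlingrp$ proof of \cref{prop:GL-interlacing}.

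\textbf{Even case $d=2n\to 2n-1$.} Here the rank drops, and I will restrict to the sub-torus $x_n=1$. To evaluate $\chi^{D}_\lambda$ at $x_n=1$, which is a $0/0$ limit, I will rewrite both numerator and denominator in the variables $y_j=x_j+x_j^{-1}$. For type $D$, $\det(x_j^{m_i}+x_j^{-m_i})$ becomes a determinant of Chebyshev-type polynomials in $y_j$. I then perform row reduction against the last variable $y_n=2$ and expand the resulting divided differences $(T_a(y)-T_a(2))/(y-2)$ as sums of $B_{n-1}$ numerator entries. This produces the interlacing range $\lambda_i\ge\lambda'_i\ge\lambda_{i+1}$ for $i\le n-1$, with the sign of $\lambda_n$ irrelevant through $|\lambda_n|$.

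The main obstacle is the sign and cancellation bookkeeping in these determinant expansions, particularly for the $\frac12(\det_++\det_-)$ form of the $D_n$ numerator and for the $\lambda_n<0$ half of the spin irreps. If the direct computation becomes unwieldy, I would fall back on citing Zhelobenko's theorem (e.g.\ \cite[Theorem 8.1.4]{goodman2009symmetry}). As an independent check on multiplicity one, I would compare dimensions using \cref{fct:irrep-dims} on a small case such as $S_\pm$, which must restrict to the single spinor of $\spingrp(2n-1)$, consistent with $\lambda'=(\tfrac12,\dots,\tfrac12)$ being the unique interlacing weight.
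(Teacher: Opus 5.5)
The paper does not prove this Fact. It is quoted without proof or citation as the classical Murnaghan--Zhelobenko branching rule, so your Weyl-character argument is a genuine proof where the paper only asserts; the citation buys brevity, your argument buys self-containment. Your reading of the statement is the correct one: a direct sum rather than $\bigotimes$, each $V_{\lambda'}$ with multiplicity one, entries of $\lambda'$ in the same coset of $\mathbb Z$ as those of $\lambda$, and the even-case chain ending in $\lambda'_{n-1}\ge\abs{\lambda_n}$, which the printed statement garbles.

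The odd case $B_n\to D_n$ works as you outline, and the bookkeeping you worry about is short. Write $A^D_m=\tfrac12(\det_+(m)+\det_-(m))$ for the $D_n$ alternant.

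1. Your row entries $\sum_{m=-l_i}^{l_i}x^m$ are invariant under $x\to x^{-1}$, so the multilinear expansion produces only $\det_+$ terms.
2. When $m_n>0$, $\det_+(m)=A^D_m+A^D_{\bar m}$, where $\bar m$ flips the sign of $m_n$.
3. When $m_n=0$, $\tfrac12\det_+(m)=A^D_m$.

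Points 2 and 3 are exactly where both signs of $\lambda'_n$, with $\abs{\lambda'_n}\le\lambda_n$, come from. If you first replace row $i$ by row $i$ minus row $i+1$, the summation ranges become disjoint, and no cancellation-in-pairs argument is needed.

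The even case needs three corrections, none of them fatal.

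(i) Restricting to $x_n=1$ is not a $0/0$ limit. The subtorus $\{x_n=1\}$ contains $D_n$-regular elements, and
$\Delta_{D_n}|_{x_n=1}=\prod_{i<j<n}(y_i-y_j)\prod_{i<n}(y_i-2)=\Delta_{B_{n-1}}\prod_{i<n}(x_i^{1/2}-x_i^{-1/2})$
is generically nonzero. Your division by $y_j-2$ is a cancellation against this factor, not the resolution of an indeterminate form.

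(ii) The $\det_-$ half of the $D_n$ numerator has an identically zero last column at $x_n=1$, so only $\tfrac12\det_+$ survives. This is precisely why only $\abs{\lambda_n}$ enters, and it removes the obstacle you flag.

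(iii) For spin weights, $x^m+x^{-m}$ with $m\in\mathbb Z+\tfrac12$ is not a polynomial in $y$. Moreover, your divided differences $(x^m+x^{-m}-2)/(y-2)$ are not Laurent polynomials; for $m=\tfrac12$ one gets $(x^{1/4}+x^{-1/4})^{-2}$. So they cannot be expanded into $B_{n-1}$ numerator entries. Use this route instead:

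1. The last column of $\tfrac12\det_+$ is the constant $2$, so subtract consecutive rows; the exponent differences are integers.
2. Expand along the last column, which is now $(0,\dots,0,2)$.
3. Divide column $j$ by $x_j^{1/2}-x_j^{-1/2}$, which leaves exactly $\Delta_{B_{n-1}}$ in the denominator, and use
\[
\frac{x^{a}+x^{-a}-x^{b}-x^{-b}}{x^{1/2}-x^{-1/2}}=\sum_{k=b}^{a-1}\bigl(x^{k+1/2}-x^{-k-1/2}\bigr),\qquad a-b\in\mathbb Z_{>0},
\]
with $a=m_i$, $b=m_{i+1}$ for $i\le n-1$. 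For $i=n-1$, replace $b=m_n=\lambda_n$ by $\abs{\lambda_n}$, since $x^b+x^{-b}$ is even in $b$.

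The resulting ranges for $k$ are disjoint and strictly decreasing. This gives exactly $\lambda_1\ge\lambda'_1\ge\lambda_2\ge\cdots\ge\lambda_{n-1}\ge\lambda'_{n-1}\ge\abs{\lambda_n}$, each $\lambda'$ once, uniformly for integral and spin weights.
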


\subsubsection{Lie groups}

\begin{theorem}[{Weyl's dimension formula \cite[Theorem 5.84]{knapp1996lie}}]
    Let $V$ be a finite-dimensional irrep of the complex semisimple Lie algebra $\mathfrak{g}$ with highest weight $\lambda$. Then
    \[
    \dim V = \prod_{\alpha \in \Delta^+} \frac{ \braket{\lambda + \delta, \alpha}}{\braket{\delta, \alpha}},
    \]
    where $\Delta^+$ are the positive roots and $\delta$ is the Weyl vector.
\end{theorem}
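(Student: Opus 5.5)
The plan is to derive the dimension formula from the Weyl character formula together with the Weyl denominator identity, both of which I take as known from the standard theory (e.g.\ \cite{knapp1996lie}); they are not proved earlier in the paper. Fix a Cartan subalgebra $\mathfrak h \subset \mathfrak g$ and let $W$ be the Weyl group, with sign character $\varepsilon$. Use the invariant form $\braket{\cdot,\cdot}$ to identify $\mathfrak h$ with $\mathfrak h^*$. The character of $V$ is the function $\chi_\lambda(H) = \tr\bigl(\exp(H)|_V\bigr)$ on $\mathfrak h$. The Weyl character formula states that
\[
\chi_\lambda \cdot \sum_{w \in W} \varepsilon(w) e^{w\delta} = \sum_{w\in W} \varepsilon(w) e^{w(\lambda+\delta)},
\]
where $e^{\mu}$ denotes the function $H \mapsto e^{\mu(H)}$. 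Since $\dim V = \chi_\lambda(0)$, it suffices to evaluate this identity near $H=0$.

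Both sides vanish at $H=0$, so the first step is to restrict to the ray $H = tH_\delta$, where $H_\delta$ corresponds to $\delta$, and let $t \to 0$. For any weight $\mu$, $W$-invariance of the form and reindexing $w \mapsto w^{-1}$ (using $\varepsilon(w^{-1})=\varepsilon(w)$) give
\[
\sum_{w} \varepsilon(w) e^{t\braket{w\mu,\delta}} = \sum_{w}\varepsilon(w) e^{t\braket{\mu, w\delta}}.
\]
The right-hand side is the Weyl denominator evaluated at $tH_\mu$. By the denominator identity $\sum_w \varepsilon(w)e^{w\delta} = \prod_{\alpha\in\Delta^+}\bigl(e^{\alpha/2}-e^{-\alpha/2}\bigr)$, it equals
\[
\prod_{\alpha\in\Delta^+} 2\sinh\bigl(t\braket{\mu,\alpha}/2\bigr).
\]

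Next, apply this with $\mu=\lambda+\delta$ for the numerator and with $\mu=\delta$ for the denominator. For small $t\neq 0$ the denominator is nonzero, because $\braket{\delta,\alpha}>0$ for every positive root $\alpha$. This gives
\[
\chi_\lambda(tH_\delta) = \prod_{\alpha\in\Delta^+}\frac{\sinh\bigl(t\braket{\lambda+\delta,\alpha}/2\bigr)}{\sinh\bigl(t\braket{\delta,\alpha}/2\bigr)}.
\]
The left-hand side is continuous in $t$. Letting $t\to 0$ and using $\sinh(x)\sim x$ in each factor yields $\dim V = \prod_{\alpha\in\Delta^+}\braket{\lambda+\delta,\alpha}/\braket{\delta,\alpha}$, which is the claim.

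The only substantive ingredient is the character formula itself. If it had to be proved rather than cited, that would be the main obstacle; I would go through the Casimir/Verma-module argument, or Weyl's integration formula for the compact form. Given that formula and the denominator identity (which is the case $\lambda=0$ of the character formula), the remaining steps are the routine reindexing over $W$ and the limit above.
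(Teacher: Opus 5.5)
Your derivation is correct. The paper gives no proof of its own and only cites Knapp, and your argument is the standard textbook derivation found there: evaluate the character formula along $tH_\delta$, use $W$-invariance to rewrite each alternating sum as the Weyl denominator at $tH_\mu$, factor it with the denominator identity, and let $t\to 0$.

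One slip, which does not affect the argument since you cite the denominator identity separately, is your parenthetical. The denominator identity is the $\lambda=0$ case of the character formula only when that formula is written with the product denominator $\prod_{\alpha\in\Delta^+}(e^{\alpha/2}-e^{-\alpha/2})$. With the alternating-sum form you display, setting $\lambda=0$ gives a tautology.
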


\begin{fact}
For type $D_n$ (e.g., $\mathfrak{so}(2n+1)$),
\begin{align}
\Delta^+ &= \bracebar{e_i \pm e_j}{i < j} \cup \braces*{\pm e_i},\\
\delta &= \sum_{i=1}^{n} \paren*{n + \frac12 -i} e_i.
\end{align}
For type $D_n$ (e.g., $\mathfrak{so}(2n)$),
\begin{align}
\Delta^+ &= \bracebar{e_i \pm e_j}{i < j},\\
\delta &= \sum_{i=1}^{n-1} (n-i) e_i.
\end{align}
\end{fact}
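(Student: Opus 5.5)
The plan is a direct computation of the root system of $\mathfrak{so}(N,\C)$ from an explicit Cartan subalgebra, followed by half-summing the positive roots. One remark first: the first case, $\mathfrak{so}(2n+1)$, is type $B_n$ rather than $D_n$. Also, the list $\{\pm e_i\}$ should be read as $\{e_i\}$ for the positive system; $-e_i$ is a negative root. I will prove the statement with these readings.

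\textbf{Step 1 (Cartan subalgebra).} Realize $\mathfrak{so}(N,\C)$ as the matrices $X$ with $X^\transpose J + J X = 0$, where $J$ is the antidiagonal symmetric form pairing basis vectors $v_i \leftrightarrow v_{-i}$ for $i \in [n]$. When $N = 2n+1$ there is an extra vector $v_0$ paired with itself. The diagonal matrices
\[
H = \diag(h_1, \ldots, h_n, (0), -h_n, \ldots, -h_1)
\]
form a Cartan subalgebra $\mathfrak h$. Let $e_i \in \mathfrak h^*$ denote the coordinate functional $H \mapsto h_i$.

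\textbf{Step 2 (root vectors).} The matrices $E_{ab} - E_{-b,-a}$, with $a, b$ ranging over nonzero indices and $a \ne b$, span the complement of $\mathfrak h$. Here we use the convention $E_{-i,\cdot}$ for the partner index. Computing $[H, E_{ab} - E_{-b,-a}] = (h_a - h_b)(E_{ab} - E_{-b,-a})$, with $h_{-i} \coloneqq -h_i$, shows that the roots in the $D_n$ part are exactly $\pm e_i \pm e_j$ for $i < j$. In the odd case, the extra vectors $E_{i0} - E_{0,-i}$ and their transposes contribute the roots $\pm e_i$. A dimension count confirms nothing is missing: $n + 2n(n-1) = n(2n-1) = \dim \mathfrak{so}(2n)$, and adding $2n$ gives $n(2n+1) = \dim \mathfrak{so}(2n+1)$.

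\textbf{Step 3 (positive system and Weyl vector).} Choose the lexicographic ordering on $\mathfrak h^*_\R$, which is the one compatible with the highest-weight conventions used in \cref{fct:schur-decompositions} and \cref{fct:branching}, where weights are dominant when $\lambda_1 \geq \cdots$. The positive roots are then $e_i \pm e_j$ for $i < j$, together with $e_i$ in type $B_n$. Now compute $\delta = \tfrac12 \sum_{\alpha > 0} \alpha$ coordinatewise. The coefficient of $e_i$ receives:
\begin{itemize}
\item $+2$ from each pair $e_i \pm e_j$ with $j > i$, giving $2(n-i)$;
\item a net $0$ from each pair $e_k \pm e_i$ with $k < i$;
\item in type $B_n$ only, an additional $+1$ from $e_i$ itself.
\end{itemize}
Halving gives $\delta_i = n - i$ in type $D_n$ and $\delta_i = n - i + \tfrac12$ in type $B_n$, as claimed.

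As a sanity check, I would verify that $\braket{\delta, \alpha} = 1$ for every simple root: $e_i - e_{i+1}$ in both types, $e_{n-1} + e_n$ in type $D_n$, and $e_n$ in type $B_n$. This is the standard characterization of $\delta$.

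There is no real obstacle here. The only care needed is in Step 2, keeping the signs and index conventions of the form $J$ consistent so that the root vectors genuinely lie in $\mathfrak{so}(N)$. I would cross-check the result against the known dimensions of the spinor representations: Weyl's formula with these data should give $2^{n-1}$ for $S_\pm$ of $\mathfrak{so}(2n)$ and $2^n$ for the spinor of $\mathfrak{so}(2n+1)$. This check would catch any error in $\delta$.
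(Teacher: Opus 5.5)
Your argument is correct. The paper gives no proof of this fact; it is quoted as standard background alongside Weyl's dimension formula from Knapp. Your explicit computation is exactly the textbook derivation behind it: the diagonal Cartan subalgebra for the antidiagonal form, root vectors $E_{ab}-E_{-b,-a}$, the lexicographic positive system, and the coordinatewise half-sum.

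Your two corrections to the statement are right, and they are needed. The first case is type $B_n$. Its positive roots include $e_i$ but not $-e_i$: if both $\pm e_i$ were counted, the short-root contributions would cancel. You would then get $\delta_i = n-i$, contradicting the stated $\delta_i = n-i+\frac12$.

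There is one slip in your optional sanity check. In type $B_n$ you get $\langle \delta, e_n\rangle = \frac12$, not $1$. The correct characterization is $\langle \delta, \alpha^\vee\rangle = 1$ for every simple coroot $\alpha^\vee = 2\alpha/\langle\alpha,\alpha\rangle$. For the long simple roots (with $\langle e_i,e_j\rangle=\delta_{ij}$) this coincides with $\langle\delta,\alpha\rangle = 1$. For the short root $e_n$ it reads $2\langle \delta, e_n\rangle = 1$, which your $\delta$ satisfies. The spinor-dimension cross-check via Weyl's formula is unaffected, since it only involves ratios of pairings.
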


\subsubsection{Orthogonal group}

\paragraph{Legendre polynomials.} The representation theory of $\splorthgrp(3)$ is intimately related to spherical harmonics and Legendre polynomials, which we review here first. The Legendre polynomials are functions $P_\ell : [-1,1] \to \R$, commonly defined via the Rodrigues formula:
\[
P_\ell(x) = \frac{1}{2^\ell \ell!} \dfrac{d^\ell}{dx^\ell}(1 - x^2)^\ell.
\]

The \emph{associated} Legendre polynomials $P_{\ell}^{(m)}$ for $0 \leq m \leq \ell$ are obtained from the Legendre polynomials as follows:
\[
    P_{\ell}^{(m)}(x) = (-1)^m (1 - x^2)^{m/2}\dfrac{d^m}{dx^m} P_\ell(x).
\]
In particular, $P_{\ell}^{(0)}(x) = P_\ell(x)$. For negative $m \geq - \ell$, we can define the associated Legendre polynomials
\[
P_{\ell}^{(m)}(x) = (-1)^m \frac{(\ell + m)!}{(\ell - m)!} P_{\ell}^{-m}(x).
\]

We give a useful identity obeyed by the Legendre polynomials.

\begin{proposition}\label{prop:legendre-inner-product-identity}
Let $v = (x, y, z) \in S^2$ and consider any $u,w \in S^2$. Then
\[
    \dfrac{2 \ell + 1}{4 \pi} \int_{S^2} P_\ell(u \cdot v)  P_\ell(w \cdot v) \ \rd x \, \rd y \, \rd z = P_\ell(u \cdot w),
\]
where $u \cdot v$ is the standard inner product in $\R^3$.
\end{proposition}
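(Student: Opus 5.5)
The plan is to derive the identity from the addition theorem for spherical harmonics combined with their orthonormality on $S^2$; the integral is understood with respect to the surface measure on $S^2$, of total mass $4\pi$.

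First, fix an orthonormal basis $\{Y_{\ell m}\}_{m=-\ell}^{\ell}$ of the degree-$\ell$ spherical harmonics, which span the $(2\ell+1)$-dimensional irrep of $\splorthgrp(3)$ on $L^2(S^2)$. The addition theorem states
\[
P_\ell(u\cdot v) = \frac{4\pi}{2\ell+1}\sum_{m=-\ell}^{\ell} Y_{\ell m}(u)\overline{Y_{\ell m}(v)}.
\]
If I want a self-contained argument, I would prove it from representation theory. The kernel $K(u,v)=\sum_m Y_{\ell m}(u)\overline{Y_{\ell m}(v)}$ is the integral kernel of the orthogonal projector onto the degree-$\ell$ harmonics. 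It does not depend on the choice of basis, so it is $\splorthgrp(3)$-invariant, which means $K(u,v)=f(u\cdot v)$ for some function $f$.

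Next, fix $u$. The function $v\mapsto f(u\cdot v)$ is a degree-$\ell$ harmonic invariant under rotations about $u$. The zonal harmonic in this irrep is unique up to scale; this is the weight-zero vector, which has multiplicity one by \cref{prop:sl2-module}. Hence $f = c\,P_\ell$ for some constant $c$. To find $c$, set $u=v$ and integrate over $v$. Since $P_\ell(1)=1$,
\[
4\pi c = \int_{S^2} K(v,v)\,\rd v = \sum_m \|Y_{\ell m}\|^2 = 2\ell+1,
\]
so $c=(2\ell+1)/(4\pi)$.

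Now substitute the addition theorem for both $P_\ell(u\cdot v)$ and $P_\ell(w\cdot v)$; since $P_\ell$ is real, the second factor can be written in conjugated form. This gives
\[
\int_{S^2} P_\ell(u\cdot v)P_\ell(w\cdot v)\,\rd v = \Big(\frac{4\pi}{2\ell+1}\Big)^2\sum_{m,m'} Y_{\ell m}(u)\overline{Y_{\ell m'}(w)}\int_{S^2} \overline{Y_{\ell m}(v)}Y_{\ell m'}(v)\,\rd v .
\]
Orthonormality gives $\delta_{mm'}$ for the inner integral. Applying the addition theorem once more collapses the remaining sum to $\frac{4\pi}{2\ell+1}P_\ell(u\cdot w)$. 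Multiplying by $\frac{2\ell+1}{4\pi}$ yields the claim.

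An equivalent phrasing avoids bases: the reproducing kernel of an orthogonal projector $\Pi$ satisfies $\Pi^2=\Pi$. The main obstacle is the addition theorem itself, and in particular two of its ingredients: the uniqueness of the zonal function, and the identification of that function with $P_\ell$. I would justify the identification by observing that $P_\ell(u\cdot v)$ is harmonic in $v$ as a degree-$\ell$ homogeneous extension, which follows from the Legendre differential equation. Alternatively, I could simply cite the classical theorem.
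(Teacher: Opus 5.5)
The paper states this proposition without proof, as a classical fact, so there is no argument of its own to compare against. Your proof is the standard one: it is the kernel form of $\Pi^2=\Pi$ for the orthogonal projector onto $\calY_\ell$, written through the addition theorem. It is correct, including the determination $c=(2\ell+1)/(4\pi)$ and your reading of $\rd x\,\rd y\,\rd z$ as surface measure of total mass $4\pi$.

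I have two remarks. First, the ingredients you single out as the main obstacle are already available from the paper's own definitions. By \cref{eq:spherical-z-rotation-action}, the $Y_{\ell,m}$ are eigenvectors of $U_{R_z(\alpha)}$ with eigenvalues $e^{-\ri m\alpha}$. So the subspace of $\calY_\ell$ fixed by rotations about $e_3$ is exactly $\C\, Y_{\ell,0}$, and by \cref{eq:spherical-harmonics} $Y_{\ell,0}$ is a multiple of $P_\ell(e_3\cdot v)$. Conjugating by a rotation that sends $e_3$ to $u$ then gives, simultaneously, uniqueness of the zonal function and its identification with $v\mapsto P_\ell(u\cdot v)$. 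You therefore need neither the Legendre-equation harmonicity check nor the detour through $\spllinalg(2,\C)$-modules. The latter would in any case need the weight-multiplicity-one statement, which \cref{prop:sl2-module} only states implicitly.

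Second, your step $P_\ell(1)=1$ is precisely where the normalization enters, and it is not cosmetic. The paper's displayed Rodrigues formula uses $(1-x^2)^\ell$, which gives $P_\ell(1)=(-1)^\ell$. Since the left-hand side is quadratic in $P_\ell$ while the right-hand side is linear, the proposition with that convention fails for odd $\ell$. For $\ell=1$, the left side equals $u\cdot w$ and the right side equals $-u\cdot w$.

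What you prove is the identity for the standard Legendre polynomials, with $(x^2-1)^\ell$ in Rodrigues' formula, as the paper itself uses in \cref{prop:combinatorial-identity}. This is clearly the intended statement. The discrepancy is also harmless for the paper's only application, at $u\cdot w=0$, where odd-degree Legendre polynomials vanish.
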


\paragraph{Representation theory of $\splorthgrp(3)$.} For integers $\ell,m$ such that $-\ell \leq m \leq \ell$, we define the spherical harmonics
\begin{equation}\label{eq:spherical-harmonics}
Y_{\ell,m}(\theta,\phi) = (-1)^m \sqrt{\dfrac{(2\ell + 1)(\ell - m)!}{4\pi(\ell + m)!}}P_\ell^{(m)}(\cos \theta)e^{\ri m \phi},
\end{equation}
where $P_\ell^{(m)}$ are the associated Legendre polynomials. Here, we are abusing notation slightly and taking $P_\ell^{(m)} = P_\ell^{(-m)}$ when $m < 0$.
For brevity, we will denote
\begin{equation}
    c_{\ell, m} = \sqrt{\dfrac{(2\ell + 1)(\ell - m)!}{4\pi(\ell + m)!}}
\end{equation} %
We define
\[
\calY_\ell = \linspan\{Y_{\ell,m}: -\ell \leq m \leq \ell\}
\]
to be a $(2 \ell + 1)$-dimensional subspace of $L^2$-integrable functions on $S^2$. It is a well-known fact that the $\calY_\ell$ form the irreps of $\splorthgrp(3)$ \cite{fulton2004representation}.
To be concrete, we describe the action of $R$ on these spherical harmonics $Y_{\ell,m}$. Let $v = (\sin \theta \cos \phi, \sin \theta \sin \phi, \cos \theta)$ be a vector in $S^2$. Any $R \in \splorthgrp(3)$ is a rotation about the origin that maps the angle pair $(\theta, \phi)$ to a pair $(\theta^\prime, \phi^\prime) = f_R(\theta, \phi)$, where $f_R$ is some (possibly nonlinear) function corresponding to the rotation $v \to R^{-1}v$. For a basis element $Y_{\ell, m}(\theta,\phi)$, the action $U_R$ of the rotation $R \in \splorthgrp(3)$ is simply:
\begin{equation}\label{eq:so-3-action-spherical-harmonics}
    U_R Y_{\ell,m}(\theta, \phi) = Y_{\ell,m}(\theta^\prime, \phi^\prime).
\end{equation}

We will be specifically interested in the case where $R$ is a rotation about the $z$-axis:
\[
R_z(\alpha) =
\begin{pmatrix}
    \cos \alpha & -\sin \theta & 0 \\
    \sin \alpha & \cos \alpha & 0 \\
    0 & 0 & 1
\end{pmatrix}.
\]
For this case, we have
\begin{equation}\label{eq:spherical-z-rotation-action}
U_{R_z(\alpha)} Y_{\ell, m}(\theta, \phi) = Y_{\ell, m}(\theta, \phi - \alpha) = e^{-\ri m\alpha}Y_{\ell, m}(\theta, \phi).
\end{equation}

\paragraph{Representation theory of $\orthgrp(3)$.} The irreps of $\orthgrp(3)$ are simply those of $\splorthgrp(3)$ with an adjoined parity component. That is, each irrep of $\orthgrp(3)$ is isomorphic to one of
\[
\calY_\ell,\quad \calY_\ell \otimes {\det},
\]
where ${\det}$ is the determinant representation.

\subsection{Fermions}
In this section we give a brief overview of fermionic systems. For a more thorough overview, we refer the reader to \cite{bravyi2004lagrangian}.
We consider fermionic systems of $n$ modes, each of which can be occupied by at most one fermion. As such, the corresponding Hilbert space is a Fock space $\bigwedge \C^n$ of dimension $2^n$. The fermionic creation and annihilation operators $c_i^\dagger$ and $c_i$ insert or remove a fermion in mode $i$, respectively.
These operators satisfy the canonical anticommutation relations:
\[
\{c_i, c_j\} = 0,\quad \{c_i, c_j^\dagger\} = \delta_{ij}.
\]
The algebra spanned by the $2n$ creation and annihilation operators is the space of complex $2^n \times 2^n$ matrices. One way to see is this by introducing the Majorana operators $\majorana_1,\ldots,\majorana_{2n}$:
\[
\majorana_{2i-1} =c_i + c_i^\dagger, \quad \majorana_{2i} = -\ri(c_i - c_i^\dagger).
\]
It can be shown that the canonical anticommutation relations in terms of Majorana operators are equivalent to $\{\majorana_i, \majorana_j\} = 2\delta_{ij}$. 

We often represent fermionic states in the language of Fock states. The vacuum state $\ket{0^n}$ is the state where every mode has occupation number $0$; all other states are formed by populating modes with creation operators:
\[
\ket{x_1, \ldots, x_n} = \prod_{i=1}^n (c_i^\dagger)^{x_i} \ket{0^n}, \quad x_i \in \{0, 1\}.
\]
An equivalent notation is to label Fock states by subsets $S \subseteq [n]$ where $S = (i : x_i = 1)$. We take the standard sign convention: $\ket{S} = c_{S_1}^\dagger \cdots c_{S_{|S|}}^\dagger \ket{0^n}$ and $S_1 < \cdots < S_{|S|}$.

By the fermionic parity superselection rule, all physical fermionic pure states must have fixed parity \cite{wick1997parity,streater2000pct}. That is, the Fock space has a natural grading between $\ket{S}$ with $|S|$ even or odd, and physical fermionic states cannot have superpositions across this grading. It can be shown that an operator commutes with parity if and only if it is a linear combination of even-degree polynomials in the Majorana operators.

Fermionic Gaussian unitaries are unitary operators generated by Hamiltonians that are quadratic in the creation and annihilation operators. These form a representation of the spin group $\spingrp(2n)$ on Fock space. Fermionic Gaussian states are the Gibbs states of such Hamiltonians.
Pure fermionic Gaussian states are the rank-$1$ zero-temperature states; equivalently, they are the union of the orbits of $\ket{0^n}$ and $\ket{1,0^{n-1}}$ under all fermionic Gaussian unitaries. We denote the set of pure fermionic Gaussian states by $\fermionicgaussianset$. Slater determinants $\slaterdeterminantset \subset \fermionicgaussianset$ are the subset of pure fermionic Gaussian states that are also eigenstates of the number operator $N = \sum_{i=1}^n c_i^\dagger c_i$.

\subsection{Bosons}

The $n$-mode bosonic Fock space $\bigvee \C^n$ is an infinite-dimensional symmetric subspace, spanned by Fock basis states of the form $\ket{x_1,\ldots,x_n}$ where $x_i \in \mathbb{N}_0$. Unlike the fermionic case, we are not restricted to a maximum occupation of $1$ per mode. We can similarly define raising and lowering operators $a_i^\dagger$ and $a_i$ for each mode, but the canonical relations are now \emph{commutation} relations:
\[
[a_i, a_j] = 0, \quad [a_i, a_j^\dagger] = \delta_{ij}.
\]
Otherwise, the Fock basis states are generated similarly (i.e., by applying raising operators on the vacuum $\ket{0^n}$).

Bosonic Gaussian unitaries are an important class of bosonic operations that are uniquely defined by their action on raising and lowering operators by conjugation. Concretely, define the vector of operators $v = (q_1, \ldots, q_n, p_1, \ldots, p_n)$ where the quadrature operators $q_i, p_i$ are
\[
q_i = \frac{a_i + a_i^\dagger}{\sqrt{2}}, \quad p_i = \frac{a_i - a_i^\dagger}{i\sqrt{2}}.
\]
A bosonic Gaussian unitary $U$ is specified by a matrix $S \in \symplgrp(2n, \R)$ and a vector $r \in \R^{2n}$ such that
\[
Uv_iU^\dagger = r_i + \sum_{j=1}^{2n} S_{ij} v_j.
\]
The vector $r = (r_1, \ldots, r_{2n})$ is known as the displacement vector of $U$. If $r = 0^n$, the unitary $U$ is said to have mean zero. The group of all such zero-mean unitaries forms a projective representation of the symplectic group $\symplgrp(2n, \R)$. It is known that any bosonic Gaussian unitary can be written as the product of a zero-mean bosonic Gaussian unitary $V_S$ and a displacement unitary $D_r$:
\begin{equation}\label{eq:bosonic-decomposition}
    U = D_{r}V_S.
\end{equation}
The zero-mean unitary $V_S$ is sometimes referred to as the symplectic action unitary, and the displacement unitary $D_r$ is given by
\[
    D_r = \exp\left(\ri \sum_{k=1}^n (\alpha_k a_k^\dagger + \alpha_k^* a_k)\right).
\]

Bosonic Gaussian states are defined to be the Gibbs states of bosonic Hamiltonians that are at most quadratic in the raising and lowering operators.
Analogous to the fermion case, the set of pure Bosonic Gaussian states can also be defined as the orbit of $\ket{0^n}$ under all bosonic Gaussian unitaries. Zero-mean Gaussian states are those in the orbit of only zero-mean Gaussian unitaries. The set of bosonic Gaussian states is denoted $\bosonicgaussianset$, and the set of zero-mean bosonic Gaussian states is $\zeromeanbosonicgaussianset \subset \bosonicgaussianset$.

Another important subset of $\bosonicgaussianset$ is the class of coherent states, also known as Glauber states, which are ubiquitous in quantum optics. For a given $\alpha \in \C$, there is a corresponding single-mode coherent state
\[
    \ket{\alpha} = e^{-\abs{\alpha}^2/2} \sum_{n = 0}^\infty \frac{\alpha^n}{\sqrt{n!}}\ket{n}
\]
which is essentially the vacuum displaced in phase space coordinates $(q, p)$ by $\alpha = \frac{q + \ri p}{\sqrt{2}}$.
The class of all single-mode coherent states is denoted $\coherentstateset_1$, or $\coherentstateset$ for short. The class of all $n$-mode coherent states is then $\coherentstateset_n$, which is given by the tensor product of $n$ single-mode coherent states:
\[
\ket{\alpha} = \ket{\alpha_1} \otimes \cdots \otimes \ket{\alpha_n}
\]
where now $\alpha \in \C^n$.
In the multimode setting, any coherent states is reachable from the vacuum state by applying some displacement operator:
\[
    \ket{\alpha} = D_\alpha \ket{0^n}.
\]

\begin{fact}\label{fact:coherent-identity-resolution}
The set of coherent states resolve to the identity:
    \begin{equation}
    \frac{1}{\pi}\int_{\C} \ketbra{\alpha}{\alpha}\ \rd^2\alpha = \identity.
    \end{equation}
\end{fact}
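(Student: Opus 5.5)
The statement to prove is the resolution of the identity by coherent states, $\frac{1}{\pi}\int_{\C}\ketbra{\alpha}{\alpha}\,\rd^2\alpha=\identity$ on the single-mode Fock space. The plan is to compute every Fock-basis matrix element of the left-hand side directly from the number-state expansion of $\ket{\alpha}$, and check that the result is $\delta_{mn}$.

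Substituting $\ket{\alpha}=e^{-\abs{\alpha}^2/2}\sum_{n\ge0}\frac{\alpha^n}{\sqrt{n!}}\ket{n}$ gives
\[
\bra{m}\Big(\tfrac{1}{\pi}\int_{\C}\ketbra{\alpha}{\alpha}\,\rd^2\alpha\Big)\ket{n}
=\frac{1}{\pi\sqrt{m!\,n!}}\int_{\C}e^{-\abs{\alpha}^2}\alpha^m(\alpha^*)^n\,\rd^2\alpha .
\]
Next I pass to polar coordinates $\alpha=re^{\ri\phi}$ with $\rd^2\alpha=r\,\rd r\,\rd\phi$. The integrand factorizes as $r^{m+n+1}e^{-r^2}e^{\ri(m-n)\phi}$. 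The angular integral equals $2\pi\delta_{mn}$, so all off-diagonal elements vanish.

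For $m=n$, the substitution $s=r^2$ turns the radial integral into $\int_0^\infty r^{2n+1}e^{-r^2}\,\rd r=\tfrac12\int_0^\infty s^ne^{-s}\,\rd s=\tfrac{n!}{2}$. Combining the factors gives
\[
\frac{1}{\pi\, n!}\cdot 2\pi\cdot\frac{n!}{2}=1 .
\]
Hence the operator has matrix elements $\delta_{mn}$ in the Fock basis, which is exactly the identity.

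The only real obstacle is rigor: the integral of a rank-one-operator-valued function must be justified, and the exchange of the infinite Fock sum with the integral needs a reason. I would handle this by defining the integral in the weak sense, through the matrix elements $\bra{\phi}\cdot\ket{\chi}$. The interchange for fixed $m,n$ is trivial, since only a single term of each expansion contributes. To extend from Fock basis vectors to arbitrary vectors, I would note that $\frac{1}{\pi}\int\abs{\braket{\alpha|\chi}}^2\,\rd^2\alpha=\sum_n\abs{\chi_n}^2$ by Tonelli's theorem together with the same orthogonality computation. This shows the resulting quadratic form is bounded and equals $\braket{\chi|\chi}$, and polarization then gives the identity on the whole space.
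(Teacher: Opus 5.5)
Your proposal is correct. The paper gives no proof of this statement; it records it as a standard fact and invokes it, in tensorized $n$-mode form, in the proof of \cref{prop:acceptance-projector}. Your Fock-basis matrix-element computation is the standard verification one would supply.

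One point of precision in your rigor paragraph. Tonelli's theorem cannot be applied directly to the double series $\sum_{m,n}\chi_m^*\chi_n\,e^{-\abs{\alpha}^2}\alpha^m(\alpha^*)^n/\sqrt{m!\,n!}$: its terms are complex, and for general $\chi\in\ell^2$ it need not be absolutely integrable. The order of operations must be as follows.
\begin{enumerate}
\item At fixed $r$, do the angular integral first. This is legitimate because the series for $\braket{\alpha|\chi}$ converges absolutely and uniformly in $\phi$; its termwise absolute values sum to at most $\norm{\chi}e^{r^2/2}$.
\item By orthogonality of the $e^{\ri(m-n)\phi}$, the angular integral equals $2\pi e^{-r^2}\sum_n\abs{\chi_n}^2 r^{2n}/n!$.
\item Only then apply Tonelli to this nonnegative series in the radial integral.
\end{enumerate}
With that ordering, the norm identity, the boundedness of the sesquilinear form (via Cauchy--Schwarz), and the polarization step all go through as you describe.
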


\subsection{Statistics}

We will need surprisingly few sophisticated statistical tools in this work; the following lemma controls the sample complexity for distinguishing between the two cases (close vs.\ far) in property testing.

\begin{lemma}[Distinguishing Bernoulli distributions]\label{lem:bernoulli-sample-complexity}
Let $0 \leq p_1 < p_2 \leq 1$ and $0 < \delta < 1$. Consider two Bernoulli distributions with means $p_1$ and $p_2$.
	Let $k = O\left(\frac{\log(1/\delta)}{{(\sqrt{p_2}-\sqrt{p_1})}^2}\right)$,
    and 
    $$
    \theta =\begin{cases} 
	\displaystyle
    \frac{
		\log\frac{1-p_1}{1-p_2}
	}{
		\log \frac{p_2(1-p_1)}{p_1(1-p_2)}
	}&\text{ if $p_1\neq 0$ and $p_2 \neq 1$}\\
    \, \\
    \displaystyle
    \frac{p_2+p_1}{2} &\text{ otherwise }.
    \end{cases}
    $$
	Taking the mean $\hat{p}$ of $k$ samples and returning $1$ if $\hat{p} \leq \theta$ and $2$ if $\hat{p} > \theta$ succeeds in distinguishing the two cases with probability at least $1-\delta
	$.
\end{lemma}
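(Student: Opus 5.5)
The plan is to use a Chernoff-type (KL-divergence) argument with the likelihood-ratio threshold $\theta$, and then lower bound the divergence by the squared Hellinger distance, which is governed by $(\sqrt{p_2}-\sqrt{p_1})^2$.

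\emph{Threshold.} The rule ``declare 2 iff $\hat p>\theta$'' is exactly the likelihood-ratio test comparing $\prod_i p_2^{x_i}(1-p_2)^{1-x_i}$ against the analogous product for $p_1$. Indeed, the log-likelihood ratio per sample is linear in $\hat p$, and it vanishes precisely when $\hat p = \theta$. For the given $\theta$ one checks that $p_1 \le \theta \le p_2$; this is the standard fact that the equal-KL point lies between the two means.

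\emph{Chernoff bound.} Apply the Chernoff--Hoeffding bound in its relative-entropy form:
\[
\Pr_{p_1}[\hat p>\theta]\le e^{-k\,D(\theta\|p_1)}, \qquad \Pr_{p_2}[\hat p\le\theta]\le e^{-k\,D(\theta\|p_2)}.
\]
The chosen $\theta$ is exactly the point where $D(\theta\|p_1)=D(\theta\|p_2)$. Expanding both divergences shows that equality reduces to $\theta\log\frac{p_2(1-p_1)}{p_1(1-p_2)}=\log\frac{1-p_1}{1-p_2}$, and this common value is the Chernoff information $C(p_1,p_2)$.

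\emph{Lower bounding $C$.} Use the Bhattacharyya bound
\[
C(p_1,p_2)\ \ge\ -\log\Bigl(\sqrt{p_1p_2}+\sqrt{(1-p_1)(1-p_2)}\Bigr)\ \ge\ 1-\sqrt{p_1p_2}-\sqrt{(1-p_1)(1-p_2)}.
\]
The right-hand side equals the squared Hellinger distance $H^2=\tfrac12\bigl[(\sqrt{p_2}-\sqrt{p_1})^2+(\sqrt{1-p_1}-\sqrt{1-p_2})^2\bigr]$, which is at least $\tfrac12(\sqrt{p_2}-\sqrt{p_1})^2$. Therefore taking $k\ge 2\log(1/\delta)/(\sqrt{p_2}-\sqrt{p_1})^2$ makes both error probabilities at most $\delta$.

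\emph{Degenerate cases.} If $p_1=0$ or $p_2=1$, the formula for $\theta$ breaks down and the midpoint is used instead. If $p_1=0$, a type-I error requires $\hat p>0$, which is impossible; a type-II error requires $\hat p\le p_2/2$ under $p_2$, which by Chernoff has probability $e^{-k D(p_2/2\|p_2)}\le e^{-c k p_2}$. This is fine because $(\sqrt{p_2}-0)^2=p_2$. The case $p_2=1$ is symmetric: now $1-p_1\ge(1-\sqrt{p_1})^2\ge\tfrac14(1-\sqrt{p_1})^2$ after suitable manipulation. More simply, $D((1+p_1)/2\|p_1)\gtrsim(1-p_1)\ge(1-\sqrt{p_1})^2$ using $1-p_1=(1-\sqrt{p_1})(1+\sqrt{p_1})$. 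Similarly, $D(p_2/2\|p_2)\ge c\,p_2$ for an absolute constant, e.g. via $D(a\|b)\ge (a-b)^2/(2\max(a,b))$ for $a<b$.

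\emph{Main obstacle.} The subtle step is justifying that the Chernoff information at the equalizing $\theta$ dominates the Bhattacharyya/Hellinger quantity. I would cite $C=\max_{s\in[0,1]}-\log\sum_x p_1(x)^{1-s}p_2(x)^s\ge$ its value at $s=\tfrac12$. I would also verify that the equal-KL point is the optimizer of the exponent in the Chernoff bound; otherwise one can simply bound each tail directly at $s=\tfrac12$ via Markov's inequality on $\prod_i (p_2(x_i)/p_1(x_i))^{1/2}$. This direct Markov argument sidesteps the issue and yields error at most $e^{-kH^2}$-type bounds for both sides.
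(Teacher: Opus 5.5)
Your proposal is correct and matches the paper's proof in substance. The paper also handles $p_1=0$ and $p_2=1$ via Chernoff--Hoeffding, with tails $e^{-kp_2/8}$ and $e^{-k(1-p_1)/8}$. In the main case it applies Markov to $e^{tX}$ with $t=\tfrac12\log\frac{p_2(1-p_1)}{p_1(1-p_2)}$, which is exactly your $s=\tfrac12$ fallback, bounding both tails by $\bigl(\sqrt{p_1p_2}+\sqrt{(1-p_1)(1-p_2)}\bigr)^k$ and then using the same Hellinger inequality to get $k\geq 2\log(1/\delta)/(\sqrt{p_2}-\sqrt{p_1})^2$. Your primary route through the KL-form Chernoff bound at the equal-divergence point, together with the bound on the Chernoff information, is valid but unnecessary. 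The direct half-tilt Markov bound avoids having to check $p_1\le\theta\le p_2$ or to identify $D(\theta\|p_1)$ with the Chernoff information.
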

\begin{proof}
    First we examine the edge cases.  Let $X=\sum_{i=1}^k X_i $ be the sum of $k$ i.i.d.\ Bernoulli samples.  If $p_1=0$, then
    \begin{align*}
        \Pr_{p_1}\left[ X > k\frac{p_2}{2} \right]=0 \text{       and     }\Pr_{p_2} \left[ X \leq k\frac{p_2}{2} \right] \leq e^{-k p_2/8},
    \end{align*}
    where the inequality is due to the Chernoff-Hoeffding bound.  It follows that $k=O(\log(1/\delta)/p_2)=O(\log(1/\delta)/(\sqrt{p_2})^2$ many samples can drive the probability of misclassification below $\delta$.

    If $p_2=1$ then we have:
    \begin{align*}
        \Pr_{p_1}\left[ X > k\frac{1+p_1}{2} \right]\leq e^{-k(1-p_1)/8} \text{       and     }\Pr_{p_2} \left[ X \leq k\frac{1 + p_1}{2} \right] =0,
    \end{align*}
    where the inequality is again due to the Chernoff-Hoeffding bound.  It follows that $k \geq 8 \log(1/\delta)/(1-p_1)$ are sufficient to drive the probability of misclassification below $\delta$.  Since $\log(1/\delta)/(1-p_1) \leq \log(1/\delta)/(1-\sqrt{p_1})^2$, $k=O(\log(1/\delta)/(1-\sqrt{p_1})^2)$ many samples are also sufficient to drive the misclassification probability below $\delta$.

    Now let us assume $0 < p_1 < p_2 < 1$.  Define
    \begin{align*}
        t=\frac{1}{2}\log\left( \frac{p_2(1-p_1)}{p_1(1-p_2)}\right).
    \end{align*}
    Then,
    \begin{align*}
        e^{-t \theta}=\sqrt{\frac{1-p_2}{1-p_1}}.
    \end{align*}
    By Markov's inequality,
    \begin{align*}
        \Pr_{p_1} \left[X > \theta k \right]&= \Pr_{p_1} \left[e^{tX} > e^{t \theta k} \right] \leq \Ex_{p_1} e^{t X-t \theta k}.\\
        \Pr_{p_2} \left[X \leq \theta k \right]&=\Pr\left[-X \geq -\theta k\right]=\Pr_{p_2}\left[e^{-Xt} \geq e^{-\theta kt}\right] \leq \Ex_{p_2} e^{-Xt+\theta k t}.
    \end{align*}
    By the known moment generating function of Bernoulli random variables,
    \begin{align*}
        \Pr_{p_1} \left[X > \theta k \right] \leq \left(e^{-t \theta}(1-p_1+p_1e^t) \right)^k=\left(\sqrt{(1-p_1)(1-p_2)}+\sqrt{p_1 p_2}\right)^k.\\
         \Pr_{p_2} \left[X \leq  \theta k \right] \leq \left(e^{t\theta}(1-p_2+p_2e^{-t}) \right)^k =\left(\sqrt{(1-p_1)(1-p_2)}+\sqrt{p_1 p_2}\right)^k.
    \end{align*}
    Note that 
    \begin{align*}
    2-2\left( \sqrt{p_1 p_2} +\sqrt{(1-p_1)(1-p_2)}\right)=\left(\sqrt{p_1}-\sqrt{p_2}\right)^2+\left(\sqrt{1-p_1}-\sqrt{1-p_2}\right)^2\geq \left(\sqrt{p_1}-\sqrt{p_2}\right)^2,
    \end{align*}
    so
    \begin{align*}
        \sqrt{p_1 p_2} +\sqrt{(1-p_1)(1-p_2)} \leq 1-\frac{\left(\sqrt{p_1}-\sqrt{p_2}\right)^2}{2}\leq \exp\left(-\frac{\left(\sqrt{p_2}-\sqrt{p_1}\right)^2}{2} \right).
    \end{align*}
    Hence in this case taking $k \geq 2\log(1/\delta)/(\sqrt{p_2}-\sqrt{p_1})^2$ is sufficient to drive the error probability below $\delta$.
\end{proof}

\subsection{Geometry of quantum states}

Our gradient flow argument relies on the treatment of (projective) Hilbert space as a Riemannian manifold. The Fubini-Study metric is the standard choice here; a comprehensive introduction can be found in \cite{bengtsson2007geometry}.

\begin{definition}[Fubini-Study metric]\label{def:fubini-study}
    For a normalized state $\ket{\psi}$ and two elements $\ket{d\psi_1}, \ket{d\psi_2}$ of the tangent space, the \emph{Fubini-Study metric} is  
    \begin{equation*}
        \fubiniStudyMetric(\ket{d\psi_1}, \ket{d\psi_2}) = \Re\left[\braket{d\psi_1|d\psi_2}
            - \braket{d\psi_1 | \psi} \braket{\psi | d\psi_2}
        \right].
    \end{equation*}
    This induces the line element
    \begin{equation*}
        \fubiniStudyLineElement = \sqrt{\fubiniStudyMetric(\ket{d\psi}, \ket{d\psi})}
        =
        \sqrt{\norm{\ket{d\psi}}^2 - \abs{\braket{\psi|d\psi}}^2}
        .
    \end{equation*}
    and the distance
    \begin{equation*}
        \fubiniStudyDistance(\ket{\psi_1}, \ket{\psi_2}) = \arccos{\abs{\braket{\psi_1|\psi_2}}}.
    \end{equation*}
\end{definition}

The gradient over quantum states is induced by the directional derivative, which we define below.

\begin{definition}[Directional derivative]\label{def:directional_derivative}
    Consider a projective Hilbert space $\calH$ and a continuous function $f: \calH \to \R$. For states $\ket{\psi}, \ket{\eta} \in \calH$, 
    consider the perturbation of $\ket{\psi}$ in the direction of $\ket{\eta}$:
    \[
        \ket{\zeta(t)} = \frac{\ket{\psi} + t \ket\eta}{\norm{\ket{\psi} + t\ket{\eta}}}, \quad t \in \R.
    \]
    The \emph{$\eta$-directional derivative of $f$} evaluated at $\psi$ is the following limit, assuming it exists:
    \[
        D_\eta f(\psi) = \lim_{t \to 0} \frac{f(\zeta(t)) - f(\psi)}{t}.
    \]
\end{definition}

\section{The central framework: Soundness from spectral gaps}\label{sec:framework}

Here we outline our general framework for proving \Cref{thm:general-testing-upper-bound} for fermionic Gaussian states, Slater determinants, and bosonic Gaussian states.
For each class $\calC$, there is a $k$-copy permutation-invariant projector $\testproj$ such that $\testproj \ket{\psi}^{\otimes k} = \ket{\psi}^{\otimes k}$ if and only if $\ket{\psi} \in \calC$. That is, $\testproj$ is a tester with perfect completeness (i.e., accepts with probability $1$ if $\ket{\psi} \in \calC$). Such projectors are well-studied in the literature; our main contribution is a qualitative improvement on the soundness of these testers.

Write $\psi$ for the density matrix of $\ket{\psi}$ and let $q\paren*{\psi} \coloneqq 1 - \tr\paren*{\testproj \psi^{\otimes k}}$ be the rejection probability of the test. We want to relate it to the distance of $\psi$ from $\calC$,
\begin{equation}\label{eq:eps_min_dist}
    \eps \coloneqq \dist\paren*{\psi, \calC} \equiv \inf_{\ket{\phi} \in \calC} \dist(\psi, \phi)
\end{equation}
where $\dist(\psi, \phi) \coloneqq \frac{1}{2} \|\psi - \phi\|_1$ is the trace distance. Showing that $q(\psi) = O(\eps^2)$ is easy; since all states are pure, one has a clean expression
\[
\tr(\testproj \psi^{\otimes k}) \geq \abs{\braket{\phi | \psi}}^{2k} \implies q(\psi) \leq 1 - \abs{\braket{\phi | \psi}}^{2k}
\]
for a minimizing $\ket{\phi}$ to \cref{eq:eps_min_dist}. Since $\dist(\psi, \phi)^2 = 1 - \abs{\braket{\phi | \psi}}^2 = \eps^2$, we get
\begin{equation}\label{eq:q_upper_bound_easy}
    q(\psi) \leq 1 - (1-\eps^2)^k \leq k\eps^2.
\end{equation}

The more difficult direction is showing that $q = \Omega(\eps^2)$.
We achieve this by bounding the spectral gap $\Delta$ of a symmetrized version of $\testproj$ on $m > k$ copies.

\begin{definition}[Symmetrized projectors and spectral gaps]\label{def:sym_test_op}
Let $\testproj$ be a permutation-invariant projector on the Hilbert space $\calH^{\otimes k}$. Let $m > k$ and denote $\symproj^{(1,\ldots,m)}$ for the projector onto the symmetric subspace $\Sym^{m}(\calH) \subseteq \calH^{\otimes m}$. Define
\begin{equation}
    \Xi \coloneqq \symproj^{(1,\ldots,m)} \testproj^{(1,\ldots,k)} \symproj^{(1,\ldots,m)},
\end{equation}
where $\testproj^{(1,\ldots,k)}$ acts on the original $k$ copies of $\calH$.
The \emph{spectral gap} $\Delta$ of $\Xi$ is the smallest nonzero eigenvalue of $\identity - \Xi$.
\end{definition}

The utility of this spectral gap is that it provides us a useful lower bound on the magnitude of the gradient $\norm{\nabla q(\psi)}$ over the manifold of quantum states.
This allows us to lower bound $q(\psi)$ as a function of the distance $\eps$ from the set $\calC$ via a gradient flow argument. Note that this is a purely analytical tool; the tester does not actually consume any additional $m - k$ copies.

\subsection{Geometry of the landscape}

Let $R \coloneqq \identity - \testproj$ be the projector onto the rejection subspace within $\calH^{\otimes k}$. For a state $\ket{\psi}$ we define the following contracted vector on one register:
\begin{equation}
    \ket{u_\psi} \coloneqq (\identity \otimes \bra{\psi}^{\otimes (k-1)}) R \ket{\psi}^{\otimes k}.
\end{equation}
Intuitively, this contraction filters out the Gaussian component of $\ket{\psi}$ and compresses the non-Gaussianity into a useful form. Indeed, this resembles the three-copy convolution studied in the context of fermionic Gaussian testing \cite{lyu2024convolution,coffman2025nongaussian}. By definition, the overlap with $\ket{\psi}$ is the rejection probability,
\[
    \braket{\psi | u_\psi} = q(\psi).
\]
On the other hand, the component perpendicular to $\ket{\psi}$ measures how this rejection probability changes as a function of $\psi$. Let $\ket{\eta}$ be orthogonal to $\ket{\psi}$ and define the parametric family
\begin{equation}
    \ket{\psi(t)} \coloneqq \frac{\ket{\psi} + t\ket{\eta}}{\sqrt{1 + t^2}}.
\end{equation}
The infinitesimal change of $q(\psi)$ along the $\eta$ direction can be determined by differentiating with respect to $t$. By the product rule,
\begin{equation}
    \left. \frac{d}{dt} \ket{\psi(t)} \right|_{t=0} = \ket{\eta} \implies \left. \frac{d}{dt} \ket{\psi(t)}^{\otimes k} \right|_{t=0} = \sum_{i=1}^k \ket{\psi}^{\otimes(i-1)} \otimes \ket{\eta} \otimes \ket{\psi}^{\otimes(k-i)}.
\end{equation}
Let $|\dot{\Psi}^{(k)}\rangle = \left. \frac{d}{dt} \ket{\psi(t)}^{\otimes k} \right|_{t=0}$. Since $R\ket{\psi}^{\otimes k}$ is symmetric with respect to any permutation of the registers, every term of the inner product between $|\dot{\Psi}^{(k)}\rangle$ and $R\ket{\psi}^{\otimes k}$ is identical, hence
\begin{equation}
\begin{split}
    \left. \frac{d}{dt} q(\psi(t)) \right|_{t=0} &= \braket{\dot{\Psi}^{(k)} | R | \psi^{\otimes k}} + \braket{\psi^{\otimes k} | R | \dot{\Psi}^{(k)} }\\
    &= 2k \Re\left[ (\bra{\eta} \otimes \bra{\psi}^{\otimes(k-1)}) R \ket{\psi}^{\otimes k} \right]\\
    &= 2k \Re\braket{\eta | u_\psi}.
\end{split}
\end{equation}
This holds for every $\ket{\eta}$ orthogonal to $\ket{\psi}$. The gradient of $q(\psi)$ is therefore represented by the tangent vector
\[
\nabla q(\psi) = 2k (\ket{u_\psi} - q(\psi) \ket{\psi}),
\]
(note that $q(\psi) = \braket{\psi | u_\psi}$ is real), and so its norm with respect to the Fubini-Study metric, $\fubiniStudyMetric(\eta, \zeta) = \Re\braket{\eta | \zeta}$, is
\begin{equation}\label{eq:gradient_expression_u_psi}
\begin{split}
    \|\nabla q(\psi)\|_{\mathrm{FS}}^2 &= 4k^2 \left\|\ket{u_\psi} - q(\psi) \ket{\psi} \right\|^2\\
    &= 4k^2 (\|u_\psi\|^2 - 2q(\psi) \Re\braket{\psi | u_\psi} + q(\psi)^2 \|\psi\|^2)\\
    &= 4k^2 (\|u_\psi\|^2 - q(\psi)^2).
\end{split}
\end{equation}
Hence, for a lower bound on $\|\nabla q(\psi)\|_{\mathrm{FS}}^2$ we need a lower bound on $\|u_\psi\|^2$. We take $\Xi$ from \cref{def:sym_test_op} with the fixed choice of $m = 2k-1$. The following lemma shows how the spectral gap of $\Xi$ controls this gradient; this is a type of \lojasiewicz{} inequality on the class $\calC$, treated as a manifold equipped with the Fubini-Study metric.

\begin{lemma}[\lojasiewicz{} inequality]\label{lem:lojasiewicz-inequality}
Let $m > k \geq 2$ be integers.
Let $R$ be a permutation-invariant projector on a Hilbert space $\calH^{\otimes k}$ and $r\paren*{\psi} = \tr(R \ketbra{\psi}{\psi}^{\otimes k})$.
Let $\Delta$ be the spectral gap of $R^{(m)} = \symproj^{(1,\ldots,m)} R \symproj^{(1,\ldots,m)}$.
Then
\begin{align}
\norm{\nabla r}^2 & \geq 4 r \paren*{c_0 - c_1 r},
\label{eq:lojasiewicz-inequality}
\end{align}
where
\begin{align}
c_0 = \frac{m(m-1) \Delta - k(k-1)}{m-k}, \quad c_1 = (m + k - 1).
\end{align}
\end{lemma}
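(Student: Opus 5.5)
The plan is to write $\norm{\nabla r}^2 = 4k^2\paren*{\norm{u_\psi}^2 - r^2}$ as in \cref{eq:gradient_expression_u_psi}, with $u_\psi = (\identity\otimes\bra{\psi}^{\otimes(k-1)})R\ket{\psi}^{\otimes k}$ and $r=\braket{\psi|u_\psi}$. This turns \cref{eq:lojasiewicz-inequality} into a lower bound on $\norm{u_\psi}^2$ in terms of $r$. To get it, I will compute $\bra{\psi}^{\otimes m}(R^{(m)})^2\ket{\psi}^{\otimes m}$ in two ways: once through the spectral gap, and once combinatorially. Comparing the two expressions gives the inequality.

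\textbf{Spectral side.} On $\Sym^m(\calH)$ we have $R^{(m)} = \symproj - \Xi$, where $\Xi$ is the symmetrized acceptance operator of \cref{def:sym_test_op}. So every eigenvalue $x$ of $R^{(m)}$ on the symmetric subspace is either $0$ or lies in $[\Delta,1]$, and in both cases $x^2\ge \Delta x$. Since $\ket{\psi}^{\otimes m}$ is symmetric, this gives
\[
\bra{\psi}^{\otimes m}(R^{(m)})^2\ket{\psi}^{\otimes m}\ \ge\ \Delta\,\bra{\psi}^{\otimes m}R^{(m)}\ket{\psi}^{\otimes m} = \Delta\, r.
\]
The last equality holds because $\symproj$ fixes $\ket{\psi}^{\otimes m}$ and $R$ acts only on the first $k$ registers.

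\textbf{Combinatorial side.} Expand $\symproj = \frac{1}{m!}\sum_\sigma P_\sigma$, so that
\[
\bra{\psi}^{\otimes m}R_{[k]}\symproj R_{[k]}\ket{\psi}^{\otimes m}
\]
becomes an average over permutations. Because $R$ is permutation-invariant and $\psi^{\otimes m}$ is symmetric, this equals the average over uniformly random $k$-subsets $T\subseteq[m]$ of $\bra{\psi}^{\otimes m}R_{[k]}R_T\ket{\psi}^{\otimes m}$. Each term depends only on the overlap $j=\abs{[k]\cap T}$, which has a hypergeometric distribution $p_j$. The terms are as follows.
\begin{itemize}
\item For $j=0$ the two projectors act on disjoint registers, so the term is $r^2$.
\item For $j=1$, contracting the $2(k-1)$ non-shared copies against $\bra{\psi}$ gives exactly $\norm{u_\psi}^2$.
\item For $j=k$ the term is $\bra{\psi}^{\otimes k}R\ket{\psi}^{\otimes k} = r$.
\item For $2\le j\le k-1$ I will bound the term by $r$ using Cauchy--Schwarz, $\abs{\braket{R_{[k]}\psi^{\otimes m}|R_T\psi^{\otimes m}}}\le r$.
\end{itemize}
Combining the two sides yields
\[
\Delta r \le p_0 r^2 + p_1\norm{u_\psi}^2 + (1-p_0-p_1)\,r .
\]

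\textbf{Extracting the constants.} It remains to solve this for $\norm{u_\psi}^2$, substitute into $4k^2(\norm{u_\psi}^2-r^2)$, and simplify the hypergeometric weights $p_0$, $p_1$ and $1-p_0-p_1$. These are ratios of binomials in $m$ and $k$, and simplifying them should produce the stated $c_0 = \frac{m(m-1)\Delta - k(k-1)}{m-k}$ and $c_1 = m+k-1$. I expect this bookkeeping to be the main obstacle. The crude bound by $r$ for every overlap $j\ge 2$ may not match the stated $c_0$ exactly. If it does not, I will first try to replace $1-p_0-p_1$ by the cleaner pair-counting quantity $\Ex\binom{j}{2} = \binom{k}{2}^2/\binom{m}{2}$, using $\indic{j\ge2}\le\binom{j}{2}$, together with $p_1 \ge \Ex[j] - 2\Ex\binom{j}{2}$ where $\Ex[j]=k^2/m$. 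Failing that, I will refine the $j\ge2$ terms by a partial contraction bound similar to the $j=1$ case. The factor $(m-k)$ in the denominator of $c_0$ suggests that the correct normalization divides through by a quantity like $p_1$, or by $1-\Ex\binom{j}{2}/\binom{k}{2}$. I would verify this first in the case $m=2k-1$.
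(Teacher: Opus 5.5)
Your overall strategy matches the paper's: you evaluate $\bra{\psi}^{\otimes m}(R^{(m)})^2\ket{\psi}^{\otimes m}$ once via the gap (giving $\ge \Delta r$) and once by averaging over random $k$-subsets $T$. Write $f(j)=\bra{\psi}^{\otimes m}R_{[k]}R_T\ket{\psi}^{\otimes m}$ for $\abs{[k]\cap T}=j$. Your exact values of $f(0)$, $f(1)$ and $f(k)$ are correct. For $k=2$ the range $2\le j\le k-1$ is empty, so your average is exact. With $p_1=\frac{4(m-2)}{m(m-1)}$ and $p_2=\frac{2}{m(m-1)}$ it gives precisely the stated $c_0$ and $c_1$.

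The gap is at $k\ge 3$. There the Cauchy--Schwarz bound $f(j)\le r$ for $2\le j\le k-1$ loses too much. After that bound your inequality reads $\norm{\nabla r}^2\ge 4r\cdot\frac{k^2}{p_1}\bigl[(\Delta-\Pr[j\ge2])-(1-\Pr[j\ge2])\,r\bigr]$. This is vacuous unless $\Delta>\Pr[j\ge 2]$. The lemma, by contrast, is nontrivial as soon as $\Delta>\frac{k(k-1)}{m(m-1)}=\Ex\binom{j}{2}/\binom{k}{2}$, which is strictly smaller when $k\ge3$. In the case the paper actually needs ($k=3$, $m=7$, bosonic Gaussians) the two thresholds are $13/35$ versus $1/7$, so at $\Delta=7/45$ your bound says nothing. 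Your first fallback cannot repair this. Replacing $\indic{j\ge2}$ by $\binom{j}{2}$ enlarges the coefficient (to $15/35$ here). Any bound on $p_0,p_1$ is weaker than their exact values, and the exact values already fail. The loss sits in $f(j)\le r$ itself.

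The missing idea is the finer decomposition the paper uses. Insert $\ketbra{\psi}{\psi}+(\identity-\ketbra{\psi}{\psi})$ on each register and write $R\ket{\psi}^{\otimes k}=\sum_{i=0}^k\ket{\eta_i}$, where $\ket{\eta_i}$ collects the terms with exactly $i$ registers in $\psi^\perp$. Set $a_i=\norm{\ket{\eta_i}}^2$. Then:
\begin{itemize}
\item $a_0=r^2$, $a_1=k(\norm{u_\psi}^2-r^2)=\norm{\nabla r}^2/(4k)$, and $\sum_i a_i=r$.
\item By symmetry $\ket{\eta_i}=\sum_{\abs{I}=i}\ket{\upsilon_i}_I$, with distinct $I$ mutually orthogonal.
\item Hence $f(j)=\sum_{i\le j}\frac{\binom{j}{i}}{\binom{k}{i}}a_i$ exactly, because a component with $i$ orthogonal registers contributes only if all of them lie in the shared set.
\end{itemize}
Averaging over the hypergeometric $j$ gives $\sum_i b_i a_i$ with $b_i=\binom{k}{i}/\binom{m}{i}$. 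Since $b_i$ is decreasing in $i$, this yields $\Delta r\le r^2+b_1a_1+b_2(r-r^2-a_1)$. Using $b_1-b_2=\frac{k(m-k)}{m(m-1)}$ and solving for $a_1$ gives exactly $c_0$ and $c_1$. In your termwise language, the fix is to replace $f(j)\le r$ by
\[
f(j)\le r^2+\frac{j}{k}a_1+\frac{\binom{j}{2}}{\binom{k}{2}}(r-r^2-a_1).
\]
This bound is exact at $j=0,1,k$, and its average is the bound above.
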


\begin{corollary}
Define
\begin{align}
\Delta^* = \frac{k (k -1)}{m(m-1)}, \quad c' = \frac{c_0}{c_1}.
\end{align}
Then \cref{eq:lojasiewicz-inequality} is trivial (i.e., the right-hand side is negative) when $\Delta < \Delta^*$ or $r \geq c'$.
For $k=2$ and $m=3$,
\begin{align}
    \norm{\nabla r}^2 & \geq 4 r \paren*{2\paren*{3 \Delta - 1} - 4 r}, &
    \Delta^* &= \frac13, &
    c' &= \frac{3 \Delta - 1}{2}.
\end{align}
For $k=3$ and $m=7$,
\begin{align}
    \norm{\nabla r}^2 & \geq 4 r \paren*{\frac32 \paren*{7 \Delta - 1} - 7 r}, &
    \Delta^* &= \frac{1}{7}, &
    c' &= \frac{7 \Delta - 1}{6}.
\end{align}
\end{corollary}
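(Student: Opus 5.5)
The statement immediately above is the Corollary. It follows from \cref{lem:lojasiewicz-inequality} by elementary algebra, so the plan is to read off when the right-hand side $4r(c_0 - c_1 r)$ of \cref{eq:lojasiewicz-inequality} is nonpositive, and then substitute the two parameter pairs.

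\textbf{Triviality conditions.} Since $r \geq 0$ and $c_1 = m+k-1 > 0$, the right-hand side is negative (or zero) exactly when $c_0 - c_1 r \le 0$.

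First suppose $c_0 < 0$. Because $m > k$, the denominator $m-k$ of $c_0$ is positive, so this is equivalent to $m(m-1)\Delta < k(k-1)$, that is, $\Delta < \Delta^* = k(k-1)/(m(m-1))$. In that case $c_0 - c_1 r < 0$ for every $r \ge 0$, and the bound carries no information.

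Otherwise $c_0 \ge 0$. Then $c_0 - c_1 r \le 0$ precisely when $r \ge c_0/c_1 = c'$. Together these give both claimed regimes, with the usual caveat that equality at the boundary makes the right-hand side zero rather than strictly negative.

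\textbf{Case $k=2$, $m=3$.} Here
\[
c_0 = \frac{6\Delta - 2}{1} = 2(3\Delta - 1), \qquad c_1 = 3+2-1 = 4 .
\]
Hence $\Delta^* = 2/6 = 1/3$ and $c' = 2(3\Delta-1)/4 = (3\Delta-1)/2$, which matches the displayed inequality.

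\textbf{Case $k=3$, $m=7$.} Here
\[
c_0 = \frac{42\Delta - 6}{4} = \frac{3}{2}(7\Delta - 1), \qquad c_1 = 7+3-1 = 9 .
\]
This gives $\Delta^* = 6/42 = 1/7$.

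For $c_1$, however, the Corollary states $7$ where the Lemma's formula gives $9$. I would therefore recheck how $c_1$ arises in the proof of \cref{lem:lojasiewicz-inequality}. If $c_1 = m+k-1$ is correct, the coefficient should read $9$ and $c' = \frac{3}{2}(7\Delta-1)/9 = (7\Delta-1)/6$. That value matches the stated $c'$, which suggests the displayed ``$7r$'' is a typo for $9r$.

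Consistency with the stated $c'$ thus confirms $c_1 = 9$. There is no real obstacle here beyond this arithmetic check.
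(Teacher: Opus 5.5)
Your proposal is correct and takes the same route as the paper, which gives no separate argument: the corollary is just substitution of $(k,m)$ into $c_0 = \frac{m(m-1)\Delta - k(k-1)}{m-k}$ and $c_1 = m+k-1$ from \cref{lem:lojasiewicz-inequality}. You are also right that $k=3$, $m=7$ gives $c_1 = 9$, so the displayed coefficient $7r$ is a typo for $9r$; this is consistent with the stated $c' = (7\Delta-1)/6$ and with the constant $\min\{\frac12, \frac{7\Delta-1}{6}\}$ the paper uses later for general bosonic Gaussian states.
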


\begin{proof}[Proof of \cref{lem:lojasiewicz-inequality}]
For concision, let $T = R^{(m)}$.
First, write
\begin{align}
    \ket{\eta} = R \ket{\psi}^{\otimes k} &= \sum_{j=0}^k \ket{\eta_j},
\end{align}
where
\begin{align}
    \ket{\eta_j} &= \sum_{\substack{
            x \in \braces*{0, 1}^k \\
            \sum_l x_l = j
    }}
    \bigotimes_{l=1}^{k} 
    \paren*{\ket{\psi}\bra{\psi}}^{1 - x_l}
    \paren*{1 - \ketbra{\psi}{\psi}}^{x_l}
    \ket{\eta}.
\end{align}
Let $a_j = \norm{\ket{\eta_j}}^2$, so that
\begin{align}
\sum_{j=0}^k a_j &= \norm{R\ket{\psi}^{\otimes k}}^2 = r,\\
\ket{\eta_0} &= r \ket{\psi}^{\otimes k},\\
a_0 &= r^2,\\
\sum_{j=1}^k a_j &= r - r^2.
\end{align}
Let
\begin{align}
    \ket{\psi'} &= \paren*{1 - \ketbra{\psi}{\psi}}
    \paren*{\bra{\psi}^{\otimes(k-1)} \otimes \identity}
    \ket{\eta},
\end{align}
so that 
\begin{align}
    \nabla r &= 2k \ket{\psi'}.
\end{align}
Because $\braket{\psi | \psi'} = 0$,
\begin{align}
\ket{\eta_1} &= \sum_{l=1}^k \ket{\psi}^{\otimes (l-1)} \otimes \ket{\psi'} \otimes \ket{\psi}^{\otimes (k-l)}
\end{align}
and thus
\begin{align}
    \norm{\nabla r}^2 &= 4 k^2 \norm{\psi'}^2 = 4k a_1.
\end{align}
Next, we show that
\begin{align}
    \bra{\psi}^{\otimes m} T^2 \ket{\psi}^{\otimes m} &= \sum_{j=0}^k \frac{\binom{k}{j}}{\binom{m}{j}} a_j.
\end{align}
By symmetry, for each $j$ there is a $j$-register state $\ket{\upsilon_j}$ such that
\begin{align}
    \ket{\eta_j} &= \sum_{
        \substack{
            J \subset [k] \\
            \abs{J} = j
    }} \ket{\upsilon_j}_J,
\end{align}
where $\ket{\upsilon_j}_J$ has $\ket{\upsilon_j}$ on the registers in $J$ and $\ket{\psi}$ elsewhere.
We also have
\begin{align}
a_j &= \binom{k}{j} \norm{\ket{\upsilon_j}}^2.
\end{align}
Similarly,
\begin{align}
T \ket{\psi}^{\otimes m} &= \frac{1}{\binom{m}{k}}
\sum_{\substack{
        I \subset [m] \\
        \abs{I} = k
}}
\sum_j
\sum_{\substack{
        J \subset I \\
        \abs{J} = j
}}
\ket{\upsilon_j}_J
\\
&=
\sum_j
\frac{\binom{m-j}{k-j}}{\binom{m}{k}}
\sum_{\substack{
        J \subset [m] \\
        \abs{J} = j
}}
\ket{\upsilon_j}_J
=
\sum_j
b_j
\sum_{\substack{
        J \subset [m] \\
        \abs{J} = j
}}
\ket{\upsilon_j}_J,
\end{align}
where we define $b_j = \frac{\binom{k}{j}}{\binom{m}{j}}$.
Therefore
\begin{align}
\bra{\psi}^{\otimes m} T^2 \ket{\psi}^{\otimes m} &= 
\sum_j
\binom{m}{j} b_j^2 \norm{\ket{\upsilon_j}}^2
=
\sum_j
b_j a_j.
\end{align}
Because $b_{j+1}/b_j = (k-j)/(m-j) \leq 1$, $b_j \geq b_{j + 1}$ and therefore
\begin{align}
\Delta r & \leq 
\bra{\psi}^{\otimes m} T^2 \ket{\psi}^{\otimes m} \\
&\leq
b_0 a_0 + b_1 a_1 + b_2 \sum_{j=2}^k a_j
=r^2 + b_1 a_1 + b_2 (r - r^2 - a_1)
\\
&=b_2 r + (1 - b_2)r^2 + (b_1 - b_2) a_1
\\
&=b_2 r + (1 - b_2)r^2 + (b_1 - b_2) \frac{\norm{\nabla r}^2}{4k},
\end{align}
i.e.,
\begin{align}
    \norm{\nabla r}^2 
&\geq 
\frac{4k}{b_1 - b_2}\bracket*{
\Delta r - b_2 r - (1-b_2)r^2
}
\\
&=
\frac{4k m(m-1)}{k(m-k)}\bracket*{
    \Delta r - \frac{k(k-1)}{m(m-1)} r - \frac{(m-k)(m+k-1)}{m(m-1)}r^2
}
\\
&=
4r
\bracket*{
    \frac{m(m-1)\Delta  - {k(k-1)}}{m-k}  - (m+k-1)r
}
=
4r \paren*{c_0 - c_1r}. \notag \qedhere
\end{align}
\end{proof}

\subsection{Bounded gradients imply soundness}

Using this lower bound on $\|\nabla q\|_{\mathrm{FS}}^2$, we can establish $q = \Omega(\epsilon^2)$ via a gradient flow argument.

\begin{lemma}[Gradient descent distance]\label{lem:gradient-descent-distance}
Let $\calC$ be a closed set of pure states in a Hilbert space $\calH$. Regard $\calH$ as a projective Hilbert space.
Let $\eps\paren*{\psi} \coloneqq \dist(\psi, \calC)$.
Let $r: \calH \to [0, 1]$ be a smooth function such that 
\begin{itemize}
    \item $r\paren*{\psi} = 0$ if and only if $\ket{\psi} \in \mathcal C$, and
    \item $\norm{\nabla r(\psi)}_{\mathrm{FS}}^2 \geq 4 r(\psi) (c_0 - c_1 r(\psi))$ for all $\ket{\psi}$
\end{itemize}
for some constants $0 < c_0 \leq c_1$ with $c_1 \geq 1$.
Then for every $\ket{\psi}$ it holds that
\begin{equation}\label{eq:distance-lower-bound}
    r(\psi) \geq c'\eps(\psi)^2, \quad \text{where } c' \coloneqq \min\braces*{\frac12, \frac{c_0}{c_1}}.
\end{equation}
\end{lemma}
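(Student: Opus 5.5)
The plan is to run a normalized gradient flow of $r$ and compare the Fubini--Study length of the resulting path with the decrease of a suitable potential of $r$. Fix $\ket{\psi}$ with $r(\psi) > 0$; otherwise $\ket{\psi} \in \calC$ and there is nothing to prove.

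First I dispose of the regime $r(\psi) \geq c_0/(2c_1)$. There the claim follows trivially, because $\eps(\psi) \leq 1$ gives
\[
r(\psi) \geq \tfrac{c_0}{2c_1} \geq \tfrac{c_0}{2c_1}\eps(\psi)^2 .
\]
I expect the final constant to come out as $\min\{1/2, c_0/c_1\}$ after tuning the threshold, and I may split at $r = c_0/c_1$ with a slightly more careful potential instead. So assume $r(\psi) < c_0/(2c_1)$. On the sublevel set $\{r < c_0/(2c_1)\}$ we have $c_0 - c_1 r \geq c_0/2$, hence $\norm{\nabla r}^2 \geq 2c_0 r > 0$ off $\calC$.

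Next I consider the flow $\dot\gamma = -\nabla r/\norm{\nabla r}^2$ starting at $\gamma(0) = \psi$. Along it $\frac{d}{dt} r(\gamma(t)) = -1$, so $r$ decreases linearly and stays in the sublevel set. Rather than use this time parametrization for the length, I use the potential $\Phi = \sqrt{r}$. Along the flow
\[
\Bigl|\tfrac{d}{ds}\Phi\Bigr| = \frac{\norm{\nabla r}}{2\sqrt r} \geq \sqrt{c_0 - c_1 r}
\]
per unit FS arclength $s$. Hence the FS length $L$ of the path until $r$ reaches $0$ satisfies
\[
L \leq \int_0^{r(\psi)} \frac{\rd r}{2\sqrt{r}\sqrt{c_0 - c_1 r}} .
\]
For small $r$ this is about $\sqrt{r(\psi)/c_0}$; more precisely the integral equals $\frac{1}{\sqrt{c_1}}\arcsin\sqrt{c_1 r(\psi)/c_0}$.

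The endpoint $\phi = \lim \gamma(t)$ exists because the path has finite length and projective space is complete. It satisfies $r(\phi) = 0$ by continuity, so $\phi \in \calC$ since $\calC$ is closed. Therefore
\[
\eps(\psi) \leq \dist(\psi,\phi) = \sin \fubiniStudyDistance(\psi,\phi) \leq \sin L .
\]
Now I invert. Using $c_1 \geq 1$, we get $\sin L \leq \sin\bigl(\arcsin\sqrt{c_1 r/c_0}\bigr)$ whenever that argument is at most $\pi/2$, which gives $\eps^2 \leq c_1 r/c_0$, i.e.\ $r \geq (c_0/c_1)\eps^2$. This uses that $\sin(x/a) \leq \sin x$ for $a \geq 1$ on $[0,\pi/2]$, applied with $a = \sqrt{c_1}$.

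This inversion step is where the constant $\min\{1/2, c_0/c_1\}$ must be matched, and I expect it to be the main obstacle. Once $r$ nears $c_0/c_1$ the bound degenerates, so the case split threshold must be chosen so that the large-$r$ case gives $r \geq c'\eps^2$ directly via $\eps \leq 1$. With threshold $c_0/c_1$ and potential arcsin bound valid up to it, the large case gives $r \geq c_0/c_1 \geq c'$. The $1/2$ presumably handles $c_0/c_1 > 1/2$ subtleties; I would check whether this is needed. A secondary technical point is the existence of the flow, given that $\nabla r$ may blow up in normalization near $\calC$. I would handle this by stopping the flow at $r = \delta$, where the finite path has length at most the integral, and then letting $\delta \to 0$ via a compactness-free argument: the partial paths form a Cauchy sequence in FS distance.
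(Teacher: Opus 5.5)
Your final version, which splits at $r = c_0/c_1$ instead of $c_0/(2c_1)$ (the latter split alone would only give the weaker constant $c_0/(2c_1)$), is correct and is essentially the paper's proof: flow down the gradient, bound the Fubini--Study length by $\int_0^{r(\psi)} \rd r/(2\sqrt{r(c_0 - c_1 r)}) = \frac{1}{\sqrt{c_1}}\arcsin\sqrt{c_1 r(\psi)/c_0}$, pass to a limit point in $\calC$, and invert using $c_1 \geq 1$. The differences are cosmetic: you reparametrize the flow so that $r$ decreases linearly, whereas the paper uses $\dot\psi = -\nabla r$ and shows $r(t)\to 0$ exponentially, and, as you suspected, the $1/2$ in $c'$ plays no role, since your argument already gives $r \geq (c_0/c_1)\eps^2$.
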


\begin{proof}
Abbreviate $\eps \equiv \eps(\psi)$. Let $f(\eps) \coloneqq c'\eps^2$.
If $r\paren*{\psi} = 0$, then $\ket{\psi} \in \calC$ and $\eps = 0$, so trivially $r(\psi) = 0 = f(0)$.
Similarly, if $r\paren*{\psi} \geq c'$, then $r\paren*{\psi} \geq c' = \max_{\eps' \in [0, 1]} f(\eps') \geq f(\eps)$.
Therefore, the remainder of the proof assumes that $0 < r\paren*{\psi} < c'$, which implies that $\norm{\nabla r}_{\mathrm{FS}} > 0$.

We will consider $\calH$ as both a projective Hilbert space and a Riemannian manifold with the Fubini-Study metric $\fubiniStudyMetric$.
Consider the family of states $\ket{\psi(t)}$ defined by
\[
\begin{aligned}
    \ket{\psi(0)} &= \ket{\psi},\\
    |\dot{\psi}(t)\rangle &= -\nabla r(t)
\end{aligned}
\]
where $r(t) \equiv r(\psi(t))$. The line element is given by 
\begin{equation}\label{eq:ds-dt}
    \fubiniStudyLineElement = \sqrt{\norm{d\psi}^2 - \abs{\braket{\psi|d\psi}}^2}
    =
    \sqrt{\norm{\dot{\psi}}^2 - \abs{\braket{\psi|\dot{\psi}}}^2} \, dt
    =\norm{\dot{\psi}}_{\mathrm{FS}} \, dt
    =\norm{\nabla r}_{\mathrm{FS}} \, dt.
\end{equation}
The differential of $r$ is given by
\begin{equation}
\label{eq:dq-dt}
    dr
    = g(\nabla r, d\psi)
    = g(\nabla r, \dot{\psi} \, dt)
    = -g(\nabla r, \nabla r) \, dt
    = - \norm{\nabla r}^2_{\mathrm{FS}} \, dt.
\end{equation}
Hence, 
$$
\frac{dr}{dt} =- \norm{\nabla r}^2_{\mathrm{FS}} \leq -4 r(t) (c_0 - c_1 r(0))
$$ 
with $r(0) \leq c_0/c_1$.  It follows that $r(t)\rightarrow 0$ as $t\rightarrow \infty$.  For $T > s$ define $r_T=r(\psi(T))$ and $r_s=r(\psi(s))$,
\begin{align}
    \fubiniStudyDistance\left(\ket{\psi(s)}, \ket{\psi(T)}\right) &\leq
    \int_{\left\{\ket{\psi(t')}\right\}_{s \leq t'\leq T}} \fubiniStudyLineElement(\ket{\psi(t)}) \tag{Distance is the minimal length of a curve}\\
    &\leq \int_s^T \norm{\nabla r}_{\mathrm{FS}} dt \tag{By \cref{eq:ds-dt}}\\
    &=\int_{r_s}^{r_T} \norm{\nabla r}_{\mathrm{FS}} \left(-\frac{dr}{\norm{\nabla r}_{\mathrm{FS}}^2}\right)\tag{By \cref{eq:dq-dt}}=\int_{r_T}^{r_s}  \frac{dr}{\norm{\nabla r}_{\mathrm{FS}}}\\
    &\leq \int_{r_T}^{r_s} \frac{dr}{2\sqrt{r\paren*{c_0-c_1 r}}} \tag{By assumption on $\norm{\nabla r}_{\mathrm{FS}}$}\\
    &\label{eq:dist_calc}=F(r_s)-F(r_T),
\end{align}
where 
$$
F(x)=\frac{1}{\sqrt{c_1}} \arcsin\sqrt{\frac{c_1 x}{c_0}}.
$$
Since $r(t)\rightarrow 0$ the flow is Cauchy and converges to a limit $\ket{\psi_\infty}=\lim_{T\rightarrow \infty}\ket{\psi(T)}$ satisfying $r(\psi_\infty)=0$.

Therefore:
\begin{align*}
    \arcsin\paren*{\eps}
    &=
    \arccos\sqrt{1 - \eps^2}
    =
    \min_{\ket{\phi} \in \calC}
    \arccos{\abs{\braket{\phi| \psi(0)}}}
    \\
    &\leq
    \arccos{\abs{\braket{\psi(0)| \psi_\infty}}}
    \\
    &=
    \fubiniStudyDistance\left(\ket{\psi(0)}, \ket{\psi_\infty}\right)
    \\
    &\leq F(r(\psi))-F(0) \tag{By \cref{eq:dist_calc}}\\
    &= \frac{1}{\sqrt{c_1}} \arcsin\sqrt{\frac{c_1 r}{c_0}}\\
    &\leq \arcsin\sqrt{\frac{r}{c'}}.\label{eq:stronger-trig-bound}
    \end{align*}
Inverting the $\arcsin$ yields the advertised claim.
\end{proof}

Combining \cref{lem:lojasiewicz-inequality,lem:gradient-descent-distance} gives the main conclusion of our framework.

\begin{theorem}[Soundness from spectral gaps]\label{thm:rejection-infidelity-bounds}
    Let $\calC$ be a nonempty class of pure states on a Hilbert space $\calH$, and let $\testproj$ be a permutation-invariant orthogonal projector on $\calH^{\otimes k}$ with $k \geq 2$ such that
\[
    \testproj \ket{\psi}^{\otimes k} = \ket{\psi}^{\otimes k}
    \quad \text{if and only if } \ket{\psi}\in\calC.
\]
Suppose that the symmetrized operator $\Xi$ on $m > k$ registers has spectral gap at least $\Delta$ (\cref{def:sym_test_op}), where
\begin{equation}\label{eq:gap_threshold}
    \frac{k(k-1)}{m(m-1)}< \Delta \leq 1.
\end{equation}
Then, for $\eps = \dist(\psi,\calC)$, it holds that
\begin{equation}\label{eq:framework-soundness}
    \alpha\eps^2 \leq q(\psi) \leq k\eps^2, \quad \text{where } \alpha = \min\braces*{\frac{1}{2}, \frac{m(m-1)\Delta - k(k-1)}{(m - k)(m+k-1)}}.
\end{equation}
\end{theorem}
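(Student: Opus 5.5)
The plan is to chain \cref{lem:lojasiewicz-inequality} and \cref{lem:gradient-descent-distance} with $r = q$, and to take the upper bound $q(\psi) \le k\eps^2$ directly from \cref{eq:q_upper_bound_easy}. For that upper bound we only need that a minimizer $\ket{\phi}$ exists. If the infimum is not attained, we take a near-minimizer and let it tend to the infimum, using the fact that $\tr(\testproj\psi^{\otimes k}) \ge \abs{\braket{\phi|\psi}}^{2k}$ holds for every $\ket{\phi}\in\calC$. That inequality holds because $\ket{\phi}^{\otimes k}$ lies in the range of $\testproj$.

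For the lower bound, set $R = \identity - \testproj$. Then $R$ is a permutation-invariant projector and $r(\psi) = \tr(R\psi^{\otimes k}) = q(\psi)$. The symmetrized operator $R^{(m)} = \symproj - \Xi$ (with all operators understood on $\Sym^m(\calH)$) has eigenvalues $1-\lambda$ for $\lambda \in \spec(\Xi)$. The spectral gap of $\Xi$ as defined in \cref{def:sym_test_op} is the smallest nonzero eigenvalue of $\identity - \Xi$, which is exactly the smallest nonzero eigenvalue of $R^{(m)}$. This is the quantity called $\Delta$ in \cref{lem:lojasiewicz-inequality}.

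One point needs care here. In the proof of that lemma, the key inequality $\Delta r \le \bra{\psi}^{\otimes m}T^2\ket{\psi}^{\otimes m}$ uses that $T\ket{\psi}^{\otimes m}$ has no component in $\ker T$. I would justify this by noting that $T \succeq 0$: the vector $T\ket{\psi}^{\otimes m}$ lies in the range of $T$, where $T \ge \Delta$, and $\bra{\psi}^{\otimes m}T\ket{\psi}^{\otimes m} = r$. Hence $\langle T^2\rangle \ge \Delta\langle T\rangle$. Since the lemma only needs a lower bound on the gap, a gap "at least $\Delta$" suffices. The lemma then gives $\norm{\nabla q}^2 \ge 4q(c_0 - c_1 q)$ with $c_0 = \frac{m(m-1)\Delta - k(k-1)}{m-k}$ and $c_1 = m+k-1$.

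Next we check the hypotheses of \cref{lem:gradient-descent-distance}. By \cref{eq:gap_threshold}, $c_0 > 0$, and $c_1 = m+k-1 \ge 1$. We also need $c_0 \le c_1$. Using $\Delta \le 1$,
\[
c_0 \le \frac{m(m-1)-k(k-1)}{m-k} = m+k-1 = c_1 .
\]
Smoothness of $q$ is clear, since $q$ is a polynomial in $\psi$ and $\bar\psi$. The condition $q(\psi) = 0 \iff \ket{\psi}\in\calC$ follows from the hypothesis: $q = 0$ means $\norm{\testproj\ket{\psi}^{\otimes k}} = 1$, so $\testproj\ket{\psi}^{\otimes k} = \ket{\psi}^{\otimes k}$.

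The remaining hypothesis is that $\calC$ is closed. The set $\{\psi : q(\psi) = 0\}$ is closed by continuity of $q$, and it equals $\calC$, so closedness is automatic. \cref{lem:gradient-descent-distance} then gives $q \ge \min\{\frac12, c_0/c_1\}\,\eps^2$. Substituting $c_0$ and $c_1$ yields exactly the $\alpha$ in \cref{eq:framework-soundness}.

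The main obstacle is not computational; it is the bookkeeping that identifies the two notions of spectral gap and confirms that the Łojasiewicz lemma's gap inequality holds under only a lower bound on the gap. Everything else is substitution.
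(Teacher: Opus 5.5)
Your proposal is correct and follows the paper's proof. You apply \cref{lem:lojasiewicz-inequality} with $r=q$, then feed the resulting inequality into \cref{lem:gradient-descent-distance} after checking $0<c_0\le c_1$ and $c_1\ge 1$ from the gap threshold and $\Delta\le 1$, and you take the upper bound from \cref{eq:q_upper_bound_easy}. Your additional bookkeeping---identifying the gap of $\Xi$ with that of $R^{(m)}$ on $\Sym^m(\calH)$, justifying $T^2\succeq \Delta T$, handling a non-attained infimum, and noting that $\calC=\{q=0\}$ is automatically closed---makes explicit points the paper leaves implicit without changing the argument.
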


\begin{proof}
The upper bound was derived in \cref{eq:q_upper_bound_easy}, so it remains to prove the lower bound. The rejection probability $q(\psi)$, as a function on $\calH$, satisfies the hypotheses of \cref{lem:gradient-descent-distance} with
\[
    c_0 = \frac{m(m-1)\Delta - k(k-1)}{m-k} \quad \text{and} \quad c_1 = m + k - 1,
\]
as established by \cref{lem:lojasiewicz-inequality}. Observe that the conditions $0 < c_0 \leq c_1$ hold, equivalently
\begin{align}
    c_1 - c_0 &= (m + k - 1) - \frac{m(m-1)\Delta - k(k-1)}{m-k} \\
    &= \frac{(m + k - 1)(m-k) - m(m-1)\Delta + k(k-1)}{m-k} \\
    &= \frac{m(m-1) - k(k-1) - m(m-1)\Delta + k(k-1)}{m-k} \\
    &= \frac{(1 - \Delta)m(m-1)}{m-k} \geq 0
\end{align}
as long as $\Delta$ obeys both bounds in equation \eqref{eq:gap_threshold} (the condition $c_1 \geq 1$ holds since $k \geq 2$). Then, \cref{lem:gradient-descent-distance} implies the lower bound of \eqref{eq:framework-soundness}.
\end{proof}

\subsection{The sample complexity of tolerant testing}

Finally, for completeness we show how this sandwich bound on $q(\psi)$ implies a sample complexity upper bound for tolerant testing of $\calC$. The arguments here are fairly standard. First, we depict the algorithmic framework. It requires various constants dependent on $\calC$, and notably a \say{base tester} $\calA$ which measures $\ket{\psi}^{\otimes k}$ according to $\testproj$.

\begin{algorithm}%
\caption{Tolerant quantum property tester.}\label{alg:constant-copy-tester}
\KwInput{
    Hilbert space $\calH$, 
    class $\calC \subset \calH$, constants $0 \leq \eps_1 < \eps_2 \leq 1$ such that $\eps_2 >  \beta \eps_1$ where $\beta = \sqrt{\frac{\alpha_1}{\alpha_2}}$ for $\alpha_1, \alpha_2$ from \cref{thm:framework-tolerant-testing}, failure probability $0 < \delta < 1/2$, copies of a state $\ket{\psi}$ with the promise that $\dist(\psi, \calC) \notin (\eps_1, \eps_2)$, and a base algorithm $\calA : \Sym^k(\calH) \to \{\accept\equiv 0, \reject\equiv 1\}$ for testing $\calC$.}

    \KwOutput{\accept if $\dist(\psi, \calC) \leq \eps_1$; \reject if $\dist(\psi, \calC) \geq \eps_2$ (correct with probability at least $1 - \delta$).}
    
    \BlankLine

    Let $t \leftarrow O\mathopen{}\left( \frac{\log(1/\delta)}{\alpha_2(\eps_2-\beta\eps_1)^2} \right)$\\
    
    \For{$i \in [t]$}{
        $s_i \leftarrow \calA(\ket{\psi}^{\otimes k})$
    }
    
    Let $\displaystyle s \leftarrow \frac{1}{t} \sum_{i=1}^t s_i$\\
    
    Let $(q_1, q_2) \leftarrow (\alpha_1 \eps_1^2, \alpha_2 \eps_2^2)$\\
    
    Let $\displaystyle\theta \leftarrow \frac{
        \log \frac{1 - q_1}{1 - q_2}
    }{
    \log \frac{q_2 (1-q_1)}{q_1(1-q_2)}
    }$ \hfill \tcp{$\theta \gets \frac{q_1+q_2}{2}$ if $q_1 = 0$ or $q_2=1$}
    
    \uIf{$s \leq \theta$}{\KwRet{\accept}}
    
    \Else{\KwRet{\reject}}
\end{algorithm}

\begin{theorem}[Tolerant testing sample complexity]\label{thm:framework-tolerant-testing}
Let $\calC \subset \calH$ be nonempty. Suppose the rejection probability $q(\psi)$ of a base property tester $\calA$ on $k$ copies satisfies, for all pure states $\ket{\psi} \in \calH$,
\begin{equation}
    \alpha_2 \, \dist(\psi, \calC)^2 \leq q(\psi) \leq \alpha_1 \, \dist(\psi, \calC)^2
\end{equation}
with $\alpha_1 \geq 1$ and $0 < \alpha_2 \leq 1/2$. Then \cref{alg:constant-copy-tester} is correct; that is, it is a $kt$-copy $(\eps_1, \eps_2, \delta)$-tester for $\calC$ (\cref{def:quantum-state-testing}), provided that $\eps_2 > \beta\eps_1$ where $\beta \equiv \sqrt{\alpha_1/\alpha_2}$, and
\[
    t = O\left( \frac{\log(1/\delta)}{\alpha_2(\eps_2-\beta\eps_1)^2} \right).
\]
\end{theorem}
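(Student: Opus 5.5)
The plan is to reduce correctness of \cref{alg:constant-copy-tester} to \cref{lem:bernoulli-sample-complexity}. Each call $s_i = \calA(\ket{\psi}^{\otimes k})$ is an independent Bernoulli variable with mean $q(\psi)$, because fresh copies are used in each round. So the whole task is to separate the two promise cases by a gap in the mean, and then to check that $t$ rounds suffice.

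First, the two cases. If $\dist(\psi,\calC) \leq \eps_1$, the upper sandwich bound gives $q(\psi) \leq \alpha_1 \eps_1^2 = q_1$. If $\dist(\psi,\calC) \geq \eps_2$, the lower bound gives $q(\psi) \geq \alpha_2\eps_2^2 = q_2$. The hypothesis $\eps_2 > \beta\eps_1$ is exactly $q_2 > q_1$, so $\sqrt{q_2} - \sqrt{q_1} = \sqrt{\alpha_2}\,(\eps_2 - \beta\eps_1) > 0$. We also need $q_2 \leq 1$, which holds since $\alpha_2 \leq 1/2$ and $\eps_2 \leq 1$. Hence \cref{lem:bernoulli-sample-complexity} applies to the pair $(q_1,q_2)$ with the same threshold $\theta$ used in the algorithm. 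It requires $O\bigl(\log(1/\delta)/(\sqrt{q_2}-\sqrt{q_1})^2\bigr) = O\bigl(\log(1/\delta)/(\alpha_2(\eps_2-\beta\eps_1)^2)\bigr)$ samples, which matches $t$. The total copy count is $kt$.

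The main obstacle is that the true mean $q(\psi)$ is only bounded by $q_1$ (resp.\ $q_2$), not equal to it, whereas \cref{lem:bernoulli-sample-complexity} is stated for exact means. I would close this gap with a monotonicity (stochastic domination) argument. For a fixed threshold, $\Pr_p[\hat p > \theta]$ is nondecreasing in $p$, since a Binomial$(t,p)$ variable is stochastically increasing in $p$, e.g.\ by coupling uniforms $U_i$ and setting $s_i = \indic{U_i < p}$. Thus if $q(\psi) \leq q_1$, the probability of wrongly outputting \reject is at most its value at $p = q_1$, which is at most $\delta$. Symmetrically, if $q(\psi) \geq q_2$, the probability of wrongly outputting \accept is at most its value at $p = q_2$. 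Alternatively, the Chernoff/MGF bounds in the proof of \cref{lem:bernoulli-sample-complexity} can be rerun directly: for $t>0$ the expression $\Ex_p e^{tX}$ is monotone in $p$, which gives the same conclusion.

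Finally, the edge cases. When $\eps_1 = 0$ we have $q_1 = 0$, and the algorithm uses the threshold $(q_1+q_2)/2$ as in the lemma. When $q_2 = 1$ the same convention applies. Acceptance probability $\geq 1-\delta$ in the close case and $\leq \delta$ in the far case then gives exactly \cref{def:quantum-state-testing}.
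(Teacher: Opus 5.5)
Your proposal is correct and matches the paper's proof. The paper likewise treats each round as a Bernoulli trial with mean $q(\psi)$ and applies \cref{lem:bernoulli-sample-complexity} to separate $q(\psi)\leq \alpha_1\eps_1^2$ from $q(\psi)\geq \alpha_2\eps_2^2$, giving the stated $t$ via $\sqrt{q_2}-\sqrt{q_1}=\sqrt{\alpha_2}(\eps_2-\beta\eps_1)$. Your stochastic-domination step, which handles the fact that the true mean is only bounded by $q_1$ or $q_2$ rather than equal to it, is a valid detail that the paper leaves implicit.
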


\begin{proof}
On each $\ket{\psi}^{\otimes k}$ the tester returns a Bernoulli variable with mean $q(\psi)$, and we want to distinguish between $q(\psi) \leq \alpha_1 \eps_1^2$ and $q(\psi) \geq \alpha_2 \eps_2^2$.
The sample complexity follows by applying \cref{lem:bernoulli-sample-complexity}.
Note that an intolerant tester corresponds to $\eps_1 = 0$.
\end{proof}

At this point, the class-by-class analysis has been reduced to bounding the spectral gap $\Delta$ of the respective $\Xi$ operators (with the exception of coherent states, see \cref{sec:coherent-state-testing}). If it can be shown that $\Delta$ is a constant strictly above $\frac{k(k-1)}{m(m-1)}$, then by \cref{eq:framework-soundness} we have that
\begin{equation}\label{eq:alpha2_choice}
    \alpha_2 = \min\braces*{\frac{1}{2}, \frac{m(m-1)\Delta - k(k-1)}{(m - k)(m+k-1)}}
\end{equation}
is a positive constant. 
A common choice in our analyses takes $k=2$ and $m=3$, which simplifies the term above to
\begin{equation}
\min\left\{\frac{1}{2}, \frac{3\Delta - 1}{2}\right\}.
\end{equation}
Numerically, the threshold for $\Delta$ is $1/3$ in this case.
The only example that differs is the class of general bosonic Gaussian states, wherein we take $k = 3$ and $m = 7$. In that case, $\alpha_2$ is chosen to be
\begin{equation}
\min\left\{\frac{1}{2}, \frac{7\Delta - 1}{6}\right\},
\end{equation}
and the numerical threshold for $\Delta$ is $1/7$.
Note that the other constant $\alpha_1$ can always be taken to be $k$.

\section{Testing fermionic Gaussian states}\label{sec:fermionic-gaussian-tester}

The first class we study are fermionic Gaussian states, and we obtain the following result for property testing this class:
\begin{theorem}\label{thm:testing-fermionic-gaussians}
    Let $\beta = 4$. For all $0 \leq \eps_1 < \eps_2 < 1$ with $\eps_2 > \beta \eps_1$ and $0 < \delta < \frac{1}{2}$, there is an $O\paren*{\frac{\log(1/\delta)}{(\eps_2 - \beta\eps_1)^2}}$-copy $(\eps_1, \eps_2, \delta)$-tester for fermionic Gaussian states. The tester acts on two copies at a time.
\end{theorem}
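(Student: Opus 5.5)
We instantiate the framework of \cref{sec:framework} with $k=2$, $m=3$, taking as base tester the two-copy projector $\fermionicgaussiantestproj$ onto $\ker\Lambda$, where $\Lambda=\sum_i(c_i\otimes c_i^\dagger+c_i^\dagger\otimes c_i)$ is Bravyi's operator. The tester measures $\{\fermionicgaussiantestproj,\identity-\fermionicgaussiantestproj\}$ on pairs of copies, and the plan has four steps.

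\textbf{Step 1: the test projector characterizes the class.} We first check the hypotheses of \cref{thm:rejection-infidelity-bounds}. The projector $\fermionicgaussiantestproj$ commutes with the swap of the two registers, so it is permutation-invariant. By Bravyi's characterization, $\Lambda\ket{\psi}^{\otimes 2}=0$ holds if and only if $\ket\psi\in\fermionicgaussianset$. Equivalently, $\fermionicgaussiantestproj\ket\psi^{\otimes2}=\ket\psi^{\otimes2}$ exactly on this class. The class is closed, since it is a finite union of orbits of a compact group.

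\textbf{Step 2: the spectral gap (the main obstacle).} We must show that $\Xi=\symproj^{(123)}\fermionicgaussiantestproj^{(12)}\symproj^{(123)}$ has gap $\Delta=5/12>1/3$. The approach is representation-theoretic.
\begin{enumerate}
\item $\Lambda$ is $\spingrp(2n)$-covariant, since $\Lambda=\tfrac12\sum_j\majorana_j\otimes\majorana_j$ up to parity twists. It follows that $\ker\Lambda$ is a sum of isotypic components of $S^{\otimes2}$.
\item Using \cref{fct:schur-decompositions}, $\ker\Lambda$ should be exactly the top component $V_{(1,\dots,1,\pm1)}$, i.e.\ the span of $\{\ket\phi^{\otimes2}\}$. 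Equivalently, it is the eigenspace where the Casimir-like operator $\Lambda^2$ vanishes.
\item By Schur's lemma, $\Xi$ acts on $\Sym^3 S$ block-diagonally over the $\spingrp(2n)$ irreps $V_\lambda$ appearing there. On each block it equals the overlap between $\Sym^3$ and the subspace $V_{\mathrm{top}}^{(12)}\otimes S^{(3)}$.
\item The restriction $V_{(1^n)}\otimes S_+$ decomposes multiplicity-freely into the irreps $V_{((3/2)^{n-2l},(1/2)^{2l})}$ (last item of \cref{fct:schur-decompositions}).
\item On each such block, the eigenvalue of $\Xi$ is a ratio of dimensions or a trace, computable from \cref{fct:irrep-dims}.
\end{enumerate}
Concretely, for a given irrep, the eigenvalue is $\tr(\Xi|_{V_\lambda\text{-block}})/\dim$, computed by comparing multiplicities in $\Sym^3 S$ and in $V_{\rm top}\otimes S$. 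I expect this to give $1$ for $l=0$ (the Gaussian states), and eigenvalues at most $7/12$ otherwise, with the extremal value attained at $l=1$.

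If the dimension bookkeeping becomes unwieldy, there is a fallback. We write $\ket{\Phi}\in\Sym^3$ as a sum of Fock-basis terms relative to a Gaussian frame and bound $\bra\Phi\Xi\ket\Phi$ directly, using the explicit eigenbasis of $\Lambda$ on two modes-pairs. This is the step where I expect most of the work.

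\textbf{Step 3: soundness.} With $\Delta=5/12$, \cref{thm:rejection-infidelity-bounds} yields $q(\psi)\ge\alpha\eps^2$, where
\[
\alpha=\min\{\tfrac12,(3\Delta-1)/2\}=1/8.
\]
It also yields $q(\psi)\le2\eps^2$. Thus $\alpha_1=2$ and $\alpha_2=1/8$, so $\beta=\sqrt{\alpha_1/\alpha_2}=4$.

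\textbf{Step 4: sample complexity.} Applying \cref{thm:framework-tolerant-testing} with \cref{alg:constant-copy-tester} gives $t=O(\log(1/\delta)/(\eps_2-4\eps_1)^2)$ rounds, each on two copies. This is the claimed bound.
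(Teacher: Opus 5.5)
Your Steps 1, 3 and 4 are exactly the paper's proof of this theorem: Bravyi's kernel projector, then \cref{thm:rejection-infidelity-bounds} with $k=2$, $m=3$, $\Delta=5/12$ giving $\tfrac18\eps^2\le q\le 2\eps^2$, then \cref{thm:framework-tolerant-testing} with $\beta=\sqrt{2\cdot 8}=4$.

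\textbf{The gap is Step 2.} It carries all the content (the paper's \cref{lem:fermionic-spectral-gap}), and as written it does not produce $5/12$.

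Your block structure is correct. Because $V_{(1^n)}\otimes S_+=\bigoplus_l V_l$ is multiplicity-free, $\Xi$ has a single nonzero eigenvalue $\sigma_l$ on the $V_l$-isotypic part of $\Sym^3 S_+$, and $\sigma_l=\tr(\Xi\,\Pi_{V_l})/\dim V_l$. However, $\sigma_l$ is the squared cosine of the angle between the unique copy of $V_l$ in $V_{\mathrm{top}}\otimes S_+$ and the symmetric subspace. That is not determined by dimension or multiplicity data. The isotypic trace also contains terms such as $\tr\bigl(\Pi_{V_l}\SWAP^{(2,3)}\fermionicgaussiantestproj^{(1,2)}\bigr)$, which you have no way to evaluate.

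What is directly computable is the \emph{global} trace. Expanding $\symproj^{(1,2,3)}$ and taking partial traces of the transpositions gives $\sum_l\sigma_l\dim V_l=\tfrac{2^{n-1}+2}{3}\dim V_{(1^n)}$. That is one equation per $n$ in $\lfloor n/2\rfloor+1$ unknowns. The missing idea is that $\sigma_l$ is independent of $n$. A highest-weight vector for $\spingrp(4l)$, padded with vacuum modes, is a highest-weight vector for $\spingrp(2n)$, and the $n$-mode operator compresses to the $2l$-mode one. This turns the equations for all $n$ into a unitriangular system, with solution $\sigma_0=1$, $\sigma_1=0$, $\sigma_m=\tfrac13\bigl(1+2(-1)^m\binom{2m}{m}4^{-m}\bigr)$.

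This also shows your expectation is wrong in a way that matters. The largest nontrivial eigenvalue $7/12$ occurs at $l=2$, while $\sigma_1=0$. For instance, at $n=4$ the trace equation reads $112\sigma_0+160\sigma_1+8\sigma_2=350/3$. Importing $\sigma_0=1$ and $\sigma_1=0$ from $n=2,3$ forces $\sigma_2=7/12$.

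Your fallback does not close the gap either. The terms $\majorana_j\otimes\majorana_j$ of $\Lambda$ commute, but $\ker\Lambda$ is their zero-sum sector and is not a tensor product over modes. So mode-local eigenbases do not give the spectrum of $\Xi$. The paper's elementary proof handles this global combinatorics explicitly:
\begin{itemize}
\item it writes $\Theta=(\fermionicgaussiantestproj^{(1,2)}+2P)/3$ with $P=\fermionicgaussiantestproj^{(1,2)}\SWAP^{(2,3)}\fermionicgaussiantestproj^{(1,2)}$;
\item it splits by total occupation vector $w\in\{0,1,2,3\}^n$, using the kernel basis $\ket{\chi(A,B,C)}$ and the symmetry $w_i\leftrightarrow 3-w_i$;
\item it reduces each sector to the Johnson-scheme intersection matrix $\bigl((-1)^{|S\cap T|}\bigr)$, whose eigenvalues $(-1)^r\binom{2r}{r}4^{-r}$ come from the Legendre values $P_\ell(0)$.
\end{itemize}
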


At the core of our property tester is the following two-copy operator introduced by Tsirel'son \cite{tsirel1987quantum}:
\begin{equation}
    \Lambda = \sum_{i = 1}^n \left( c_i \otimes c_i^\dagger + c_i^\dagger \otimes c_i \right).
\end{equation}
Bravyi showed that $[\Lambda, \rho \otimes \rho] = 0$ if and only if $\rho$ is a fermionic Gaussian state \cite{bravyi2004lagrangian}. A simple calculation shows that $\ket{\psi} \otimes \ket{\psi}$ is a pure fermionic Gaussian state if and only if it lies entirely in the kernel of $\Lambda$. Thus we define $\fermionicgaussiantestproj$ to be the two-copy projector onto the kernel of $\Lambda$. We will prove that this projector can tolerantly test for fermionic Gaussianity with dimension-independent sample complexity.

Perfect completeness is immediate due to \cite{bravyi2004lagrangian}. To prove soundness, we apply the framework developed in \cref{sec:framework}, ultimately providing a drastically sharpened analysis compared to prior state-of-the-art bounds for the same tester \cite{haug2026practical,tarabunga2026computable,leone2026fermionicentropy}.

First, we define some notation. 
For a tuple $\mathsf{t}$ of registers let $\symproj^{(\mathsf{t})}$ be the projector onto the  symmetric subspace indexed by $\mathsf{t}$. For example,  $\symproj^{(1,2,3)}$ is the projector onto the symmetric subspace of the first $3$ registers.
Furthermore, let $\Lambda^{(a,b)}$ be the $\Lambda$ applied to registers $a$ and $b$, let $\fermionicgaussiantestproj^{(a,b)}$ be the corresponding projector onto the $0$-eigenspace of $\Lambda^{(a,b)}$, and let $\SWAP^{(a,b)}$ be the swap operator between registers $a$ and $b$.

Now, following the approach outlined in \cref{sec:framework}, we consider the symmetrized operator
\begin{equation}
    \Xi = \symproj^{(1,2,3)}\cdot\fermionicgaussiantestproj^{(1,2)}\cdot\symproj^{(1,2,3)}.
\end{equation}
The soundness of our tester will follow from the spectral gap of $\Xi$.
\begin{theorem}\label{lem:fermionic-spectral-gap}
    Let $\lambda_1$ be the largest eigenvalue of $\Xi$ and let $\lambda$ be any other (distinct) eigenvalue of $\Xi$. Then $\lambda_1 = 1$ and $\lambda \in [0, 7/12]$.
\end{theorem}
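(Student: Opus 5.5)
The plan is to reduce $\Xi$ to a sum of pair projectors on the symmetric subspace, and then diagonalize it with the $\spingrp(2n)$ symmetry of $\Lambda$.

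\textbf{Reduction to a permutation-invariant operator.} Since $\symproj^{(1,2,3)}$ absorbs any permutation of the registers, we have $\symproj^{(1,2,3)}\fermionicgaussiantestproj^{(1,2)}\symproj^{(1,2,3)} = \symproj^{(1,2,3)} A\, \symproj^{(1,2,3)}$, where
\[
A = \tfrac13\bigl(\fermionicgaussiantestproj^{(1,2)}+\fermionicgaussiantestproj^{(1,3)}+\fermionicgaussiantestproj^{(2,3)}\bigr).
\]
The operator $A$ commutes with the $S_3$ action, so $\Xi$ is just $A$ restricted to $\Sym^3$ (and $0$ on its complement). Clearly $\Xi \le \identity$. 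For any Gaussian $\ket{\phi}$, the vector $\ket{\phi}^{\otimes 3}$ is a $1$-eigenvector, so $\lambda_1 = 1$. It remains to show that every eigenvalue on $\Sym^3$ other than $1$ is at most $7/12$.

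\textbf{Symmetry decomposition.} The operator $\Lambda$ commutes with $U\otimes U$ for every Gaussian unitary $U$, after handling the fermionic tensor-product sign conventions via parity. Hence $\fermionicgaussiantestproj$ commutes with the diagonal $\spingrp(2n)$ action on $S\otimes S$. I would identify $\ker\Lambda$ through a Casimir identity: $\Lambda^2$ is an affine function of the quadratic Casimir of the diagonal action. Since the Gaussian states $\ket{\phi}^{\otimes 2}$ are highest-weight orbits, $\ker\Lambda$ should be exactly the top components $V_{(1,\ldots,1,\pm1)}$ (together with the correct parity sectors) of $S\otimes S$. By Schur's lemma, $A$ then acts on each isotypic component of $\Sym^3 S$ through its multiplicity space. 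Using the decomposition of $\Sym^{3}S_\pm$ and the product rule for the top irrep tensored with $S_+$ (both in \cref{fct:schur-decompositions}), the only irreps that meet the range of $\fermionicgaussiantestproj^{(1,2)}\otimes\identity$ inside $\Sym^3$ are
\[
W_l = V_{((3/2)^{n-2l},(1/2)^{2l})}, \qquad l=0,\ldots,\lfloor n/2\rfloor.
\]
Each $W_l$ appears with multiplicity one both in $V_{\mathrm{top}}\otimes S$ and in $\Sym^3 S$. All other isotypic components have eigenvalue $0$, and $l=0$ is the Gaussian component with eigenvalue $1$.

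\textbf{Computing the eigenvalues.} Because of multiplicity one, the eigenvalue on $W_l$ is $\mu_l = \norm{\fermionicgaussiantestproj^{(1,2)} w}^2$ for a unit vector $w\in W_l\cap\Sym^3$. Equivalently, $\mu_l$ is the squared overlap between the copy of $W_l$ in $\Sym^3$ and the copy inside $V_{\mathrm{top}}\otimes S$. I would compute $\mu_l$ in one of two ways. The first is a trace identity, $\mu_l\dim W_l = \tr\bigl(\symproj\,\fermionicgaussiantestproj^{(1,2)} Q_l\bigr)$, where $Q_l$ is the $W_l$-isotypic projector, evaluated with the dimension formulas of \cref{fct:irrep-dims}. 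The second, as a cross-check, is to work directly with an explicit highest-weight vector of $W_l$ built from $\ket{0^n}$ and pair excitations $c_{2i-1}^\dagger c_{2i}^\dagger$ on $2l$ modes. There I would evaluate $\Lambda$ on it, where only a few Fock terms interact, to get a closed form for $\mu_l$ as a function of $l$ alone; I expect it to be independent of $n$. The expected outcome is that $\mu_l$ decreases in $l$, with $\mu_1 = 7/12$ the largest non-unit value.

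\textbf{Main obstacle.} The hard step is this last computation: pinning down $\ker\Lambda$ precisely as an irrep sum, including the parity and $S_\pm$ bookkeeping, and then extracting the overlap $\mu_l$ cleanly. The trace identity requires careful tracking of which parity sectors of $\Sym^3$ contain $W_l$. If it becomes unwieldy, I would fall back on the explicit-vector computation, which reduces everything to a finite linear-algebra problem on $2l$ modes. Since the overlap should not depend on $n$, checking small $n$ (for example $n=2,4$) would also confirm the value $7/12$.
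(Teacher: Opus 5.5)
Your reduction ($\Xi$ is $A$ restricted to $\Sym^3$, and $\lambda_1=1$ via $\ket{\phi}^{\otimes 3}$) and your Schur-lemma setup are sound. They are essentially the paper's second, representation-theoretic proof: writing $\Pi=\fermionicgaussiantestproj^{(1,2)}$, the space $V_{(1)^n}\otimes S_+$ is multiplicity-free, so $\Pi\symproj\Pi$ acts as a scalar $\mu_l$ on each $W_l$. The genuine gap is that the whole content of the theorem is the value of $\mu_l$ for every $l$, and your plan gives no way to obtain it. The per-isotypic trace identity is circular. Expanding $\symproj$ and using $\SWAP^{(1,2)}\Pi=\Pi\SWAP^{(1,2)}=\Pi$ gives $\tr(\symproj\Pi Q_l)=\frac13\dim W_l+\frac23\tr(PQ_l)$ with $P=\Pi\,\SWAP^{(2,3)}\Pi$. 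Here $\tr(PQ_l)$ is $\dim W_l$ times the unknown eigenvalue $p_l$ of $P$ on $W_l$, so you only recover $\mu_l=(1+2p_l)/3$, and no dimension formula evaluates $p_l$. The explicit highest-weight-vector computation could work in principle, but as sketched it handles one $l$ at a time. Without a closed form in $l$ it cannot bound all eigenvalues for all $n$.

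Moreover, the outcome you expect is false, so your final step would fail. The actual values are $\mu_l=\frac13\bigl(1+2(-1)^l\binom{2l}{l}4^{-l}\bigr)$, i.e.\ $\mu_0=1$, $\mu_1=0$, $\mu_2=7/12$, $\mu_3=1/8$, $\mu_4=33/64,\ldots$. They oscillate about $1/3$ rather than decrease, and $7/12$ is $\mu_2$, not $\mu_1$.
\begin{itemize}
\item For $n=5$ one has $\Sym^3 S_+\cong W_0\oplus W_2$ (dimensions $672$ and $144$). Since $\dim W_1=1200>816$, $W_1$ does not occur in $\Sym^3 S_+$ at all, contrary to your multiplicity-one claim.
\item For $n=2,3$, $\Sym^3 S_+$ is irreducible, so your $n=2$ check would not show $7/12$ at all.
\end{itemize}
The correct closing argument is that $\binom{2l}{l}4^{-l}$ is decreasing, so the maximum over $l\ge 1$ is at $l=2$.

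The paper obtains the closed form in two ways.
\begin{itemize}
\item It uses only the easy total trace $\tr(\symproj\Pi)=\frac{2^{n-1}+2}{3}\dim V_{(1)^n}$ on $S_+^{\otimes 3}$ for every $n$. It combines this with the $n$-independence of $\mu_l$, obtained by padding a $\spingrp(4l)$ highest-weight vector with vacuum modes. Together these give a unitriangular linear system across $n$.
\item It splits $P$ into blocks of fixed total occupation vector $w$ and reduces to $w=(1^k,0^{n-k})$. It then recognizes the intersection matrix $(1+z)^{|S\cap T|}$ at $z=-2$, whose known eigenvalues give $p_r=(-1)^r\binom{2r}{r}4^{-r}$ or $0$.
\end{itemize}
The second route also covers all parity sectors, which your list of $W_l$ omits. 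On $\Sym^3(S_+\oplus S_-)$ the mixed sectors meet the range of $\Pi$: for example, $\ket{0^n}\ket{0^n}\ket{1,0^{n-1}}$ is an eigenvector of $\Pi\symproj\Pi$ with eigenvalue $1/3$. So your claim that all other components give eigenvalue $0$ is false. This does not break the bound, but it has to be argued.
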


This is sufficient for the gradient flow argument to apply, since the gap $\Delta = 5/12 > 1/3$. We give two proofs of \cref{lem:fermionic-spectral-gap}.
The first, presented in \cref{sec:spectral-gap-concrete-proof}, uses linear-algebraic arguments and gives an explicit eigenbasis for the projector. It also makes a connection to intersection matrices that may be of independent interest.

The second, presented in \cref{sec:spectral-gap-abstract-proof}, uses the representation theory of fermionic Gaussian states to analyze the spectrum of $\Xi$. We give this alternative proof for unifying purposes: the other classes below, besides coherent states, also use a representation-theoretic approach.

\subsection{Linear-algebraic analysis of the spectral gap}\label{sec:spectral-gap-concrete-proof}

To analyze the eigenspectrum of $\Xi$, we can define the operator
\[
 \Theta = \fermionicgaussiantestproj^{(1,2)}\cdot\symproj^{(1,2,3)}\cdot\fermionicgaussiantestproj^{(1,2)}.
\]
Observe that for $M = \symproj^{(1,2,3)}\fermionicgaussiantestproj^{(1,2)}$, we have $\Xi = MM^\dagger$ and $\Theta = M^\dagger M$. Thus $\Xi$ and $\Theta$ possess identical eigenspectra. As it turns out, it will be easier to analyze the eigenspectrum of $\Theta$, as we can put it in a more tractable form.

\begin{proposition}\label{prop:Theta_identity}
On the symmetric subspace, $\Theta$ has the following form:
\begin{equation}
\Theta = \frac{\fermionicgaussiantestproj^{(1,2)} + 2 \fermionicgaussiantestproj^{(1,2)}\cdot \SWAP^{(2,3)} \cdot \fermionicgaussiantestproj^{(1,2)}}{3}
\end{equation}
\end{proposition}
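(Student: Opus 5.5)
The identity follows from expanding the symmetrizer over $S_3$ and letting $\fermionicgaussiantestproj^{(1,2)}$ absorb the transpositions that fix or swap registers $1,2$. Throughout, "on the symmetric subspace" means we regard $\Theta$ as an operator acting on vectors of the form $\fermionicgaussiantestproj^{(1,2)}\ket{v}$ with $\ket{v} \in \Sym^3$. The range of $M^\dagger$ consists of such vectors, and the nonzero spectrum of $\Theta = M^\dagger M$ lives there.

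First, write the symmetrizer as $\symproj^{(1,2,3)} = \frac{1}{6}\sum_{\sigma \in S_3} P_\sigma$, where $P_\sigma$ permutes the three registers. Group the six permutations into cosets of the subgroup $\{e, (12)\}$: $S_3 = \{e,(12)\} \cdot \{e, (23), (13)\}$. This gives
\[
\symproj^{(1,2,3)} = \frac{1}{3}\bigl(\identity + \SWAP^{(2,3)} + \SWAP^{(1,3)}\bigr)\cdot \frac{\identity + \SWAP^{(1,2)}}{2}.
\]
The factor $\frac{1}{2}(\identity + \SWAP^{(1,2)})$ acts as the identity on $\Sym^3$. Hence on the symmetric subspace we may replace $\symproj^{(1,2,3)}$ by $\frac13(\identity + \SWAP^{(2,3)} + \SWAP^{(1,3)})$. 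More precisely, since $\symproj^{(1,2,3)}$ acts as the identity on $\Sym^3$, only the first coset average matters once we are sandwiched against symmetric vectors.

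Second, we need $\fermionicgaussiantestproj^{(1,2)}$ to commute with $\SWAP^{(1,2)}$, which is the permutation invariance of $\ker\Lambda$. I expect this to be the main subtlety. Under the fermionic tensor-product convention (Jordan--Wigner across the two copies), swapping the copies maps $c_i\otimes c_i^\dagger$ to $\pm c_i^\dagger \otimes c_i$. On the parity sectors relevant here this preserves $\Lambda$ up to a sign, and hence preserves its kernel. Since the paper asserts in \cref{sec:framework} that the test projectors are permutation-invariant, I will cite that and only check the sign convention briefly.

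Third, use this invariance to merge the two transposition terms:
\[
\fermionicgaussiantestproj^{(1,2)}\SWAP^{(1,3)}\fermionicgaussiantestproj^{(1,2)} = \fermionicgaussiantestproj^{(1,2)}\SWAP^{(1,2)}\SWAP^{(2,3)}\SWAP^{(1,2)}\fermionicgaussiantestproj^{(1,2)} = \fermionicgaussiantestproj^{(1,2)}\SWAP^{(2,3)}\fermionicgaussiantestproj^{(1,2)}.
\]
The last equality holds because each $\SWAP^{(1,2)}$ commutes past $\fermionicgaussiantestproj^{(1,2)}$ and then acts trivially on the symmetric vectors on either side (as a quadratic form, or on the relevant domain). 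Together with $\fermionicgaussiantestproj^{(1,2)}\identity\fermionicgaussiantestproj^{(1,2)} = \fermionicgaussiantestproj^{(1,2)}$, this yields $\Theta = \frac13\bigl(\fermionicgaussiantestproj^{(1,2)} + 2\,\fermionicgaussiantestproj^{(1,2)}\SWAP^{(2,3)}\fermionicgaussiantestproj^{(1,2)}\bigr)$.

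The only care needed is bookkeeping. The restriction "on the symmetric subspace" must be stated so that dropping the $\SWAP^{(1,2)}$ factors is legitimate, i.e. they are applied to vectors symmetric in registers $1,2$.
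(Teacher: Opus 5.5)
Your proposal is correct and follows the paper's proof: the same coset factorization $\symproj^{(1,2,3)} = \frac13(\identity + \SWAP^{(2,3)} + \SWAP^{(1,3)})\symproj^{(1,2)}$ and the same conjugation $\SWAP^{(1,3)} = \SWAP^{(1,2)}\SWAP^{(2,3)}\SWAP^{(1,2)}$.

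The one difference is in how the $\SWAP^{(1,2)}$ factors are removed. The paper uses the stronger fact $\fermionicgaussiantestproj^{(1,2)}\SWAP^{(1,2)} = \SWAP^{(1,2)}\fermionicgaussiantestproj^{(1,2)} = \fermionicgaussiantestproj^{(1,2)}$, i.e.\ the kernel of $\Lambda$ lies inside the two-copy symmetric subspace (checkable on the $\ket{\chi(A,B,C)}$ basis). This absorbs the $\symproj^{(1,2)}$ and $\SWAP^{(1,2)}$ factors directly into the projector, so the identity holds exactly on all of $\calH^{\otimes 3}$. Your commutation-only version instead needs the vectors $\fermionicgaussiantestproj^{(1,2)}\ket{v}$ to be symmetric in registers $1,2$. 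These vectors are generally not in $\Sym^3$, so your ``acts as the identity on $\Sym^3$'' justification is misplaced. As a result, you obtain the identity only after compressing to that $(1,2)$-symmetric subspace.
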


\begin{proof}
    Observe that we can write
    \[
    \symproj^{(1,2,3)} = \frac{1}{3}\left(\identity + \SWAP^{(2,3)} + \SWAP^{(1,3)}\right)\symproj^{(1,2)}.
    \]
    We can directly compute
    \begin{align}
        \Theta &= \fermionicgaussiantestproj^{(1,2)}\symproj^{(1,2,3)}\fermionicgaussiantestproj^{(1,2)} \\
        &= \frac{1}{3}\left(\left(\fermionicgaussiantestproj^{(1,2)}\right)^3 + \fermionicgaussiantestproj^{(1,2)}\SWAP^{(2,3)} \fermionicgaussiantestproj^{(1,2)} + \fermionicgaussiantestproj^{(1,2)}\SWAP^{(1,3)} \fermionicgaussiantestproj^{(1,2)}\right) \\&= \frac{1} {3}\left(\fermionicgaussiantestproj^{(1,2)}+ \fermionicgaussiantestproj^{(1,2)}\SWAP^{(2,3)} \fermionicgaussiantestproj^{(1,2)} + \fermionicgaussiantestproj^{(1,2)}\SWAP^{(1,3)} \fermionicgaussiantestproj^{(1,2)}\right) \label{eq:expanded-theta-form}
    \end{align}
    Since $\fermionicgaussiantestproj^{(1,2)}$ is symmetric between the first and second register, we have 
    \[
    \fermionicgaussiantestproj^{(1,2)} \SWAP^{(1,2)} = \SWAP^{(1,2)}\fermionicgaussiantestproj^{(1,2)} = \fermionicgaussiantestproj^{(1,2)}.
    \]
    Furthermore, since $\SWAP^{(1,3)} = \SWAP^{(1,2)}\SWAP^{(2,3)}\SWAP^{(1,2)}$, we have that
    \begin{equation}
    \begin{split}
        \fermionicgaussiantestproj^{(1,2)}\SWAP^{(1,3)} \fermionicgaussiantestproj^{(1,2)} &= \fermionicgaussiantestproj^{(1,2)}\SWAP^{(1,2)}\SWAP^{(2,3)}\SWAP^{(1,2)} \fermionicgaussiantestproj^{(1,2)}\\
        &= \fermionicgaussiantestproj^{(1,2)}\SWAP^{(2,3)} \fermionicgaussiantestproj^{(1,2)}.
    \end{split}
    \end{equation}
    Applying this identity to \eqref{eq:expanded-theta-form} gives us our desired result.
\end{proof}

We will directly compute the eigenspectrum of the term $P = \fermionicgaussiantestproj^{(1,2)}\cdot \SWAP^{(2,3)} \cdot \fermionicgaussiantestproj^{(1,2)}$. To do so, we will fix disjoint $A,B \subseteq [n]$ and choose $C \subseteq B$. Let $B=\{b_1 < b_2 < \cdots < b_m\}$.  Define the following state on the first and second registers:
\begin{align*}
\ket{\chi(A,B,C)}_{1,2} = 2^{-|B|/2}\sum_{T \subseteq B} (-1)^{|T \cap C|+q(T)} \ket{A \cup T}_1\ket{A \cup (B \setminus T)}_2\\
=:2^{-|B|/2}\sum_{T \subseteq B} (-1)^{|T \cap C|+q_B(T)}\ket{T}_{1,2},
\end{align*}
where $q_B(T)=\sum_{r:b_r \in T} (r-1)$.  Let us first note that 
\[
(c_{b_r}^\dagger \otimes c_{b_r}+c_{b_r} \otimes c_{b_r}^\dagger)\ket{T}_{1,2}=(-1)^r \ket{T\Delta \{b_r\}}_{1,2},
\]
and hence
\begin{equation}
    \Lambda_{1,2} \ket{T}_{1,2}=2\sum_{r=1}^{m}(-1)^{r-1}\ket{T\Delta \{b_r\}}_{1,2}.
\end{equation}
We now show that these are eigenvectors of $\Lambda$.

\begin{proposition}\label{prop:lambda-eigendecomp}
The following facts hold about the $\ket{\chi(A,B,C)}$:
\begin{enumerate}
    \item Suppose $(A,B,C) \neq (A^\prime, B^\prime, C^\prime)$ and $C \subseteq B, C^\prime \subseteq B^\prime$. Then $\braket{\chi(A,B,C)|\chi(A^\prime,B^\prime,C^\prime)} = 0$.
    \item $\Lambda^{(1,2)} \ket{\chi(A,B,C)} = 2(|B| - 2|C|)\ket{\chi(A,B,C)}$
    \item  The set of states $\ket{\chi(A,B,C)}$ such that $|B|$ is even and $2|C|=|B|$ form a complete set of eigenstates for the null space of $\Lambda$
\end{enumerate}
\end{proposition}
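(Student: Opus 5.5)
The plan is to reduce everything to the local action of $\Lambda$ on the basis $\{\ket{T}_{1,2}\}_{T\subseteq B}$ with $A,B$ fixed, and to check the three items in order.

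\emph{Item 1 (orthogonality).} The Fock basis vector $\ket{X}_1\ket{Y}_2$ occurs in $\ket{\chi(A,B,C)}$ only when $X\cap Y=A$ and $X\,\Delta\,Y=B$. So $(A,B)$ is determined by the support, and vectors with $(A,B)\neq(A',B')$ have disjoint supports. When $(A,B)=(A',B')$, the inner product equals
\[
2^{-|B|}\sum_{T\subseteq B}(-1)^{|T\cap C|+|T\cap C'|}=2^{-|B|}\sum_{T\subseteq B}(-1)^{|T\cap(C\Delta C')|},
\]
since the $q_B(T)$ signs cancel. This is the standard character sum, which vanishes unless $C=C'$. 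The same computation with $C=C'$ also confirms that each $\ket{\chi}$ is normalized.

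\emph{Item 2 (eigenvalue equation).} I will use the displayed identity $\Lambda_{1,2}\ket{T}_{1,2}=2\sum_r(-1)^{r-1}\ket{T\Delta\{b_r\}}_{1,2}$. Modes outside $B$ contribute nothing: for $i\in A$ both copies are occupied, and for $i\notin A\cup B$ both are empty. I take the displayed identity as given, though I would double-check its Jordan–Wigner sign bookkeeping.

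Substituting $T\mapsto T\Delta\{b_r\}$ in the sum, the coefficient of $\ket{T'}$ picks up the following changes relative to the coefficient of $\ket{T}$:
\begin{itemize}
\item a factor $(-1)^{\indic{b_r\in C}}$ from the change in $|T\cap C|$;
\item a factor $(-1)^{r-1}$ from the change in $q_B$;
\item a factor $(-1)^{r-1}$ from the identity itself.
\end{itemize}
The last two cancel, so each $r$ contributes $2(-1)^{\indic{b_r\in C}}\ket{\chi}$. Summing over $r$ gives $2(|B\setminus C|-|C|)=2(|B|-2|C|)$. This sign check is the only delicate point. If the stated identity's sign were off by a factor depending on $T$, I would instead redefine $q_B$ to absorb it; the rest of the argument is unaffected.

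\emph{Item 3 (completeness of the null space).} First count: for each pair of disjoint $A,B$ there are $2^{|B|}$ choices of $C\subseteq B$. The number of triples $(A,B,C)$ is therefore $\sum_{A,B}2^{|B|}=\sum_{(X,Y)}1=4^n$, since $(X,Y)\mapsto(X\cap Y,\,X\Delta Y)$ is a bijection onto such pairs with the $2^{|B|}$ splittings of $B$. So the $\ket{\chi}$ are $4^n$ orthonormal vectors, hence a basis of $\bigwedge\C^n\otimes\bigwedge\C^n$, and by item 2 they diagonalize $\Lambda^{(1,2)}$. The kernel is then spanned exactly by those with $|B|=2|C|$, which forces $|B|$ to be even.

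I expect no step to be a genuine obstacle. The only place an error could slip in is the fermionic sign convention in item 2.
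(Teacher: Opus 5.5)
Your treatment of items 1 and 2 is essentially the paper's argument. The paper also uses disjoint supports for distinct $(A,B)$ and the character sum $\sum_{T\subseteq B}(-1)^{|T\cap(C\triangle C')|}$. It also reindexes by $U=T\triangle\{b_r\}$, where the $(-1)^{r-1}$ from $q_B$ cancels the one from the hopping term.

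For item 3 you take a different and cleaner route. The paper computes $\dim\ker\Lambda=\binom{2n}{n}$ by simultaneously diagonalizing the commuting involutions $\gamma_a\otimes\gamma_a$. It asserts, only in sketch form, that their joint eigenspaces are one-dimensional. It then matches this against the number of admissible triples with $|B|=2|C|$, using a combinatorial identity cited from the literature.

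You instead count all triples: $\sum_{A\cap B=\emptyset}2^{|B|}=4^n$, since each mode lies outside $A\cup B$, in $A$, in $B\setminus C$, or in $C$. So the $\ket{\chi(A,B,C)}$ form a complete orthonormal eigenbasis of $\Lambda$ on $\bigwedge\C^n\otimes\bigwedge\C^n$, and the kernel is exactly the span of those with eigenvalue zero. This avoids the stabilizer counting and the external identity, and gives the full spectrum of $\Lambda$ with multiplicities. It also recovers $\dim\ker\Lambda=\sum_{b\ \mathrm{even}}\binom{n}{b}2^{n-b}\binom{b}{b/2}=\binom{2n}{n}$ as a corollary rather than an input. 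The paper's route, in exchange, ties the kernel directly to the Majorana structure of $\Lambda$.

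On the sign bookkeeping you flagged: a direct Jordan--Wigner computation gives $(c_{b_r}\otimes c_{b_r}^\dagger+c_{b_r}^\dagger\otimes c_{b_r})\ket{T}_{1,2}=(-1)^{r-1}\ket{T\triangle\{b_r\}}_{1,2}$. Modes of $A$ enter the two strings twice, and modes of $B$ below $b_r$ enter once, so your cancellation is correct. The factor $2$ in item 2 corresponds to normalizing $\Lambda$ as $\sum_a\gamma_a\otimes\gamma_a$. With $\Lambda$ as literally defined, the eigenvalue is $|B|-2|C|$, which makes no difference to items 1 and 3.
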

\begin{proof}
    For (1), we first cover the case that $(A,B) \neq (A^\prime, B^\prime)$. If this is the case then the first and second registers of every term of $\ket{\chi(A,B,C)}$ and $\ket{\chi(A^\prime,B^\prime,C^\prime)}$ have either a different intersection or a different symmetric difference. Thus none of them can be equal and all bra-ket terms are annihilated. We now turn to the case where $(A,B) = (A^\prime, B^\prime)$ but $C \neq C^\prime$. In this case we have
    \begin{align}
    \braket{\chi(A,B,C)|\chi(A,B,C^\prime)} &= 2^{-|B|}\sum_{T,T^\prime \subseteq B} (-1)^{\abs{T \cap C} + \abs{T^\prime \cap C^\prime}} \braket{A \cup T|A \cup T^\prime} \braket{A \cup (B \setminus T)|A \cup (B \setminus T^\prime)} \\
    &= 2^{-|B|}\sum_{T\subseteq B} (-1)^{\abs{T \cap C} + \abs{T \cap C^\prime}} \\
    &= 2^{-|B|}\sum_{T\subseteq B} (-1)^{\abs{T \cap (C \triangle C^\prime)}} = 0,
    \end{align}
    where we have used the fact that
    \(
    \sum_{T \subseteq B} (-1)^{\abs{T \cap S}} = 0
    \)
    for any nonempty $S \subseteq B$, and $C \triangle C^\prime \neq \emptyset$ when $C \neq C^\prime$.
    
    Next, we show (2). Again let Let $B=\{b_1 < b_2 < \cdots < b_m\}$.  We can write
    \begin{align}
    \Lambda^{(1,2)} \ket{\chi(A,B,C)} &= 2^{-|B|/2}\sum_{T \subseteq B} (-1)^{|T \cap C|+q_B(T)} \Lambda^{(1,2)}\ket{T}_{1,2} \\
    &= 2^{1-|B|/2}\sum_{T \subseteq B} (-1)^{|T \cap C|+q_B(T)} \sum_{r=1}^m(-1)^{r-1}\ket{T\Delta b_r}_{1,2}.
    \end{align}
    If we define $U=T\Delta b_r$, then note that $q_B(T)=q_B(U)-(r-1) \mod 2$ and that $|T\cap C|=|(U\Delta b_r)\cap C|=|U\cap C|+|b_r \cap C|$.  Hence we can simplify $\Lambda^{(1,2)} \ket{\chi(A,B,C)} $ as
    \begin{align}
        \Lambda^{(1,2)} \ket{\chi(A,B,C)} &= 2^{1-|B|/2}\sum_{\substack{T \subseteq B\\ r}} (-1)^{|T \cap C|+q_B(T)+(r-1)} \ket{T\Delta b_r}_{1,2}\\
        &= 2^{1-|B|/2}\sum_{\substack{T \subseteq B\\ r}} (-1)^{|U\cap C|+|b_r \cap C|+q_B(U)} \ket{U}_{1,2},
    \end{align}
    where we are considering $U$ as a function of $T$ and $r$.  Note that each entry in the sum corresponding to $U$ and $r$ uniquely corresponds to some $T$, so we may rewrite the sum as:
    \begin{align}
        \Lambda^{(1,2)} \ket{\chi(A,B,C)} &= 2^{1-|B|/2}\sum_{\substack{U \subseteq B\\ r}} (-1)^{|U\cap C|+|b_r \cap C|+q_B(U)} \ket{U}_{1,2}\\
        &=2\sum_r (-1)^{|b_r \cap C|} \ket{\chi(A, B,  C)}\\
        &=2(m-2|C|)\ket{\chi(A, B,  C)}
    \end{align}

    For (3) note that $\majorana_i \otimes \majorana_i$ commutes with $\majorana_j \otimes \majorana_j$ for all $i,j$.  Hence, we can simultaneously diagonalize all terms in the summation of $\Lambda$.  The null space corresponds exactly to the set of states which are anti-stabilized by $n$ of the $\majorana_j \otimes \majorana_j$ and stabilized by $n$ of $\majorana_j \otimes \majorana_j$.  There are $\binom{2n}{n}$ such states since distinct $\majorana_i \otimes \majorana_i$ are algebraically independent.  We have seen that $\ket{\chi(A,B,C)}$ with $|B|$ even and $2|C|=|B|$ are in the null space and that distinct such $\ket{\chi(A,B,C)}$ are orthogonal.  We can further note that there are $\binom{2n}{n}$ tuples $(A, B, C)$ satisfying the assumption\footnote{See \cite{sun2019new} near the bottom of the second page.}, hence these vectors must span the null space.  
\end{proof}

Before proving \cref{lem:fermionic-spectral-gap}, we first show a useful combinatorial statement.

\begin{proposition}\label{prop:combinatorial-identity}
    Let $k$ be an even integer. Then
    \[
    \sum_{i=j}^{k} \binom{k - j}{i - j}\binom{2k-j-i}{k-i} (-2)^j = \begin{cases}
        (-2)^j(-1)^{(k-j)/2}\binom{k-j}{\frac{k-j}{2}} & k-j \textrm{ even} \\
        0  & k-j \textrm{ odd}
    \end{cases}.
    \]
\end{proposition}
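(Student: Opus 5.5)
The plan is a generating-function (coefficient extraction) argument in the notation $[z^d]p(z)$ from the preliminaries. First, though, the statement has to be read correctly. Since $(-2)^j$ does not depend on $i$, the literal statement would claim that $\sum_{s=0}^{N}\binom{N}{s}\binom{2N-s}{N-s}$ vanishes for odd $N$. That is false: for $N=1$ it equals $3$. The right-hand side $(-2)^j(-1)^{(k-j)/2}\binom{k-j}{(k-j)/2}$ does, however, come out exactly if the weight inside the sum is $(-2)^i$. I will therefore prove
\[
\sum_{i=j}^{k} \binom{k - j}{i - j}\binom{2k-j-i}{k-i} (-2)^i = (-2)^j\,[\,k-j \text{ even}\,]\,(-1)^{(k-j)/2}\binom{k-j}{\frac{k-j}{2}},
\]
and I expect this to be the form actually needed downstream, where the weight is attached to the summation index. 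The evenness of $k$ plays no role in this argument.

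\emph{Step 1: reindex.} Put $N=k-j$ and $s=i-j$, so that $s$ runs over $0,\dots,N$. Then $2k-j-i=2N-s$ and $k-i=N-s$, and $(-2)^i=(-2)^j(-2)^s$. Pulling out $(-2)^j$ reduces the claim to
\[
S_N:=\sum_{s=0}^{N}\binom{N}{s}\binom{2N-s}{N-s}(-2)^s=[N \text{ even}]\,(-1)^{N/2}\binom{N}{N/2}.
\]

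\emph{Step 2: write each term as a coefficient.} By symmetry of binomial coefficients, $\binom{2N-s}{N-s}=\binom{2N-s}{N}=[z^N](1+z)^{2N-s}$. This holds for every $0\le s\le N$, since $2N-s\ge N$. By linearity of coefficient extraction,
\[
S_N=[z^N]\,(1+z)^{N}\sum_{s=0}^N\binom{N}{s}(-2)^s(1+z)^{N-s}=[z^N]\,(1+z)^N\bigl((1+z)-2\bigr)^N=[z^N]\,(z^2-1)^N .
\]

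\emph{Step 3: read off the coefficient.} Expanding, $(z^2-1)^N=\sum_{r}\binom{N}{r}z^{2r}(-1)^{N-r}$. The coefficient of $z^N$ is zero when $N$ is odd. When $N$ is even it is $\binom{N}{N/2}(-1)^{N/2}$. Multiplying back by $(-2)^j$ gives the stated right-hand side in both parity cases.

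The only step that needs care is Step 2, in particular the choice of $\binom{2N-s}{N}$ rather than $\binom{2N-s}{N-s}$ as the coefficient to extract. With that choice the sum over $s$ becomes a clean binomial expansion $\bigl((1+z)-2\bigr)^N$, and everything else is routine. If the weight were instead $(-1)^s$ or $1$, the same computation would produce $z^N(1+z)^N$ or $(1+z)^N(2+z)^N$. This is how I confirmed that the $(-2)^i$ reading is the intended one.
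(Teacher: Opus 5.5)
Your proof is correct and follows the paper's own argument: the same reindexing, the same extraction $\binom{2N-s}{N-s}=[z^N](1+z)^{2N-s}$ collapsing the sum to $[z^N](z^2-1)^N$. Your reading of the weight as $(-2)^i$ is exactly what the paper's proof (pulling out $(-2)^j$ and keeping $(-2)^r$) and its later computation of $\mu_j(-2)$ implicitly use. The only difference is the last step: the paper evaluates $[x^\ell](x^2-1)^\ell = 2^\ell P_\ell(0)$ via the Rodrigues formula and known values of $P_\ell(0)$, whereas your direct binomial expansion is simpler and self-contained.
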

\begin{proof}
    We introduce a change of variables $\ell = k-j, i = j + r$. Then
    \begin{align}
        \sum_{i=j}^{k} \binom{k - j}{i - j}\binom{2k-j-i}{k-i} (-2)^j &= (-2)^j \sum_{r=0}^{\ell} \binom{\ell}{r} \binom{2\ell - r}{\ell-r}(-2)^r \\
        &= (-2)^j[x^\ell]\sum_{r=0}^\ell \binom{\ell}{r}(-2)^r(1 + x)^{2\ell-r} \\
        &=(-2)^j[x^\ell](1 + x)^{2\ell}\left(1 - \frac{2}{1+x}\right)^\ell \\
        &= (-2)^j[x^\ell](1+x)^\ell(x-1)^\ell \\
        &= (-2)^j[x^\ell](x^2-1)^\ell.
    \end{align}

    Now, for any polynomial $p$, we have that
    \[
    [x^\ell]p(x) = \frac{1}{\ell!}\left.\dfrac{d^\ell}{dx^\ell}p(x)\right|_{x = 0}.
    \]
    The Rodrigues formula for Legendre polynomials tells us that
    \[
    P_\ell(z) = \frac{1}{2^\ell \ell!} \dfrac{d^\ell}{dx^\ell}(x^2-1)^\ell\,
    \]
    so $[x^\ell](x^2-1)^\ell = 2^{\ell}P_\ell(0)$. We conclude via standard evaluations of the Legendre polynomials:
    \[
    \sum_{i=j}^{k} \binom{k - j}{i - j}\binom{2k-j-i}{k-i} (-2)^j = (-2)^j2^{\ell}P_\ell(0) = (-2)^j(-1)^{(k-j)/2}\binom{k-j}{\frac{k-j}{2}}. \qedhere
    \]
\end{proof}

We have now assembled all the tools necessary to prove the spectral gap of $\Xi$.

\begin{proof}[Proof of \cref{lem:fermionic-spectral-gap}]
    A key observation is that that $P$ preserves the total occupation number of each mode $i \in [n]$ across the registers.
    Concretely, choose any $w \in \{0,1,2,3\}^n$ and write 
    \[
    V_w = \linspan\{\ket{S_1,S_2,S_3}: \forall i, \ \indic{i \in S_1} + \indic{i \in S_2} + \indic{i \in S_3} = w_i\}.
    \]
    Then $P V_w = V_w$, and to compute the eigenvalues of $P$, it suffices to compute the eigenvalues of $P_w$ for all $w$.
    
    Consider a total occupation vector $w \in \{0,1,2,3\}^n$ and an index $i$ such that $w_i = 3$. Because we can decompose $\Lambda = \sum_{i=1}^n (c_i \otimes c_i^\dagger + c_i^\dagger \otimes c_i)$, such a vector is annihilated by $c_i \otimes c_i^\dagger + c_i^\dagger \otimes c_i$ and is unaffected by $c_j \otimes c_j^\dagger + c_j^\dagger \otimes c_j$ where $j \neq i$.
    In other words, it lies in a subprojector of $\fermionicgaussiantestproj^{(1,2)}:$
    \[
    \sum_{A \ni i}\sum_{B \cap A = \emptyset} \sum_{\substack{C \subseteq B \\ 2|C| = |B|}} \ketbra{\chi(A,B,C)}{\chi(A,B,C)}
    \]
    Furthermore, it is clearly unchanged by the $\SWAP$ operator. By a similar argument, the same is true of indices $i$ such that $w_i = 0$.
    Thus the eigenvalues of $P_w$ depend on the number of $i \in [n]$ such that $w_i = 1$ and $w_i = 2$, which we denote as $k_1$ and $k_2$, respectively.

    Next, we argue a symmetry between the cases $v_i = 1$ and $v_i = 2$. Consider the unitary $J_i = \majorana_{2i-1}^{\otimes 3}$. A simple computation shows that each $J_i$ commutes with $\Lambda^{(1,2)}$ and $\SWAP^{(2,3)}$, so it commutes with $P$. For $w \in \{0,1,2,3\}$, $J_i$ converts a vector in $V_w$ to a vector in $V_{w^\prime}$, where $w^\prime = (w_1,\ldots, 3 - w_i,\ldots, w_n)$. Therefore we can reduce to the case where all $i$ such that $w_i \in \{1,2\}$ satisfy $w_i = 1$, as the eigenspectra of these states is identical. For all $k \in [n]$, it suffices to compute the eigenspectrum of $P_{w_k}$, where $w_k = (1^k,0^{n-k})$.

    Now since $w_i \leq 1$ for all $i$, $V_w$ is annihilated by any rank-$1$ projector $\ketbra{\chi(A,B,C)}{\chi(A,B,C)}$ such that $A \neq \emptyset$. The nonzero eigenspace of $P$ is spanned by
    \[
    \ket{\xi(B,C)} = 2^{-|B|/2}\sum_{T \subseteq B} (-1)^{|T \cap C|+q_B(T)} \ket{T} \ket{B \setminus T}\ket{[k] \setminus B},
    \]
    where we require further that $B \subseteq [k]$.
    Let us compute the matrix elements of $\SWAP^{(2,3)}$ in this basis. For the element $\braket{\xi(B^\prime,C^\prime)|\SWAP^{(2,3)}|\xi(B,C)}$ we obtain
    \begin{align*}
     &2^{-\frac{|B|+|B^\prime|}{2}}\sum_{\substack{T \subseteq B \\ T^\prime \subseteq B^\prime}} (-1)^{|T \cap C| + |T^\prime \cap C^\prime|+q_B(T)+q_{B'}(T')} \indic{T = T^\prime} \cdot \indic{B \setminus T = [k] \setminus B^\prime} \cdot \indic{B^\prime \setminus T^\prime = [k] \setminus B} \\ 
     &= 2^{-\frac{|B|+|B^\prime|}{2}}\sum_{\substack{T \subseteq B \cap B^\prime}} (-1)^{|T \cap (C\triangle C^\prime)|+q_B(T)+q_{B'}(T')} \cdot \indic{B \setminus T = [k] \setminus B^\prime} \cdot \indic{B^\prime \setminus T = [k] \setminus B}
    \end{align*}
    The two indicators above pose the conditions that $B \cup B^\prime = [k]$ and that $T = B \cap B^\prime$. Indeed, if there was an $i \in [k] \setminus (B \cup B^\prime)$ then $i \in [k] \setminus B$ but $i \not \in B \setminus T$. Similarly, if there was an $i \in B \cap B^\prime$ but $i \not \in T$ then $i \in B \setminus T$ but $i \not \in [k] \setminus B^\prime$. Thus the matrix element of $P_{w_k}$ corresponding to $(B,C,B^\prime,C^\prime)$ is simply
    \begin{equation}
        \indic{B \cup B^\prime = [k]}\cdot 2^{-\frac{|B|+|B^\prime|}{2}}\cdot (-1)^{\abs{B^\prime \cap C} + \abs{B \cap C^\prime}+q_B(B\cap B')+q_{B'}(B\cap B')}
    \end{equation}
    Now let $\sigma(S)$ denote the number of even $i \in S$ modulo $2$ and define the diagonal unitary $D$ such that $D\ket{S} = (-1)^{\sigma(S)}\ket{S}$. We observe that 
        \[
        (D \otimes D \otimes I)\ket{\xi(B,C)} = (-1)^{\sigma(B)}\ket{\xi(B,C)}.
        \]
    Using the facts that $B \cup B^\prime = [k]$, $|B| = 2|C|$, and $|B^\prime| = 2|C^\prime|$, we can compute
    \[
    q_B(B \cap B^\prime) + q_{B^\prime}(B \cap B^\prime) + \sigma(B) + \sigma(B^\prime) \equiv k + |C| + |C^\prime| \mod 2.
    \]
        Thus conjugating $P_{w_k}$ by $D \otimes D \otimes \identity$ gives a matrix whose entries in the $\xi(B,C)$ basis are
        \begin{equation}\label{eq:block-diagonalized-P-matrix}
        \indic{B \cup B^\prime = [k]}\cdot 2^{-\frac{|B|+|B^\prime|}{2}}\cdot (-1)^{\abs{C \setminus B^\prime} + \abs{C^\prime\setminus B} + k} 
        \end{equation}
        
        All that remains is to compute the eigenvalues of this matrix. To do this, we will write $(B,C)$ in a useful coordinatewise form. For each $i \in [n]$, let
    \[
    x_i^{(B,C)} = \begin{cases}
        -1 & i \in B \setminus C \\
        0 & i \not \in B \\
        1 & i \in C
    \end{cases}
    \]
    We can then write $(B,C)$ as a vector $\ket{x^{(B,C)}}$ such that $\sum_{i=1}^b x_i^{(B,C)} = 0$. We will further replace the ternary variables $x_i^{(B,C)}$ via the following isometry $M$:
    \[
    M\ket{-1} = \ket{00}, \quad M\ket{0} = \frac{\ket{01} + \ket{10}}{\sqrt{2}}, \quad M\ket{1} = \ket{11}
    \]
    The benefit of this isometry is that $M^{\otimes k}\ket{x^{(B,C)}}$ is a substring of $\F_2^{2k}$ of Hamming weight exactly $k$, by our requirement that $x^{(B.C)}$ is balanced. Let $\Pi_k$ be the projection onto this subspace, and let $\Pi_{2}$ be the projector onto the symmetric subspace of two qubit states. A calculation demonstrates that we can write the action of $P_{w_k}$ as
    \begin{equation}
    Q_{k} = (-1)^k\Pi_2^{\otimes k} \Pi_{k} H^{\otimes 2k} \Pi_{k} \Pi_2^{\otimes k},
    \end{equation}
    where $H$ is the standard Hadamard matrix. For subsets $S, T \subseteq [2k]$ of size $k$, we have that
    \begin{equation}\label{eq:intersection-form}
    \braket{S|Q_k|T} = (-1)^k2^{-k} (-1)^{S \cap T}. 
    \end{equation}
    $Q_k$ thus has the form of an intersection matrix \cite{ghareghani2011intersectionmatrices}. Concretely, \cite{ghareghani2011intersectionmatrices} define a set of matrices $F_{k,k,k}(z)$ over subsets of $[2k]$ of size $k$ as\footnote{In their notation, we swap $K \to T$ and have both $s = |S| = k$ and $|T| = k$.}
    \[
    (F_{k,k,k}(z))_{S,T} = \sum_{i=0}^k \binom{\abs{S \cap T}}{i}z^i = (1 + z)^{\abs{S \cap T}}.
    \]
    Choosing $z = -2$ we obtain exactly the matrix in \cref{eq:intersection-form}, without the factor of $(-1)^k2^{-k}$. By \cite[Theorem 9]{ghareghani2011intersectionmatrices}, the eigenvalues of $F_{k,k,k}(-2)$ are given by
    \[
    \mu_{j}(-2) = \sum_{i=j}^k \binom{k-j}{i-j}\binom{2k - j - i}{k-i}(-2)^{i} = \begin{cases}
        (-2)^j(-1)^{(k-j)/2}\binom{k-j}{\frac{k-j}{2}} & k-j \textrm{ even} \\
        0  & k-j \textrm{ odd}
    \end{cases},
    \]
    where the second equality invokes \cref{prop:combinatorial-identity}.
    For even $k-j$, we write $2r = k - j$ and compute the eigenvalues of $Q_k$ as
    \[
    \nu_r = (-1)^k 2^{-k} \mu_{k-2r}(-2) = (-1)^k 2^{-k} (-2)^{k-2r}(-1)^r \binom{2r}{r} = (-1)^r \frac{\binom{2r}{r}}{4^r}.
    \]
    For $r = 0$, we obtain the eigenvalue $1$. It remains to show that for $r \geq 1$, $\nu_r \in [-1/2,3/8]$. Observe that
    \[
    \abs*{\frac{\nu_{r+1}}{\nu_{r}}} = \frac{\binom{2r+2}{r+1}/4^{r+1}}{\binom{2r}{r}/4^r} =  \frac{(2r+2)(2r+1)}{4(r+1)^2} = \frac{2r+1}{2r+2} < 1.
    \]
    We conclude by computing $\nu_1 = -\frac{1}{2}$ and $\nu_2 = \frac{3}{8}$. \qedhere
\end{proof}

\subsection{Representation-theoretic analysis of the spectral gap}\label{sec:spectral-gap-abstract-proof}

Here we present an alternative proof of \cref{lem:fermionic-spectral-gap} using the representation theory of (three copies of) the spin group $\spingrp(2n)$. Define
\begin{align}
\Pi_{\mathrm{Sym}^{k}} &= \frac{1}{k!} \sum_{\sigma} U_{\sigma},\\
d^{(k)}_n &= \dim V^{\mathrm{Spin}(2n)}_{{\left(\frac{k}{2}\right)}^n},\\
d^{(k)}_{n, i} &= \dim V^{\mathrm{Spin}(2n)}_{
    {\left(\frac{k+1}{2}\right)}^{n-i}
    {\left(\frac{k-1}{2}\right)}^{i}
}.
\end{align}
Note that
\begin{align}
d^{(2)}_n &= d^{(1)}_{n, 0} =
\dim V^{\mathrm{Spin}(2n)}_{{\left(1\right)}^n}
=\frac12 \binom{2n}{n}
,\\
d^{(2)}_{n, i} &=
\dim V^{\mathrm{Spin}(2n)}_{
    {\left(\frac32\right)}^{n-i}
    {\left(\frac12\right)}^{i}
}
=\binom{2n}{n-i} - \binom{2n}{n-i-1}
=
{\left[t^n\right]} C_n {\left(t C_n^2\right)}^i.
\end{align}

The basic idea is as follows.
Let $P$ be the projector onto the top irrep $V$ of $k$ registers.
The decomposition of $V \otimes H$ (adding one more register) is multiplicity free, so on $H^{\otimes k}$
\begin{align}
    P \otimes I &= \sum_{\lambda} P_{\lambda},
\end{align}
where $P_{\lambda}$ acts on the irrep space for $\lambda$.
For different $n$, certain $\lambda$ appear or not, but when they appear $P_{\lambda}$ is the same.
Furthermore, $P_{\lambda}$ acts non-trivially only on the commutant irrep space.
Given how permutations of the registers acts on the commutant irrep space, we can calculate the symmetrized version of $P_{\lambda}$.
Taking traces and using the known dimensions of the irrep spaces, we get a linear system of equations whose solution yields the eigenvalues.
\cref{lem:spectral-sequence-conditions} shows how to derive the system of equations, and \cref{lem:spectral-sequence-solution} gives their solution.
Following their statement and proof, we conclude with the alternative proof of \cref{lem:fermionic-spectral-gap}.

\begin{lemma}\label{lem:spectral-sequence-conditions}
There is a sequence $\sigma_0, \sigma_1, \ldots \in \mathbb C$ such that the following holds.
Let $n, k$ be positive integers.
Let $P^{(n)}$ be the projector onto the top irrep of $\mathrm{Spin}(2n)$ in $S_+^{\otimes k}$.
Define the operator
\begin{align}
    A^{(n)} &= 
\Pi^{(n)}_{\mathrm{Sym}^{k+1}}
P^{(n)}_{[k]}
\Pi^{(n)}_{\mathrm{Sym}^{k+1}}
\end{align}
on $S_+^{\otimes (k + 1)}$, where $P_{[k]} = P \otimes I_{S_+}$.
Then the eigenvalues of $A$ are
\begin{align}
    \{0, 1\} \cup \left\{\sigma_i: i \in \{1, 2, \ldots, \lfloor n/2\rfloor\}\right\},
\end{align}
and the sequence $\sigma_1, \sigma_2, \ldots$ satisfies
\begin{align}
\sum_{i=0}^{\lfloor n/ 2 \rfloor} \sigma_i 
d^{(k)}_{n, 2i} &= d^{(k)}_{n} \frac{2^{n-1} + k}{1 + k}.
\label{eq:spectral-sequence-conditions}
\end{align}
\end{lemma}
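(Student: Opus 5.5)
The plan is to decompose $A^{(n)}$ along the $\spingrp(2n)\times S_{k+1}$ structure of $S_+^{\otimes(k+1)}$, show that it acts as a scalar $\sigma_l$ on each relevant isotypic component, and then obtain \cref{eq:spectral-sequence-conditions} by computing $\tr A^{(n)}$ in two ways.

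\textbf{Step 1 (decomposition).} Write $V=V_{(\frac k2)^n}$ for the top irrep in $S_+^{\otimes k}$, so that $P^{(n)}$ projects onto it. Its image is symmetric: it contains the highest-weight vector $\ket{0^n}^{\otimes k}$, and $V$ is generated from that vector by the diagonal $\spingrp(2n)$ action, which commutes with permutations. By the last decomposition in \cref{fct:schur-decompositions}, $P^{(n)}\otimes I_{S_+}=\sum_{l=0}^{\lfloor n/2\rfloor}P_{\lambda_l}$ with $\lambda_l=\bigl((\tfrac{k+1}{2})^{n-2l},(\tfrac{k-1}{2})^{2l}\bigr)$, and this decomposition is multiplicity free. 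Decompose $S_+^{\otimes(k+1)}\cong\bigoplus_\lambda V_\lambda\otimes M_\lambda$, where $M_\lambda$ carries the commutant action, including $S_{k+1}$. Since $P_{[k]}$ and $\Pi_{\mathrm{Sym}^{k+1}}$ are both $\spingrp(2n)$-equivariant, Schur's lemma makes $A^{(n)}$ act as $I_{V_\lambda}\otimes a_\lambda$ on each component. On $M_{\lambda_l}$, $P_{[k]}$ acts as a rank-one projector $\ketbra{p_l}{p_l}$, by multiplicity-freeness, and it is zero on every $\lambda$ not of the form $\lambda_l$.

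\textbf{Step 2 (symmetrization).} $\Pi_{\mathrm{Sym}^{k+1}}$ acts on $M_{\lambda_l}$ as the projector onto its $S_{k+1}$-invariant subspace. Each $\lambda_l$ satisfies the conditions of the $\mathrm{Sym}^{k+1}S_+$ decomposition in \cref{fct:schur-decompositions}, so it appears there. I would argue that it appears with multiplicity one, so the invariant subspace is spanned by a single unit vector $s_l$. Then $a_{\lambda_l}=\sigma_l\ketbra{s_l}{s_l}$ with $\sigma_l=\abs{\braket{s_l|p_l}}^2$, and the spectrum of $A^{(n)}$ is $\{0\}\cup\{\sigma_l\}$. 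For $l=0$, $V_{\lambda_0}$ is the top irrep of $k+1$ copies and lies inside $V\otimes S_+$, so $\sigma_0=1$.

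\textbf{Step 3 (independence of $n$) — the main obstacle.} One must show that $\sigma_l$ depends only on $l$ (and $k$), not on $n$, so that a single sequence serves for all $n$. My first attempt would realize $p_l$ and $s_l$ on explicit highest-weight vectors of weight $\lambda_l$. The vector $s_l$ would be the symmetrization of a vector built from $\ket{0^n}$ on $n-2l$ modes and, on the $2l$ remaining modes, an antisymmetric-type pairing across the $k+1$ registers. The claim to check is that the overlap computation factorizes mode by mode. The $n-2l$ fully occupied modes would then contribute trivially, leaving a computation on $2l$ modes alone. Failing that, I would use the restriction to $\spingrp(2n-1)$ from \cref{fct:branching} to compare the commutant structures for $n$ and $n+1$.

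\textbf{Step 4 (trace identity).} The left side is $\tr A^{(n)}=\sum_l\sigma_l\,d^{(k)}_{n,2l}$, using $\dim V_{\lambda_l}=d^{(k)}_{n,2l}$. For the right side, write $\tr A^{(n)}=\tr(P_{[k]}\Pi_{\mathrm{Sym}^{k+1}})=\frac{1}{(k+1)!}\sum_{\pi}\tr(P_{[k]}U_\pi)$ and split the permutations.
\begin{itemize}
\item The $k!$ permutations fixing register $k+1$ give $\tr(P^{(n)})\cdot\dim S_+=d^{(k)}_n2^{n-1}$ each, because $P^{(n)}U_{\pi'}=P^{(n)}$ by the symmetry of $V$.
\item The remaining $k\cdot k!$ permutations each factor as $\pi'\circ(j\;k{+}1)$ with $\pi'$ fixing register $k+1$. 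Tracing out register $k+1$ against a swap yields the identity, so each contributes $\tr P^{(n)}=d^{(k)}_n$.
\end{itemize}
Summing gives $d^{(k)}_n\frac{2^{n-1}+k}{k+1}$, which is \cref{eq:spectral-sequence-conditions}.
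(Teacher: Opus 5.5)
The genuine gap is Step 3. It carries the real content of the lemma, namely that one sequence $\sigma_0,\sigma_1,\ldots$ serves every $n$, and there you only describe what you would try (``the claim to check is\ldots'', ``failing that\ldots''). Nothing in Steps 1--2 prevents $\sigma_l$ from depending on $n$, so the first clause of the statement is unproved. Steps 1 and 4 are sound and essentially the paper's; Step 4 is its trace computation verbatim.

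The missing argument needs no explicit vectors:
\begin{enumerate}
\item Work with $B^{(n)}=P_{[k]}\Pi_{\mathrm{Sym}^{k+1}}P_{[k]}$. It has the same nonzero spectrum as $A^{(n)}$, and by Schur's lemma on the multiplicity-free space $V\otimes S_+$ it acts as the scalar $\sigma^{(n)}_l$ on $V_{\lambda_l}$. Since $P_{[k]}v=v$, a unit highest-weight vector $v$ of that copy gives $\sigma^{(n)}_l=\bra{v}B^{(n)}\ket{v}=\bra{v}\Pi_{\mathrm{Sym}^{k+1}}\ket{v}$.
\item Take the $\spingrp(4l)$ highest-weight vector of $V_{((k-1)/2)^{2l}}\subset V^{(2l)}\otimes S_+$ and pad it with vacua on the \emph{first} $n-2l$ modes of all $k+1$ registers.
\item The padded vector lies in $V^{(n)}\otimes S_+$. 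Indeed, $V^{(m)}$ is spanned by $\ket{g}^{\otimes k}$ over the orbit of the vacuum, and padding such a $g$ with vacua stays in the $\spingrp(2n)$-orbit.
\item It has $\spingrp(2n)$-weight $\lambda_l$ and is killed by every positive root vector. Roots touching a padded mode would raise a padded coordinate above its maximum $\frac{k+1}{2}$ (this is why the padded modes must come first). The remaining roots are $\spingrp(4l)$ raising operators.
\item By multiplicity-freeness, the padded vector therefore spans the highest-weight line of $V_{\lambda_l}\subset V^{(n)}\otimes S_+$.
\item Register permutations act identically on the padding, so $\Pi_{\mathrm{Sym}^{k+1}}$ acts on it as the $2l$-mode symmetrizer. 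Hence $\sigma^{(n)}_l=\sigma^{(2l)}_l$.
\end{enumerate}
The paper runs the same reduction through the projector $\tilde\Omega$ onto ``first $n-2l$ modes empty in every register''. Being a weight-space projector, $\tilde\Omega$ commutes with $P_{[k]}$ and $\Pi_{\mathrm{Sym}^{k+1}}$, and the paper shows $\tilde\Omega B^{(n)}\tilde\Omega=B^{(2l)}\otimes\Omega$.

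A smaller point concerns Step 2. The multiplicity-one claim is neither needed nor supplied by the Fact, which only lists admissible labels and does not even guarantee that $\lambda_l$ occurs. For $k=2$, $n=4$, triality turns $\mathrm{Sym}^3S_+$ into $\mathrm{Sym}^3$ of the vector representation, so $\mathrm{Sym}^3S_+\cong V_{(3/2)^4}\oplus V_{(1/2)^4}$. Thus $\lambda_1$ is absent, consistent with $\sigma_1=0$. None of this matters. Let $Q_l$ be the projector onto the $S_{k+1}$-invariants of $M_{\lambda_l}$. The block of $A$ is $Q_l\ketbra{p_l}{p_l}Q_l$, which has rank at most one and eigenvalue $\|Q_lp_l\|^2$, and that is all the trace identity uses. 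The paper's choice of $B$ on the multiplicity-free $V\otimes S_+$ sidesteps the issue entirely.
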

\begin{proof}

Let
\begin{align}
J &= 
P_{[k]}
\Pi_{\mathrm{Sym}^{k+1}},\\
B &= 
J J^{\dagger}
=
P_{[k]}
\Pi_{\mathrm{Sym}^{k+1}}
P_{[k]};
\end{align}
note that
\begin{align}
A &= 
J^{\dagger} J
=
\Pi_{\mathrm{Sym}^{k+1}}
P_{[k]}
\Pi_{\mathrm{Sym}^{k+1}}.
\end{align}
We can write $J$ as 
\begin{align}
J &= \sum_{\lambda, \mu} J_{\lambda, \mu},
\end{align}
where $J_{\lambda, \mu}$ is the block of $J$ between the $V_{\lambda}$ subspace of $V_{\left(\frac{k}{2}, \ldots, \frac{k}{2}\right)} \otimes S_+$ and the $V_{\mu} \otimes M_{\mu}$ subspace of $\mathrm{Sym}^{k+1} S_+$ (see \cref{fct:schur-decompositions}.)
Because $J$ commutes with the action of $\mathrm{Spin}(2n)$,
\begin{align}
J {U(g)}^{\otimes(k+1)} = J {U(g)}^{\otimes(k+1)},
\end{align}
by Schur's lemma, $J_{\lambda, \mu}$ is proportional to the identity when $\lambda = \mu$ and 0 otherwise:
\begin{align}
J &\cong \sum_{\lambda} \sqrt{\sigma_{\lambda}} I_{V_{\lambda}} \otimes \bra{m_{\lambda}},
\end{align}
where $\sigma_{\lambda} \geq 0$.
For concision, let $V_l = V_{
    {\left(\frac{k + 1}{2}\right)}^{n-2l}
    {\left(\frac{k + 1}{2}\right)}^{2l}
}$, which covers all of the irreps appearing in $V_{{\left(\frac{k}{2}\right)}^n} \otimes S_+$.
Let $R_{\lambda}$ be the partial isometry between 
$V_{\lambda} \subset V_{{\left(\frac{k}{2}\right)}^n} \otimes S_+$ and 
$V_{\lambda} \subset \mathrm{Sym}^{k+1}S_+$.
Then
\begin{align}
    J &= \sum_{l=0}^{\lfloor n / 2\rfloor} \sqrt{\sigma_l} R_l,\\
    A &= J^{\dagger} J = \sum_{l=0}^{\lfloor n / 2\rfloor} \sigma_l 
    R_{\lambda}^{\dagger} R_{\lambda},\\
    B &= J J^{\dagger} = \sum_{l=0}^{\lfloor n / 2\rfloor} \sigma_l 
    R_{\lambda} R_{\lambda}^{\dagger}.
\end{align}
That is, $A$ and $B$ have the same (real, nonnegative) spectrum, with the same multiplicity for each nonzero eigenvalue.
Their traces are
\begin{align}
\sum_{l=0}^{\lfloor n / 2 \rfloor} \sigma_l d^{(k)}_{n, 2l}
&=
\sum_{l=0}^{\lfloor n / 2\rfloor} \sigma_l \dim V_l
=
\tr \, B \\
&=\tr \, A = 
\frac{1}{(k+1)!^2}
\sum_{\sigma, \sigma'} \tr\left[\sigma P_{[k]} \sigma'\right]
=
\frac{1}{(k+1)!}
\sum_{\sigma} \tr\left[\sigma P_{[k]}\right]
\\
&=
\frac{k!}{(k+1)!}
\tr P
\left[
    \tr \, \identity_{S_+}
    + k
\right]
=
\frac{
2^{n-1} + k
}{k+1}
\dim V_{\frac{k}{2}, \ldots, \frac{k}{2}}
=
d^{(k)}_n
\frac{
2^{n-1} + k
}{k+1},
\end{align}
i.e., \cref{eq:spectral-sequence-conditions}.
It remains to be proven that the sequence $\sigma_{i}^{(n)}$ is independent of $n$.

Let
\begin{align}
\Omega^{(m)}_{l} &= {\left(\ket{0}\!\bra{0}\right)}^{\otimes (m \times l)},\\
\tilde{\Omega}^{(m)}_l
                 &=
                 {\left[
 {\left(\ket{0}\!\bra{0}\right)}^{\otimes l}
 \otimes 
\identity_{2^{n-l}}
\right]}^{\otimes m}.
\end{align}
Note that
\begin{align}
\tilde{\Omega}_{k+1}^{(l)} 
P^{(n)}_{[k]}
&
=
P^{(n)}_{[k]}
\tilde{\Omega}_{k+1}^{(l)} ,\\
\tilde{\Omega}_{k+1}^{(l)} 
\Pi^{(n)}_{\mathrm{Sym}^{k+1}}
&
=
\Pi^{(n)}_{\mathrm{Sym}^{k+1}}
\tilde{\Omega}_{k+1}^{(l)}.
\end{align}
Furthermore,
\begin{align}
P^{(2l)}_{[k]} 
\otimes \Omega
&=
                 \tilde{\Omega} 
                 P^{(n)}_{[k]} 
                 \tilde{\Omega} ,\\
                 \Pi^{(2l)}_{\mathrm{Sym}^{k+1}}
\otimes \Omega
&=
                 \tilde{\Omega} 
                 \Pi^{(2l)}_{\mathrm{Sym}^{k+1}}
                 \tilde{\Omega}.
\end{align}
Therefore,
\begin{align}
\tilde{B}^{(2l)} &= B^{(2l)} \otimes \Omega^{(n-2l)}_{k+1}
\\
                 &=
                 \left(
                 P^{(2l)}_{[k]} 
                 \Pi^{(2l)}_{\mathrm{Sym}^{k+1}}
                 P^{(2l)}_{[k]} 
             \right) \otimes \Omega^{(n-2l)}_{k+1}\\
                 &=
                 \left(
                 P^{(2l)}_{[k]} 
                 \otimes \Omega
                 \right)
                 \left(
                 \Pi^{(2l)}_{\mathrm{Sym}^{k+1}}
                 \otimes \Omega
                 \right)
                 \left(
                 P^{(2l)}_{[k]} 
                 \otimes \Omega
                 \right)
                 \\
                 &=
                 \left(
                 \tilde{\Omega} 
                 P^{(n)}_{[k]} 
                 \tilde{\Omega} 
                 \right)
                 \left(
                 \tilde{\Omega} 
                 \Pi^{(n)}_{\mathrm{Sym}^{k+1}}
                 \tilde{\Omega} 
                 \right)
                 \left(
                 \tilde{\Omega} 
                 P^{(n)}_{[k]} 
                 \tilde{\Omega}
                 \right)
                 \\
                 &=
                 \tilde{\Omega} 
                 P^{(n)}_{[k]} 
                 \Pi^{(n)}_{\mathrm{Sym}^{k+1}}
                 P^{(n)}_{[k]} 
                 \tilde{\Omega}
                 =
                 \tilde{\Omega} B^{(n)} \tilde{\Omega}
\end{align}
Let $\ket{v^{(2l)}_i}$ be the highest-weight vector of $V^{\mathrm{Spin}(4l)}_{
    {\left(\frac{k+1}{2}\right)}^{2l-2i}
    {\left(\frac{k-1}{2}\right)}^{2i}
}$ and 
$\ket{v^{(n)}_i}$ be the highest-weight vector of $V^{\mathrm{Spin}(2n)}_{
    {\left(\frac{k+1}{2}\right)}^{2l-2i}
    {\left(\frac{k-1}{2}\right)}^{2i}
}$.
Set $\ket{\tilde{v}^{(2l)}_i} = 
\ket{0}^{\otimes (k+1)\times (n-2l)}
\otimes 
\ket{v^{(2l)}} 
$.
Consider $0 \leq i \leq l$.
Then, with respect to $\mathrm{Spin}(2n)$,  $\ket{\tilde{v}^{(2l)}}$ has weights 
${\left(\frac{k+1}{2}\right)}^{n-2i}{\left(\frac{k-1}{2}\right)}$.
Because $\ket{v^{(2l)}_i}$ is a highest-weight vector for $\mathrm{Spin}(2l)$, $\ket{\tilde{v}^{(2l)}_i}$ is annhilated by all of the positive roots of $\mathrm{Spin}(2n)$ involving modes with index greater than $n-2l$.
But it is also annhilated by any positive root with a first index at most $n-2l$, because the weights are already at their maximum value $\frac{k+1}{2}$ on $S_+^{(\otimes (k+1)}$.
Therefore $\ket{\tilde{v}^{(2l)}_i}$ is also a $\mathrm{Spin}(2n)$ highest-weight-vector, i.e., 
$\ket{\tilde{v}^{(2l)}_i} = \ket{v^{(n)}_i}$.
Thus
\begin{align}
\sigma_j^{(2l)}
&=
\sum_{i=0}^l
\sigma_i^{(2l)}
\braket{{v}^{(2l)}_j |
{R}_i^{(2l)} {\left({R}_i^{(2l)}\right)}^{\dagger}
| {v}^{(2l)}_j}
\\
&=
\sum_{i=0}^l
\sigma_i^{(2l)}
\braket{\tilde{v}^{(2l)}_j |
\tilde{R}_i^{(2l)} {\left(\tilde{R}_i^{(2l)}\right)}^{\dagger}
| \tilde{v}^{(2l)}_j}
\\
&=
\braket{\tilde{v}^{(2l)}_j |
    \tilde{B}^{(2l)}
| \tilde{v}^{(2l)}_j}
\\
&=
\braket{\tilde{v}^{(2l)}_j |
    \tilde{\Omega}
    B^{(n)}
    \tilde{\Omega}
| \tilde{v}^{(2l)}_j}
\\
&=
\braket{{v}^{(n)}_j |
    B^{(n)}
| {v}^{(n)}_j}
\\
&=\sigma^{(n)}_j.
\end{align}
That is,
\begin{align}
\sigma_j &= \sigma_j^{(2j)} = \sigma_{j}^{(2j+1)} = \cdots,
\end{align}
i.e., $\sigma_j$ is independent of $n$.

So the eigenvalues $\sigma_i$ for $i=1, 2, \ldots$ satisfy
\begin{align}
    \sum_{i=0}^{\lfloor n/2 \rfloor} \sigma_i \dim V_{
        {\left(\frac{k+1}{2}\right)}^{n-2i}
        {\left(\frac{k-1}{2}\right)}^{2i}
    }
    &=
    \frac{2^{n-1} + k}{k + 1} \dim V_{
        {\left(\frac{k}{2}\right)}^{n}
    }
\end{align}
for every integer $m \geq i$.
Applying \cref{lem:spectral-sequence-solution}
\begin{align*}
    \sum_{i=0}^m \sigma_i     
\binom{2n}{2i}
\frac{
    {\left(\frac{k}{2}\right)}^{\overline{i}}
    {\left(\frac{k + n + 2i}{2}\right)}^{\overline{m-i}}
}{
    {\left(\frac{k+2i - 1}{2}\right)}^{\overline{i}}
    {\left(\frac{k + 4i + 1}{2}\right)}^{\overline{m-i}}
}
    &=
    \frac{2^{2m-1} + k}{k + 1}. \qedhere
\end{align*}
\end{proof}

\begin{lemma}\label{lem:spectral-sequence-solution}
Let $k$ be a positive integer.
Suppose a sequence $\sigma_0, \sigma_1, \ldots \in \mathbb C$ satisfies
\begin{align}
\sum_{i=0}^{\lfloor n / 2 \rfloor} \sigma_i d^{(k)}_{n, 2i} &= d^{(k)}_{n} \frac{2^{n-1} + k}{1 + k}
\end{align}
for all integers $n \geq 2$.
Then
\begin{align}
\sigma_0 &= 1,\\
\sigma_1 &= 0,\\
\sigma_m &= \frac{
1 + k {(-1)}^m \prod_{j=0}^{m-1} \frac{1 + 2j}{k + 2j}}{k + 1}.
\end{align}

\end{lemma}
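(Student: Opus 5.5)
The plan has two parts. First, show that the hypotheses determine the sequence uniquely, by a triangular induction. Second, verify that the proposed closed form satisfies every equation. Since the solution is unique, this identifies it.

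\emph{Uniqueness.} For fixed $n$, only $\sigma_0,\dots,\sigma_{\lfloor n/2\rfloor}$ appear in the equation. Taking $n=2m$, the new unknown $\sigma_m$ enters with coefficient $d^{(k)}_{2m,2m}=\dim V_{((k-1)/2)^{2m}}$, which is positive. So the equations for $n=2,4,6,\ldots$ form a lower-triangular system with nonzero diagonal. By induction on $m$ they determine $\sigma_0,\sigma_1,\ldots$. The odd-$n$ equations are then consistency conditions, which the verification below also covers. Concretely, $n=2$ gives $\sigma_0 d^{(k)}_{2,0}+\sigma_1 d^{(k)}_{2,2}=d^{(k)}_2\frac{2+k}{1+k}$, and $n=3$ gives a second relation in $\sigma_0,\sigma_1$. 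I would use these two small cases as a sanity check that $\sigma_0=1$ and $\sigma_1=0$.

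\emph{Existence.} Write the candidate as
\[
\sigma_i=\frac{1+k(-1)^i c_i}{k+1},\qquad c_i=\prod_{j=0}^{i-1}\frac{1+2j}{k+2j}=\frac{(1/2)^{\overline{i}}}{(k/2)^{\overline{i}}} .
\]
Then $c_0=1$ gives $\sigma_0=1$, and $c_1=1/k$ gives $\sigma_1=0$. By linearity it suffices to prove two identities, valid for all $n\ge2$:
\[
\text{(I)}\ \sum_{i} d^{(k)}_{n,2i}=2^{n-1}d^{(k)}_n,\qquad \text{(II)}\ \sum_i (-1)^i c_i\, d^{(k)}_{n,2i}=d^{(k)}_n .
\]
Given these, the left-hand side of the hypothesis equals $\frac{2^{n-1}d_n+k\,d_n}{k+1}$, which is the right-hand side. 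Identity (I) is immediate: it is the dimension count of the multiplicity-free decomposition of $V_{(k/2)^n}\otimes S_+$ in \cref{fct:schur-decompositions}, since $\dim S_+=2^{n-1}$.

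\emph{Main obstacle: identity (II).} I would divide through by $d^{(k)}_n$ and substitute the ratio from \cref{fct:irrep-dims}. The factor $(k/2)^{\overline{i}}$ in that ratio cancels the denominator of $c_i$, leaving
\[
\sum_i (-1)^i\binom{n}{2i}\frac{(1/2)^{\overline{i}}\,(\cdot)^{\overline{\,\cdot\,}}}{(\frac{k+2i-1}{2})^{\overline{i}}\,(\cdot)^{\overline{\,\cdot\,}}}=1,
\]
a terminating alternating hypergeometric sum. The Pochhammer arguments in \cref{fct:irrep-dims} are stated with $m$ and $n$ somewhat interchangeably; I would first recompute them from Weyl's dimension formula with the type-$D_n$ roots, to get the correct parameters. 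I then expect the sum to rewrite as a balanced ${}_3F_2$ or ${}_4F_3$ at argument $1$, evaluated by Pfaff--Saalsch\"utz (or Chu--Vandermonde after splitting $\binom{n}{2i}$ via duplication formulas). If no classical summation fits, the fallback is Zeilberger's algorithm: find a recurrence in $n$ satisfied by the normalized sum, then check that the constant sequence $1$ satisfies it, using the initial values at $n=2,3$ computed above.

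An alternative, more conceptual route to (II) is to read it as a trace identity: the alternating weights $(-1)^i c_i$ should be the eigenvalues of an explicit involution-like operator on $V_{(k/2)^n}\otimes S_+$ whose trace equals $d_n$. I would pursue this only if the hypergeometric evaluation stalls.
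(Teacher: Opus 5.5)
There are two gaps, though the overall plan is sound and is also the paper's. You force uniqueness through a triangular system, then split the right-hand side $\frac{2^{n-1}+k}{k+1}d^{(k)}_n$ linearly and verify each piece. Your split is the right one: $\sigma_i\equiv 1$ solves (I), which you correctly justify by the dimension count of $V_{(k/2)^n}\otimes S_+$, and $\sigma_i=(-1)^ic_i$ solves (II). The paper's proof has the same shape, but its component solutions $x'_m=1$ and $x''_m=2^m$ do not reproduce the lemma's formula, and it stops at ``it suffices to confirm''.

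First, the uniqueness induction fails at its base. The $n=2$ equation already contains two unknowns, $\sigma_0$ and $\sigma_1$, and there is no $n=0$ equation. So the even-$n$ equations alone leave a one-parameter family: pick $\sigma_1$ freely, then solve for $\sigma_0,\sigma_2,\sigma_3,\ldots$. The $n=3$ equation is therefore a required input, not a sanity check, and you must show the resulting $2\times 2$ system is nonsingular. With $d^{(k)}_{2,0}=k+2$, $d^{(k)}_{2,2}=k$, $d^{(k)}_{3,0}=\binom{k+4}{3}$ and $d^{(k)}_{3,2}=\tfrac{k(k+2)(k+3)}{2}$, its determinant is $\tfrac{k(k+1)(k+2)(k+3)}{3}\neq 0$. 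This is exactly how the paper begins. Only after this do the equations $n=2m\geq 4$ fix $\sigma_m$ triangularly, and the candidate then only needs checking at $n=3$ and at even $n$.

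Second, identity (II) is the entire content of the lemma, and the proposal only forecasts how it might be summed. The form you forecast is also not the one that arises. Write $(x)_j=x(x+1)\cdots(x+j-1)$. Weyl's formula gives, for every $n$ and $i\geq 1$,
\[
\frac{d^{(k)}_{n,2i}}{d^{(k)}_{n,0}}=\binom{n}{2i}\,(k-1+4i)\,\frac{(k)_{2i-1}}{(k+n)_{2i}}.
\]
Use $c_i(\tfrac k2)_i=(\tfrac12)_i$ and $(2i)!=4^i\,i!\,(\tfrac12)_i$, and put $a=\tfrac{k-1}{2}$. Then identity (II), divided by $d^{(k)}_{n,0}$, becomes
\[
{}_4F_3\left(\begin{matrix}a,\ 1+\tfrac a2,\ -\tfrac n2,\ \tfrac{1-n}{2}\\ \tfrac a2,\ a+1+\tfrac n2,\ a+\tfrac{n+1}{2}\end{matrix};\,-1\right)=\frac{d^{(k)}_n}{d^{(k)}_{n,0}}=2^{-n}\,\frac{(k)_n}{(\tfrac k2)_n}.
\]
The factor $k-1+4i$ makes this a very-well-poised series at argument $-1$, not a balanced one at $1$. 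So the tool that applies directly is not Pfaff--Saalsch\"utz. It is the classical evaluation of ${}_4F_3\bigl(a,1+\tfrac a2,b,c;\tfrac a2,1+a-b,1+a-c;-1\bigr)$ as $\frac{\Gamma(1+a-b)\Gamma(1+a-c)}{\Gamma(1+a)\Gamma(1+a-b-c)}$. Take $b=-\tfrac n2$ and $c=\tfrac{1-n}{2}$ and apply Legendre duplication; this gives exactly $2^{-n}(k)_n/(\tfrac k2)_n$. The case $k=1$ follows by continuity in $a$. Zeilberger's algorithm would also certify the identity, but one of these computations has to be carried out for the proposal to be a proof.
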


\begin{corollary}
\begin{align}
\max_{m\geq 2} \sigma_m &= \sigma_2 = \frac{k + 5}{(k+1)(k+2)} = \Theta(1/k).
\end{align}
\end{corollary}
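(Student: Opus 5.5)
The corollary follows directly from the closed form for $\sigma_m$ in \cref{lem:spectral-sequence-solution}, so I take that formula as given. Write
\[
a_m \coloneqq \prod_{j=0}^{m-1} \frac{1+2j}{k+2j},
\qquad\text{so that}\qquad
\sigma_m = \frac{1 + k(-1)^m a_m}{k+1}.
\]
The plan is to show that $a_m > 0$ is nonincreasing in $m$, and then split into even and odd $m$.

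\textbf{Step 1: monotonicity of $a_m$.} For $k \geq 1$ the ratio of consecutive terms is
\[
\frac{a_{m+1}}{a_m} = \frac{1+2m}{k+2m} \leq 1,
\]
with equality only when $k=1$. All factors are positive, so $a_m$ is positive and nonincreasing.

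\textbf{Step 2: odd $m$.} Here $\sigma_m = (1 - k a_m)/(k+1)$. Since $a_m > 0$, this is strictly less than $1/(k+1)$.

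\textbf{Step 3: even $m$.} Here $\sigma_m = (1 + k a_m)/(k+1)$. This is at least $1/(k+1)$, so every even term dominates every odd term. Because $a_m$ is nonincreasing, the even-indexed values are nonincreasing in $m$. Hence over $m \geq 2$ the maximum is attained at $m = 2$.

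\textbf{Step 4: evaluation.} We have $a_2 = \frac{1\cdot 3}{k(k+2)}$, so
\[
\sigma_2 = \frac{1 + \frac{3}{k+2}}{k+1} = \frac{k+5}{(k+1)(k+2)}.
\]
This lies between $1/(k+1)$ and $2/(k+1)$ for $k \geq 1$ (since $k+5 \leq 2(k+2)$ iff $k\geq 1$), which gives the $\Theta(1/k)$ claim.

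There is no real obstacle. The only point needing care is the degenerate case $k=1$: there $a_m \equiv 1$, so the even values are all equal to $1$ and the odd values are $0$. The maximum is still $\sigma_2 = 6/(2\cdot 3) = 1$; it is simply attained at every even $m$ rather than uniquely at $m=2$.
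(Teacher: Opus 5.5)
Your proof is correct and is the approach the paper relies on. The paper gives no separate argument for this corollary; it reads the maximum directly off the closed form in \cref{lem:spectral-sequence-solution}. Your write-up makes that step explicit. The parity split (odd $m$ gives $\sigma_m < 1/(k+1)$, even $m$ gives $\sigma_m > 1/(k+1)$), together with the monotonicity $a_{m+1}/a_m = (1+2m)/(k+2m) \leq 1$, is exactly what is needed. Your value $\sigma_2 = (k+5)/((k+1)(k+2))$ gives $7/12$ at $k=2$, which agrees with the paper's later use.
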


\begin{proof}
    For $n = 2, 3$,
    \begin{align}
        \begin{pmatrix}
            d^{(k)}_{2, 0} & d^{(k)}_{2, 2} \\
            d^{(k)}_{3, 0} & d^{(k)}_{3, 2}
        \end{pmatrix}
        \begin{pmatrix}
            \sigma_0 \\ \sigma_1
        \end{pmatrix}
        &=
        \begin{pmatrix}
        d^{(k)}_2 \\
        d^{(k)}_3
        \end{pmatrix}
    \end{align}
    has the unique solution
    \begin{align}
        \begin{pmatrix}
            \sigma_0 \\ \sigma_1
        \end{pmatrix}
        &=
        \begin{pmatrix}
        1 \\
        0
        \end{pmatrix}.
    \end{align}
    Now define
    \begin{align}
        A &= {\left(A_{m, i}\right)}_{m, i \geq 1}, &
        A_{m, i} &= \begin{cases}
            \frac{d^{(k)}_{2m, 0}}{
            d^{(k)}_{2m, 2m}} & i = 1 < m,\\
            \frac{d^{(k)}_{2m, 2i}}{
            d^{(k)}_{2m, 2m}} & 1 < i < m,\\
            1, & i = m,\\
            0 & i > m,
        \end{cases}\\
            x &= {\left(x_m\right)}_{m\geq 1}, & x_m = 
            \begin{cases}
                1, & m = 1,\\
                \sigma_m, & m > 1,
            \end{cases}\\
            b' &= {\left(b'_m\right)}_{m\geq 1}, &
            b'_m &= 
            \begin{cases}
                1, & m = 1,\\
                \frac{d^{(k)}_{2m}}{d^{(k)}_{2m,2m}} , & m > 1,
            \end{cases}
            \\
            b'' &= {\left(b''_m\right)}_{m\geq 1}, &
            b''_m &= 2^{2m} b'_m,\\
            \\
            b &= {\left(b_m\right)}_{m\geq 1} = \frac{k b' +\frac12 b''}{k + 1}, &
            b_m &= \frac{d^{(k)}_{2m}}{d^{(k)}_{2m,2m}} \frac{2^{2m-1} + k}{1+k},
    \end{align}
    so that $Ax = b$ captures the initial conditions and the equations for $n=2, 4, \ldots,$ etc:
    \begin{align}
        {\left(Ax\right)}_1 &= b'_1 = 1,\\
        {\left(Ax\right)}_m &= \sum_{i=1}^m A_{m, i} x_i
        = 
        \frac{d^{(k)}_{2m, 0}}{d^{(k)}_{2m, 2m}} \cdot 1
        +
        \frac{d^{(k)}_{2m, 2}}{d^{(k)}_{2m, 2m}} \cdot 0
        +
        \sum_{i=2}^{m}
        \frac{d^{(k)}_{2m, 2i}}{d^{(k)}_{2m, 2m}} \sigma_i
        =
        \frac{1}{d^{(k)}_{2m, 2m}} 
        \sum_{i=0}^m d^{(k)}_{2m, 2i} \sigma_i
        \\
        &=
        \frac{1}{d^{(k)}_{2m, 2m}} 
        d^{(k)}_{2m} \frac{k + 2^{2m-1}}{k + 1}
        =
        b_m.
    \end{align}
    Let $A = I + \Delta$.
    Because $A$ is lower-triangular matrix with unit diagonal, $A$ is invertible.
    Then the inverse of $A$ is 
    \begin{align}
    A^{-1} &= \sum_{l=0}^{\infty} {(-1)}^l \Delta^l,
    \end{align}
    as confirmed by
    \begin{align}
    A A^{-1} &=
    A^{-1} + \Delta A^{-1}\\
             &=
\sum_{l=0}^{\infty} {(-1)}^l \Delta^l
+
\sum_{l=0}^{\infty} {(-1)}^l \Delta^{l+1}
\\
             &=
\sum_{l=0}^{\infty} {(-1)}^l \Delta^l
-
\sum_{l=1}^{\infty} {(-1)}^l \Delta^l
={(-1)}^0 \Delta^0 = I.
    \end{align}
    The solution to $Ax = b$ is
    \begin{align}
        x &= A^{-1}b =
        \frac{
            k x' + \frac12 x''
        }{k + 1}
        =
        \paren*{
        \frac{k + 2^{m-1}}{k + 1}
        }_{m=1}^{\infty}
    \end{align}
    where
    \begin{align}\label{eq:temp1}
        x' &= A^{-1} b' = \paren*{1}_{m=1}^{\infty},\\
        x'' &= A^{-1} b''  = \paren*{2^m}_{m=1}^{\infty}
    \end{align}
    are the solutions to $Ax'=b'$ and $Ax''=b''$, respectively.
    Because $A$ is invertible, the solution is unique and so it suffices to confirm that
    \begin{align}
      \paren*{A x'}_m
      &=
      b'_m,\\
      \paren*{A x''}_m
      &=
      b''_m.
    \end{align}
  
\end{proof}

\begin{proof}[Alternative proof of \cref{lem:fermionic-spectral-gap}]
For $k=2$ and $m>1$, 
\begin{align}
\sigma_m &=
\frac{
1 + {(-1)}^m  2^{1-2m} \binom{2m}{m}
}{3},
\end{align}
because
\begin{align}
\prod_{j=0}^{m-1} \frac{1 + 2i}{2 + 2i}
&=
\frac{(2m-1)!!}{2^m m!} = \frac{(2m)!}{2^{2m} m!} = 2^{-2m} \binom{2m}{m}.
\end{align}

Together \cref{lem:spectral-sequence-conditions,lem:spectral-sequence-solution} imply that for $k=2$ the eigenvalues are
$\braces{1} \cup \braces*{\sigma_m}_{m > 1}$.
The largest eigenvalue strictly less than 1 is therefore $\sigma_2 = 7/12$.
\end{proof}

\subsection{The algorithm}\label{sec:fermionic-gaussian-algorithm}

We are now ready to state and analyze our algorithm to property test fermionic Gaussian states. Recall that our tester is effectively the $\fermionicgaussiantestproj$, and that each measurement of this projector accepts with probability $\norm{\fermionicgaussiantestproj \ket{\psi^{\otimes 2}}}^2$. There is a simple Gaussian unitary that implements $\fermionicgaussiantestproj$. Indeed, define the unitary
\begin{equation}\label{eq:fermionic-lambda-diagonalizer}
    V = \prod_{i=1}^n \exp\left(\frac{\ri \pi}{4} \majorana_{2i} \otimes \majorana_{2i - 1}\right)
\end{equation}

This unitary acts upon $\Lambda$ in a convenient fashion:
\begin{proposition}
    Let $V$ be the Gaussian unitary defined  in equation \eqref{eq:fermionic-lambda-diagonalizer}. Then
    \begin{equation}
        V \Lambda V^\dagger = 2(\identity \otimes \hat{N} - \hat{N} \otimes \identity),
    \end{equation}
    where $\hat{N}$ is the total occupation operator $\sum_{i=1}^n c_i^\dagger c_i$.
\end{proposition}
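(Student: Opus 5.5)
The plan is to rewrite $\Lambda$ in the Majorana basis, where it splits into commuting two-register terms, and then conjugate by $V$ one factor at a time using only anticommutation relations. With the paper's conventions $\majorana_{2i-1} = c_i + c_i^\dagger$ and $\majorana_{2i} = -\ri(c_i - c_i^\dagger)$ we have
\[
c_i = \tfrac12(\majorana_{2i-1} + \ri\majorana_{2i}), \qquad c_i^\dagger = \tfrac12(\majorana_{2i-1} - \ri\majorana_{2i}).
\]
Expanding, the cross terms cancel, so
\[
c_i \otimes c_i^\dagger + c_i^\dagger \otimes c_i = \tfrac12\paren*{\majorana_{2i-1}\otimes\majorana_{2i-1} + \majorana_{2i}\otimes\majorana_{2i}}.
\]
Thus $\Lambda = \frac12\sum_i (A_i + B_i)$ with $A_i = \majorana_{2i-1}^{\otimes 2}$ and $B_i = \majorana_{2i}^{\otimes 2}$. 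Here $\otimes$ is the register tensor product used throughout the section, so operators on different registers commute.

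Next I handle the factors of $V$. Set $X_i = \majorana_{2i}\otimes\majorana_{2i-1}$; it is Hermitian with $X_i^2 = \identity$. The sign bookkeeping reduces to one rule: $P\otimes Q$ and $P'\otimes Q'$ commute exactly when the total number of anticommuting slots is even.
\begin{itemize}
\item For $j\neq i$, $X_j$ anticommutes with $A_i$ and $B_i$ in both slots. So $X_j$ commutes with $A_i, B_i$, and every factor of $V$ except the $i$-th drops out when conjugating the $i$-th term.
\item $X_i$ anticommutes with $A_i$: the first slot anticommutes and the second commutes. Likewise $X_i$ anticommutes with $B_i$: the first slot commutes and the second anticommutes.
\end{itemize}

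For a Hermitian involution $X$ anticommuting with $Y$, we have $e^{\ri\theta X} Y e^{-\ri\theta X} = Y e^{-2\ri\theta X}$. At $\theta = \pi/4$ this gives $-\ri\, Y X$. Hence
\[
V A_i V^\dagger = -\ri\,\majorana_{2i-1}\majorana_{2i}\otimes\identity, \qquad V B_i V^\dagger = -\ri\,\identity\otimes\majorana_{2i}\majorana_{2i-1} = \ri\,\identity\otimes\majorana_{2i-1}\majorana_{2i}.
\]

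Finally I convert back to number operators. Writing $n_i = c_i^\dagger c_i$, the identity $(c_i+c_i^\dagger)(c_i-c_i^\dagger) = c_i^\dagger c_i - c_i c_i^\dagger = 2n_i - \identity$ gives
\[
-\ri\,\majorana_{2i-1}\majorana_{2i} = \identity - 2n_i.
\]
Summing over $i$ then yields
\[
V\Lambda V^\dagger = \tfrac12\sum_i\bracket*{(\identity - 2n_i)\otimes\identity + \identity\otimes(2n_i - \identity)} = \identity\otimes\hat N - \hat N\otimes\identity.
\]

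The main obstacle is purely bookkeeping of signs and normalization. My quick pass gives coefficient $1$, not the claimed $2$. This suggests the intended normalization uses Majoranas squaring to $2$, or a doubled $\Lambda$; it is consistent with the eigenvalue $2(|B|-2|C|)$ in \cref{prop:lambda-eigendecomp} only under such a convention. So I would fix the convention once, recheck the $\Lambda$-to-Majorana step against that eigenvalue formula, and state the constant accordingly.

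The kernel statement needed for the algorithm is insensitive to this factor. $V\,\ker\Lambda$ is the span of Fock states $\ket{S_1}\ket{S_2}$ with $|S_1| = |S_2|$. So $\fermionicgaussiantestproj$ is implemented by applying $V$, measuring occupation numbers on both registers, and accepting iff the two particle counts agree.
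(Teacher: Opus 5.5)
Your argument follows the paper's proof step for step. You rewrite each term of $\Lambda$ in Majoranas, conjugate $\majorana_{2i-1}^{\otimes 2}$ and $\majorana_{2i}^{\otimes 2}$ by the $i$th factor of $V$ to obtain $-\ri\,\majorana_{2i-1}\majorana_{2i}\otimes\identity$ and $\ri\,\identity\otimes\majorana_{2i-1}\majorana_{2i}$, and convert back using $-\ri\,\majorana_{2i-1}\majorana_{2i} = \identity - 2c_i^\dagger c_i$. Your explicit check that the other factors of $V$ commute with these terms is something the paper only asserts.

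The factor-of-two discrepancy you found is genuine and not an error on your part. The paper's final line sums $V(\majorana_{2i-1}^{\otimes 2} + \majorana_{2i}^{\otimes 2})V^\dagger$ but drops the $\tfrac12$ coming from $c_i\otimes c_i^\dagger + c_i^\dagger\otimes c_i = \tfrac12(\majorana_{2i-1}^{\otimes 2} + \majorana_{2i}^{\otimes 2})$. In effect it silently uses the Majorana-normalized operator $\sum_{a=1}^{2n}\majorana_a\otimes\majorana_a = 2\Lambda$.

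With $\Lambda$ as displayed, the correct identity is $V\Lambda V^\dagger = \identity\otimes\hat{N} - \hat{N}\otimes\identity$. As you observed, the eigenvalue $2(|B|-2|C|)$ in \cref{prop:lambda-eigendecomp} carries the same extra factor. Also as you say, the kernel, and therefore the tester and everything downstream, is unaffected.
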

\begin{proof}
    Each term $V_i = \exp\left(\frac{\ri \pi}{4} \majorana_{2i} \otimes \majorana_{2i - 1}\right)$ only interacts with $\majorana_{2i-1} \otimes \majorana_{2i-1} + \majorana_{2i} \otimes \majorana_{2i}$. Now, $(\majorana_{2i} \otimes \majorana_{2i - 1})^2 = \identity$, so 
    \[
    \exp\left(\frac{\ri \pi}{4}\majorana_{2i} \otimes \majorana_{2i - 1}\right) = \frac{\sqrt{2}}{2} \identity + \ri \frac{\sqrt{2}}{2} \majorana_{2i} \otimes \majorana_{2i - 1}.
    \]
    We can thus compute
    \begin{align}
        V_i(\majorana_{2i - 1} \otimes \majorana_{2i - 1})V_i^\dagger &= \frac{1}{2}(\identity + \ri \majorana_{2i} \otimes \majorana_{2i - 1}) \majorana_{2i - 1}^{\otimes 2} (\identity - \ri \majorana_{2i} \otimes \majorana_{2i - 1}) \\
        &= \frac{1}{2}\left(\majorana_{2i-1}^{\otimes 2} + \ri \majorana_{2i}\majorana_{2i-1} \otimes \identity - \ri \majorana_{2i-1}\majorana_{2i} \otimes \identity - \majorana_{2i-1}^{\otimes 2}\right) \\
        &= - \ri \majorana_{2i-1}\majorana_{2i} \otimes \identity = (\identity - 2 a_i^\dagger a_i) \otimes \identity,
    \end{align}
    Similarly,
    \begin{align}
        V_i(\majorana_{2i} \otimes \majorana_{2i})V_i^\dagger &= \frac{1}{2}(\identity + \ri \majorana_{2i} \otimes \majorana_{2i-1}) \majorana_{2i }^{\otimes 2} (\identity - \ri \majorana_{2i} \otimes \majorana_{2i - 1}) \\
        &= \frac{1}{2}\left(\majorana_{2i}^{\otimes 2} + \ri \identity \otimes \majorana_{2i-1}\majorana_{2i} - \ri \identity \otimes \majorana_{2i}\majorana_{2i-1} - \majorana_{2i}^{\otimes 2}\right) \\
        &= \ri \identity \otimes  \majorana_{2i-1}\majorana_{2i} =  \identity \otimes (2 a_i^\dagger a_i - \identity).
    \end{align}
    Thus
    \begin{align}
        V\Lambda V^\dagger &= \sum_{i=1}^n  \identity \otimes (2 a_i^\dagger a_i - \identity) - (2 a_i^\dagger a_i - \identity) \otimes \identity \\
        &= 2 \sum_{i=1}^n \identity \otimes a_i^\dagger a_i -  a_i^\dagger a_i\otimes \identity \\
        &= 2(\identity \otimes \hat{N} - \hat{N} \otimes \identity). \qedhere
    \end{align}
\end{proof}

Therefore, the $0$-eigenspace of $\Lambda$ corresponds to the equal-occupation sector of the Fock space after applying $V$.

\begin{algorithm}[H]
    \caption{Base tester for fermionic Gaussian states.}\label{alg:fermionic-gaussian-tester}
    \KwInput{Two copies of a fermionic pure state $\ket{\psi}$.}

    \BlankLine

    Apply the Gaussian unitary $V$ defined in \cref{eq:fermionic-lambda-diagonalizer} to $\ket{\psi}^{\otimes 2}$.
    
    Measure the occupation numbers $N_1, N_2$ of the first and second registers, respectively.

    \uIf{$N_1 = N_2$}{
        \KwRet{\accept}
    }
    \Else{
        \KwRet{\reject}
    }
\end{algorithm}

\begin{proof}[Proof of \cref{thm:testing-fermionic-gaussians}]
    The algorithm is \cref{alg:constant-copy-tester} with \cref{alg:fermionic-gaussian-tester} serving as the base algorithm $\calA$. Perfect completeness of the test is due to Bravyi \cite{bravyi2004lagrangian}. By \cref{thm:rejection-infidelity-bounds} with the spectral gap $\Delta = 5/12$ (\cref{lem:fermionic-spectral-gap}), $k=2$, and $m=3$, we have that the rejection probability for any state $\ket{\psi}$ obeys
    \begin{equation}
        \frac{1}{8} \, \dist(\psi, \fermionicgaussianset)^2 \leq q(\psi) \leq 2 \, \dist(\psi, \fermionicgaussianset)^2
    \end{equation}
    Hence \cref{thm:framework-tolerant-testing} yields the claim with $\beta = 4$.
\end{proof}

\begin{remark}
    One can also construct a property tester for fermionic Gaussian \emph{unitaries} $U$ by considering the fermionic Choi state of $U$:
    \[
    (U \otimes \identity) \prod_{i=1}^n \left(\frac{\identity + i \majorana_a \majorana_{a + 2n}}{2}\right) (U^\dagger \otimes \identity).
    \]
    It can easily be shown that applying our property tester to this Choi state can distinguish whether $U$ is fermionic Gaussian or $\eps$-far from all Gaussian unitaries in Frobenius distance with sample complexity $O(1/\eps^2)$, independent of $n$. This resolves an open question of \cite{iyer2025mildly}.
\end{remark}

\section{Testing Slater determinants}\label{sec:slater-dets-spectral-gap}
In this section, we prove the constant sample complexity upper bound for testing Slater determinants. The test is a specialization from pure fermionic Gaussian states to definite particle number. 

\begin{theorem}\label{thm:testing-slater}
    Let $\beta = \sqrt{12}$. For all $0 \leq \eps_1 < \eps_2 < 1$ with $\eps_2 > \beta \eps_1$ and $0 < \delta < \frac{1}{2}$, there is an $O\paren*{\frac{\log(1/\delta)}{(\eps_2 - \beta\eps_1)^2}}$-copy $(\eps_1, \eps_2, \delta)$-tester for Slater determinants. The tester acts on two copies at a time.
\end{theorem}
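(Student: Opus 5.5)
The plan is to run the framework of \cref{sec:framework} with $k=2$, $m=3$, and the two-copy acceptance projector $\slatertestproj$ onto $\ker K$, where $K = G^\dagger G$ and $G = \sum_j c_j \otimes c_j^\dagger$. Three facts are needed.

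\begin{enumerate}
    \item Perfect completeness together with the ``only if'' direction: $K\ket{\psi}^{\otimes 2}=0$ if and only if $\ket{\psi}$ is a Slater determinant \cite{semenyakin2025classifying}. This gives exactly the hypothesis of \cref{thm:rejection-infidelity-bounds}.
    \item $\slatertestproj$ is permutation invariant. This holds because $\SWAP\, G\, \SWAP$ equals $-G^\dagger$ up to the fermionic sign convention. Hence $\SWAP\, K\, \SWAP = G G^\dagger$, and we will check that $\ker G^\dagger G = \ker G G^\dagger$ on the relevant sectors. An alternative route: $G$ generates the $\genlinalg$-type action, and its kernel is the top irrep of $\unitgrp(n)$ acting on two copies.
    \item A spectral gap $\Delta = 4/9 > 1/3$ for $\Xi = \symproj^{(1,2,3)}\slatertestproj^{(1,2)}\symproj^{(1,2,3)}$. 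Given this, \cref{thm:rejection-infidelity-bounds} yields
    \[
        \alpha_2 = \min\{\tfrac12, \tfrac{3\Delta-1}{2}\} = \tfrac16, \qquad \alpha_1 = 2,
    \]
    so $\beta = \sqrt{\alpha_1/\alpha_2} = \sqrt{12}$. \cref{thm:framework-tolerant-testing} then gives the stated sample complexity.
\end{enumerate}

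For the spectral gap I would mirror the representation-theoretic proof of \cref{lem:fermionic-spectral-gap}, with $\unitgrp(n)$ (equivalently $\genlingrp(n,\C)$) in place of $\spingrp(2n)$. Both operators preserve particle number in each register, so $\Xi$ commutes with $\unitgrp(n)^{\otimes 3}$ acting diagonally. Each register decomposes as $\bigoplus_\eta \bigwedge^\eta \C^n$. On the sector $\bigwedge^{\eta_1}\otimes\bigwedge^{\eta_2}$, the kernel of $K$ is the Cartan component $V_{(2^{a},1^{b})}$ with $a=\min\eta_i$ and $b=|\eta_1-\eta_2|$; this is the span of $\ket{\phi}\otimes\ket{\phi'}$ for Slater determinants $\phi,\phi'$ sharing the larger common subspace. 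Passing to $\Theta = \slatertestproj\,\symproj\,\slatertestproj$ as in \cref{prop:Theta_identity}, the task reduces to the spectrum of
\[
    \tfrac13\left(\slatertestproj^{(1,2)} + 2\,\slatertestproj^{(1,2)}\SWAP^{(2,3)}\slatertestproj^{(1,2)}\right)
\]
on the top component tensored with a third $\bigwedge^{\eta_3}$. By Pieri's rule, $V_{(2^a,1^b)}\otimes\bigwedge^{\eta_3}$ is multiplicity free. Schur's lemma therefore makes $\Theta$ a scalar on each irrep $V_\lambda$, where $\lambda$ is obtained by adding a vertical strip to $(2^a,1^b)$. Each such scalar can be computed on a single highest-weight vector.

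Concretely, the highest-weight vectors can be written in a reduced number of modes, as in the $\tilde\Omega$ stabilization argument. The idea is to fill the fully-occupied modes (total occupation $3$) and discard empty ones, so that only the modes with total occupation $1$ or $2$ matter. This is analogous to the reduction to $w_k$ in \cref{sec:spectral-gap-concrete-proof}, and matches the paper's hint that the eigenvalues come from $\genlingrp(3)$ irreps. Dually, one can use $\unitgrp(n)\times\unitgrp(3)$ Howe duality, since the three registers carry $\bigwedge(\C^n\otimes\C^3)$. Then $\slatertestproj^{(1,2)}$ and the permutations of registers live in the commutant $\unitgrp(3)$ picture: $\SWAP$ acts as a Weyl-group element of $\genlingrp(3)$, and the projector is the top-weight projector for the $\genlingrp(2)\subset\genlingrp(3)$ subgroup acting on registers $1,2$. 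The eigenvalues then become explicit matrix elements of a Weyl reflection inside $\genlingrp(3)$ irreps with columns of height at most $3$, restricted to $\genlingrp(2)$ top-weight pieces via \cref{prop:GL-interlacing}. I expect these to come out as $\{1\}\cup\{\text{values}\le 5/9\}$.

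The main obstacle is this last computation. One must identify, in each $\genlingrp(3)$ irrep $V_\lambda$, the vectors that are highest for the registers-$(1,2)$ $\spllinalg(2)$ (using \cref{prop:sl2-module}: the weight-zero vectors killed by $E_{12}$). One must then compute the overlap of such a vector with its image under the transposition $(2\,3)$, and maximize the result over $\lambda$ and the remaining occupation data, uniformly in $n$. I would first carry out small cases by hand to locate the extremal sector, which should give $5/9$. I would then argue, as in the corollary following \cref{lem:spectral-sequence-solution}, that the other eigenvalues are smaller. The fallback is the explicit-basis approach of \cref{sec:spectral-gap-concrete-proof}, specialized to particle-conserving $\chi$-type states.
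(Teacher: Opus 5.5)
Your step 2 is false on the full Fock space, and the rest of your argument depends on it. Since $\SWAP\,G\,\SWAP = G^\dagger$, you get $\SWAP(\ker K)=\ker G^\dagger$. However, $\ker G\neq\ker G^\dagger$ off the diagonal particle-number sectors. Write $N_1, N_2$ for the number operators of the two registers. From $[G,G^\dagger]=N_2-N_1$, every $v\in\bigwedge^a\C^n\otimes\bigwedge^b\C^n$ satisfies $\|G^\dagger v\|^2-\|Gv\|^2=(b-a)\|v\|^2$. Hence $\ker G$ vanishes on every sector with $a>b$. By contrast, all of $\bigwedge^0\C^n\otimes\bigwedge^1\C^n$ lies in $\ker G$, while $G(\ket{j}\otimes\ket{0})=\ket{0}\otimes\ket{j}\neq 0$. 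So your description of $\ker K$ as the Cartan component $V_{(2^a,1^b)}$ of $\bigwedge^{\eta_1}\otimes\bigwedge^{\eta_2}$ is right only for $\eta_1\le\eta_2$, and $\slatertestproj$ is not permutation invariant. Consequently you cannot invoke \cref{thm:rejection-infidelity-bounds} on $\bigwedge\C^n$, since its gradient lemma needs $R$ to be a projector with $R\ket{\psi}^{\otimes 2}$ symmetric. Nor can you use the identity $\Theta=(\slatertestproj+2P)/3$, which needs $\slatertestproj\SWAP^{(1,2)}=\slatertestproj$. Symmetrizing does not repair this, because the asymmetry is exactly what enforces definite particle number. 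Under the symmetric kernel you describe, $(\ket{0}+\ket{j})/\sqrt{2}$ would be accepted with certainty, since each of its four sector components lies in the corresponding Cartan component. That contradicts your own step 1; the true $\ker G$ rejects this state with probability $1/4$, coming from its $(1,0)$ component.

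What is missing is a two-stage argument.

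First, run the framework only inside a fixed sector $\calH=\bigwedge^\eta\C^n$. There $a=b$, so $\ker G=\ker G^\dagger$ and the projector is symmetric. Only the $\genlingrp(3)$ irreps $(\eta+r,\eta,\eta-r)$ meet the kernel, each in one dimension; this is consistent with your Pieri multiplicity-freeness. One must then actually compute that $P=\slatertestproj^{(1,2)}\SWAP^{(2,3)}\slatertestproj^{(1,2)}$ has eigenvalue $(-1)^r/(r+1)$ on the $r$th irrep. You leave this computation, and hence the value $5/9$, as an expectation. The result is $q(\psi_\eta)\ge\frac16\dist(\psi_\eta,\slaterdeterminantset)^2$ within each sector.

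Second, and absent from your plan, reduce a general input $\ket{\psi}=\sum_\eta\sqrt{p_\eta}\ket{\psi_\eta}$ to the fixed-sector case. Because $\slatertestproj$ is block diagonal in the $(a,b)$ sectors and rejects the $a>b$ sectors with certainty, you get $q(\psi)\ge\frac12\bigl(1-\sum_\eta p_\eta^2\bigr)+\sum_\eta p_\eta^2\,q(\psi_\eta)$. Let $p=\max_\eta p_\eta$ and let $\delta$ be the distance of the dominant component from $\slaterdeterminantset$. This gives $q(\psi)\ge\frac{1-p}{2}+\frac{p^2\delta^2}{6}\ge\frac{1-p+p\delta^2}{6}\ge\frac16\dist(\psi,\slaterdeterminantset)^2$. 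Only with this step are $\alpha_2=1/6$, and hence $\beta=\sqrt{12}$, justified for arbitrary pure inputs.
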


\subsection{Perfect completeness}

The following two-copy test has perfect completeness for Slater determinants.

\begin{definition}[Slater tester]
    Define the two-copy operator
    \[
        K \coloneqq G^\dagger G, \quad \text{where } G \coloneqq \sum_{j=1}^n c_j \otimes c_j^\dagger.
    \]
    Given two copies of the state $\rho$, the tester measures the observable $K$ and accepts if and only if $0$ is obtained.
\end{definition}

To analyze the test, first we record some basic definitions and facts.

\begin{definition}
    The $1$-particle reduced density matrix ($1$-RDM) of a state $\ket{\psi}$ is the Hermitian matrix $D \in \C^{n \times n}$ defined via
    \[
        D_{jk} \coloneqq \braket{\psi | c_j^\dagger c_k | \psi}.
    \]
\end{definition}

\begin{fact}\label{fact:rdm_properties}
    The $1$-RDM always obeys $0 \preceq D \preceq \identity$, and $\tr \, D = \eta$ if $\ket{\psi}$ is an $\eta$-particle eigenstate. Furthermore, $\ket{\psi}$ is an $\eta$-particle Slater determinant if and only if $D$ is a rank-$\eta$ orthogonal projector.
\end{fact}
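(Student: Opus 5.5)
We would prove the three claims of \cref{fact:rdm_properties} in order, working in a basis adapted to $D$. First, $D$ is Hermitian since $\overline{D_{jk}} = \braket{\psi|c_k^\dagger c_j|\psi} = D_{kj}$. For any $v \in \C^n$ set $c(v) \coloneqq \sum_k v_k c_k$, so that $v^\dagger D v = \norm{c(v)\ket{\psi}}^2 \geq 0$. Using the anticommutation relation $\{c(v), c(v)^\dagger\} = \norm{v}^2$, we also get
\[
v^\dagger(\identity - D)v = \norm{v}^2 - \norm{c(v)\ket{\psi}}^2 = \norm{c(v)^\dagger\ket{\psi}}^2 \geq 0 .
\]
Hence $0 \preceq D \preceq \identity$. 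For the trace, $\tr D = \braket{\psi|N|\psi} = \eta$ whenever $N\ket{\psi} = \eta\ket{\psi}$.

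For the characterization, we use that a single-particle unitary $u \in \unitgrp(n)$ acts on Fock space by a Gaussian unitary $U_u$ with $U_u c_j U_u^\dagger = \sum_k \overline{u_{jk}} c_k$ (up to the convention on conjugation). Under this action $D \mapsto u D u^\dagger$, and particle number is preserved. We therefore diagonalize $D = \sum_i \lambda_i f_i f_i^\dagger$ and pass to the rotated modes $b_i = c(\bar f_i)$, in which $D$ is diagonal with occupations $\lambda_i = \braket{\psi|b_i^\dagger b_i|\psi}$.

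For the forward direction, a Slater determinant is by definition $U_u\ket{S}$ for some $\abs{S} = \eta$. This is because Gaussian unitaries preserving $N$ are exactly the $U_u$, and a number-eigenstate Gaussian state is a Fock state in some orbital basis. The $1$-RDM of $\ket{S}$ is $\sum_{i \in S} e_i e_i^\dagger$, a rank-$\eta$ projector, and conjugating by $u$ preserves this property.

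For the converse, suppose $D$ is a rank-$\eta$ projector, so that in the rotated modes $\lambda_i \in \{0,1\}$ with exactly $\eta$ ones. For each $i$ with $\lambda_i = 0$ we have $\norm{b_i\ket{\psi}}^2 = 0$, so $b_i\ket{\psi} = 0$. For each $i$ with $\lambda_i = 1$ we have $\norm{b_i^\dagger\ket{\psi}}^2 = 1 - \lambda_i = 0$, so $b_i^\dagger\ket{\psi} = 0$. Thus $\ket{\psi}$ is annihilated by $b_i$ for $i \notin S$ and by $b_i^\dagger$ for $i \in S$, where $S = \{i : \lambda_i = 1\}$. The joint kernel of these $n$ operators is one-dimensional and spanned by $\prod_{i \in S} b_i^\dagger\ket{\mathrm{vac}}$: in the rotated Fock basis, the conditions force $n_i = 0$ for $i \notin S$ and $n_i = 1$ for $i \in S$. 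Hence $\ket{\psi}$ is a Slater determinant, with particle number $\abs{S} = \eta$, consistent with the trace.

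The main obstacle is only definitional bookkeeping. We must identify the paper's $\slaterdeterminantset$, defined as Gaussian states that are number eigenstates, with rotated Fock states $U_u\ket{S}$, so that the forward direction is justified. We would argue this as follows. A pure Gaussian state is the unique joint null vector of $n$ anticommuting annihilators, which are linear combinations of the $c_j$ and $c_j^\dagger$. If the state also has definite particle number, the space spanned by these annihilators is invariant under the $U(1)$ number rotation. It therefore splits into a part spanned by $c$'s and a part spanned by $c^\dagger$'s, which gives exactly the orbital-basis Fock form. Everything else is routine.
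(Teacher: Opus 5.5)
The paper gives no proof of this statement: it records it as a standard Fact and uses it in \cref{prop:slater_completeness}. So there is nothing to compare against, and your self-contained argument is correct in structure. Positivity of $D$ and of $\identity - D$ follows from $\norm{c(v)\ket{\psi}}^2 + \norm{c(v)^\dagger\ket{\psi}}^2 = \norm{v}^2$, the trace is $\braket{\psi|N|\psi}$, and the converse shows $\ket{\psi}$ is a joint eigenvector of every occupation number in the eigenbasis of $D$.

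There is one conjugation slip. With $D_{jk} = \braket{\psi|c_j^\dagger c_k|\psi}$ and $c(v) = \sum_k v_k c_k$, your own identity $v^\dagger D v = \norm{c(v)\ket{\psi}}^2$ shows that the diagonalizing annihilators are $b_i = c(f_i)$. Indeed $\braket{\psi|b_i^\dagger b_j|\psi} = f_i^\dagger D f_j = \lambda_j \delta_{ij}$. Your choice $b_i = c(\bar f_i)$ instead gives $\norm{b_i\ket{\psi}}^2 = f_i^{\transpose} D \bar f_i$, which is generally not $\lambda_i$. For example, take $\ket{\psi} = (\ket{10} - \ri\ket{01})/\sqrt{2}$. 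Then $Df = f$ for $f = (1,\ri)/\sqrt{2}$, yet $c(\bar f)\ket{\psi} = 0$. With $b_i = c(f_i)$ the converse goes through verbatim. Your hedge ``up to the convention on conjugation'' attaches to the wrong step.

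Two further remarks. First, your converse never uses that $\ket{\psi}$ is a number eigenstate; the particle number $\abs{S} = \rank D$ comes out of the argument. This is exactly the generality \cref{prop:slater_completeness} needs, since there $\ket{\psi}$ is an arbitrary Fock-space state and only $\tr(D(\identity - D)) = 0$ is known.

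Second, in the forward direction, the observation that number-preserving Gaussian unitaries are the $U_u$ does not suffice by itself. A Gaussian state $U\ket{0^n}$ can be a number eigenstate even when $U$ is not number-preserving, for instance a particle--hole map. Your closing $\unitgrp(1)$-invariance argument settles this. The annihilator space is $n$-dimensional directly from the paper's orbit definition: it is spanned by $Uc_jU^\dagger$, with $c_1$ replaced by $c_1^\dagger$ for the odd orbit. The one step you gloss over is that the $c$-part $V_0$ and the $c^\dagger$-part $V_1$ are orthogonal, which is what gives an orthonormal orbital basis. This holds because $\{c(v), c(w)^\dagger\} = w^\dagger v$ must annihilate the nonzero state.
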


Perfect completeness is implied by the following statement.

\begin{proposition}\label{prop:slater_completeness}
    A state $\ket{\psi}$ obeys $K\ket{\psi}^{\otimes 2} = 0$ if and only if $\ket{\psi}$ is a Slater determinant.
\end{proposition}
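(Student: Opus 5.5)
Since $K = G^\dagger G$, we have $K\ket{\psi}^{\otimes 2} = 0$ if and only if $\|G\ket{\psi}^{\otimes 2}\|^2 = \bra{\psi}^{\otimes 2} G^\dagger G \ket{\psi}^{\otimes 2} = 0$. The plan is therefore to compute this norm in terms of the $1$-RDM $D$ and then invoke \cref{fact:rdm_properties}. Expanding,
\[
\|G\ket{\psi}^{\otimes 2}\|^2 = \sum_{j,k} \braket{\psi| c_j^\dagger c_k|\psi}\braket{\psi| c_j c_k^\dagger|\psi},
\]
up to the fermionic sign conventions for operators acting on the tensor product of two Fock registers. I will fix these signs by treating $\bigwedge\C^n \otimes \bigwedge\C^n$ as an ordinary tensor product, with $c_j\otimes c_j^\dagger$ meaning the literal Kronecker product. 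If a graded convention is intended instead, any extra sign is a parity factor $(-1)^{N}$ on the first register, and it is squared away in $G^\dagger G$ when $\ket{\psi}$ has definite parity.

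Using the canonical anticommutation relations, $\braket{\psi|c_j c_k^\dagger|\psi} = \delta_{jk} - D_{jk}$. This gives
\[
\|G\ket{\psi}^{\otimes 2}\|^2 = \sum_{j,k} D_{kj}\,(\identity - D)_{jk} = \tr\big(D(\identity - D)\big) = \tr D - \tr D^2.
\]
Note that this expression uses only $\ket{\psi}$ itself, with no particle-number assumption. By \cref{fact:rdm_properties}, $0 \preceq D \preceq \identity$, so $D(\identity - D) \succeq 0$. Hence the trace vanishes if and only if every eigenvalue of $D$ lies in $\{0,1\}$, that is, if and only if $D$ is an orthogonal projector.

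The remaining step, which I expect to be the main obstacle, is converting ``$D$ is a projector'' into ``$\ket{\psi}$ is a Slater determinant.'' The difficulty is that \cref{fact:rdm_properties} is stated for particle-number eigenstates. I handle this as follows. Let $D$ be a rank-$\eta$ projector onto the span of orthonormal orbitals $\{\phi_1,\dots,\phi_\eta\}$, and rotate by a number-preserving Gaussian unitary so that these orbitals become modes $1,\dots,\eta$. Then $\braket{\psi|c_i^\dagger c_i|\psi} = 1$ for $i \le \eta$ and $0$ for $i > \eta$. Since $0 \preceq c_i^\dagger c_i \preceq \identity$, expectation $1$ forces $c_i^\dagger c_i\ket{\psi} = \ket{\psi}$, and expectation $0$ forces $c_i\ket{\psi} = 0$. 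Therefore $\ket{\psi}$ is proportional to $c_1^\dagger\cdots c_\eta^\dagger\ket{0^n}$, which is a Slater determinant; this direction needs no number-eigenstate assumption.

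Conversely, if $\ket{\psi}$ is a Slater determinant, then $D$ is a projector by \cref{fact:rdm_properties}, so $\tr D - \tr D^2 = 0$ and $K\ket{\psi}^{\otimes 2} = 0$. This completes both directions.
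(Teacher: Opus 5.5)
Your proof is correct and is essentially the paper's argument: both reduce $K\ket{\psi}^{\otimes 2}=0$ to $\tr\big(D(\identity-D)\big)=0$ and conclude from $0\preceq D\preceq\identity$. Your orbital-rotation step (forcing $\ket{\psi}\propto c_1^\dagger\cdots c_\eta^\dagger\ket{0^n}$ in the rotated basis) makes self-contained the projector-implies-Slater direction that the paper simply cites from \cref{fact:rdm_properties}, and there is one cosmetic slip: $\braket{\psi|c_j c_k^\dagger|\psi}=\delta_{jk}-D_{kj}$, not $\delta_{jk}-D_{jk}$, but your next line also writes the first factor as $D_{kj}$ (it is really $D_{jk}$), so the two index swaps cancel and the final trace is correct.
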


\begin{proof}
    We write $\ket{\Psi} \equiv \ket{\psi} \otimes \ket{\psi}$. Since the kernel of $K = G^\dagger G$ is also the kernel of $G$, it suffices to consider $G\ket{\Psi} = 0$. To do so, we examine the norm of $G\ket{\Psi}$:
    \begin{equation}
    \begin{split}
        \braket{\Psi | G^\dagger G | \Psi} &= \sum_{j,k=1}^n \braket{\psi | c_j^\dagger c_k | \psi} \braket{\psi | c_j c_k^\dagger | \psi}\\
        &= \sum_{j,k=1}^n D_{jk} (\delta_{jk} - D_{kj})\\
        &= \tr(D(\identity - D)).
    \end{split}
    \end{equation}
    By \cref{fact:rdm_properties}, $0 \preceq D \preceq \identity$ and so this vanishes if and only if $\ket{\psi}$ is a Slater determinant.
\end{proof}

\subsection{Spectral gap analysis}\label{sec:slater_spec_gap}

Throughout this subsection, we will work in the fixed particle sector $\calH = {\bigwedge}^\eta \C^n$ of $\eta$ fermions in $n$ modes. We will show that spectral gap is independent of $\eta$, thus applying to all sectors. Afterward in \cref{sec:slater_soundness} we show how to extend the soundness guarantee beyond number symmetry, thereby applying to the full fermionic Fock space $\bigwedge \C^n$.

Let $\slatertestproj$ be the projector onto the kernel of $K$ and $\Xi = \symproj^{(1,2,3)}\cdot\slatertestproj^{(1,2)}\cdot\symproj^{(1,2,3)}$ the operator whose spectral we need. As we did in \cref{sec:spectral-gap-concrete-proof}, we define the following alternative operator which shares the same nonzero spectra as $\Xi$:
\[
\Theta = \slatertestproj^{(1,2)}\cdot\symproj^{(1,2,3)}\cdot\slatertestproj^{(1,2)}.
\]
We shall prove that the eigenvalues of $\Theta$ (hence $\Xi$) are contained in $[0, 5/9] \cup \{1\}$, which gives the sufficient spectral gap bound $\Delta = 4/9 > 1/3$. As in \cref{prop:Theta_identity}, it holds that $\Theta = (\slatertestproj^{(1,2)} + 2P)/3$ on the symmetric subspace where $P = \slatertestproj^{(1,2)} \cdot \SWAP^{(2,3)} \cdot \slatertestproj^{(1,2)}$. Thus it suffices to analyze the spectrum of $P$, which we focus on for the remainder of this section.

\paragraph{Three-copy antisymmetric representation.}
The three-copy space is $\calH^{\otimes 3} = \left(\bigwedge \C^n\right)^{\otimes 3}$. However, it will be convenient to work with the space of $3n$ fermionic modes,
\[
W \coloneqq \bigwedge (\C^3 \otimes \C^n),
\]
and ``undo'' the extra antisymmetrization at the end. For each $\alpha \in [3]$ and $j \in [n]$, $c_{\alpha,j}^\dagger$ is the creation operator on the $j$th mode in the $\alpha$ block. For $S \subseteq [n]$, define
\[
A_S^{(\alpha)} \coloneqq \prod_{j \in S} c_{\alpha,j}^\dagger
\]
and consequently the isometry
\[
\begin{split}
    \iota : \calH^{\otimes 3} &\to W,\\
    \ket{S_1} \otimes \ket{S_2} \otimes \ket{S_3} &\mapsto A_{S_1}^{(1)} A_{S_2}^{(2)} A_{S_3}^{(3)} \ket{0^{3n}}.
\end{split}
\]
We also need a representation of $\genlingrp(3, \C)$ acting on the copy index inside the exterior product. Denote this by
\[
\begin{split}
    \Gamma : \genlingrp(3, \C) &\to \genlingrp(W),\\
    g &\mapsto \bigwedge (g \otimes \identity).
\end{split}
\]
Finally, let
\begin{equation}\label{eq:E_ab_W}
    E_{\alpha\beta} \coloneqq \sum_{j=1}^n c_{\alpha,j}^\dagger c_{\beta,j}
\end{equation}
and $s_{23} \in \genlingrp(3, \C)$ be the permutation matrix swapping coordinates $2$ and $3$ in the standard basis. Throughout, we will restrict to a fixed $\eta$th exterior power (physically, the $\eta$-particle sector) on each register:~$|S_1| = |S_2| = |S_3| = \eta$.

Recall that we are interested in analyzing the spectrum of $P = \slatertestproj^{(1,2)} \cdot \SWAP^{(2,3)} \cdot \slatertestproj^{(1,2)}$, where $\slatertestproj$ is the projector onto the kernel of $G = \sum_{j=1}^n c_j \otimes c_j^\dagger$. To map these operators between the spaces $\calH^{\otimes 3}$ and $W$, we have the following relations:
\begin{equation}\label{eq:iota_transform}
    \iota^{-1} \Gamma(s_{23}) \iota = (-1)^\eta \, \SWAP^{(2,3)}, \quad \iota^{-1} E_{21} \iota = (-1)^{\eta-1} (G \otimes \identity).
\end{equation}
Note that the signs are due to the canonical anticommutation relations between blocks of $W$ that are absent from the tensor-product structure of $\calH^{\otimes 3}$. For example, $\SWAP^{(2,3)} : \ket{S_1} \otimes \ket{S_2} \otimes \ket{S_3} \mapsto \ket{S_1} \otimes \ket{S_3} \otimes \ket{S_2}$ whereas $\Gamma(s_{23}) : A_{S_1}^{(1)} A_{S_2}^{(2)} A_{S_3}^{(3)} \ket{0^{3n}} \mapsto A_{S_1}^{(1)} A_{S_2}^{(3)} A_{S_3}^{(2)} \ket{0^{3n}}$. Bringing this final state's blocks into the appropriate order requires pulling $\eta$ creation operators through $\eta$ others, which incurs an overall sign of $(-1)^{\eta^2} = (-1)^{\eta}$.

We will analyze the representation $\Gamma : \genlingrp(3, \C) \to \genlingrp(W)$ in terms of the highest-weight decomposition (\cref{prop:highest-weight-decomp}):
\begin{equation}\label{eq:W_decomp}
W \cong \bigoplus_\lambda V_\lambda \otimes M_\lambda
\end{equation}
where $V_\lambda$ is the irrep of highest weight $\lambda \equiv (\lambda_1, \lambda_2, \lambda_3)$ and $M_\lambda$ is some multiplicity space. Each $\lambda_\alpha \in \{0, 1, \ldots, n\}$ counts the occupation number of the $\alpha$ block. Indeed, for diagonal $t \in \genlingrp(3, \C)$ on the standard basis $e_{\alpha,j} = e_\alpha \otimes e_j \in \C^3 \otimes \C^n$, we have
\[
(t \otimes \identity)e_{\alpha,j} = t_\alpha e_{\alpha,j} \implies \Gamma(t)(e_{\alpha_1,j_1} \wedge \cdots \wedge e_{\alpha_\eta,j_\eta}) = \left(\prod_{k=1}^\eta t_{\alpha_k}\right) e_{\alpha_1,j_1} \wedge \cdots \wedge e_{\alpha_\eta,j_\eta}.
\]
Putting $\lambda_\alpha = |\{k : \alpha_k = \alpha\}|$ for the above expression, the eigenvalue simplifies to $t_1^{\lambda_1} t_2^{\lambda_2} t_3^{\lambda_3}$ which is precisely the definition of a weight. Thus the weight space we are specifically interested in is $W[(\eta, \eta, \eta)]$, and so the tester's accept space $\ker G$ embeds into $W$ as
\[
\iota(\ker G) = \ker E_{21} \cap  W[(\eta, \eta, \eta)].
\]

In order to characterize the kernel of $G$, it suffices to analyze its support within each irrep $V_\lambda$. The following lemma identifies which highest weights intersect nontrivially.

\begin{lemma}\label{lem:nonzero_kernel_irreps}
    $\iota(\ker G)$ is nonzero in the $V_\lambda$ factor of \cref{eq:W_decomp} if and only if
    \[
    \lambda = (\eta + r, \eta, \eta - r) \quad \text{for } 0 \leq r \leq \min\{\eta, n-\eta\}.
    \]
    For such $\lambda$, this intersection with $V_\lambda$ is one-dimensional.
\end{lemma}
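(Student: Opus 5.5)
The plan is to fix an irrep $V_\lambda$ in \cref{eq:W_decomp} and work inside its weight space $V_\lambda[(\eta,\eta,\eta)]$. Since $\iota(\ker G) = \ker E_{21}\cap W[(\eta,\eta,\eta)]$ and $E_{21}$ acts as $E_{21}\otimes \identity$ on $V_\lambda\otimes M_\lambda$, the intersection with the $\lambda$-isotypic component is $\bigl(\ker E_{21}|_{V_\lambda[(\eta,\eta,\eta)]}\bigr)\otimes M_\lambda$. It therefore suffices to find, for each $\lambda$, the dimension of the kernel of $E_{21}$ on this weight space, and to check that $M_\lambda$ contributes exactly once. This last point needs one caveat. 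If some $M_\lambda$ had dimension greater than one, "one-dimensional" would have to be read per copy of $V_\lambda$. I expect instead that $W=\bigwedge(\C^3\otimes\C^n)$ is multiplicity-free under $\genlingrp(3)$ by skew $(\genlingrp(3),\genlingrp(n))$-duality: $W\cong\bigoplus_\lambda V_\lambda^{(3)}\otimes V_{\lambda'}^{(n)}$, where $\lambda$ ranges over partitions inside the $3\times n$ box. I will take this statement as the meaning of the claim about the $V_\lambda$ factor. The admissible $\lambda$ are then those with $n\ge\lambda_1\ge\lambda_2\ge\lambda_3\ge 0$.

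The main computation restricts $V_\lambda$ to the $\spllinalg(2)$ spanned by $E_{12},E_{21},H=E_{11}-E_{22}$. A weight-$(\eta,\eta,\eta)$ vector has $H$-eigenvalue $0$. By \cref{prop:sl2-module}, in each irreducible $\spllinalg(2)$-summand containing a weight-zero vector, $E_{21}$ kills that vector exactly when the summand is trivial. Hence $\dim\ker E_{21}$ on $V_\lambda[(\eta,\eta,\eta)]$ equals the multiplicity of the trivial $\spllinalg(2)$-rep with $E_{33}$-eigenvalue $\eta$. Equivalently, it counts $\genlingrp(2)\times\genlingrp(1)$ constituents of $V_\lambda$ of the form $V_{(\eta,\eta)}\otimes V_{(\eta)}$. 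The branching rule \cref{prop:GL-interlacing} gives $V_\lambda\downarrow\genlingrp(2)\times\genlingrp(1)=\bigoplus_\mu V_\mu\otimes \det^{|\lambda|-|\mu|}$ over $\mu$ interlacing $\lambda$, each with multiplicity one. Weight considerations force $|\lambda|=3\eta$. Triviality on $\spllinalg(2)$ means $\mu_1=\mu_2$ by \cref{prop:GL-highest-weight}, so $\mu=(\eta,\eta)$. Interlacing then gives $\lambda_1\ge\eta\ge\lambda_2\ge\eta\ge\lambda_3$, so $\lambda_2=\eta$ and $\lambda_1+\lambda_3=2\eta$. Writing $\lambda=(\eta+r,\eta,\eta-r)$, we need $r\ge0$, and the box constraints $\lambda_1\le n$, $\lambda_3\ge0$ give $r\le\min\{\eta,n-\eta\}$. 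Each such $\lambda$ contains exactly one copy of this $\mu$, so the kernel is one-dimensional. Any other $\lambda$ contains no such copy, so the kernel is zero.

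The step I expect to be the main obstacle is justifying the multiplicity-one input, namely that $M_\lambda$ is one-dimensional and that every $\lambda=(\eta+r,\eta,\eta-r)$ in the box actually occurs in $W$. I would first try to invoke skew Howe duality directly. Failing that, I would build an explicit highest-weight vector, such as
\[
A^{(1)}_{[\eta+r]}A^{(2)}_{[\eta]}A^{(3)}_{[\eta-r]}\ket{0^{3n}}.
\]
This vector is annihilated by $E_{12},E_{13},E_{23}$, since each raising operator tries to move an occupied mode into a block where that mode is already occupied, which proves existence. I would then get uniqueness from the $\genlingrp(n)$ commutant, checking that the nested supports force this vector up to $\genlingrp(n)$ action. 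If multiplicity turns out to exceed one, the lemma should be read per isotypic copy, and the same argument applies verbatim to each copy.
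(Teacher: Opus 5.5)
Your main argument is the paper's proof. You restrict $V_\lambda$ to the $\spllinalg(2,\C)$ spanned by $E_{12},E_{21},E_{11}-E_{22}$ and use \cref{prop:sl2-module} to force the relevant constituents to be trivial, i.e.\ $\mu=(\eta,\eta)$. You then read off $\lambda_2=\eta$, $\lambda_1+\lambda_3=2\eta$ and multiplicity one from \cref{prop:GL-interlacing}. Your explicit highest-weight vector $A^{(1)}_{[\eta+r]}A^{(2)}_{[\eta]}A^{(3)}_{[\eta-r]}\ket{0^{3n}}$ is a welcome addition: it proves that each $V_{(\eta+r,\eta,\eta-r)}$ actually occurs in $W$, which the paper's converse tacitly assumes.

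One correction is needed. Skew $(\genlingrp(3),\genlingrp(n))$ duality makes $W$ multiplicity-free only for the joint $\genlingrp(3)\times\genlingrp(n)$ action. As a $\genlingrp(3)$-module it gives $M_\lambda\cong V^{(n)}_{\lambda'}$, which is usually more than one-dimensional; for example, $\lambda=(1,1,1)$ has $M_\lambda\cong\Sym^3(\C^n)$. So the step you single out as the main obstacle is false, and your commutant-uniqueness plan would fail, since it would only show $M_\lambda$ is an irreducible $\genlingrp(n)$-module. It is also unnecessary: the lemma is meant per copy of $V_\lambda$, which is exactly your fallback reading and matches the paper's later phrase ``modulo the multiplicity space.''
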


\begin{proof}
    First we establish that $V_\lambda$ decomposes into irreducible $\spllinalg(2, \C)$-modules;~this will later allow us to apply \cref{prop:sl2-module} to each. To see this, observe that $E_{12}$, $E_{21}$, and $E_{11} - E_{22}$ generate a copy of $\spllinalg(2, \C)$ acting on $W$ (we show this in \cref{claim:sl2_on_W} below). Thus $V_\lambda$ is an $\spllinalg(2, \C)$-module, although generically reducible. Decomposing its $\genlingrp(2, \C)$-restriction into irreps,
    \begin{equation}\label{eq:V_irrep_GL2_decomp}
        V_\lambda \big\rvert_{\genlingrp(2, \C)} = \bigoplus_\mu U_\mu,
    \end{equation}
    it holds that each $U_\mu$ remains irreducible when restricted to $\spllingrp(2, \C)$. Indeed by \cref{prop:GL-highest-weight}, if $\mu = (\mu_1, \mu_2)$ then $U_\mu \big\rvert_{\spllingrp(2, \C)}$ has highest weight $k = \mu_1 - \mu_2$.
    
    Recall that we are only interested in the intersection with $\iota(\ker G)$, which itself is contained in $W[(\eta, \eta, \eta)]$. Let $v \in V_\lambda \cap \iota(\ker G)$ and decompose $v = \sum_\mu v_\mu$ according to \cref{eq:V_irrep_GL2_decomp}. Since $v \in W[(\eta, \eta, \eta)]$, each nonzero vector $v_\mu$ has $\genlingrp(2, \C)$-weight $(\eta, \eta)$ and hence $\spllinalg(2, \C)$-weight $0$. On the other hand, since $v \in \iota(\ker G)$ we have $E_{21}v = 0$. Hence every $E_{21}v_\mu = 0$. We can now use \cref{prop:sl2-module}:~$U_\mu$ restricts to an irreducible $\spllingrp(2, \C)$-module of highest weight $k = \mu_1 - \mu_2$, which means it contains weights $k, k-2, \ldots, -k$. The lowering operator $E_{21}$ moves a vector down these weights and kills the bottom vector, but since $E_{21}v_\mu = 0$ it must be the case that $-k = 0$ is already the bottom of the string. Thus $\mu = (s, s)$ for some $s$, implying that $U_{(s, s)} \cong {\det}^s$ is one-dimensional (this can be seen from \cref{prop:det-twist} with $U_{(0, 0)} \cong \C$ being trivial). As such, it can only support vectors of weight $(s, s)$;~but we already know that $v_\mu \in U_\mu$ has weight $(\eta, \eta)$, which forces $s = \eta$.

    Now we apply the branching rule (\cref{prop:GL-interlacing}) to compute the $\genlingrp(2, \C)$-weights that can occur within the $\genlingrp(3, \C)$-irrep $V_\lambda$. The weight $\mu = (\mu_1, \mu_2)$ occurs only if the interlacing condition
    \[
    \lambda_1 \geq \mu_1 \geq \lambda_2 \geq \mu_2 \geq \lambda_3
    \]
    is met, and if so, it has multiplicity $1$. As established above, $\mu_1 = \mu_2 = \eta$ so $\lambda_2 = \eta$. Furthermore, since $V_\lambda$ contains the weight $(\eta, \eta, \eta)$, we know that $|\lambda| = 3\eta$ and so $\lambda_1 + \lambda_3 = 2\eta$. Writing $\lambda_1 = \eta + r$ gives $\lambda_3 = \eta - r$. Constraining the weight coordinates within $\{0, 1, \ldots, n\}$ gives the restriction $r \leq \min\{\eta, n-\eta\}$. Finally, since $\mu$ is multiplicity-free under the interlacing theorem, the one-dimensional $U_{(\eta, \eta)}$ can only occur once. Thus the intersection with $V_\lambda$ is also one-dimensional.

    The converse is shown by recognizing that for every such $\lambda = (\eta + r, \eta, \eta - r)$, the required $\genlingrp(2, \C)$-weight $\mu = (\eta, \eta)$ interlaces $\lambda$ and occurs with multiplicity $1$. Uniqueness and one-dimensionality then follow from the same arguments as above.
\end{proof}

\begin{claim}\label{claim:sl2_on_W}
    The operators $E_{12}$, $E_{21}$, and $H \coloneqq E_{11} - E_{22}$ are generators for $\spllinalg(2, \C)$.
\end{claim}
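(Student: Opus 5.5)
The plan is to verify the defining commutation relations of $\spllinalg(2,\C)$ directly from the canonical anticommutation relations of the $3n$ modes of $W$. Concretely, I will show
\[
[H, E_{12}] = 2E_{12}, \qquad [H, E_{21}] = -2E_{21}, \qquad [E_{12}, E_{21}] = H .
\]
These are exactly the relations satisfied by the standard basis $e = E_{12}$, $f = E_{21}$, $h = E_{11}-E_{22}$ of $\spllinalg(2,\C)$. So the linear map sending $e,f,h$ to these operators is a Lie algebra homomorphism into $\mathfrak{gl}(W)$. Equivalently, $E_{12}, E_{21}, H$ span a copy of $\spllinalg(2,\C)$ (or a quotient of it; since $\spllinalg(2,\C)$ is simple, it is either faithful or zero).

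The one computation needed is the general bilinear identity
\[
[c_{\alpha,j}^\dagger c_{\beta,j},\, c_{\gamma,k}^\dagger c_{\delta,k}] = \delta_{jk}\bigl(\delta_{\beta\gamma}\, c_{\alpha,j}^\dagger c_{\delta,k} - \delta_{\alpha\delta}\, c_{\gamma,k}^\dagger c_{\beta,j}\bigr).
\]
I will derive it from $[AB,C] = A\{B,C\} - \{A,C\}B$ applied twice, together with $\{c_{\beta,j}, c_{\gamma,k}^\dagger\} = \delta_{\beta\gamma}\delta_{jk}$ and the vanishing anticommutators among two creation or two annihilation operators. Summing over $j,k\in[n]$ then gives
\[
[E_{\alpha\beta}, E_{\gamma\delta}] = \delta_{\beta\gamma}E_{\alpha\delta} - \delta_{\alpha\delta}E_{\gamma\beta}.
\]
These are the $\genlinalg(3,\C)$ relations, which is consistent with the $E_{\alpha\beta}$ being the differential of $\Gamma$.

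Substituting into this identity finishes the argument:
\[
[E_{12},E_{21}] = E_{11}-E_{22} = H,
\]
\[
[E_{11}-E_{22}, E_{12}] = E_{12} - (-E_{12}) = 2E_{12},
\]
\[
[E_{11}-E_{22}, E_{21}] = -E_{21} - E_{21} = -2E_{21}.
\]

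I will also remark that the operators are nonzero (for $1\le\eta\le n-1$, say), so the copy is genuine. Independently, $E_{\alpha\beta} = \frac{d}{dt}\Gamma(\exp(tE^{(3)}_{\alpha\beta}))|_{t=0}$, so the claim is also just the restriction of the Lie algebra representation $d\Gamma$ to the $\spllinalg(2)$ sitting in the upper-left block of $\genlinalg(3)$.

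The only potential obstacle is sign bookkeeping across blocks, but the mode-pair operators $c^\dagger c$ are even, so no extra signs arise. There is no real difficulty.
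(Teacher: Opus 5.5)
Your proposal is correct and follows the paper's proof. The paper asserts that $[H,E_{12}]=2E_{12}$, $[H,E_{21}]=-2E_{21}$ and $[E_{12},E_{21}]=H$ follow from the canonical anticommutation relations of the $c_{\alpha,j}$, and your derivation of $[E_{\alpha\beta},E_{\gamma\delta}]=\delta_{\beta\gamma}E_{\alpha\delta}-\delta_{\alpha\delta}E_{\gamma\beta}$ supplies exactly the calculation the paper leaves to the reader.
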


\begin{proof}
    The standard generators of $\spllinalg(2, \C)$ obey the commutation relations
    \[
    [H, E_{12}] = 2E_{12}, \quad [H, E_{21}] = -2E_{21}, \quad [E_{12}, E_{21}] = H.
    \]
    Straightforward calculations using the (anti)commutation relations of the $c_{\alpha,j}$ operators show that these hold.
\end{proof}

\paragraph{Expression for the eigenvalues.}
This characterization of $\ker G$ using one-dimensional subspaces of irreps of $\genlingrp(3, \C)$ reveals a compact expression for the eigenvalues of $P = \slatertestproj^{(1,2)} \cdot \SWAP^{(2,3)} \cdot \slatertestproj^{(1,2)}$.

\begin{lemma}
    Restrict to any fixed-particle sector $\left({\bigwedge}^\eta \C^n\right)^{\otimes 3}$. There, the eigenvalues of $P$ take the form
    \[
    p_r = \frac{\langle X_r, s_{23} X_r \rangle}{\langle X_r, X_r \rangle}, \quad 0 \leq r \leq \min\{\eta, n-\eta\},
    \]
    where $X_r$ is any nonzero vector in $V_{(r, 0, -r)}[(0, 0, 0)] \cap \ker E_{21}$. (By abuse of notation, we write $s_{23}$ for its representation on $V_{(r, 0, -r)}$.)
\end{lemma}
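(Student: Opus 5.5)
We transport $P$ to $W$ via $\iota$ and diagonalize it irrep by irrep, using \cref{lem:nonzero_kernel_irreps}. By \cref{eq:iota_transform}, $\iota(\ker G) = \ker E_{21} \cap W[(\eta,\eta,\eta)]$, so $\iota P \iota^{-1} = (-1)^\eta\, Q\, \Gamma(s_{23})\, Q$ on $\iota(\calH^{\otimes 3})$, where $Q$ is the orthogonal projector onto $\ker E_{21}\cap W[(\eta,\eta,\eta)]$. Orthogonality of $Q$ needs $\iota$ to be an isometry and the decomposition \cref{eq:W_decomp} to be orthogonal. This holds because $\Gamma$ restricted to $\unitgrp(3)$ is unitary on $W$, so the isotypic components are mutually orthogonal and $E_{\alpha\beta}^\dagger = E_{\beta\alpha}$.

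Since $\Gamma(s_{23})$ commutes with the $\genlingrp(3)$ action, it preserves each isotypic component $V_\lambda\otimes M_\lambda$ and acts there as $\rho_\lambda(s_{23})\otimes \identity_{M_\lambda}$. By \cref{lem:nonzero_kernel_irreps}, $Q$ also splits along these components: within $V_\lambda\otimes M_\lambda$ its range is $\linspan\{X_\lambda\}\otimes M_\lambda$, where $X_\lambda$ spans the one-dimensional space $V_\lambda[(\eta,\eta,\eta)]\cap\ker E_{21}$. Here I am using that this one-dimensional space is a subspace of $V_\lambda$, tensored with the full multiplicity space. Hence $Q\Gamma(s_{23})Q$ acts on this range as the scalar $\langle X_\lambda, s_{23}X_\lambda\rangle/\langle X_\lambda,X_\lambda\rangle$ times the identity, and it vanishes on the other components. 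The eigenvalues of $P$ are therefore these scalars (times $(-1)^\eta$), one for each $\lambda=(\eta+r,\eta,\eta-r)$, together with $0$.

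It remains to remove the determinant twist and the sign $(-1)^\eta$. By \cref{prop:det-twist}, $V_{(\eta+r,\eta,\eta-r)}\cong V_{(r,0,-r)}\otimes{\det}^\eta$. This isomorphism sends weight $(\eta,\eta,\eta)$ to weight $(0,0,0)$. It preserves $\ker E_{21}$, because the $\mathfrak{gl}$ action of $\det^\eta$ is $\eta\,\tr$, which vanishes on $E_{21}$. The element $s_{23}$ picks up the factor $\det(s_{23})^\eta=(-1)^\eta$, which cancels the sign from \cref{eq:iota_transform}. This leaves $p_r=\langle X_r,s_{23}X_r\rangle/\langle X_r,X_r\rangle$ with $X_r\in V_{(r,0,-r)}[(0,0,0)]\cap\ker E_{21}$. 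The quotient does not depend on which nonzero $X_r$ we choose, since that space is one-dimensional. The inner product on $V_{(r,0,-r)}$ should be taken $\unitgrp(3)$-invariant, which makes $s_{23}$ act unitarily there.

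I expect the main obstacle to be the bookkeeping in these last two steps. First, the signs must be tracked consistently: $(-1)^\eta$ from $\iota$, $(-1)^{\eta-1}$ on $G$, which does not affect kernels, and $\det(s_{23})^\eta$. Second, one must check that the restriction to $W[(\eta,\eta,\eta)]$ inside $W$ matches $\iota(\calH^{\otimes 3})$ in the fixed-particle sector, so that no spurious components contribute.
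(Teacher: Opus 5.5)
Your proposal is correct and takes essentially the same route as the paper. You transport $P$ to $W$ via $\iota$ and use \cref{lem:nonzero_kernel_irreps} to show that on each isotypic component $V_\lambda\otimes M_\lambda$ with $\lambda=(\eta+r,\eta,\eta-r)$, $P$ becomes the compression of $s_{23}$ onto $\C X_\lambda\otimes M_\lambda$. You then strip the determinant twist, so that $\det(s_{23})^\eta$ cancels the $(-1)^\eta$ from \cref{eq:iota_transform}. You are more careful than the paper about the orthogonality of the isotypic components and about choosing a $\unitgrp(3)$-invariant inner product.

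One small slip: $\Gamma(s_{23})$ does not commute with the $\genlingrp(3,\C)$ action, since that group is non-abelian. It is itself the image of the group element $s_{23}$, and that is exactly why it preserves $V_\lambda\otimes M_\lambda$ and acts there as $\rho_\lambda(s_{23})\otimes\identity_{M_\lambda}$.
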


\begin{proof}
    Write $\lambda^{(r)} \coloneqq (\eta + r, \eta, \eta - r)$. By \cref{lem:nonzero_kernel_irreps}, we know that $\slatertestproj^{(1,2)}$ is nontrivial only on the irreps $V_{\lambda^{(r)}}$. The determinant twist rule (\cref{prop:det-twist}) allows us to reduce to a highest weight of $(r, 0, -r)$, since
    \[
    V_{(\eta+r,\eta,\eta-r)} \cong V_{(r, 0, -r)} \otimes {\det}^\eta.
    \]
    Recall from \cref{eq:iota_transform} that $\SWAP^{(2,3)} \cong (-1)^\eta \, \Gamma(s_{23})$, and so since $\det(s_{23}) = -1$, the signs cancel. (The sign $(-1)^{\eta-1}$ appearing on $G$ does not matter because we only consider its kernel projector.) \cref{lem:nonzero_kernel_irreps} also established that $\slatertestproj^{(1,2)}$ in $V_{\lambda^{(r)}}$ is a one-dimensional projector. Thus there can only be one eigenvector with nonzero eigenvalue (modulo the multiplicity space $M_{\lambda^{(r)}}$, on which everything acts trivially), which is $X_r \in V_{(r,0,-r)}$. Clearly, $X_r \in \ker E_{21}$ as well since $E_{21}$ is nothing but a representation of $G$. Finally, we know that $X_r$ has $\genlingrp(3, \C)$-weight $(0, 0, 0)$, again because of the determinant twist:~if $v$ is a vector in $V_\lambda$ with weight $\mu$, then under the correspondence $V_\lambda \cong V_{\lambda - (\eta, \eta, \eta)} \otimes {\det}^\eta$ it is a vector in $V_{\lambda - (\eta, \eta, \eta)}$ of weight $\mu - (\eta, \eta, \eta)$. In our case, we only consider vectors of weight $\mu = (\eta, \eta, \eta)$ at the onset.
\end{proof}

\paragraph{Computing the eigenvalues.}
Now we evaluate $p_r$. Our goal is to find a useful expression for the eigenvectors $X_r$ given the three constraints:
\begin{enumerate}
    \item $X_r \in V_{(r,0,-r)}$,
    \item $X_r$ has weight $(0, 0, 0)$,
    \item $E_{21} X_r = 0$.
\end{enumerate}
This third point is particularly powerful:~viewing $E_{21}$ as the lowering operator in $\spllinalg(2, \C)$, it says that $X_r$ lies in the trivial $\spllinalg(2, \C)$-submodule of $V_{(r,0,-r)}$. Note also that the second point is not logically independent, but rather a consequence of the first and third points combined with the branching rule (\cref{prop:GL-interlacing}).

It will be highly convenient to express elements of $\genlingrp(3, \C)$-irreps $V_\lambda$ using the following polynomial model. First, the irrep of highest weight $(r, 0, 0)$ is $\Sym^r(\C^3)$, and its dual $\Sym^r(\C^3)^*$ is the irrep of highest weight $(0, 0, -r)$ \cite[Theorems 3.2.13 and 9.3.10]{goodman2009symmetry}. By \cite[Proposition 5.5.19]{goodman2009symmetry}, the irrep of their summed highest weight $(r, 0, -r)$ is contained in the tensor product of $V_{(r,0,0)}$ and $V_{(0,0,-r)}$, i.e.,
\begin{equation}\label{eq:polynomial_model}
    V_{(r,0,-r)} \subseteq \Sym^r(\C^3) \otimes \Sym^r(\C^3)^*.
\end{equation}
Realize $\calP_{r,r} \coloneqq \Sym^r(\C^3) \otimes \Sym^r(\C^3)^*$ as the space of homogeneous degree-$(r, r)$ polynomials in variables $x_1, x_2, x_3$ and dual variables $y_1, y_2, y_3$. We occasionally adopt multi-index notation:~for $\alpha = (\alpha_1, \alpha_2, \alpha_3)$, write $x^\alpha \coloneqq x_1^{\alpha_1} x_2^{\alpha_2} x_3^{\alpha_3}$ and $\alpha! \coloneqq \alpha_1! \alpha_2! \alpha_3!$. This is particularly convenient for defining the Fischer inner product on multivariate polynomials,
\begin{equation}\label{eq:fischer_inner_product}
    \langle x^\alpha y^\beta, x^{\alpha'} y^{\beta'} \rangle = \alpha! \beta! \, \delta_{\alpha\alpha'} \delta_{\beta\beta'}.
\end{equation}
Overloading $x_i$ to also indicate the multiplication operation by $x_i$, we have that $\partial_{x_i}$ is the adjoint of $x_i$ with respect to this inner product (similarly for $y_i$ and $\partial_{y_i}$). Indeed, one can verify this by checking the univariate case:
\[
\begin{split}
    &\textit{Multiplication: } \langle x^k, x \cdot x^\ell \rangle = \langle x^k, x^{\ell+1} \rangle = k! \, \delta_{k,\ell+1},\\
    &\textit{Differentiation: } \langle \partial_x x^k, x^\ell \rangle = k \langle x^{k-1}, x^\ell \rangle = k \cdot (k-1)! \, \delta_{k-1,\ell}.
\end{split}
\]
Because these operations commute over different variables, the multivariate case follows.

\begin{proposition}\label{prop:polynomial_model_dictionary}
    We have the following dictionary between polynomial model $\calP_{r,r}$ and the representation theory of $\genlingrp(3, \C)$.
    \begin{enumerate}
        \item The $\genlingrp(3, \C)$-weight of $x^\alpha y^\beta$ is $\alpha - \beta$.

        \item The operators $E_{ij} \coloneqq x_i \partial_{x_j} - y_j \partial_{y_i}$ generate a copy of $\genlinalg(3, \C)$ inside of $\calP_{r,r}$ (hence it is a $\genlingrp(3, \C)$-module).

        \item The Fischer inner product agrees with the standard inner product on $V_{(r,0,-r)}$, up to a normalization convention.
    \end{enumerate}
\end{proposition}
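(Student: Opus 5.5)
We verify the three items in order, working throughout with the explicit model $\calP_{r,r}$ of polynomials homogeneous of degree $r$ in $x$ and degree $r$ in $y$. Here $\genlingrp(3,\C)$ acts by $g\cdot f(x,y) = f(g^\transpose x, g^{-1}y)$, with the convention chosen so that $x_i$ transforms as the standard basis vector $e_i$ of $\C^3$ and $y_i$ as the dual basis vector $e_i^*$. The whole proof is a matter of differentiating this action and comparing it with the structures recorded in the paper so far.

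\emph{Item 1 (weights).} Take $t=\diag(t_1,t_2,t_3)$. Each $x_i$ is scaled by $t_i$ and each $y_i$ by $t_i^{-1}$. The monomial $x^\alpha y^\beta$ is therefore an eigenvector with eigenvalue $\prod_i t_i^{\alpha_i-\beta_i}$. By the definition of weight this is exactly weight $\alpha-\beta$.

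\emph{Item 2 ($\genlinalg(3,\C)$-action).} Differentiate the action along $g=\identity+sE_{ij}$ at $s=0$. On the $x$-variables, where the action is the standard one on $\Sym^r(\C^3)$, this gives $x_i\partial_{x_j}$. On the $y$-variables it gives the contragredient action on $\Sym^r(\C^3)^*$, which is $-y_j\partial_{y_i}$.

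There are two ways to confirm that the resulting operators $E_{ij}=x_i\partial_{x_j}-y_j\partial_{y_i}$ close into $\genlinalg(3,\C)$:
\begin{itemize}
\item Compute the bracket directly, using $[x_i\partial_{x_j},x_k\partial_{x_l}]=\delta_{jk}x_i\partial_{x_l}-\delta_{il}x_k\partial_{x_j}$ together with the analogous identity with a sign flip on the $y$-side. The $x$-part and the $y$-part commute with each other, and the combination gives $[E_{ij},E_{kl}]=\delta_{jk}E_{il}-\delta_{il}E_{kj}$.
\item Alternatively, note that the pieces are the derivatives of the two representations $\Sym^r(\C^3)$ and $\Sym^r(\C^3)^*$, so by the product rule their sum is the derivative of the tensor product representation.
\end{itemize}
Either way, $\calP_{r,r}$ is a rational $\genlingrp(3,\C)$-module, and it contains $V_{(r,0,-r)}$ as in \cref{eq:polynomial_model}.

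\emph{Item 3 (inner products).} The main step is to show that the Fischer inner product is invariant under the unitary group $\unitgrp(3)$; equivalently, that each $E_{ij}$ has adjoint $E_{ji}$. This follows from the adjointness of $x_i$ and $\partial_{x_i}$ (and of $y_i$ and $\partial_{y_i}$) already noted above. It gives
\[
(x_i\partial_{x_j})^\dagger=x_j\partial_{x_i},\qquad (y_j\partial_{y_i})^\dagger=y_i\partial_{y_j},
\]
and hence $E_{ij}^\dagger=E_{ji}$. Consequently the Hermitian combinations $E_{ij}+E_{ji}$ and $\ri(E_{ij}-E_{ji})$ act self-adjointly, and the compact group $\unitgrp(3)$ acts unitarily.

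The standard inner product on $V_{(r,0,-r)}$ is also $\unitgrp(3)$-invariant. An irrep of $\genlingrp(3,\C)$ restricts to an irrep of $\unitgrp(3)$, because $\unitgrp(3)$ is Zariski dense in $\genlingrp(3,\C)$, or equivalently because $\genlinalg(3,\C)$ is the complexification of $\mathfrak{u}(3)$. Schur's lemma then shows that any two $\unitgrp(3)$-invariant inner products on an irreducible $\unitgrp(3)$-module differ by a positive scalar. Restricted to $V_{(r,0,-r)}$, the Fischer inner product therefore agrees with the standard one up to normalization.

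I expect the main obstacle to be bookkeeping of conventions rather than any conceptual step. Specifically, the sign and transpose in the dual action must be chosen so that:
\begin{itemize}
\item $E_{ij}$ has exactly the form stated, matching the operators $E_{\alpha\beta}$ of \cref{eq:E_ab_W} under the identification used later;
\item the weight comes out as $\alpha-\beta$ rather than $\beta-\alpha$.
\end{itemize}
I would check these with a low-degree example: for $r=1$, the element $x_1y_3$ should be a highest-weight vector of weight $(1,0,-1)$, killed by $E_{12}$, $E_{13}$ and $E_{23}$.
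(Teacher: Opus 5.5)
Your proposal is correct and takes essentially the same route as the paper: the torus action gives the weights, the operators $E_{ij}$ are matched with the differentiated standard and dual actions, and $\unitgrp(3)$-invariance of the Fischer inner product (from the adjointness of $x_i$ and $\partial_{x_i}$) is combined with Schur's lemma. You also make explicit two points the paper leaves implicit: the bracket relations $[E_{ij},E_{kl}]=\delta_{jk}E_{il}-\delta_{il}E_{kj}$, and the fact that $V_{(r,0,-r)}$ remains irreducible on restriction to $\unitgrp(3)$, which is the hypothesis Schur's lemma actually needs here.
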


\begin{proof}
    \textit{1. $\genlingrp(3, \C)$ weights.} By definition the standard basis vectors $e_i$ of $\C^3$ have themselves as weights, whereas the dual basis $e_i^*$ has $-e_i$ as weights. Then by the identification $e_i \leftrightarrow x_i$, $e_i^* \leftrightarrow y_i$, the group action of $t = \diag(t_1, t_2, t_3) \in \genlingrp(3, \C)$ on $\calP_{r,r}$ is
    \[
    t \cdot x_i = t_i x_i, \quad t \cdot y_i = t_i^{-1} y_i.
    \]
    Group actions are homomorphisms, so the product preserves this:
    \[
    t \cdot (x^\alpha y^\beta) = \prod_{i=1}^3 (t_i x_i)^{\alpha_i} \prod_{j=1}^3 (t_j^{-1} y_j)^{\beta_j} = \left( \prod_{i=1}^3 t_i^{\alpha_i - \beta_i} \right) x^\alpha y^\beta.
    \]

    \textit{2. The algebra $\genlinalg(3, \C)$.} Recall that the abstract action of $\genlinalg(3, \C)$ on the standard basis should be $E_{ij} e_k = \delta_{jk} e_i$. We verify this under the polynomial identification:
    \[
    E_{ij} x_k = x_i \partial_{x_j} x_k - y_j \partial_{y_i} x_k = \delta_{jk} x_i.
    \]
    The dual representation of $\genlinalg(3, \C)$, $E_{ij} \cdot e_k^* = -\delta_{ik} e_j^*$, is also easily checked:
    \[
    E_{ij} y_k = 0 - y_j \partial_{y_i} y_k = -\delta_{ik} y_j.
    \]
    
    \textit{3. Inner product preservation.} Consider $\unitgrp(3) \subset \genlingrp(3, \C)$. By definition, $V_{(r,0,-r)}$ is a $\genlingrp(3, \C)$-module, so in particular its inner product is invariant under $\unitgrp(3)$. Meanwhile we have just shown that $\calP_{r,r}$ is also a $\genlingrp(3, \C)$-module, and its Fischer inner product is indeed invariant under $\unitgrp(3)$ (implied by the fact that $x_i$ is the adjoint of $\partial_{x_i}$). Take the subspace of $\calP_{r,r}$ isomorphic to $V_{(r,0,-r)}$, which is known to exist due to the argument preceding \cref{eq:polynomial_model};~this is irreducible and inherits the Fischer inner product. By Schur's lemma, any two unitarily invariant inner products on an irrep are equivalent up to some scalar.
\end{proof}

In general, $\calP_{r,r}$ is larger than the irrep $V_{(r,0,-r)}$ we are interested in. We seek a characterization of the irrep;~it turns out that $V_{(r,0,-r)}$ is precisely a harmonic space sitting within $\calP_{r,r}$.

\begin{lemma}\label{lem:harmonic_space}
Define the harmonic space
\[
\calK_r \coloneqq \ker \Delta \cap \calP_{r,r}, \quad \text{where } \Delta \coloneqq \sum_{i=1}^3 \partial_{x_i} \partial_{y_i}
\]
is the Laplacian on $\calP_{r,r}$. Then $\calK_r \cong V_{(r,0,-r)}$.
\end{lemma}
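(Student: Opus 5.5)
The plan is the classical harmonic decomposition of $\Sym^r(\C^3)\otimes\Sym^r(\C^3)^*$ under $\genlingrp(3,\C)$, carried out in three steps: invariance of $\calK_r$, the splitting of $\calP_{r,r}$, and a dimension count.

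\textbf{Step 1: $\calK_r$ is a submodule.} The contraction operator $\Delta=\sum_i\partial_{x_i}\partial_{y_i}$ commutes with every $E_{ij}=x_i\partial_{x_j}-y_j\partial_{y_i}$ from \cref{prop:polynomial_model_dictionary}. This is a direct check using $[\partial_{x_k},x_i]=\delta_{ik}$ and $[\partial_{y_k},y_j]=\delta_{jk}$. The commutator $[\Delta,x_i\partial_{x_j}]$ gives $\partial_{x_j}\partial_{y_i}$, and $[\Delta,-y_j\partial_{y_i}]$ gives $-\partial_{y_i}\partial_{x_j}$, so the two cancel. Equivalently, $\Delta$ is the natural $\genlingrp(3)$-equivariant contraction $\calP_{r,r}\to\calP_{r-1,r-1}$. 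Hence $\calK_r=\ker\Delta$ is a $\genlingrp(3,\C)$-submodule of $\calP_{r,r}$.

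\textbf{Step 2: orthogonal splitting.} Let $Q=\sum_i x_iy_i$ denote the multiplication operator. With respect to the Fischer inner product \cref{eq:fischer_inner_product}, $Q$ is the adjoint of $\Delta$, since $x_i,y_i$ are adjoint to $\partial_{x_i},\partial_{y_i}$. Therefore
\[
\calP_{r,r}=\ker\Delta\oplus Q\,\calP_{r-1,r-1}.
\]
Next I would show that $\Delta\colon\calP_{r,r}\to\calP_{r-1,r-1}$ is surjective, equivalently that $Q$ is injective. Injectivity of $Q$ is immediate because multiplication by a nonzero polynomial in a polynomial ring is injective. Consequently
\[
\dim\calK_r=\binom{r+2}{2}^2-\binom{r+1}{2}^2 .
\]
Iterating the splitting gives $\calP_{r,r}\cong\bigoplus_{s=0}^r Q^{r-s}\calK_s$ as modules, where each $Q^{r-s}\calK_s\cong\calK_s$ because $Q$ is invariant and injective.

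\textbf{Step 3: identifying the irrep.} The monomial $x_1^ry_3^r$ lies in $\calK_r$, since it involves no paired variables $x_i,y_i$. By item 1 of \cref{prop:polynomial_model_dictionary} its weight is $(r,0,-r)$, and it is killed by every raising operator $E_{ij}$, $i<j$: $x_i\partial_{x_j}$ vanishes because $x_1^r y_3^r$ involves no $x_j$ with $j>1$, and $y_j\partial_{y_i}$ vanishes because it involves no $y_i$ with $i<3$. So $\calK_r\supseteq V_{(r,0,-r)}$, the cyclic submodule this highest-weight vector generates. The remaining task, which I expect to be the main obstacle, is showing this containment is an equality, i.e.\ that $\calK_r$ is irreducible.

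For that step I would compare dimensions. By Weyl's dimension formula for type $A_2$ with highest weight $(r,0,-r)$,
\[
\dim V_{(r,0,-r)}=\frac{(r+1)(r+1)(2r+2)}{2}=(r+1)^3,
\]
and the count from Step 2 gives
\[
\dim\calK_r=\frac{(r+1)^2\big((r+2)^2-r^2\big)}{4}=(r+1)^3 .
\]
The two dimensions agree, so $\calK_r=V_{(r,0,-r)}$.

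As a fallback, one could instead argue by induction using $\calP_{r,r}\cong\bigoplus_{s\le r}\calK_s$ together with the known decomposition $\Sym^r\otimes\Sym^{r*}\cong\bigoplus_s V_{(s,0,-s)}$, obtained from Pieri's rule and \cref{prop:det-twist}. The dimension computation is cleaner, however, so I would use it.
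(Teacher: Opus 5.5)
Your proposal is correct and follows the paper's proof essentially step for step: you commute $\Delta$ with the $E_{ij}$, exhibit $x_1^r y_3^r$ as a highest-weight vector in $\ker\Delta$, compute $\dim\ker\Delta=(r+1)^3$ from the injectivity of the adjoint operator (multiplication by $\sum_i x_i y_i$), and match this against $\dim V_{(r,0,-r)}=(r+1)^3$. The paper additionally treats $r=0$ separately, since $\calP_{r-1,r-1}$ is then undefined; your count covers this case with the convention $\calP_{-1,-1}=0$, and the iterated splitting $\calP_{r,r}\cong\bigoplus_s Q^{r-s}\calK_s$ is not needed for the argument.
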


\begin{proof}
    To simplify notation, we assume throughout that $\Delta$ denotes its restriction to $\calP_{r,r}$. First observe that $\ker\Delta$ is a $\genlinalg(3, \C)$-module. To see this, we first compute the commutator $[\Delta, E_{ij}]$. Let $f$ be an arbitrary polynomial. Take the $x$- and $y$-part of $E_{ij}$ separately:
    \[
    \begin{split}
        \Delta(x_i \partial_{x_j} f) = \sum_{k=1}^3 \partial_{x_k} \partial_{y_k} (x_i \partial_{x_j} f) = \sum_{k=1}^3 \partial_{x_k} (x_i \partial_{x_j} \partial_{y_k} f) &= \sum_{k=1}^3 ( \delta_{ik} \partial_{x_j} \partial_{y_k} f + x_i \partial_{x_k} \partial_{x_j} \partial_{y_k} f )\\
        &= \partial_{x_j} \partial_{y_i} f + x_i \partial_{x_j} \Delta f,
    \end{split}
    \]
    \[
    \begin{split}
        \Delta(-y_j \partial_{y_i} f) = -\sum_{k=1}^3 \partial_{x_k} \partial_{y_k} (y_j \partial_{y_i} f) = -\sum_{k=1}^3 \partial_{x_k} (\delta_{jk} \partial_{y_i} f + y_j \partial_{y_k} \partial_{y_i} f) = -(\partial_{x_j} \partial_{y_i} f + y_j \partial_{y_i} \Delta f).
    \end{split}
    \]
    Combining, we get
    \[
    \begin{split}
        \Delta (E_{ij} f) = x_i \partial_{x_j} \Delta f - y_j \partial_{y_i} \Delta f = E_{ij}(\Delta f),
    \end{split}
    \]
    which implies $[\Delta, E_{ij}] = 0$. Consequently for every $v \in \ker\Delta$, it holds that $E_{ij} v \in \ker\Delta$. Since every $E_{ij}$ preserves $\ker\Delta$, it is a $\genlinalg(3, \C)$-module.

    Next we show that $\ker\Delta$ contains a copy of $V_{(r,0,-r)}$. Consider a particular element $v_r = x_1^r y_3^r \in \ker\Delta$, which has weight $(r, 0, -r)$. By inspection, all raising operators on $\genlinalg(3, \C)$ kill $v_r$:
    \[
    E_{12}v_r = E_{13}v_r = E_{23}v_r = 0.
    \]
    Thus $v_r$ is in fact a vector of highest weight. By the theorem of highest weight, this vector precisely generates the irrep $V_{(r, 0, -r)}$ under the image of the lowering operators $E_{21}, E_{31}, E_{32}$. Thus $V_{(r,0,-r)} \subseteq \ker\Delta$.

    It remains to show that this inclusion is saturated. We prove this by dimension counting. Since $\Delta : \calP_{r,r} \to \calP_{r-1,r-1}$, by the rank--nullity theorem we have
    \[
    \dim\ker\Delta = \dim\calP_{r,r} - \rank\Delta.
    \]
    The dimension of $\calP_{r,r}$ is straightforward:~homogeneous degree-$r$ polynomials in three variables are spanned by $\binom{r+2}{2}$ monomials, so
    \[
    \dim \calP_{r,r} = \binom{r+2}{2}^2.
    \]
    To compute the rank of $\Delta$, it will be easier to analyze its adjoint
    \[
    \Delta^* = \sum_{i=1}^3 x_i y_i
    \]
    and use the fact that $\rank\Delta = \rank\Delta^*$. We show that $\Delta^* : \calP_{r-1,r-1} \to \calP_{r,r}$ is one-to-one. Indeed, since $\Delta^*$ is the linear map $f \mapsto (x_1 y_1 + x_2 y_2 + x_3 y_3)f$, it suffices to determine when $\Delta^* f = 0$. But $f = 0$ is the only possible solution, because polynomial rings over $\C$, in this case $\C[x_1, x_2, x_3, y_1, y_2, y_3]$, have no zero divisors. Hence
    \[
    \rank\Delta^* = \dim\calP_{r-1,r-1} = \binom{r+1}{2}^2,
    \]
    which is also $\rank\Delta$, and so
    \[
    \dim\ker\Delta = \binom{r+2}{2}^2 - \binom{r+1}{2}^2 = (r+1)^3.
    \]
    Now compare this to the dimension of $V_{(r,0,-r)}$. It is known that \cite[Theorem 6.3]{fulton2004representation}
    \[
    \dim V_\lambda = \prod_{1\leq i < j \leq d} \frac{\lambda_i - \lambda_j + j - i}{j - i}.
    \]
    Thus for $\lambda = (r, 0, -r)$,
    \[
    \dim V_{(r,0,-r)} = \frac{(r+1)(2r+2)(r+1)}{2} = (r+1)^3,
    \]
    implying that $V_{(r,0,-r)}$ and $\ker\Delta$ within $\calP_{r,r}$ are isomorphic.

    Technically, the above argument only holds for $r \geq 1$ so that the codomain $\calP_{r-1,r-1}$ makes sense. The case of $r = 0$ is trivial however, since $\Sym^0(\C^3)$ and $V_{(0,0,0)}$ are both just $\C$.
\end{proof}

We now have sufficient constraints to completely characterize the eigenvector $X_r$. After that it will be straightforward to evaluate the eigenvalue $p_r = \frac{\langle X_r, s_{23} X_r \rangle}{\langle X_r, X_r \rangle}$.

\begin{lemma}
    The one-dimensional subspace $V_{(r,0,-r)}[(0, 0, 0)] \cap \ker E_{21} \equiv \C X_r$ can be described in terms of the polynomial model $\calP_{r,r}$ as
    \[
    X_r = \sum_{\substack{i,j,k \geq 0 \\ i+j+k=r}} \chi_{ijk} \, x_1^i x_2^j x_3^k y_1^i y_2^j y_3^k, \quad \chi_{ijk} = \frac{(-1)^{k}}{i! j! k!} \binom{r+1}{k}.
    \]
\end{lemma}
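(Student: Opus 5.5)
We work entirely inside the polynomial model $\calP_{r,r}$. By \cref{lem:harmonic_space} and \cref{prop:polynomial_model_dictionary}, $V_{(r,0,-r)}$ is identified with $\calK_r = \ker\Delta \cap \calP_{r,r}$. The three constraints on $X_r$ then become concrete conditions on a polynomial: weight $(0,0,0)$, harmonicity $\Delta X_r = 0$, and annihilation by $E_{21} = x_2\partial_{x_1} - y_1\partial_{y_2}$.

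\textbf{Weight.} By part 1 of \cref{prop:polynomial_model_dictionary}, a monomial $x^\alpha y^\beta$ has weight $\alpha - \beta$. So weight zero forces $\alpha = \beta$, and $X_r$ must have the form
\[
X_r = \sum_{i+j+k=r} \chi_{ijk}\, x_1^i x_2^j x_3^k y_1^i y_2^j y_3^k .
\]
Set $u_\ell \coloneqq x_\ell y_\ell$, so that $X_r = f(u_1,u_2,u_3)$ for a homogeneous degree-$r$ polynomial $f$.

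\textbf{Annihilation by $E_{21}$.} Applying $E_{21}$ to $f(u)$ gives
\[
E_{21} f = x_2 y_1\,\partial_{u_1} f - y_1 x_2\,\partial_{u_2} f = x_2 y_1\,(\partial_{u_1} - \partial_{u_2}) f .
\]
This vanishes exactly when $f$ depends on $u_1, u_2$ only through $u_1 + u_2$. Writing $s = u_1 + u_2$ and $t = u_3$, we get
\[
f = \sum_k a_k\, s^{r-k} t^k .
\]
Expanding $s^{r-k}$ binomially yields $\chi_{ijk} = a_k \binom{i+j}{i}$, which already matches the claimed factor $1/(i!\,j!)$ up to a $k$-dependent constant.

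\textbf{Harmonicity.} This is the only step with real content. For a function of $u$, the product rule gives
\[
\partial_{x_\ell}\partial_{y_\ell}\, g(u_\ell) = g' + u_\ell\, g'' ,
\]
so $\Delta = \sum_\ell \bigl(\partial_{u_\ell} + u_\ell\,\partial_{u_\ell}^2\bigr)$. Apply this to $s^{r-k} t^k$, using $\partial_{u_1} = \partial_{u_2} = \partial_s$ on functions of $s$ and $u_1 + u_2 = s$:
\[
\Delta\bigl(s^{m} t^k\bigr) = \bigl(2m + m(m-1)\bigr)s^{m-1}t^k + k^2 s^m t^{k-1}, \qquad m = r-k .
\]
This simplifies to $m(m+1)\,s^{m-1}t^k + k^2\, s^m t^{k-1}$. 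Collecting the coefficient of $s^{r-k}t^{k-1}$ gives the two-term recurrence
\[
(r-k+1)(r-k+2)\,a_{k-1} + k^2\, a_k = 0 .
\]
Its solution is $a_k = (-1)^k \binom{r}{k}\binom{r+1}{k}\,\frac{(r-k)!}{r!} \cdot a_0$, or something of that shape. We would then verify that $a_k\binom{i+j}{i}$ equals $\frac{(-1)^k}{i!\,j!\,k!}\binom{r+1}{k}$ up to a global constant, using $i + j = r - k$. Checking this factorial bookkeeping carefully is the expected main obstacle. The structure itself is forced: the solution space of the recurrence is one-dimensional, which confirms the claimed one-dimensionality from \cref{lem:nonzero_kernel_irreps}.

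Finally, note that $X_r \neq 0$ and lies in $\calK_r$, so it spans the stated line.
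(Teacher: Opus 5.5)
Your route is essentially the paper's. Weight zero forces diagonal monomials $x^\alpha y^\alpha$, the condition $E_{21}X_r = 0$ forces the $1/(i!\,j!)$ dependence, and harmonicity gives a two-term recurrence in $k$. The substitution $u_\ell = x_\ell y_\ell$ is a clean repackaging that makes the $E_{21}$ step a one-liner. Your recurrence $(r-k+1)(r-k+2)\,a_{k-1} + k^2 a_k = 0$ is correct. It is equivalent to the paper's $k^2 d_k + (r-k+2)\,d_{k-1} = 0$ under $d_k = (r-k)!\,a_k$.

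The gap is in the last step, which is exactly where the lemma's explicit formula comes from. Your closed form $a_k = (-1)^k\binom{r}{k}\binom{r+1}{k}\frac{(r-k)!}{r!}\,a_0$ does not solve your recurrence. Its ratio is $a_k/a_{k-1} = -(r-k+2)/k^2$, not the required $-(r-k+1)(r-k+2)/k^2$. Up to $a_0$ it is the paper's $d_k = (-1)^k\binom{r+1}{k}/k!$, so you have conflated the coefficient of $s^{r-k}t^k$ with the coefficient left after pulling out $1/(i!\,j!)$. Substituted into $\chi_{ijk} = a_k\binom{i+j}{i}$, it leaves a stray $k$-dependent factor $(r-k)!$, so the verification you defer would fail as written.

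The correct solution is $a_k = (-1)^k\binom{r}{k}\binom{r+1}{k}\,a_0$. This follows because $\prod_{q=1}^{k}\frac{(r-q+1)(r-q+2)}{q^2} = \frac{r!}{(r-k)!}\cdot\frac{(r+1)!}{(r+1-k)!}\cdot\frac{1}{(k!)^2}$. Then, using $i+j = r-k$,
\[
\chi_{ijk} = (-1)^k\frac{r!}{k!\,(r-k)!}\binom{r+1}{k}\frac{(r-k)!}{i!\,j!}\,a_0 = \frac{(-1)^k\, r!}{i!\,j!\,k!}\binom{r+1}{k}\,a_0 ,
\]
which is the claimed $\chi_{ijk}$ for $a_0 = 1/r!$. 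With this correction your argument is complete. One-dimensionality follows as you say, since $a_0$ determines every $a_k$ and the $k=r$ boundary imposes no extra constraint.
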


\begin{proof}
    Recall that the $\genlingrp(3, \C)$-weight of the monomial $x^\alpha y^\beta$ is $\alpha - \beta$. Thus $X_r$ having weight $(0, 0, 0)$ implies it can only be supported by $x^\alpha y^\alpha$ terms. Write $\alpha = (i, j, k)$. By definition of $\calP_{r,r} = \Sym^r(\C^3) \otimes \Sym^r(\C^3)$, the degree must be exactly $i + j + k = r$.
    
    Now, we use the two central facts:~$E_{21} X_r = 0$ and $\Delta X_r = 0$. The first condition expands to
    \[
    \begin{split}
        E_{21} X_r &= (x_2 \partial_{x_1} - y_1 \partial_{y_2}) \sum_{\substack{i,j,k \geq 0 \\ i+j+k=r}} \chi_{ijk} \, x_1^i x_2^j x_3^k y_1^i y_2^j y_3^k\\
        &= \sum_{\substack{i,j,k \geq 0 \\ i+j+k=r}} \chi_{ijk} \left[ (1-\delta_{i,0})i \, x_1^{i-1} x_2^{j+1} x_3^k y_1^i y_2^j y_3^k - (1-\delta_{j,0})j \, x_1^i x_2^j x_3^k y_1^{i+1} y_2^{j-1} y_3^k \right],
    \end{split}
    \]
    where the Kronecker delta is due to the boundary cases where there are no $x_1$ or $y_2$ factor. By linear independence of monomials, we match degrees to get the recurrence relation
    \[
    i \, \chi_{ijk} - (j+1) \chi_{i-1,j+1,k} = 0.
    \]
    By inspection, the choice
    \[
    \chi_{ijk} = \frac{d_k}{i!j!}
    \]
    satisfies this recurrence, where the $k$ dependence $d_k$ is yet unspecified. Note that we can handle the boundary cases by demanding $\chi_{ijk} = 0$ whenever either any of the indices are negative, or they do not sum to $r$.

    We get $d_k$ from the second kernel condition $\Delta X_r = 0$:
    \[
    \begin{split}
        \Delta X_r &= (\partial_{x_1} \partial_{y_1} + \partial_{x_2} \partial_{y_2} + \partial_{x_3} \partial_{y_3}) \sum_{\substack{i,j,k \geq 0 \\ i+j+k=r}} \chi_{ijk} \, x_1^i x_2^j x_3^k y_1^i y_2^j y_3^k\\
        &= \sum_{\substack{i,j,k \geq 0 \\ i+j+k=r}} \chi_{ijk} \left[ (1-\delta_{i,0})i^2 \, x_1^{i-1} x_2^j x_3^k y_1^{i-1} y_2^j y_3^k + (1-\delta_{j,0})j^2 \, x_1^i x_2^{j-1} x_3^k y_1^i y_2^{j-1} y_3^k \right.\\
        &\phantom{\sum_{\substack{i,j,k \geq 0 \\ i+j+k=r}} \chi_{ijk} []} \left. {} + (1-\delta_{k,0})k^2 \, x_1^i x_2^j x_3^{k-1} y_1^i y_2^j y_3^{k-1} \right]\\
        &= \sum_{\substack{i,j,k \geq 0 \\ i+j+k=r}} d_{k} \left[ (1-\delta_{i,0}) \frac{i}{(i-1)!j!} \, x_1^{i-1} x_2^j x_3^k y_1^{i-1} y_2^j y_3^k + (1-\delta_{j,0}) \frac{j}{i!(j-1)!} \, x_1^i x_2^{j-1} x_3^k y_1^i y_2^{j-1} y_3^k \right.\\
        &\phantom{\sum_{\substack{i,j,k \geq 0 \\ i+j+k=r}} d_{k} []} \left. {} + (1-\delta_{k,0}) \frac{k^2}{i!j!} \, x_1^i x_2^j x_3^{k-1} y_1^i y_2^j y_3^{k-1} \right].
    \end{split}
    \]
    This expression vanishes by satisfying a different recurrence relation,
    \[
    d_k \frac{k^2}{i!j!} + d_{k-1} \frac{i+1}{i!j!} + d_{k-1} \frac{j+1}{i!j!} = 0, \quad k \geq 1.
    \]
    It will be convenient to renormalize the unknown as $d'_k \coloneqq k! \, d_k$. Using this, the left-hand side becomes
    \[
    d_k \frac{k^2}{i!j!} + d_{k-1} \frac{i+j+2}{i!j!} = d'_k \frac{k}{i!j!(k-1)!} + d'_{k-1} \frac{i+j+2}{i!j!(k-1)!}.
    \]
    Setting to zero and recalling that $i + j + k = r$ yields
    \[
    d'_k = - d'_{k-1} \frac{r + 2 - k}{k}.
    \]
    This recurrence relation is solved by explicit computation:
    \[
    \begin{split}
        d'_k &= d'_0 (-1)^k \prod_{q=1}^{k} \frac{r + 2 - q}{q} = d'_0 (-1)^k \frac{(r+1)r(r-1) \cdots (r+2-k)}{k!} = d'_0 (-1)^k \binom{r+1}{k}.
    \end{split}
    \]
    The choice of $d'_0$ is just normalization, so set $d'_0 = 1$. Then $\chi_{ijk} = \frac{d'_k}{i!j!k!}$ gives the claimed result.
\end{proof}

\begin{theorem}
    The eigenvalues $p_r$ of $P = \slatertestproj^{(1,2)} \cdot \SWAP^{(2,3)} \cdot \slatertestproj^{(1,2)}$ take the form
    \[
    p_r = \frac{(-1)^r}{r+1}.
    \]
\end{theorem}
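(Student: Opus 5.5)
We compute $p_r = \langle X_r, s_{23}X_r\rangle/\langle X_r,X_r\rangle$ directly in the polynomial model, using the explicit vector $X_r$ from the preceding lemma and the Fischer inner product, which by \cref{prop:polynomial_model_dictionary} agrees with the irrep inner product up to a scalar that cancels in the ratio.

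\textbf{Action of $s_{23}$.} The permutation matrix $s_{23}$ acts on $\calP_{r,r}$ by swapping $x_2\leftrightarrow x_3$ and $y_2\leftrightarrow y_3$. The dual action is again a swap, since permutation matrices are orthogonal. Hence $s_{23}X_r = \sum \chi_{ijk}\, x_1^i x_2^k x_3^j y_1^i y_2^k y_3^j$.

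\textbf{Evaluating the inner products.} By \cref{eq:fischer_inner_product}, monomials $x^\alpha y^\alpha$ are orthogonal with squared norm $(\alpha!)^2$. With $\chi_{ijk} = \frac{(-1)^k}{i!j!k!}\binom{r+1}{k}$ this gives
\[
\langle X_r,X_r\rangle=\sum_{i+j+k=r}\binom{r+1}{k}^2,\qquad \langle X_r,s_{23}X_r\rangle=\sum_{i+j+k=r}(-1)^{j+k}\binom{r+1}{k}\binom{r+1}{j}.
\]
Here the second sum pairs the coefficient $\chi_{ijk}$ with $\chi_{ikj}$, and the weight is $(i!j!k!)^2$.

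\textbf{Collapsing the sums.} Write $s=j+k$, so that $i=r-s$ ranges freely.

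For the numerator, Vandermonde gives $\sum_{j+k=s}\binom{r+1}{j}\binom{r+1}{k}=\binom{2r+2}{s}$. Therefore
\[
N=\sum_{s=0}^r(-1)^s\binom{2r+2}{s}=(-1)^r\binom{2r+1}{r},
\]
using the partial alternating sum identity $\sum_{s\le m}(-1)^s\binom{n}{s}=(-1)^m\binom{n-1}{m}$.

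For the denominator, $D=\sum_{k=0}^r (r-k+1)\binom{r+1}{k}^2$, since for fixed $k$ there are $r-k+1$ choices of $(i,j)$. Substitute $k'=r+1-k$, which gives $\binom{r+1}{k}=\binom{r+1}{k'}$ with $k'$ ranging over $1,\dots,r+1$. Then
\[
D=\sum_{k'=1}^{r+1}k'\binom{r+1}{k'}^2=(r+1)\sum_{k'}\binom{r}{k'-1}\binom{r+1}{k'}=(r+1)\binom{2r+1}{r}
\]
by Vandermonde. The ratio is $p_r=(-1)^r/(r+1)$, as claimed.

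\textbf{Main obstacle.} The combinatorics is routine. The real risk is sign and convention bookkeeping: confirming that $s_{23}$ acts on the $y$ variables by a plain swap with no extra sign, and that the sign from \cref{eq:iota_transform} cancels against $\det(s_{23})=-1$ in the determinant twist. Both points were settled in the preceding lemma, so I would verify them once more and then check small cases. For $r=0$ we should get $p_0=1$. For $r=1$ we should get $p_1=-\tfrac12$; directly, $N=-3$ and $D=6$, which matches.
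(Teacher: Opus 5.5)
Your proposal is correct and matches the paper's proof step for step. Both compute $\langle X_r,X_r\rangle$ and $\langle X_r,s_{23}X_r\rangle$ in the Fischer inner product, collapse the numerator via Vandermonde and the partial alternating binomial sum to $(-1)^r\binom{2r+1}{r}$, and reduce the denominator to $(r+1)\binom{2r+1}{r}$. The only cosmetic difference is that you reindex $k'=r+1-k$ and use $k'\binom{r+1}{k'}=(r+1)\binom{r}{k'-1}$, whereas the paper writes $(r+1-k)\binom{r+1}{k}=(r+1)\binom{r}{k}$ directly; both land on the same Vandermonde sum.
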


\begin{proof}
    We compute the inner products in the expression
    \[
    p_r = \frac{\langle X_r, s_{23} X_r \rangle}{\langle X_r, X_r \rangle}.
    \]
    Let us start with the denominator. Recalling the Fischer inner product from \cref{eq:fischer_inner_product},
    \[
        \langle X_r, X_r \rangle = \sum_{\substack{i,j,k \geq 0 \\ i+j+k=r}} \chi_{ijk}^2 (i!j!k!)^2 = \sum_{\substack{i,j,k \geq 0 \\ i+j+k=r}} \binom{r+1}{k}^2 = \sum_{k=0}^r (r+1-k) \binom{r+1}{k}^2
    \]
    where $r+1-k$ is the number of possible $(i, j)$ pairs for fixed $k$ under the constraint $i + j + k = r$. This sum admits a closed form by first rewriting $(r+1-k) \binom{r+1}{k} = (r+1)\binom{r}{k}$, from which we get
    \begin{equation}\label{eq:pr_den}
    \sum_{k=0}^r (r+1-k) \binom{r+1}{k}^2 = (r+1) \sum_{k=0}^r \binom{r}{k} \binom{r+1}{k} = (r+1) \binom{2r+1}{r}
    \end{equation}
    where the final equality follows from a standard combinatorial identity \cite[Table 169]{graham1994concrete}.

    Next we compute the numerator of $p_r$. The SWAP operator $s_{23}$ merely swaps $j$ and $k$ in the expansion of $X_r$, so
    \begin{equation}\label{eq:pr_num}
    \begin{split}
        \langle X_r, s_{23} X_r \rangle &= \sum_{\substack{i,j,k \geq 0 \\ i+j+k=r}} \chi_{ikj} \chi_{ijk} (i!j!k!)^2\\
        &= \sum_{0 \leq j+k \leq r} (-1)^{j+k} \binom{r+1}{j} \binom{r+1}{k}\\
        &= \sum_{q=0}^r (-1)^q \sum_{k=0}^{r-q} \binom{r+1}{q-k} \binom{r+1}{k}\\
        &= \sum_{q=0}^r (-1)^q \binom{2r+2}{q}\\
        &= (-1)^r \binom{2r+1}{r}
    \end{split}
    \end{equation}
    where the last two lines again can be found in \cite{graham1994concrete}. Dividing \cref{eq:pr_num} by \cref{eq:pr_den} yields $p_r = \frac{(-1)^r}{r+1}$.
\end{proof}

\begin{corollary}\label{cor:slater_spectral_gap}
    Every nonzero eigenvalue of $\Xi$ must be of the form
    \[
    \theta_r = \frac{1}{3} \left( 1 + \frac{2(-1)^r}{r+1} \right) \quad \text{for some } 0 \leq r \leq \min\{\eta, n-\eta\}.
    \]
    Consequently,
    \[
    \spec(\Xi) \subset [0, 5/9] \cup \{1\}.
    \]
\end{corollary}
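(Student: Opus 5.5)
The plan is to pass from $\Xi$ to $\Theta = \slatertestproj^{(1,2)}\symproj^{(1,2,3)}\slatertestproj^{(1,2)}$ and then to $P$. Writing $M = \symproj^{(1,2,3)}\slatertestproj^{(1,2)}$, we have $\Xi = MM^\dagger$ and $\Theta = M^\dagger M$. These two operators therefore share their nonzero spectrum, with multiplicities.

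As in \cref{prop:Theta_identity}, the identity $\Theta = \tfrac13\bigl(\slatertestproj^{(1,2)} + 2P\bigr)$ holds on the relevant subspace. The same derivation goes through because $\slatertestproj^{(1,2)}$ is invariant under swapping registers $1$ and $2$; this in turn follows since $\SWAP^{(1,2)}$ conjugates $G$ to $G^\dagger$ up to sign, and $\ker G^\dagger G = \ker G$ is preserved. A small check is needed here. The Slater test projector should be permutation-invariant on the symmetric two-copy input, which is the setting of \cref{thm:rejection-infidelity-bounds}, so I will restrict to $\slatertestproj$ composed with the symmetric projector if necessary.

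Next, $P = \slatertestproj^{(1,2)}\SWAP^{(2,3)}\slatertestproj^{(1,2)}$ is supported inside the range of $\slatertestproj^{(1,2)}$, and $P$ commutes with $\slatertestproj^{(1,2)}$. Hence any eigenvector $v$ of $\Theta$ with nonzero eigenvalue lies in the range of $\slatertestproj^{(1,2)}$. If $\Theta v = \theta v$ with $\theta \neq 0$, then $v = \slatertestproj^{(1,2)} v$, so $\theta v = \tfrac13(v + 2Pv)$. Thus $v$ is an eigenvector of $P$ with eigenvalue $p$, and $\theta = (1+2p)/3$.

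By the preceding theorem, working in each fixed-particle sector via the block decomposition into irreps $V_{\lambda^{(r)}} \otimes M_{\lambda^{(r)}}$, the eigenvalues of $P$ restricted to $\operatorname{ran}\slatertestproj^{(1,2)}$ are exactly $p_r = (-1)^r/(r+1)$ with $0 \le r \le \min\{\eta,n-\eta\}$. Substituting gives $\theta_r = \tfrac13\bigl(1 + \tfrac{2(-1)^r}{r+1}\bigr)$.

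The bound then reduces to elementary casework on $\theta_r$:
\begin{itemize}
\item $r=0$ gives $\theta_0 = 1$.
\item For odd $r$, $\theta_r < 1/3$.
\item For even $r \ge 2$, $\theta_r$ is decreasing in $r$, so $\theta_r \le \theta_2 = \tfrac13\bigl(1+\tfrac23\bigr) = 5/9$.
\end{itemize}
Since $\Xi \succeq 0$, all eigenvalues lie in $[0,5/9]\cup\{1\}$.

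The main obstacle is justifying that the nonzero spectrum of $\Theta$ on the symmetric three-copy space is governed entirely by $P$ restricted to the range of $\slatertestproj^{(1,2)}$. This requires confirming that the reduction of \cref{prop:Theta_identity} carries over to the Slater projector. Specifically, I will verify $\slatertestproj^{(1,2)}\SWAP^{(1,2)} = \slatertestproj^{(1,2)}$ on the subspace where it is applied, and, if needed, intersect with the $\SWAP^{(1,2)}$-symmetric sector before invoking the eigenvalue formula for $P$.
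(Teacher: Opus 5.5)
Your argument is the paper's own proof: pass from $\Xi$ to $\Theta$ via $MM^\dagger$ versus $M^\dagger M$, use $\Theta=\tfrac13(\slatertestproj^{(1,2)}+2P)$, insert $p_r=(-1)^r/(r+1)$, and do the casework. The paper also takes the identity for $\Theta$ for granted.

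The step you flag is genuinely needed, but the justification you sketch does not deliver it.

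\begin{itemize}
\item Conjugating by $\SWAP^{(1,2)}$ sends $\ker G$ to $\ker G^\dagger$. The equality $\ker G^\dagger G=\ker G$ says nothing about $\ker G^\dagger$. On the full Fock space the two kernels differ: $\bigwedge^0\C^n\otimes\bigwedge^b\C^n\subseteq\ker G$, while $\ker G$ meets $\bigwedge^b\C^n\otimes\bigwedge^0\C^n$ trivially for $b>0$ (\cref{lem:slater_soundness_reduction}).
\item The correct version of your idea uses $[G,G^\dagger]=N_2-N_1$ (computed in \cref{lem:slater_soundness_reduction}). This vanishes on $\bigwedge^\eta\C^n\otimes\bigwedge^\eta\C^n$, so $\ker G=\ker G^\dagger$ there. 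But that only gives $[\slatertestproj^{(1,2)},\SWAP^{(1,2)}]=0$. The identity for $\Theta$ needs the stronger statement $\slatertestproj^{(1,2)}\SWAP^{(1,2)}=\slatertestproj^{(1,2)}$.
\item Your fallback, replacing $\slatertestproj^{(1,2)}$ by $\slatertestproj^{(1,2)}\symproj^{(1,2)}$, leaves $\Xi$ unchanged. However, it changes $P$, so the eigenvalue formula from the preceding theorem could no longer be quoted.
\end{itemize}

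The missing fact does hold in the fixed sector, and you can get it from the paper's own argument.

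\begin{itemize}
\item The proof of \cref{lem:nonzero_kernel_irreps} shows that every vector of $\iota(\ker G)$ lies in a component $U_{(\eta,\eta)}\cong{\det}^\eta$ of the $\genlingrp(2,\C)$ acting on blocks $1,2$. Hence $\Gamma(s_{12})$ acts on it by $(-1)^\eta$.
\item The same reordering computation as in \cref{eq:iota_transform} gives $\iota^{-1}\Gamma(s_{12})\iota=(-1)^\eta\SWAP^{(1,2)}$.
\item Therefore $\SWAP^{(1,2)}$ acts as $+1$ on the range of $\slatertestproj^{(1,2)}$.
\end{itemize}

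With this inserted, your proof is complete.
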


\begin{proof}
    Recall that the nonzero spectra of $\Xi$ and $\Theta = (\slatertestproj^{(1,2)} + 2P)/3$ agree, and we have just proven that the eigenvalues of $P$ are $p_r = \frac{(-1)^r}{r+1}$. Then the expression for the eigenvalues $\theta_r$ of $\Theta$ are immediate. One also checks that $\theta_0 = 1$, $0 \leq \theta_r < 1/3$ for odd $r$, and $1/3 < \theta_r \leq 5/9$ for all even $r \geq 2$.
\end{proof}

\subsection{Soundness for indeterminate particle number}\label{sec:slater_soundness}

The analysis above computes the eigenvalues of $\Xi$ within the fixed particle-number sectors (i.e., for weight spaces $(\eta, \eta, \eta)$). By \cref{thm:rejection-infidelity-bounds} this implies soundness for all number eigenstates. Here, we extend this soundness to the entire Fock space.

First, let us establish the fixed-number soundness.

\begin{corollary}[Soundness for fixed-number states]\label{cor:slater_soundness_fixed_number}
    For every integer $0 \leq \eta \leq n$, if $\ket{\psi} \in \bigwedge^\eta \C^n$, then the two-copy Slater tester has rejection probability at least
    \begin{equation}\label{eq:slater_soundness_fixed}
        q(\psi) \geq \frac{1}{6} \, \dist(\psi, \slaterdeterminantset)^2.
    \end{equation}
\end{corollary}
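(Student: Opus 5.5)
The plan is to apply \cref{thm:rejection-infidelity-bounds} with $k=2$, $m=3$, and with $\calH = {\bigwedge}^\eta \C^n$ as the ambient Hilbert space. The class is $\calC_\eta = \slaterdeterminantset \cap {\bigwedge}^\eta\C^n$, the $\eta$-particle Slater determinants.

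The hypotheses are checked in three steps.

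\textbf{Step 1 (the test projector).} The restriction of $\slatertestproj$ to $({\bigwedge}^\eta\C^n)^{\otimes 2}$ is well defined. The reason is that $G = \sum_j c_j\otimes c_j^\dagger$ maps the $(\eta,\eta)$ sector to the $(\eta-1,\eta+1)$ sector, so $K = G^\dagger G$ preserves $(\eta,\eta)$. The restricted projector is permutation invariant: $\SWAP$ conjugates $G$ to $\pm G^\dagger$-type terms, and $\ker K$ restricted to $(\eta,\eta)$ is $\SWAP$-invariant. One can verify this invariance directly from $\tr(D(\identity-D))$-type symmetry, or note that the tester was already treated as permutation invariant in the framework.

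\textbf{Step 2 (completeness).} For $\ket\psi\in{\bigwedge}^\eta\C^n$, \cref{prop:slater_completeness} gives that $\slatertestproj\ket\psi^{\otimes2}=\ket\psi^{\otimes2}$ if and only if $\ket\psi\in\calC_\eta$.

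\textbf{Step 3 (spectral gap).} \cref{cor:slater_spectral_gap} shows that $\Xi$ restricted to the fixed-particle sector $\Sym^3({\bigwedge}^\eta\C^n)$ has spectrum in $[0,5/9]\cup\{1\}$. Hence the gap is $\Delta\ge 4/9>1/3=\frac{k(k-1)}{m(m-1)}$.

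Plugging these into \cref{eq:framework-soundness} gives
\[
\alpha=\min\braces*{\tfrac12,\tfrac{3\Delta-1}{2}}=\min\braces*{\tfrac12,\tfrac{1}{6}}=\tfrac16,
\]
and therefore $q(\psi)\ge \frac16\dist(\psi,\calC_\eta)^2$.

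\textbf{Distance to the full class.} It remains to replace $\calC_\eta$ by the full class $\slaterdeterminantset$. For $\ket\psi$ with particle number $\eta$ and any Slater determinant $\ket\phi$ of particle number $\eta'\neq\eta$, the overlap $\braket{\phi|\psi}$ is $0$. So the infimum defining $\dist(\psi,\slaterdeterminantset)$ is attained within $\calC_\eta$, giving $\dist(\psi,\slaterdeterminantset)=\dist(\psi,\calC_\eta)$ (or the distance is $1$ on the other sectors, which is larger).

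The only subtle point is Step 3. \cref{lem:gradient-descent-distance} runs a gradient flow on the projective space of $\calH$, so the flow must stay inside the $\eta$ sector. This holds automatically because we take $\calH={\bigwedge}^\eta\C^n$ from the start, and $\Xi$ on $\calH^{\otimes3}$ is exactly the operator whose spectrum was computed. The main thing to double-check is that the eigenvalue analysis in \cref{sec:slater_spec_gap} covers all of $\Sym^3$ of the sector, including the zero eigenvalues and the multiplicity spaces. It does: nonzero eigenvalues of $\Xi$ coincide with those of $\Theta=(\slatertestproj^{(1,2)}+2P)/3$, which were fully enumerated as $\theta_r$.
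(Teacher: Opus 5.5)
Your proposal is correct and is the paper's proof: the paper simply invokes \cref{thm:rejection-infidelity-bounds} with $k=2$, $m=3$ and $\Delta=4/9$ from \cref{cor:slater_spectral_gap} on the fixed-$\eta$ sector. Your extra bookkeeping is also sound: working in $\calH={\bigwedge}^\eta\C^n$, and noting that Slater determinants of other particle numbers are orthogonal to $\ket\psi$, so the distance to $\slaterdeterminantset$ equals the distance to the $\eta$-particle ones.

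The one step you leave vague is permutation invariance in Step 1. It is best justified by $\SWAP\,G\,\SWAP=G^\dagger$ together with $[G,G^\dagger]=N_2-N_1$, which vanishes on the $(\eta,\eta)$ sector, so $K$ commutes with $\SWAP$ there. Do not appeal to the framework having treated the tester as permutation invariant: on the full Fock space $\ker K$ is not $\SWAP$-invariant.
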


\begin{proof}
    Immediately follows from \cref{thm:rejection-infidelity-bounds,cor:slater_spectral_gap}, taking $k = 2$, $m = 3$, and $\Delta = 4/9$.
\end{proof}

Next we reduce to \cref{cor:slater_soundness_fixed_number} from arbitrary pure states. The following lemma gives a bound in terms of the mass of $\psi$ on each number sector.

\begin{lemma}\label{lem:slater_soundness_reduction}
    For any $\ket{\psi} \in \bigwedge \C^n$, let
    \[
        \ket{\psi} = \sum_{\eta=0}^n \sqrt{p_\eta} \ket{\psi_\eta}
    \]
    be an orthogonal decomposition into number sectors, i.e., $\ket{\psi_\eta}\in \bigwedge^\eta \C^n$ and $\sum_\eta p_\eta = 1$. The two-copy Slater tester on $\ket{\psi}$ has rejection probability at least
    \begin{equation}\label{eq:slater_indefinite_soundness_bound}
        q(\psi) \geq \frac{1}{2} \left(1 - \sum_{\eta=0}^n p_\eta^2\right) + \sum_{\eta=0}^n p_\eta^2 \, q(\psi_\eta).
    \end{equation}
\end{lemma}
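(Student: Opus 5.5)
Write $\ket{\Psi} = \ket{\psi}^{\otimes 2} = \sum_{\eta,\eta'} \sqrt{p_\eta p_{\eta'}}\, \ket{\psi_\eta}\otimes\ket{\psi_{\eta'}}$ and decompose the two-copy Fock space into the orthogonal sectors $\calH_{\eta,\eta'} = {\bigwedge}^\eta\C^n \otimes {\bigwedge}^{\eta'}\C^n$. The first step is to show that the acceptance projector $\slatertestproj$ is block diagonal with respect to this decomposition.

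$G = \sum_j c_j\otimes c_j^\dagger$ maps $\calH_{\eta,\eta'}$ to $\calH_{\eta-1,\eta'+1}$. Hence $K = G^\dagger G$ maps each sector to itself, and so does its kernel projector $\slatertestproj$. Write $\Pi_{\eta,\eta'}$ for the restriction of $\slatertestproj$ to $\calH_{\eta,\eta'}$. The acceptance probability is then
\[
1-q(\psi) = \sum_{\eta,\eta'} p_\eta p_{\eta'} \,\bigl\| \Pi_{\eta,\eta'} (\ket{\psi_\eta}\otimes\ket{\psi_{\eta'}})\bigr\|^2 .
\]

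The diagonal terms $\eta=\eta'$ contribute exactly $p_\eta^2 (1-q(\psi_\eta))$, by the definition of $q(\psi_\eta)$ as the rejection probability on $\ket{\psi_\eta}^{\otimes 2}$.

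For the off-diagonal terms, a sector-by-sector bound is not enough, since $\ker G \cap \calH_{\eta,\eta'}$ can be large for $\eta\ne\eta'$. Instead, pair the sectors $(\eta,\eta')$ and $(\eta',\eta)$ and use the symmetry of the tester under $\SWAP$. Let $F$ be the fermionic swap, i.e. the swap with the sign convention that makes $F$ commute with $G$. Then the component of $\ket{\Psi}$ in $\calH_{\eta,\eta'}\oplus\calH_{\eta',\eta}$ is
\[
\sqrt{p_\eta p_{\eta'}}\,\bigl(v + Fv\bigr), \qquad v = \ket{\psi_\eta}\otimes\ket{\psi_{\eta'}},
\]
possibly with a sign $\pm$ in front of $Fv$.

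The key claim is that
\[
\bigl\|\slatertestproj (v \pm Fv)\bigr\|^2 \le \|v\|^2 = \tfrac12\|v\pm Fv\|^2 ,
\]
i.e. that the accepted weight on each off-diagonal pair is at most half. To prove it, write
\[
\|\slatertestproj(v\pm Fv)\|^2 = 2\|\Pi v\|^2 \pm 2\Re\langle \Pi v, F\Pi v\rangle .
\]
The cross term vanishes because $\Pi v$ and $F\Pi v$ lie in different sectors. Hence the left-hand side equals $2\|\Pi_{\eta,\eta'} v\|^2$, and the claim reduces to the single-sector bound
\[
\|\Pi_{\eta,\eta'} v\|^2 \le \tfrac12\|v\|^2 \quad\text{for product vectors } v .
\]

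This single-sector bound is the main obstacle. I would attack it with the $\spllinalg(2,\C)$ structure from \cref{claim:sl2_on_W}, using $E_{12}, E_{21}$ acting on the two registers. A vector in $\ker E_{21}$ of weight $(\eta,\eta')$ with $\eta>\eta'$ must be a lowest-weight vector. This constrains which isotypic components it can live in, and the multiplicities of those components are controlled by $\dim\bigwedge^{\eta'+s}\C^n$. If the product-vector bound turns out to be false as stated, the fallback is to bound the off-diagonal pairs jointly. The total mass $\sum_{\eta\ne\eta'}p_\eta p_{\eta'} = 1-\sum_\eta p_\eta^2$ then gives a rejected fraction of at least $\tfrac12$ as soon as the combined pair vectors have accepted weight at most half. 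This can be checked by showing that $\ker G$ intersected with the $(\eta,\eta')\oplus(\eta',\eta)$ block is contained in the span of the $\SWAP$-antisymmetric combinations on one side.

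Assembling: the diagonal sectors contribute rejection $\sum_\eta p_\eta^2 q(\psi_\eta)$. The off-diagonal pairs contribute rejection at least $\tfrac12\sum_{\eta\ne\eta'}p_\eta p_{\eta'} = \tfrac12\bigl(1-\sum_\eta p_\eta^2\bigr)$. Summing the two gives \cref{eq:slater_indefinite_soundness_bound}.
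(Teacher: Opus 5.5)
\textbf{There is a genuine gap in your treatment of the off-diagonal sectors.} Two steps fail.

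First, no swap can commute with $G$. Any operator exchanging the registers maps $\calH_{\eta,\eta'}$ to $\calH_{\eta',\eta}$. Conjugating $G$ by it therefore gives an operator that moves a particle from copy~2 to copy~1. For the ordinary swap $S$ (which is the right one here, since the paper uses the plain tensor product) one has $SGS=G^\dagger$. Hence $SKS=GG^\dagger=K+N_2-N_1$, and $S$ carries $\ker G$ onto $\ker G^\dagger$. So the tester is not swap-invariant off the diagonal sectors. As a result, $\|\slatertestproj(v+Sv)\|^2=\|\Pi_{\eta,\eta'}v\|^2+\|\Pi_{\eta',\eta}Sv\|^2$ is a sum of two unrelated terms, not $2\|\Pi_{\eta,\eta'}v\|^2$.

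Second, the single-sector bound you reduce to is false when $\eta<\eta'$. If $\eta=0$, every $c_j$ annihilates the first factor, so $\calH_{0,\eta'}\subseteq\ker G$ and $\Pi_{0,\eta'}v=v$. A concrete instance is $n=1$, $v=\ket{0}\otimes\ket{1}$. So the plan cannot be completed as written. The fallback is also not the right structure: the kernel in the block does not live in swap-antisymmetric combinations, it lies entirely inside one summand.

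\textbf{The missing idea is the opposite direction: for $\eta>\eta'$ the kernel is trivial.} From $[G,G^\dagger]=N_2-N_1$, any $v\in\calH_{\eta,\eta'}$ with $Gv=0$ satisfies $0\le\|G^\dagger v\|^2=\langle v,[G,G^\dagger]v\rangle=(\eta'-\eta)\|v\|^2$. This forces $v=0$ when $\eta>\eta'$. In your $\spllinalg(2,\C)$ language: a nonzero weight vector killed by the lowering operator is a lowest-weight vector, so its weight $\eta-\eta'$ must be nonpositive. You named this fact but did not draw the conclusion, and went looking for a $\tfrac12$ bound instead.

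With this in hand, the sector-by-sector argument you dismissed suffices, and it is exactly the paper's proof:
\begin{itemize}
\item sectors with $\eta>\eta'$ reject with probability exactly $1$;
\item sectors with $\eta<\eta'$ reject with probability at least $0$;
\item diagonal sectors reject with probability $q(\psi_\eta)$;
\item and $\sum_{\eta>\eta'}p_\eta p_{\eta'}=\tfrac12\bigl(1-\sum_\eta p_\eta^2\bigr)$.
\end{itemize}
Your per-pair claim $\|\slatertestproj(v+Sv)\|^2\le\|v\|^2$ does then hold, but for an asymmetric reason: the sector with more particles in copy~1 contributes nothing, and the other contributes at most $\|v\|^2$.
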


\begin{proof}
    Let $\calH_a = \bigwedge^a \C^n$. Decompose the two-copy space as
    \[
    \bigwedge \C^n \otimes \bigwedge \C^n = \bigoplus_{a,b=0}^n \calH_a \otimes \calH_b.
    \]
    On each $(a, b)$ sector, the test operator $G = \sum_j c_j \otimes c_j^\dagger$ maps
    \begin{equation}\label{eq:G-and-Gdag-spaces}
    \begin{split}
        G &: \calH_{a} \otimes \calH_{b} \to \calH_{a-1} \otimes \calH_{b+1},\\
        G^\dagger &: \calH_{a-1} \otimes \calH_{b+1} \to \calH_{a} \otimes \calH_{b}.
    \end{split}
    \end{equation}
    Therefore $K = G^\dagger G$ preserves every $(a, b)$ sector, and hence so too does the $0$-eigenspace projector $\slatertestproj$. Thus it suffices to analyze the components of $\ket{\psi}^{\otimes 2}$ independently.

    Now we show that $\ker G \cap (\calH_a \otimes \calH_b)$ is trivial for $a > b$. This will imply that the tester rejects all such $(a, b)$ sectors with probability $1$. Define $N_1 \coloneqq N \otimes \identity$, $N_2 \coloneqq \identity \otimes N$ where $N \coloneqq \sum_{j=1}^n c_j^\dagger c_j$. Observe that
    \begin{equation}
    \begin{split}
        GG^\dagger &= \sum_{j,k} c_j c_k^\dagger \otimes c_j^\dagger c_k\\
        &= \sum_{j,k} (\delta_{jk} - c_k^\dagger c_j) \otimes c_j^\dagger c_k\\
        &= N_2 - \sum_{j,k} c_k^\dagger c_j \otimes c_j^\dagger c_k.
    \end{split}
    \end{equation}
    By an identical calculation, $G^\dagger G = N_1 - \sum_{j,k} c_j^\dagger c_k \otimes c_k^\dagger c_j$. Thus we have the identity
    \begin{equation}
        [G, G^\dagger] = N_2 - N_1.
    \end{equation}
    Now take any $\ket{v} \in \calH_a \otimes \calH_b$ and suppose $\ket{v} \in \ker G$. Since $N_1\ket{v} = a\ket{v}$ and $N_2\ket{v} = b\ket{v}$,
    \[
    \braket{v | [G, G^\dagger] | v} = (b - a) \braket{v | v}.
    \]
    But by hypothesis $G\ket{v} = 0$, so
    \[
    \begin{split}
        \braket{v | [G, G^\dagger] | v} &= \braket{v | GG^\dagger | v} - \braket{v | G^\dagger G | v}\\
        &= \|G^\dagger \ket{v}\|^2 - \|G\ket{v}\|^2\\
        &= \|G^\dagger \ket{v}\|^2 \geq 0.
    \end{split}
    \]
    Hence we must have $(b-a) \braket{v | v} \geq 0$. But if $a > b$, this is only possible if $\ket{v} = 0$.

    Now we proceed to derive the main claim. Write
    \[
    \ket{\psi}^{\otimes 2} = \sum_{a,b} \sqrt{p_a p_b} \ket{\psi_a} \otimes \ket{\psi_b}.
    \]
    As established by \cref{eq:G-and-Gdag-spaces}, $\slatertestproj$ is block-diagonal in the $(a, b)$ sectors so there are no cross terms in
    \[
    \begin{split}
        q(\psi) &= \sum_{a,b,c,d} \sqrt{p_a p_b p_c p_d} \braket{\psi_a \otimes \psi_b | (\identity - \slatertestproj) | \psi_c \otimes \psi_d}\\
        &= \sum_{a,b} p_a p_b \braket{\psi_a \otimes \psi_b | (\identity - \slatertestproj) | \psi_a \otimes \psi_b}.
    \end{split}
    \]
    Since $\ker G \cap (\calH_a \otimes \calH_b) = \{0\}$ for $a > b$, all such summands are exactly $1$. For $a = b$, the summand is $q(\psi_a)$. And for $a < b$, $\braket{\psi_a \otimes \psi_b | (\identity - \slatertestproj) | \psi_a \otimes \psi_b} \geq 0$ trivially. Therefore
    \[
    q(\psi) \geq \sum_{a>b} p_a p_b + \sum_a p_a^2 q(\psi_a).
    \]
    We get \cref{eq:slater_indefinite_soundness_bound} by rearranging indices and using $\sum_a p_a = 1$:
    \[
    \sum_{a>b} p_a p_b = \frac{1}{2} \sum_{a\neq b} p_a p_b = \frac{1}{2} \left[ \left( \sum_a p_a \right)^2 - \sum_a p_a^2 \right]. \qedhere
    \]
\end{proof}

We now prove soundness for the entire Fock space. The constant we get is identical to \cref{eq:slater_soundness_fixed}.

\begin{theorem}[Soundness for arbitrary states]\label{thm:slater_soundness_arbitrary}
    For every $\ket{\psi} \in \bigwedge \C^n$, the two-copy Slater tester has rejection probability at least
    \begin{equation}
        q(\psi) \geq \frac{1}{6} \, \dist(\psi, \slaterdeterminantset)^2.
    \end{equation}
\end{theorem}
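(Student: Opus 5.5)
The plan is to combine \cref{lem:slater_soundness_reduction} with \cref{cor:slater_soundness_fixed_number}, and then to relate $\dist(\psi,\slaterdeterminantset)$ to the sector distances. Write $\ket{\psi} = \sum_\eta \sqrt{p_\eta}\ket{\psi_\eta}$ and set $\eps_\eta \coloneqq \dist(\psi_\eta,\slaterdeterminantset)$. Since every Slater determinant has definite particle number, we may fix a sector $\eta^*$ and a Slater determinant $\ket{\phi}\in\bigwedge^{\eta^*}\C^n$ that is closest to $\ket{\psi_{\eta^*}}$. This gives $\abs{\braket{\phi|\psi}}^2 = p_{\eta^*}\abs{\braket{\phi|\psi_{\eta^*}}}^2 = p_{\eta^*}(1-\eps_{\eta^*}^2)$, and therefore
\[
\dist(\psi,\slaterdeterminantset)^2 \leq 1 - p_{\eta^*}(1-\eps_{\eta^*}^2) = (1-p_{\eta^*}) + p_{\eta^*}\eps_{\eta^*}^2 .
\]
We are free to choose $\eta^*$. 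The natural choice is the sector carrying the largest weight $p_{\eta^*} = \max_\eta p_\eta$.

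The lower bound comes from \cref{lem:slater_soundness_reduction}, with every $q(\psi_\eta)$ replaced by the fixed-number bound $q(\psi_\eta)\geq \frac16\eps_\eta^2$ from \cref{cor:slater_soundness_fixed_number}:
\[
q(\psi) \geq \tfrac12\Bigl(1-\sum_\eta p_\eta^2\Bigr) + \tfrac16\sum_\eta p_\eta^2\eps_\eta^2 \;\geq\; \tfrac12\Bigl(1-\sum_\eta p_\eta^2\Bigr) + \tfrac16 p_{\eta^*}^2\eps_{\eta^*}^2 .
\]
It therefore suffices to prove the elementary inequality
\[
\tfrac12\Bigl(1-\sum_\eta p_\eta^2\Bigr) + \tfrac16 p_{\eta^*}^2\eps_{\eta^*}^2 \;\geq\; \tfrac16\bigl[(1-p_{\eta^*}) + p_{\eta^*}\eps_{\eta^*}^2\bigr].
\]

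Write $p=p_{\eta^*}$. Since $p$ is the largest weight, $\sum_\eta p_\eta^2 \leq p\sum_\eta p_\eta = p$, so $1-\sum_\eta p_\eta^2 \geq 1-p$. We then need
\[
\tfrac12(1-p) + \tfrac16 p^2\eps^2 \geq \tfrac16(1-p) + \tfrac16 p\eps^2 ,
\]
which is equivalent to $2(1-p) \geq p(1-p)\eps^2$. This holds because $p\eps^2\leq 1\leq 2$. Multiplying through by $6$ to clear the common factor leaves $3(1-p)+p^2\eps^2 \geq (1-p)+p\eps^2$, which is $2(1-p)\geq p\eps^2(1-p)$, and that is true. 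The inequality is therefore satisfied with room to spare, and the final constant matches \cref{eq:slater_soundness_fixed}.

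The step most likely to cause trouble is the reduction of $\dist(\psi,\slaterdeterminantset)$ to the single sector $\eta^*$. It requires that every Slater determinant be a number eigenstate, and that the overlap with $\ket{\psi}$ pick out only the $\eta^*$ component. Both facts follow from the definition of $\slaterdeterminantset$ as the Gaussian number eigenstates.

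A possible subtlety is that the infimum defining $\eps_{\eta^*}$ might not be attained. Here the state space is finite-dimensional and $\slaterdeterminantset$ is closed, so the infimum is attained. Even if it were not, an approximate minimizer would give the same bound in the limit.

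Combining the three displays then gives $q(\psi)\geq \frac16\dist(\psi,\slaterdeterminantset)^2$.
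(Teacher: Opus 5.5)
Your proposal is correct and takes the same route as the paper's proof. The paper also picks the sector of largest weight $p$, lower-bounds $q(\psi)\geq \frac{1-p}{2}+\frac{p^2\delta^2}{6}$ via \cref{lem:slater_soundness_reduction} and \cref{cor:slater_soundness_fixed_number}, upper-bounds $\dist(\psi,\slaterdeterminantset)^2\leq 1-p+p\delta^2$ using the closest Slater determinant in that sector, and concludes because the difference equals $\frac{(1-p)(2-p\delta^2)}{6}\geq 0$, exactly as in your algebra.
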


\begin{proof}
    Let $p_\eta$, $\ket{\psi_\eta}$ be as in \cref{lem:slater_soundness_reduction}. Choose $\eta_* = \argmax_{0 \leq \eta \leq n} p_\eta$ and define
    \[
    p \coloneqq p_{\eta_*}, \quad \delta \coloneqq \dist(\psi_{\eta_*}, \slaterdeterminantset).
    \]
    Then we bound:
    \begin{align*}
        q(\psi) &\geq \frac{1}{2} \left( 1 - \sum_{\eta=0}^n p_\eta^2 \right) + \sum_{\eta=0}^n p_\eta^2 q(\psi_\eta) \tag{By \cref{lem:slater_soundness_reduction}}\\
        &\geq \frac{1}{2} \left( 1 - p \sum_{\eta=0} p_\eta \right) + p^2 q(\psi_{\eta_*}) \tag{Since $0 \leq p_\eta \leq p$}\\
        &\geq \frac{1 - p}{2} + \frac{p^2 \delta^2}{6}. \tag{By \cref{cor:slater_soundness_fixed_number} and $\sum_\eta p_\eta = 1$}
    \end{align*}
    Let $\ket{\phi} \in \slaterdeterminantset$ be a closest Slater determinant to $\ket{\psi_{\eta_*}}$; it has overlap
    \[
    \abs{\braket{\phi | \psi}}^2 = \abs{\braket{\phi | \sqrt{p} | \psi_{\eta_*}}}^2 = p(1 - \delta^2),
    \]
    hence $\dist(\psi, \slaterdeterminantset)^2 \leq 1 - \abs{\braket{\phi | \psi}}^2 = 1 - p + p\delta^2$. Compare $\frac{1}{6} \, \dist(\psi, \slaterdeterminantset)^2$ to the bound on $q(\psi)$ derived above; the difference is
    \[
        \left( \frac{1 - p}{2} + \frac{p^2 \delta^2}{6} \right) - \frac{1 - p + p\delta^2}{6} = \frac{(1 - p)(2 - p\delta^2)}{6},
    \]
    which is nonnegative since $p, \delta \in [0, 1]$.
    This implies that
    \begin{equation}
        q(\psi) \geq \frac{1 - p}{2} + \frac{p^2 \delta^2}{6} \geq \frac{1 - p + p\delta^2}{6} \geq \frac{1}{6} \, \dist(\psi, \slaterdeterminantset)^2,
    \end{equation}
    which proves the claim.
\end{proof}

\subsection{The algorithm}

Now we show that there exists an efficient quantum circuit to implement a POVM corresponding to the eigenspectrum of $K$. Afterwards we formalize the sample complexity of the tolerant tester.

For $S_j = \prod_{k<j} Z_k \otimes Z_k$, observe that under Jordan--Wigner,
\[
G = \sum_{j=1}^n S_j (\sigma^{-}_j \otimes \sigma^{+}_j).
\]
Let $\CNOT_j$ be the $\CNOT$ gate controlled on qubit $j$ of the first register and targeting qubit $j$ of the second register. Let $\CZ_{j,k}$ be the gate between qubit $j$ of the first copy and qubit $k$ of the second copy. Then define
\begin{equation}\label{eq:slater_V_circuit}
    V = RC, \quad \text{where } R = \prod_{1 \leq j < k \leq n} \CZ_{j,k} \quad \text{and} \quad C = \prod_{j=1}^n \CNOT_j.
\end{equation}
Then $V$ diagonalizes the second register of $G$. Indeed, first observe that
\[
C G C^\dagger = \sum_{j=1}^n \left( \prod_{k<j} \identity \otimes Z_k \right) \sigma_j^{-} \otimes n_j, \qquad n_j = \frac{\identity - Z_j}{2}.
\]
The $Z$ string is removed with the all-to-all $\CZ$ circuit $R$:
\[
R (CGC^\dagger) R^\dagger = \sum_{j=1}^n \sigma_j^{-} \otimes n_j.
\]
Hence
\[
V K V^\dagger = \sum_{j,k=1}^n \sigma_j^{-} \sigma_k^{+} \otimes n_j n_k.
\]

We can measure in the eigenbasis by first register by the following procedure. First we measure the second copy, which returns some bit string $b \in \{0,1\}^n$. Let $S = \{j : b_j = 1\}$. The operator projected to the attendant subspace is
\[
K_S = \sum_{j,k \in S} \sigma_j^{-} \sigma_k^{+} \equiv J_S^{-} J_S^{+}, \quad \text{where } J_S^{\pm} = \sum_{j \in S} \sigma_j^{\pm}.
\]
This defines a collective spin-$1/2$ algebra. In particular, we only need to focus on the qubits in $S$ on the first register; the resulting operators $J_S^{\pm}$ along with $J_S^z = \frac{1}{2} \sum_{j \in S} Z_j$ generate a copy of $\splunitalg(2)$ on those qubits. This is exactly the premise of the quantum Schur transform \cite{bacon2006efficient}, which can implemented to $\eps$ precision in $\poly(|S|, \log\eps^{-1})$ gates. Indeed, using the standard identity $J_S^{-} J_S^{+} = J_S^2 - (J_S^{z})^2 + J_S^{z}$, we see that $K_S$ is diagonal in the angular-momentum basis $\{\ket{J,m}\}_{J,m}$ with corresponding eigenvalues $J(J+1) - m^2 + m$. Note that $J$ and $m$ take half-integer values because we are working with the spin-$1/2$ representation of $\splunitalg(2)$.

\begin{algorithm}[H]
    \caption{Base tester for Slater determinants.}\label{alg:slater-tester}
    \KwInput{Two copies of a fermionic pure state $\ket{\psi}$.}

    \BlankLine

    Apply the circuit $V$ defined in \cref{eq:slater_V_circuit}.
    
    Measure the occupation numbers of second register and set $S\subseteq[n]$ to be the modes which returned $1$.

    Apply the quantum Schur transform on the $S$ modes in the first register and measure, obtaining angular-momentum eigenstate $\ket{J,m}$.

    \uIf{$J(J+1) - m^2 + m = 0$}{
        \KwRet{\accept}
    }
    \Else{
        \KwRet{\reject}
    }
\end{algorithm}

Because the quantum Schur transform is efficient to implement \cite{bacon2006efficient}, so too is \cref{alg:slater-tester}. Inserting this base procedure into \cref{alg:constant-copy-tester} along with the soundness guarantee of \cref{thm:slater_soundness_arbitrary} yields the constant sample complexity claim of \cref{thm:testing-slater}.

\begin{proof}[Proof of \cref{thm:testing-slater}]
    By \cref{prop:slater_completeness}, the Slater tester has perfect completeness. Let $q(\psi)$ be the rejection probability for any pure state $\ket{\psi}$. By \cref{thm:slater_soundness_arbitrary,eq:q_upper_bound_easy} we have
    \[
    \frac{1}{6} \, \dist(\psi, \slaterdeterminantset)^2 \leq q(\psi) \leq 2 \, \dist(\psi, \slaterdeterminantset)^2.
    \]
    Hence by \cref{thm:framework-tolerant-testing} the claim follows with $k = 2$ and $\beta = \sqrt{12}$.
\end{proof}

\section{Testing bosonic Gaussian states}

In this section, we prove that there is a dimension-independent tester for bosonic Gaussian states.

\begin{theorem}\label{thm:bosonic-gaussian-tester}
Let $\beta = \sqrt{405 / 2}$. For all $0 \leq \eps_1 < \eps_2 < 1$ with $\eps_2 > \beta \eps_1$ and $ 0 < \delta \leq \frac12$, there is an $O\paren*{\frac{\log (1/\delta)}{\paren*{\eps_2 - \beta \eps_1}^2}}$-copy $(\eps_1, \eps_2, \delta)$ tester for bosonic Gaussian states.
The tester acts on three copies at a time.
\end{theorem}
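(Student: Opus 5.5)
We will follow the framework of \cref{sec:framework} with $k=3$ and $m=7$. The value $\beta=\sqrt{405/2}$ suggests $\alpha_1=3$ and $\alpha_2 = 2/135$. With $\alpha_2=(7\Delta-1)/6$ this corresponds to $\Delta = 149/945$, which is slightly above the threshold $1/7 = 135/945$. So the target is a spectral gap of $\Xi$ exceeding $1/7$ by a constant.

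\textbf{Test projector and completeness.} The tester $\bosonictestproj$ is the projector onto the span of $\{\ket{\psi}^{\otimes 3}: \ket{\psi}\in\bosonicgaussianset\}$. This span is the top irrep of $U^{\otimes 3}$ for the Jacobi group, paired with the trivial irrep of the commutant $\Orth(2)$. To describe it concretely, apply the three-mode orthogonal beam splitter $O\in\Orth(3)$ that sends each mode triple to $((a^{(1)}+a^{(2)}+a^{(3)})/\sqrt3, \cdot, \cdot)$. This maps $\ket{\psi}^{\otimes3}$ for Gaussian $\psi$ to (Gaussian with displacement) $\otimes$ (zero-mean Gaussian on the last two registers). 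The residual $\Orth(2)$ acting on those two registers is generated by $R=\ri\sum_i(a_i^{(2)}\otimes a_i^{(3)\dagger}-h.c.)$, up to the ambient symmetry. Hence the test projector is the kernel of this quadratic $R$, measured as described in the technical overview. Completeness follows because the state on registers $2,3$ is $\phi\otimes\phi$ with $\phi$ zero-mean Gaussian, and such a state lies in $\ker\Gamma$. The converse, that $\ket{\psi}^{\otimes3}\in\ker R$ forces $\psi$ Gaussian, follows from the same commutant argument: only the trivial $\Orth(2)$ isotype contains product states of Gaussian type.

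\textbf{Spectral gap.} We need the spectrum of $\Xi=\symproj^{(1..7)}\bosonictestproj^{(1,2,3)}\symproj^{(1..7)}$, or equivalently of $\Theta=\Pi\,\symproj^{(1..7)}\,\Pi$. By Howe duality, $\bigvee(\C^n\otimes\C^7)$ decomposes under $\Jacobi(n)\times\Orth(6)$ after splitting off the center-of-mass mode, since the commutant of $U^{\otimes 7}$ is $\Orth(6)$. The symmetric subspace $\Sym^7$ and the test projector are both expressible in terms of the commutant $\Orth(6)\supset \Orth(2)\times\Orth(4)$. Here $\bosonictestproj^{(1,2,3)}$ is the projector onto $\Orth(2)$-invariants in the $(1,2,3)$ block with the center of mass removed. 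Projecting onto the $\Orth(2)$-invariants and then symmetrizing reduces, inside each $\Orth(6)$ isotype, to a branching problem of the same kind as the $\Orth(3)$ and $\spingrp$ computations above.

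We would first compute the eigenvalues as overlaps $\langle X, \Sym X\rangle/\langle X,X\rangle$ of $\Orth(2)$-invariant vectors in harmonic polynomial models, following the harmonic-space method of \cref{lem:harmonic_space} but for the orthogonal group. Each isotype contributes an eigenvalue given by an average of zonal functions, for example Gegenbauer polynomials evaluated at inner products of permuted coordinate frames. We would then bound the largest eigenvalue below $1$ by a constant, hoping for $796/945$, with the top isotype giving eigenvalue $1$. Independence from $n$ should follow as in \cref{lem:spectral-sequence-conditions}: highest-weight vectors for small $n$ embed into those for larger $n$.

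\textbf{Conclusion and main obstacle.} With $\Delta>1/7$, \cref{thm:rejection-infidelity-bounds} gives $\alpha_2\,\eps^2\le q\le 3\eps^2$, and \cref{thm:framework-tolerant-testing} yields the claimed sample complexity and $\beta$. The main obstacle is the spectral gap computation. The $m=7$ symmetrization over an $\Orth(6)$ commutant involves many isotypes, and controlling all of them uniformly, rather than only the first few, is delicate. I would try first to show that the eigenvalues are monotone or alternating-decaying in the isotype label, as happened in the fermionic and Slater cases, so that only the first two nontrivial isotypes need explicit computation.
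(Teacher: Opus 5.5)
\textbf{There is a genuine gap.} Your reduction to the framework ($k=3$, $m=7$, then \cref{thm:rejection-infidelity-bounds} and \cref{thm:framework-tolerant-testing}) matches the paper, and so does your completeness argument. However, the step that carries the whole theorem, a spectral gap for $\Xi$ on seven registers strictly above $1/7$, is only hoped for. The route you sketch for it does not work as stated.

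\textbf{Why the sketched route fails.} Inside the commutant $\Orth(W_7)\cong\Orth(6)$, the $\Orth(2)$ defining $\bosonictestproj^{(1,2,3)}$ fixes a $4$-dimensional subspace pointwise; it is not the stabilizer of a single vector. Its invariants in a typical $\Orth(6)$ irrep are therefore many-dimensional. Already for degree-one harmonics on $S^5$ they form a $4$-dimensional space. So there is no single zonal vector $X$ per isotype, and no one-line formula $\langle X,\Sym X\rangle/\langle X,X\rangle$ as in the Slater case. Moreover, $\symproj^{(1,\ldots,7)}$ is the Reynolds operator of the finite group $S_7\subset\Orth(6)$. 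You would have to diagonalize $\Pi_{S_7}\Pi_{\Orth(2)}\Pi_{S_7}$ inside each of infinitely many isotypes. Hoping for monotonicity in an isotype label gives no mechanism for this. In the $k=2$, $m=3$ cases that trick worked only because $\Theta$ collapsed to one extra term with one-dimensional kernel pieces per irrep.

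\textbf{The idea you are missing.} The paper never computes this spectrum. On the symmetric subspace, $\Xi=\symproj\overline{\ReynoldsOp}\symproj$, where $\overline{\ReynoldsOp}$ averages $\ReynoldsOp_{\Orth(W_{7,I})}$ over all triples $I\subset[7]$.
\begin{itemize}
\item \cref{lem:bosonic-bound} proves the operator inequality $I-\overline{\ReynoldsOp}\succeq c_m\bigl(I-\ReynoldsOp_{\Orth(W_m)}\bigr)$ by induction on $m$. It holds for every representation, so $n$-independence comes for free and no highest-weight embedding is needed.
\item Each inductive step is \cref{lem:bosonic-one-step-bound}: average the Reynolds operators of the stabilizers $\Orth(d-1)$ of the $d+1$ vertices of a regular simplex in $\R^d$.
\item This is where zonal functions legitimately enter. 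Because $(\Orth(d),\Orth(d-1))$ is a Gelfand pair, each stabilizer projector has rank at most one in each irrep. The resulting Gram matrix has constant off-diagonal entry $x_l=C_l^{((d-2)/2)}(-1/d)/C_l^{((d-2)/2)}(1)$.
\item A fourth-moment bound gives $|x_l|\le x_3$ for $l\ge4$. Hence $\overline{\ReynoldsOp}\preceq\frac{d+2}{d^2}$ off the invariants, uniformly in $l$.
\item Multiplying the factors $\frac{(d-2)(d+1)}{d^2}$ for $d=3,\ldots,6$ gives $\Delta\ge c_7=7/45$. Note that $c_m>6/(m(m-1))$ holds first at $m=7$, which is why seven registers are used.
\end{itemize}

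\textbf{Two smaller issues.}
\begin{itemize}
\item Arithmetic: $\alpha_2=2/135$ corresponds to $\Delta=7/45=147/945$, not $149/945$.
\item The converse also needs a real argument. You need that $\ket{\psi}^{\otimes 3}$ lying in the top irrep forces $\psi$ to be Gaussian, so that the gradient-flow limit lies in $\bosonicgaussianset$. This does not follow from the commutant structure; its analogue is false for $k=2$, where the top irrep is all of $\Sym^2(H)$. The paper proves it in \cref{lem:top-irrep-bosonic-gaussians}, using the Wigner function, a fixed element of $\Orth(2)$ mixing three registers, continuity, and Hudson's theorem.
\end{itemize}
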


We start, in \cref{sec:bosonic-projector}, by constructing a three-copy projector $\bosonictestproj$ with perfect completeness.
Then, in \cref{sec:bosonic-gaussians-spectral-gap}, we show that $\bosonictestproj^{(7)}$, the symmetrization of $\bosonictestproj$ on seven registers, has a sufficiently large spectral gap.
In \cref{sec:bosonic-tester-proof}, we use this to prove \cref{thm:bosonic-gaussian-tester}.

\subsection{Three-copy tester with perfect completeness}\label{sec:bosonic-projector}

Recall that the $n$-mode bosonic gaussian states $\bosonicgaussianset(n) = \bracebar*{U(g) \ket{0}}{g \in \Jacobi(n)}$ are the orbit of an irreducible representation of the Jacobi group $\Jacobi(n) = \Heisenberg(n) \rtimes \Mp(2n)$.
Tensor powers of this representation have the following decomposition into irreps; the commutant of $\Jacobi(n)$ is $\Orth(k-1)$ (cf. $\Orth(k)$ for $\Mp(2n)$).

\begin{lemma}[Irreps of powered Jacobi representation \cite{goodman2004multiplicity}]\label{lem:jacobi-decomposition}
Let $k$ be a positive integer and $H$ the $n$-mode bosonic Fock space.
Then
\begin{align}
H^{\otimes k} & \overset{\Jacobi(n)}{\cong}
\bigoplus_{\lambda \in \mathcal I_{k}^{\Jacobi(n)}} V^{\Jacobi(n)}_{\lambda + \paren*{\frac{k}{2}, \ldots, \frac{k}{2}}} \otimes V_{\lambda}^{\Orth(k-1)},
\end{align}
where
\begin{align}
\mathcal I_{k}^{\Jacobi(n)}
&= \bracebar{\lambda}{\lambda_1 \geq \lambda_2 \geq \cdots \geq \lambda_n, \paren*{\lambda^{\transpose}}_1 + \paren*{\lambda^{\transpose}}_2 \leq k - 1}.
\end{align}
\end{lemma}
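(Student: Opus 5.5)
The plan is to reduce the Jacobi-group decomposition to the known Howe duality $(\Mp(2n), \Orth(k))$ for the metaplectic (oscillator) representation on $H^{\otimes k}$. After that, we peel off the displacement (Heisenberg) part using a linear-optical change of basis on the copy index.

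\textbf{Step 1: separate the center of mass.} Apply a real orthogonal $k\times k$ transformation $O$ acting on the copy index, mode by mode: $a_{\alpha,j}\mapsto \sum_\beta O_{\alpha\beta} a_{\beta,j}$. Choose the first row of $O$ to be $(1,\ldots,1)/\sqrt{k}$. This is a passive Gaussian (beam-splitter) unitary $W$ on $H^{\otimes k}\cong H_{kn}$. It commutes with the diagonal action of $\Mp(2n)$, since the latter acts identically on every copy and $O$ is real orthogonal. It conjugates the diagonal displacement $D_r^{\otimes k}$ into $D_{\sqrt{k}\,r}\otimes \identity^{\otimes(k-1)}$. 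Hence, as a $\Jacobi(n)$-representation,
\[
H^{\otimes k}\cong \paren*{H \text{ with the Jacobi action rescaled by } \sqrt{k}}\otimes H^{\otimes (k-1)},
\]
where $\Jacobi(n)$ acts on the last $k-1$ factors only through $\Mp(2n)$ (the Heisenberg part acts trivially). The first factor is still irreducible under $\Jacobi(n)$: rescaling the displacement is an automorphism of $\Heisenberg(n)$ that fixes $\Mp(2n)$, and Stone--von Neumann gives irreducibility.

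\textbf{Step 2: apply the metaplectic Howe duality to the remaining copies.} For $H^{\otimes (k-1)}$, invoke the classical $(\Mp(2n),\Orth(k-1))$ duality:
\[
H^{\otimes(k-1)}\cong\bigoplus_\lambda V^{\Mp(2n)}_{\lambda+(\frac{k-1}{2})^n}\otimes V^{\Orth(k-1)}_\lambda ,
\]
with $\lambda$ ranging over $\Orth(k-1)$ highest weights whose Young diagram has first two columns summing to at most $k-1$ and at most $n$ rows. This is exactly the index set $\mathcal I_k^{\Jacobi(n)}$. Then tensor with the first factor. For each metaplectic irrep $\sigma$, the representation $(\text{irreducible oscillator of } \Jacobi(n))\otimes \sigma$, with $\Heisenberg(n)$ acting only on the first factor, is irreducible. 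This is the standard fact that $\Jacobi(n)$ irreps with nontrivial central character are parametrized by $\Mp(2n)$ irreps via $\sigma\mapsto \omega\otimes\sigma$; the proof goes by Schur's lemma, because the commutant of $\Heisenberg(n)$ on $\omega\otimes\sigma$ is $\identity\otimes \mathrm{End}(\sigma)$. The highest weight shifts by the $\frac12$-weight of the oscillator, giving the label $\lambda+(\frac{k}{2})^n$. Finally, the $\Orth(k-1)$ factor is exactly the commutant: it is the orthogonal group acting on the last $k-1$ rows of $O$, i.e., the stabilizer of the center-of-mass direction in $\Orth(k)$.

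\textbf{Main obstacle.} The delicate step is making the highest-weight labeling and the irreducibility/multiplicity-freeness rigorous for the infinite-dimensional (non-rational) representations involved. I would handle this by citing \cite{goodman2004multiplicity} for the $\Mp\times\Orth$ duality on the Fock space and its lowest-weight module description. I would then check directly, via the lowest-weight vector $\ket{0}\otimes v_\lambda$, that the $\Jacobi(n)$ modules $\omega\otimes\sigma$ are pairwise inequivalent for distinct $\lambda$.
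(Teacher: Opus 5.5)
The paper gives no proof of this lemma; it imports it directly from \cite{goodman2004multiplicity}. Your argument is correct, and it is a genuinely different, more self-contained derivation.

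You first split off the center-of-mass mode $b_{1,j}=k^{-1/2}\sum_{l}a_{j,l}$ by a real orthogonal (passive) change of basis on the copy index. This turns $D_r^{\otimes k}$ into $D_{\sqrt{k}\,r}$ acting on that mode alone, and it commutes with the diagonal metaplectic action. That commutation is cleanest to justify at the Lie-algebra level: the diagonal generators $E_{i,j}, F_{i,j}, F_{i,j}^\dagger$ are $\Orth(k)$-invariant, which avoids any metaplectic sign ambiguity. You then apply the classical $(\Mp(2n),\Orth(k-1))$ duality to the $k-1$ relative modes and finish with the Stone--von Neumann/Schur argument. Write $\omega_k$ for the center-of-mass oscillator. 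The same Schur argument gives $\mathrm{Hom}_{\Jacobi(n)}(\omega_k\otimes\sigma,\omega_k\otimes\sigma')\cong\mathrm{Hom}_{\Mp(2n)}(\sigma,\sigma')$, so pairwise inequivalence for distinct $\lambda$ follows without a separate lowest-weight-vector check.

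What your route buys is transparency. It shows concretely why the commutant shrinks from $\Orth(k)$ to the stabilizer $\Orth(k-1)$ of $(1,\ldots,1)$: the center-of-mass mode is absorbed into the Heisenberg factor. This is exactly the structure the paper later uses through the choice $v_k=(1,\ldots,1)/\sqrt{k}$ and the mode $\tilde a_{i,3}$ in implementing the three-copy tester. It also makes explicit that the center of $\Heisenberg(n)$ acts with a character scaled by $k$, which the label $V^{\Jacobi(n)}_{\lambda+(k/2)^n}$ leaves implicit.

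Because the paper never defines that label, you should fix the convention yourself. Take it to be the $\mathfrak{u}(n)$-weight, for the symmetrized generators $E_{i,j}$, of the joint kernel of the Heisenberg and $\mathfrak{sp}$ lowering operators, i.e.\ of $\ket{0}_{\mathrm{cm}}\otimes(\text{lowest $K$-type of }\sigma)$. With that convention your shift by $(\tfrac12)^n$ is exactly right and reproduces the $k=2,3$ special cases in the subsequent corollary.

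The citation route buys only brevity. Both approaches ultimately rest on Howe duality for the analytic heavy lifting.
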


\begin{corollary}\label{cor:jacobi-decomposition-special-cases}
We have the special cases
\begin{align}
H^{\otimes 2} & \overset{\Jacobi(n)}{\cong}
\paren*{
V^{\Jacobi(n)}_{1, 1, \ldots, 1}
\otimes
V^{\Orth(1)}_{0}
}
\oplus
\paren*{
V^{\Jacobi(n)}_{2, 1, \ldots, 1}
\otimes
V^{\Orth(1)}_{1}
},
\label{eq:jacobi-decomposition-2}
\\
H^{\otimes 3} & \overset{\Jacobi(n)}{\cong}
\paren*{
V^{\Jacobi(n)}_{\frac{3}{2}, \ldots, \frac{3}{2}}
\otimes
V_{0}^{\Orth(2)}
}
\oplus
\paren*{
\bigoplus_{l=1}^{\infty}
V_{\frac32 + l, \frac32, \ldots, \frac32}^{\Jacobi(n)}
\otimes
V^{\Orth(2)}_{l}
}
\oplus
\paren*{
V_{\frac52, \frac52, \frac32, \ldots, \frac32}^{\Jacobi(n)}
\otimes
V^{\Orth(2)}_{1, 1}
}
\end{align}
(For $n < 2$, some of these irreps do not appear.)
\end{corollary}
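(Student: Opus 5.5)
The plan is to specialize \cref{lem:jacobi-decomposition} to $k=2$ and $k=3$ by enumerating the index set $\mathcal I_k^{\Jacobi(n)}$. The defining constraint is $(\lambda^{\transpose})_1 + (\lambda^{\transpose})_2 \leq k-1$. We read $\lambda^{\transpose}$ as the conjugate partition, so $(\lambda^{\transpose})_j$ is the number of rows of $\lambda$ of length at least $j$. The same partition also labels an irrep of $\Orth(k-1)$; the constraint is exactly the usual condition that the first two columns of the diagram have total length at most $k-1$. For each admissible $\lambda$ we then shift by $(\frac k2,\ldots,\frac k2)$ to get the $\Jacobi(n)$ label and pair it with $V_\lambda^{\Orth(k-1)}$.

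\textbf{Case $k=2$.} The constraint reads $(\lambda^{\transpose})_1+(\lambda^{\transpose})_2\le 1$. Hence $\lambda$ has at most one row, and if it has one row then $(\lambda^{\transpose})_2=0$, so that row has length at most $1$. The admissible partitions are therefore $\lambda=\emptyset$ and $\lambda=(1)$. Shifting by $(1,\ldots,1)$ gives the $\Jacobi(n)$ labels $(1,\ldots,1)$ and $(2,1,\ldots,1)$. These are paired respectively with the trivial and the sign ($\det$) irreps of $\Orth(1)=\{\pm1\}$, which we label $0$ and $1$. This is \cref{eq:jacobi-decomposition-2}. As a sanity check, the $\Orth(1)$ label should match the parity under $\SWAP$: the symmetric part carries the Gaussian orbit, and the antisymmetric part is the complement.

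\textbf{Case $k=3$.} Now $(\lambda^{\transpose})_1+(\lambda^{\transpose})_2\le 2$, which allows three kinds of partition. First, $\lambda=\emptyset$. Second, one row of arbitrary length $l\ge1$, where $(\lambda^{\transpose})_1=1$ and $(\lambda^{\transpose})_2\le1$. Third, two rows, which forces $(\lambda^{\transpose})_2=0$ and hence $\lambda=(1,1)$. A partition with three or more rows would give $(\lambda^{\transpose})_1\ge3$, so no other shapes occur. Shifting by $(\frac32,\ldots,\frac32)$ yields the three summands as stated, with $\Orth(2)$ labels $0$, $l$, and $(1,1)$ respectively; the label $(1,1)$ is the determinant irrep of $\Orth(2)$.

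Finally, the parenthetical about small $n$ follows because a $\Jacobi(n)$ highest weight has only $n$ entries. For $n<2$ the label $(\frac52,\frac52,\frac32,\ldots)$ is unavailable, since $\lambda$ must have at most $n$ rows, so that term drops out. The only step needing care is confirming that the paper's convention for $\lambda^{\transpose}$ and the $\Orth(k-1)$ labeling (in particular, that $(1,1)$ and $(1)$ for $\Orth(1)$ denote determinant-type irreps) agree with \cite{goodman2004multiplicity}. I would check this against the dimension and trace counts of $H^{\otimes 2}$ in the truncated one-mode case. Otherwise the corollary is a direct enumeration.
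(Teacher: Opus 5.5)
Your proposal is correct and takes the same route as the paper. The paper gives no separate argument: the corollary is read off from \cref{lem:jacobi-decomposition} by enumerating the partitions with $(\lambda^{\transpose})_1+(\lambda^{\transpose})_2\le k-1$ (namely $\emptyset,(1)$ for $k=2$ and $\emptyset,(l),(1,1)$ for $k=3$), shifting by $(\frac k2,\ldots,\frac k2)$, and dropping $(1,1)$ when $n<2$, which is exactly your enumeration.
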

Let $a_{i, j} = I_{H}^{\otimes (j-1)} \otimes a_i \otimes I_H^{\otimes (k-j)}$ be the bosonic annihilation operator $a_i$ acting on the $j$-th (of $k$) registers. From these we can construct explicit generators of the Heisenberg group $\Heisenberg(n)$, the metaplectic group $\Mp(2n)$, and the orthogonal group $\Orth(k-1)$.
\paragraph{Heisenberg generators}
Recall that $I$, $\braces*{a_i}_{i=1}^n$, $\braces*{a_i^{\dagger}}_{i=1}^n$ are the generators of $\Heisenberg(n)$.
Define
\begin{align}
    A_{i} &= \frac{1}{\sqrt{k}}\sum_{l=1}^k a_{i, l}, & 1 \leq i \leq n.
\end{align}
These satisfy the same commutation relations as $\braces*{a_i}_i$:
\begin{align}
\bracket*{A_{i}, A_j} &= 0, &
\bracket*{A_{i}, A_j^{\dagger}} &= \delta_{i, j} I,
\end{align}
and represent the generators of $\Heisenberg(n)$ on $H^{\otimes k}$.
\paragraph{Metaplectic generators}
Recall that the symplectic algebra corresponding to the metaplectic group has generators 
$\braces*{a_i^{\dagger} a_j, a_i a_j, a_i^{\dagger} a_j^{\dagger}}_{1 \leq i,  j \leq n}$.
Define
\begin{align}
    E_{i, j} &= \frac12 \sum_{l=1}^k \paren*{
        a_{i, l}^{\dagger} a_{j, l} +
         a_{j, l} a_{i, l}^{\dagger}
    }, &
    F_{i, j} &= \sum_{l=1}^k a_{i, l} a_{j, l}
\end{align}
for $1 \leq i, j \leq n$.
Then $\bracebar*{E_{i, j}, F_{i, j}, F_{i, j}^{\dagger}}{1 \leq i, j \leq n}$ satisfy the same commutation relations as the corresponding single-register generators and represent the generators of $\Mp(2n)$ on $H^{\otimes k}$.

\paragraph{Orthogonal generators.}
Let $\paren*{v_i}_{i=1}^k$ be an orthonormal basis of $\mathbb R^k$.
Define
\begin{align}
b_{i, p} &= \sum_{q=1}^k v_{p, q} a_{i, q}, & 1 \leq i \leq n, 1 \leq p \leq k\\
M_{p, q} &=  \sum_{i=1}^n \paren*{b_{i, p}^{\dagger} b_{i, q} - b_{i, q}^{\dagger} b_{i, p}}, & 1 \leq p < q \leq k,\\
Z &= \exp\paren*{\ri \pi N}, & N &= \sum_{i=1}^n b_{i, 1}^{\dagger} b_{i, 1}.
\label{eq:orthogonal-reflection-generator}
\end{align}
Then $\braces*{M_{p, q}}$ represent the generators of the special orthogonal algebra, and $Z$ the reflection, which together generate the orthogonal group.
If we take $v_k = \paren*{1, \ldots, 1} / \sqrt{k}$ then the remaining $\braces*{M_{p, q}}_{1 \leq p < q < k}$ (with $Z$) generate $\Orth(k-1)$.

For $k\geq 3$, but not $k = 2$,  the ``top'' irrep completely captures tensor powers of bosonic Gaussian states.
Our tester will be based on projecting on to the top irrep for $k=3$, and we denote this projector as $\bosonictestproj$.

\begin{lemma}\label{lem:top-irrep-bosonic-gaussians}
Let $k \geq 2$ be a positive integer and consider the decomposition from \cref{lem:jacobi-decomposition}.
\begin{align}
V_{\frac{k}{2}, \ldots, \frac{k}{2}}^{\Jacobi(n)} &\cong 
\overline{\Span}\bracebar{\ket{\phi}^{\otimes k}}{\text{$\ket{\phi}$ is bosonic Gaussian}}.
\label{eq:top-irrep-span-bosonic-gaussians}
\end{align}
For $k=2$,
\begin{align}
    V_{1, \ldots, 1}^{\Jacobi(n)} &=
    \overline{\Span}\bracebar{\ket{\phi}^{\otimes 2}}{\text{$\ket{\phi}$ is bosonic Gaussian}}
=
\Sym^2(H).
\label{eq:top-irrep-sym}
\end{align}
For $k\geq 3$, for any state $\ket{\psi} \in H$, $\ket{\psi}^{\otimes k} \in V^{\Jacobi(n)}_{\frac{k}{2}, \ldots, \frac{k}{2}}$ iff $\ket{\psi}$ is bosonic Gaussian.
\end{lemma}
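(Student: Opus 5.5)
Three claims are needed: (i) the top irrep equals the closed span of $\ket{\phi}^{\otimes k}$ over Gaussian $\ket{\phi}$; (ii) for $k=2$ this span is all of $\Sym^2(H)$; (iii) for $k\ge 3$, $\ket{\psi}^{\otimes k}$ lies in the top irrep only if $\ket\psi$ is Gaussian. The plan uses the explicit generators just introduced. For (i), note that $\ket{0}^{\otimes k}$ is the joint vacuum, annihilated by every $A_i$, so it is Heisenberg-invariant up to the phase conventions. It is also annihilated by every $b_{i,p}$, hence by all $M_{p,q}$, and fixed by $Z$. It therefore transforms trivially under $\Orth(k-1)$ and lies in the isotypic component with $\lambda=0$, namely $V^{\Jacobi(n)}_{(k/2,\ldots,k/2)}\otimes V_0^{\Orth(k-1)}$. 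That component is a single irrep, since the multiplicity space is one-dimensional. Because $U(g)^{\otimes k}$ maps $\ket{0}^{\otimes k}$ to $\ket{\phi}^{\otimes k}$ with $\ket\phi=U(g)\ket0$, the Gaussian tensor powers are exactly the orbit of this vector. Its closed span is a nonzero invariant subspace of an irrep, so by irreducibility it is the whole irrep.

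For (ii), with $k=2$ the commutant $\Orth(1)=\{\pm1\}$ is generated by $Z=e^{\ri\pi N}$, where $N$ counts the excitations in the antisymmetric mode $b_{i,1}=(a_{i,1}-a_{i,2})/\sqrt2$. The $50{:}50$ beamsplitter to modes $(b_{i,1},b_{i,2})$ conjugates $\SWAP$ to $(-1)^{N}$: swapping the registers sends $b_{i,1}\mapsto -b_{i,1}$ and fixes $b_{i,2}$. So $Z=\SWAP$, and the $+1$ eigenspace, which by (i) and \cref{eq:jacobi-decomposition-2} is the top irrep, is exactly $\Sym^2(H)$.

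For (iii), Gaussian $\ket\psi$ lie in the top irrep by (i). For the converse, suppose $\ket\psi^{\otimes k}$ is $\Orth(k-1)$-invariant with $k\ge3$. Invariance under the rotations $M_{p,q}$, $p<q<k$, acting among the registers orthogonal to $v_k$, will be used. Write $\ket\psi=f(a^\dagger)\ket0$ in the Bargmann (holomorphic) representation, so $\ket\psi^{\otimes k}$ corresponds to $F(z_1,\ldots,z_k)=\prod_l f(z_l)$ with $z_l\in\C^n$. The group $\Orth(k-1)$ acts linearly on the register index in the complement of $v_k$. Change variables to $w=\sum_l z_l/\sqrt k$ and to coordinates $u_1,\ldots,u_{k-1}$ orthogonal to it. Invariance then says that $\prod_l f(z_l)$, as a function of $(u_1,\ldots,u_{k-1})$ with $w$ fixed, depends only on the Gram invariants $\sum_p u_{p,i}u_{p,j}$. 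On the dense set where $f\ne0$, take $g=\log f$; then $\sum_l g(z_l)$ is $\Orth(k-1)$-invariant. Differentiating along a rotation generator mixing two $u$-directions, and restricting to a one-parameter slice, should yield a functional equation. With $k-1\ge2$ there are enough rotations to force the third derivatives of $g$ to vanish, so $g$ is a quadratic polynomial. This makes $f$ the exponential of a quadratic, i.e.\ $\ket\psi$ is Gaussian, with normalizability forcing the quadratic form to be strictly contractive.

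The main obstacle is this last functional-equation step, especially handling zeros of $f$ and making the differentiation rigorous for entire functions of possibly infinite order. To get around it, I will first try an operator-level argument instead. $\Orth(k-1)$-invariance implies $M_{1,2}\ket\psi^{\otimes k}=0$. With explicit $v_1,v_2$ orthogonal to $v_k$, for example $v_1\propto(1,-1,0,\ldots)$ and $v_2\propto(1,1,-2,0,\ldots)$, expanding $M_{1,2}$ in the $a_{i,l}$ and contracting with $\bra\psi$ on suitable registers should give a linear relation of the form $(a_i - \sum_j(\Sigma_{ij}a_j^\dagger+\mu_i))\ket\psi=0$. Such a relation characterizes pure Gaussian states.
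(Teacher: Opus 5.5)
Parts (i) and (ii) are fine and match the paper. The paper also gets (i) from the vacuum lying in the top irrep plus irreducibility of the closed span of its Gaussian orbit. It gets (ii) by identifying the nontrivial element of $\Orth(1)$ with $\SWAP$; it reads this off the Wigner function, and your beamsplitter conjugation works equally well. Your aside that the vacuum is ``Heisenberg-invariant'' is false, since displacements move it, but you never use it.

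The gap is the converse in (iii), which is the substantive part of the lemma: neither of your two routes is carried out. In the holomorphic route, the decisive steps (``should yield a functional equation'', ``enough rotations to force the third derivatives of $g$ to vanish'') are only asserted, and the zeros of $f$ are left open. The operator-level fallback fails as described. Write $M_{1,2}=\sum_i\sum_{q\neq r}c_{qr}\,a_{i,q}^\dagger a_{i,r}$ with $c_{qr}=v_{1,q}v_{2,r}-v_{2,q}v_{1,r}$. Since $v_1,v_2\perp v_k$, one has $\sum_r c_{sr}=0$.
\begin{itemize}
\item Contracting $M_{1,2}\ket{\psi}^{\otimes k}=0$ with $\bra{\psi}$ on all registers but one therefore reduces to $0=0$.
\item Contracting fewer registers leaves a multi-copy condition. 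For $k=3$, a single $\bra{\psi}$ gives exactly $\Gamma\ket{\psi_0}^{\otimes 2}=0$, where $\ket{\psi_0}$ is $\ket{\psi}$ displaced to zero mean.
\end{itemize}
A one-copy relation $(a_i-\sum_j\Sigma_{ij}a_j^\dagger-\mu_i)\ket{\psi}=0$ can only come from contracting with some vector other than $\psi$. For example, a displaced one-photon state produces a factor $\braket{0|\psi_0}$ multiplying $(a_j-\alpha_j)\ket{\psi}$. You would then still need a separate treatment of the case where that coefficient vanishes. None of this is in the proposal.

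Your first route can in fact be completed, with no logarithms and no growth issues: Bargmann functions of normalized states satisfy $\abs{f(z)}\le e^{\abs{z}^2/2}$. One generator suffices. Rotations about $(1,1,1)$ in registers $1,2,3$ lie in $\Orth(k-1)$, and the infinitesimal generator acts on Bargmann variables as $\delta z_1=z_2-z_3$, $\delta z_2=z_3-z_1$, $\delta z_3=z_1-z_2$. Differentiate the invariance of $\prod_l f(z_l)$ and divide by it on $\{f\neq 0\}^{3}$. With $h=\nabla f/f$ this gives
\[
(z_2-z_3)\cdot h(z_1)+(z_3-z_1)\cdot h(z_2)+(z_1-z_2)\cdot h(z_3)=0 .
\]
Differentiating in $z_2$ gives $h(z_1)-h(z_3)=Dh(z_2)(z_1-z_3)$. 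Hence $Dh\equiv\Sigma$ is constant on the connected open set $\{f\neq 0\}$. So $\nabla f=(\Sigma z+\mu)f$ everywhere, and $f=f(0)\exp(\tfrac12 z^{\transpose}\Sigma z+\mu^{\transpose}z)$, which has no zeros. Normalizability then makes $\ket{\psi}$ Gaussian.

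This is a genuinely different proof from the paper's. The paper uses the discrete rotation by $\pi$ about the same axis, the element with diagonal entries $-\tfrac13$ and off-diagonal entries $\tfrac23$ on three registers. In the Wigner picture, its invariance shows that a negative value $W_\psi(x)<0$ would propagate along points $x^{(p)}\to y$ with $W_\psi(y)>0$, contradicting continuity; Hudson's theorem then finishes. The paper's argument is shorter but rests on Hudson's theorem. The holomorphic one is self-contained and produces $\Sigma$ and $\mu$ explicitly.
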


\begin{corollary}
There is no two-copy symmetric projector $\Pi$ such that $\Pi\ket{\psi, \psi} = \ket{\psi, \psi}$ if and only if $\ket{\psi}$ is a bosonic Gaussian state.
\end{corollary}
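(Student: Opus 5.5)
Assume, for contradiction, that such a projector $\Pi$ exists, and use \cref{lem:top-irrep-bosonic-gaussians}, specifically \cref{eq:top-irrep-sym}. The argument runs in three steps.

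First, $\Pi$ fixes every $\ket{\phi}^{\otimes 2}$ with $\ket{\phi}$ bosonic Gaussian. Because $\Pi$ is a bounded (indeed orthogonal) projector, its $1$-eigenspace $\{v : \Pi v = v\}$ is a closed linear subspace. It therefore contains $\overline{\Span}\bracebar{\ket{\phi}^{\otimes 2}}{\ket{\phi} \text{ bosonic Gaussian}}$.

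Second, by \cref{eq:top-irrep-sym} this closed span is all of $\Sym^2(H)$. Hence $\Pi$ acts as the identity on $\Sym^2(H)$.

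Third, choose any non-Gaussian state, for instance the one-boson Fock state $\ket{1,0^{n-1}}$. Its double $\ket{\psi}^{\otimes 2}$ lies in $\Sym^2(H)$, so $\Pi\ket{\psi,\psi} = \ket{\psi,\psi}$. This contradicts the ``only if'' direction. To confirm that this state is not Gaussian, I would note that it has zero overlap with the vacuum and compute its Wigner function, which is negative at the origin. Alternatively, the parity-odd Fock state $\ket{1}$ cannot equal $D_r V_S\ket{0}$: when $r=0$ the state $V_S\ket{0}$ is even, and when $r\neq 0$ the mean $\braket{a}$ is nonzero, whereas $\braket{1|a|1}=0$.

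The only substantive input is \cref{eq:top-irrep-sym}, which I take as given. If one wanted to avoid depending on it, the obstacle would be showing directly that doubles of Gaussian states span $\Sym^2(H)$. I would try to do this by differentiating $\ket{\phi_\alpha}^{\otimes 2}$ for coherent states $\ket{\phi_\alpha}$ with respect to $\alpha$ and $\bar\alpha$. These derivatives produce all symmetrized products $\ket{m}\vee\ket{m'}$ of Fock states, since $e^{|\alpha|^2}\ket{\alpha}^{\otimes 2}=\sum \alpha^{m+m'}\ket{m}\ket{m'}/\sqrt{m!m'!}$. However, that sum yields only vectors graded by total degree $m+m'$, so squeezed states would also be needed to separate terms with the same total degree. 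This is exactly the content of the lemma, so citing it is the cleaner route.

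The symmetry hypothesis on $\Pi$ is not actually needed. It only matters because it means the $1$-eigenspace must contain all of $\Sym^2(H)$ rather than merely intersect it.
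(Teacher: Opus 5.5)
Your argument is correct and matches the paper, which gives no separate proof because the corollary follows immediately from \cref{eq:top-irrep-sym}. The chain is the one you give: the closed fixed space of $\Pi$ contains every Gaussian double, hence all of $\Sym^2(H)$, hence also $\ket{\psi}^{\otimes 2}$ for a non-Gaussian $\ket{\psi}$ such as $\ket{1,0^{n-1}}$.

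One of your asides is inaccurate, though you catch it yourself and the proof does not use it. Coherent-state doubles only span the vectors $\sum_{m+m'=N}\ket{m}\ket{m'}/\sqrt{m!\,m'!}$, not all $\ket{m}\vee\ket{m'}$, so squeezed states are genuinely needed for that direct route.
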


\begin{fact}[Properties of Wigner function]\label{fct:wigner-function-properties}
Consider an $n$-mode bosonic Fock space $H$ and a normalized state $\ket{\psi} \in H$.
The Wigner function $W_{\psi}: \mathbb R^{2n} \to \mathbb R$,
\begin{align}
    W_{\psi}(x, p) &= \frac{1}{\paren{2\pi}^n} \int_{\mathbb R^n} \psi^*\paren*{x + \frac12 \xi} \psi\paren*{x - \frac12 \xi} e^{i p \cdot \xi} d^n\xi
\end{align}
has the following properties.
\begin{itemize}
    \item $W_{\psi}$ is continuous\cite{littlejohn1986semiclassical}.
    \item $\int_{\mathbb R^{2n}} dx W_{\psi}(x)=1$\cite{weedbrook2012gaussian}.
    \item Hudson's Theorem: a pure state $\ket{\psi}$ is gaussian iff $W_{\psi}(x) \geq 0$ for all $x \in \mathbb R^{2n}$\cite{hudson1974wigner,soto1983when}.
    \item $W_{\bigotimes_l \ket{\psi_l}}\paren*{x_1, \ldots, x_k} = \prod_l W_{\psi_l}(x_l)$\cite{weedbrook2012gaussian}.
    \item For an orthogonal $R \in \Orth(k)$, let $T(R)$ be its representation on $H^{\otimes k}$.
Then for $\ket{\Phi} \in H^{\otimes k}$ and $X = \paren*{x_i}_{i=1}^k \in \mathbb R^{k \times 2n}$,
$W_{T(R) \ket{\Phi}}(X) = W_{\ket{\Phi}}(R^{-1}X)$\cite{littlejohn1986semiclassical}.
\end{itemize}
\end{fact}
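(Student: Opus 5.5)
The plan is to verify the five listed properties directly from the integral definition of $W_{\psi}$, working in the position representation $\psi \in L^2(\mathbb R^n)$. The exception is Hudson's theorem, which is a deep classical result; I would take it from \cite{hudson1974wigner,soto1983when} and only indicate its structure.

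\textbf{Continuity.} Write $W_\psi(x,p) = (2\pi)^{-n}\braket{\tau_{x}\check\psi, M_p\,\tau_{x}\psi}$, after the substitution $\xi = 2u$. Here $\tau_x$ denotes translation, $M_p$ denotes multiplication by a phase, and $\check\psi(u)=\psi(-u)$ (up to the constant from the substitution). The map $(x,p)\mapsto (\tau_x\psi, M_p\tau_x\psi)$ is continuous into $L^2\times L^2$ because translations and modulations act strongly continuously on $L^2$. The inner product is jointly continuous, so $W_\psi$ is continuous. Cauchy--Schwarz also gives $|W_\psi|\le \pi^{-n}$.

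\textbf{Normalization.} I would first argue with $\psi$ Schwartz and then extend by density, using the bound above together with the $L^1$-convergence of the marginals. Integrating over $p$ produces $(2\pi)^n\delta(\xi)$, which leaves $\int |\psi(x)|^2\,dx = 1$.

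\textbf{Product rule.} For $\ket{\Phi}=\bigotimes_l\ket{\psi_l}$, the wavefunction is $\Phi(x_1,\dots,x_k)=\prod_l\psi_l(x_l)$. The integrand over $\xi=(\xi_1,\ldots,\xi_k)$ then factorizes, including the phase $e^{\ri\sum_l p_l\cdot\xi_l}$. Fubini therefore gives $\prod_l W_{\psi_l}$.

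\textbf{Orthogonal covariance.} The representation $T(R)$ of $\Orth(k)$ is the passive linear-optical action that mixes registers mode-wise through the real matrix $R$, as generated by the $M_{p,q}$ and $Z$ of \cref{eq:orthogonal-reflection-generator}. I would first check, on the generators, that in position representation it acts by $(T(R)\Phi)(X)=\Phi(R^{-1}X)$ with $R$ acting on the register index. The $M_{p,q}$ are real antisymmetric bilinears in the $a$'s, hence rotations of the position coordinates, and $Z$ acts as a reflection. Given this, substitute into the definition and change variables $\Xi = R\Xi'$. Orthogonality keeps the Jacobian equal to $1$ and preserves the pairing $\sum_l p_l\cdot\xi_l = (R^{-1}P)\cdot\Xi'$. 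This yields $W_{T(R)\Phi}(X,P)=W_{\Phi}(R^{-1}X,R^{-1}P)$, which is the stated identity with $X$ collecting both quadratures.

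\textbf{Hudson's theorem.} For the easy direction, a Gaussian pure state is $D_r V_S\ket{0}$. Its Wigner function is the vacuum Gaussian transported by an affine symplectic map: apply the covariance argument above with symplectic in place of orthogonal maps. The result is positive. For the converse, I would follow Soto--Claverie. Nonnegativity of $W_\psi$ makes $\psi(x+\xi/2)^*\psi(x-\xi/2)$ a positive-definite function of $\xi$, and this, combined with the Bargmann transform, shows that the Bargmann function of $\psi$ is an entire function of order at most $2$ without zeros. Hadamard factorization then forces it to be $\exp$ of a quadratic, that is, $\psi$ is Gaussian.

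The main obstacle is this converse direction, particularly the zero-free step in the multimode case. I would not reprove it, but cite it as the paper does.
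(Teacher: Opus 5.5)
Your checks of continuity, the product rule and orthogonal covariance are sound, and they are more self-contained than the paper, which proves nothing here and only cites the literature for each bullet. In particular, recognizing mode-wise mixing by a real orthogonal $R$ as the point transformation $\Phi\mapsto\Phi\circ R^{-1}$ correctly reduces the last bullet to a measure-preserving change of variables. The genuine gap is the normalization bullet. Your density extension cannot work, because for a general normalized $\psi\in L^2(\mathbb R^n)$ the Wigner function need not lie in $L^1(\mathbb R^{2n})$. So $\int_{\mathbb R^{2n}}W_\psi$ need not exist as a Lebesgue integral, and no argument can prove the bullet in that generality. For example, with $n=1$ and $\psi=\mathbf{1}_{[0,1]}$ one computes $W_\psi(x,p)=\sin(2px)/(\pi p)$ for $0<x\le\frac12$, which is not absolutely integrable in $p$. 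Your two ingredients give no control over integrals on the infinite-measure space $\mathbb R^{2n}$. These are uniform convergence $W_{\psi_j}\to W_\psi$ (from your $\pi^{-n}$ bound) and $L^1$-convergence of the marginals $\abs{\psi_j}^2\to\abs{\psi}^2$. Moreover, the $p$-marginal of $W_\psi$ itself exists in general only in a summability sense.

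What you can prove is the identity whenever $W_\psi\in L^1$ (for instance Schwartz $\psi$, finite Fock superpositions, or Gaussian states). For every $\psi$ you can prove the regularized form $\lim_{\epsilon\to0}\int W_\psi(z)e^{-\epsilon\abs{z}^2}\,dz=1$. This holds because the Weyl quantization of $e^{-\epsilon(x^2+p^2)}$ is $(1+\epsilon)^{-1}\bigl((1-\epsilon)/(1+\epsilon)\bigr)^{N}$ per mode, which tends strongly to the identity. That weaker form is all the paper needs, because the bullet is only used in \cref{lem:top-irrep-bosonic-gaussians} to produce a point $y$ with $W_\psi(y)>0$. Moyal's identity would also do: $(2\pi)^n\int W_\psi W_{z}=\abs{\braket{z|\psi}}^2$, where $W_z$ is the strictly positive Gaussian Wigner function of a coherent state $\ket{z}$, and some $\braket{z|\psi}\neq0$ by completeness.

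There are also three smaller points.

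In the continuity step, your formula applies the same translation $\tau_x$ to both $\check\psi$ and $\psi$, so the $x$-dependence collapses to a phase. The intended identity is $W_\psi(x,p)=\pi^{-n}\langle\tau_{-x}\psi, M_{2p}\tau_x\check\psi\rangle$, with $(\tau_af)(u)=f(u-a)$ and $(M_qf)(u)=e^{\ri q\cdot u}f(u)$; after this correction your argument goes through unchanged.

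In the easy direction of Hudson's theorem, your covariance argument relies on the unitary being a point transformation in position space. That fails for general metaplectic unitaries: a phase rotation $e^{\ri\theta a^\dagger a}$ acts as a fractional Fourier transform. You need either full metaplectic covariance, or a direct Gaussian-integral computation of $W_\psi$ for $\psi(x)=C\exp(-\frac12x^{\transpose}Zx+b\cdot x)$ with $\Re Z\succ0$.

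For the cited converse, zero-freeness of the Bargmann function does not come from positive-definiteness. It comes from $\abs{\braket{z|\psi}}^2$ being a Gaussian smoothing of $W_\psi$, which is strictly positive when $W_\psi\ge0$ and $W_\psi\not\equiv0$.
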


\begin{proof}[Proof of \cref{lem:top-irrep-bosonic-gaussians}]
For concision, let 
$V = V^{\Jacobi(n)}_{\frac{k}{2}, \ldots, \frac{k}{2}}$
and
$S = \overline{\Span}\bracebar{\ket{\phi}^{\otimes k}}{\text{$\ket{\phi}$ is bosonic Gaussian}}$.
To start, note that the (powered) vacuum state $\ket{0}^{\otimes k} \in V$, and indeed is the lowest-weight vector.
Because $V$ is the irrep space of a representation $U$ of bosonic Gaussian unitaries, any bosonic state $\ket{g}^{\otimes k} = {U(g)}^{\otimes k} \ket{0}^{\otimes k} \in V$, as is any linear combination thereof. Therefore, $\ket{\psi}^{\otimes k} \in V$ for Gaussian $\ket{\psi}$ and $S \subset V$.
But $S$ is also invariant under $U$: $U(g)S = S$ for all $g \in \Jacobi(n)$. Because $V$ is irreducible, it follows by definition that $V = S$.
This proves \cref{eq:top-irrep-span-bosonic-gaussians}.

Recall from \cref{eq:jacobi-decomposition-2} that under $\Jacobi(n)$, $H^{\otimes 2}$ decomposes into two subspaces, each associated with a 1-dimensional irrep of $\Orth(1)$. In particular, the top irrep $V$ is associated with the trivial representation.
The group $\Orth(1)$ has a single non-identity element corresponding to the $\Orth(2)$ reflection $R = \begin{pmatrix} 0 & 1 \\ 1 & 0 \end{pmatrix}$.
Let $T(R)$ be the representation $R \in \Orth(2)$ on $H^{\otimes 2}$ and consider the action of $R$ on the Wigner function of a state $\ket{\Psi} \in H^{\otimes 2}$:
\begin{align}
    W_{T(R)\ket{\Psi}}\paren*{x_1, x_2}
    &= W_{\ket{\Psi}}\paren*{x_2, x_1}.
\end{align}
That is $T(R)$ is the swap operator between the two registers.
The symmetric subspace $\Sym^2(H)$ is the $+1$-eigenspace of the swap operator, i.e., the trivial representation of $\Orth(1)$. The top irrep $V$ is the just the multiplicity space of this trivial representation in $H^{\otimes 2}$ and therefore $V = \Sym^2(H)$.
This proves \cref{eq:top-irrep-sym}.

Finally, we show that, for $ k\geq 3$ and any $\ket{\psi} \in H$, $\ket{\psi}^{\otimes k} \in V$ iff $\ket{\psi}$ is Gaussian.
We've already established, \emph{a fortiori}, that if $\ket{\psi}$ is Gaussian then $\ket{\psi}^{\otimes k} \in V$.
To establish the reverse direction, we will show that if $\ket{\psi}^{\otimes k} \in V$ then 
the Wigner function $W_{\psi}: \mathbb R^{2n} \to \mathbb R$ of $\ket{\psi}$ is non-negative, which by Hudson's Theorem implies that $\ket{\psi}$ is Gaussian.

Now recall that $\Orth(k-1) \cong \bracebar*{M \in \Orth(k)}{M\paren*{1, \ldots, 1} = \paren*{1, \ldots, 1}}$ acts trivially on $V$. (See \cref{lem:jacobi-decomposition}.)
So for any $\ket{\Phi} \in V$ and $R \in \Orth(k-1)$,
\begin{align}
T(R) \ket{\Phi} &= \ket{\Phi}, & W_{\ket{\Phi}}(X) &= W_{\ket{\Phi}}(RX).
\end{align}
In particular, define
\begin{align}
R &= \begin{pmatrix}
    -\frac13 & \frac23 & \frac23 & 0 & 0 & \cdots \\
    \frac23 & -\frac13 & \frac23 & 0 & 0 & \cdots \\
    \frac23 & \frac23 & -\frac13 & 0 & 0 & \cdots \\
    0 & 0 & 0 & 1 & 0 & \cdots \\
    0 & 0 & 0 & 0 & 1 & \cdots \\
    \vdots & \vdots & \vdots & \vdots & \vdots & \ddots
\end{pmatrix}.
\end{align}
Note that $R$ is orthogonal, self-inverse ($R^{\transpose} = R^{-1} = R$), and $R\paren*{1, \ldots, 1} = \paren*{1, \ldots, 1}$, i.e., $R \in \Orth(k-1)$; it acts non-trivially only on the first three registers. 
(This is also where we require $k\geq 3$.)

Consider any $x, y \in \mathbb R^{2n}$ such that $W_{\psi}(y) > 0$; there must always be some $y$ (because $\int W_{\psi} = 1$).
Define 
\begin{align}
x^{(p)} &= y + \paren*{-\frac13}^p \paren*{x - y}, &
y^{(p)} &= y + \frac23 \paren*{x^{(p)} - y}.
\end{align}
Then 
\begin{align}
&
W_{\psi}\paren*{x^{(p)}}
W_{\psi}\paren*{y}^{k-1}
=
W_{\psi^{\otimes k}}(x^{(p)}, y, \ldots, y)
=
W_{T(R)\psi^{\otimes k}}(x^{(p)}, y, \ldots, y)
\\
&=
W_{\psi^{\otimes k}}\paren*{
x^{(p + 1)}, y^{(p)}, y^{(p)}, y, \ldots, y}
=
W_{\psi}\paren*{x^{(p + 1)}}
W_{\psi}\paren*{
    y^{(p)}
}^2
W_{\psi}\paren*{y}^{k-3}.
\end{align}
For $W_{\psi}\paren*{x^{(p)}} \neq 0$, 
\begin{align}
    \sgn\paren*{W_{\psi}\paren*{x^{(p)}}}
    &=
    \sgn\paren*{W_{\psi}\paren*{x^{(p)}}
        W_{\psi}(y)^{k-1}
    }
    \\
    &=
    \sgn\paren*{
W_{\psi}\paren*{x^{(p + 1)}}
W_{\psi}\paren*{
    y^{(p)}
}^2
W_{\psi}\paren*{y}^{k-3}
}
=
\sgn\paren*{W_{\psi}\paren*{x^{(p + 1)}}}.
\end{align}
Therefore, if $W_{\psi}\paren*{x^{(p)}} < 0$, then $W_{\psi}\paren*{x^{(p + 1)}} < 0$.

Because $W_{\psi}$ is continuous,
\begin{align}
\lim_{p \to \infty}
W_{\psi}\paren*{x^{(p)}}
=
W_{\psi}\paren*{\lim_{p\to\infty} x^{(p)}}
=
W_{\psi}\paren*{y}.
\end{align}
If $W_{\psi}\paren*{x} = W_{\psi}\paren*{x^{(0)}} < 0$, then $W_{\psi}\paren*{x^{(1)}}, W_{\psi}\paren*{x^{(2)}}, \ldots < 0$ by induction.
The limit is $W_{\psi}\paren*{y} \leq 0$. 
But $W_{\psi}\paren*{y} > 0$ by construction, so it must be that $W_{\psi}(x) \geq 0$ for all $x$.
\end{proof}

\subsection{Spectral gap analysis}\label{sec:bosonic-gaussians-spectral-gap}

Now we show that the symmetrization of the 3-copy $\bosonictestproj$ on 7 copies has a spectral gap.

\begin{lemma}[Spectral gap for bosonic Gaussians]\label{lem:bosonic-gaussians-spectral-gap}
Let $H$ be the $n$-mode bosonic Fock space.
There is a symmetric projector $\bosonictestproj$ on $H^{\otimes 3}$ such that $1 - \bosonictestproj^{(7)}$ has a spectral gap of $\frac{7}{45}$.
\end{lemma}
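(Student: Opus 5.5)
The plan is to reduce the spectral gap to a finite-dimensional, $n$-independent problem on the commutant side of the Howe-type duality in \cref{lem:jacobi-decomposition}. First I would fix $\bosonictestproj$ to be the projector onto the top irrep $V^{\Jacobi(n)}_{\frac32,\frac32,\frac32,\ldots}$ of $H^{\otimes 3}$. By \cref{lem:top-irrep-bosonic-gaussians} this projector has perfect completeness. By \cref{cor:jacobi-decomposition-special-cases}, it is exactly the projector onto the $\Orth(2)$-invariant vectors, where $\Orth(2)$ is the subgroup of $\Orth(3)$, acting on the three registers, that fixes $(1,1,1)$. Equivalently, $\bosonictestproj=\int_{\Orth(2)} T(R)\,dR$ is a Haar average.

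Next I would apply \cref{lem:jacobi-decomposition} with $k=7$, giving $H^{\otimes 7}\cong\bigoplus_\lambda V^{\Jacobi(n)}_{\lambda+(7/2)}\otimes V^{\Orth(6)}_\lambda$. Here $\Orth(6)$ is the stabilizer of $(1,\ldots,1)\in\R^7$. Both operators factor through this decomposition:
\begin{itemize}
\item the register permutations $S_7$ embed in this $\Orth(6)$, so $\symproj^{(1,\ldots,7)}$ acts only on the $\Orth(6)$ factor, as the projector onto $S_7$-invariants;
\item the embedded $\Orth(2)\subset\Orth(6)$ likewise acts only on that factor, so $\bosonictestproj^{(1,2,3)}$ acts only on the $\Orth(6)$ factor.
\end{itemize}
Hence the spectrum of $\Xi$ is the union over admissible $\lambda$ (those with $(\lambda^\transpose)_1+(\lambda^\transpose)_2\le 6$) of the spectra of $A_\lambda=P_{S_7}\,P_{\Orth(2)}\,P_{S_7}$ on $V^{\Orth(6)}_\lambda$. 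This is independent of $n$ except for which $\lambda$ occur, and it shows the gap is uniform in $n$.

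I would then realize $V^{\Orth(6)}_\lambda$ concretely via harmonic polynomials (or harmonic tensors) on the hyperplane $\{x\in\R^7:\sum x_i=0\}$. In the one-row case these are simply harmonic polynomials of degree $\ell$. Exactly as in the fermionic and Slater analyses, I would pass from $A_\lambda$ to $B_\lambda=P_{\Orth(2)}P_{S_7}P_{\Orth(2)}$. The point of this switch is that the $\Orth(2)$-invariants are few: a vector is $\Orth(2)$-invariant when, as a function, it depends on the first three coordinates only through their mean and the norm of their deviation, i.e. it is a zonal-type function. Expanding $P_{S_7}=\frac{1}{7!}\sum_\sigma U_\sigma$ and grouping permutations by how many of $\{1,2,3\}$ they map into $\{1,2,3\}$ (the double cosets $S_3\times S_4\backslash S_7/S_3\times S_4$) turns $B_\lambda$ into a small explicit matrix. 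Its entries are overlaps of zonal harmonics, which I would evaluate with Legendre/Gegenbauer addition formulas in the spirit of \cref{prop:legendre-inner-product-identity}. The eigenvalue $1$ comes from $\lambda=0$. For small $\lambda$ I would compute the eigenvalues explicitly, and I expect the largest nontrivial one to be $38/45$, attained at a low-degree $\lambda$ (degree $2$ or $3$), which gives the gap $7/45$.

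The main obstacle is making the bound uniform over all infinitely many $\lambda$, and over the non-one-row shapes such as $(l,1)$ and $(1,1)$. My first attempt would be to show that the double-coset matrix entries decay in degree. Concretely, I would bound $|\langle f,U_\sigma f\rangle|$ for $\sigma$ that moves some of $\{1,2,3\}$ outside $\{1,2,3\}$, using Gegenbauer values $C^{(2)}_\ell(t)/C^{(2)}_\ell(1)$ at a fixed $|t|<1$. These values decay geometrically, so beyond a small threshold degree the eigenvalues are at most $\frac{1}{\binom{7}{3}}$ plus a small error, well below $38/45$. The finitely many remaining $\lambda$ I would then check directly.
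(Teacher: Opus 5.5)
Your reduction to the $\Orth(6)$ multiplicity factor via the $k=7$ case of \cref{lem:jacobi-decomposition} is correct, and your guess for the extremal eigenvalue is right. Realize $V^{\Orth(6)}_{(3)}$ as harmonic cubics on $W_7=\{x\in\R^7:\sum_i x_i=0\}$. Its only $S_7$-invariant is $p_3=\sum_i x_i^3$. Let $P$ denote the projector onto $\Orth(W_{7,\{1,2,3\}})$-invariants; then $(I-P)p_3=\sum_{i\le 3}a_i^3$, where $a$ is the projection of $x$ onto $W_{7,\{1,2,3\}}$. In the Gaussian $L^2$ norm on $W_7$ this has squared norm $4$, against $\|p_3\|^2=180/7$, so the eigenvalue is exactly $38/45$ and the gap $7/45$ is attained.

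The gap in your proposal is the uniformity step, which is where all the difficulty lies, and the mechanism you propose for it fails. The eigenvalues do not drop to $1/\binom{7}{3}$ plus a small error at high degree. Take $\lambda=(\ell)$ and $v=\sum_{j=1}^7\phi(u_j)$, the $S_7$-symmetrized zonal harmonic with poles $u_j\propto e_j-\frac17(1,\ldots,1)$. For $j\ge 4$ we have $u_j\perp W_{7,\{1,2,3\}}$, so $P\phi(u_j)=\phi(u_j)$. With $x_\ell=C^{(2)}_\ell(-1/6)/C^{(2)}_\ell(1)\to 0$,
\[
\frac{\langle v,Pv\rangle}{\|v\|^2}=\frac{\|Pv\|^2}{\|v\|^2}\ \ge\ \frac{4+36x_\ell}{7+42x_\ell}\ \longrightarrow\ \frac47 .
\]
So $\Xi$ has an eigenvalue at least $4/7-o(1)$ in $V^{\Orth(6)}_{(\ell)}$ for every large $\ell$, and there is no finite list of $\lambda$ left to check.

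The underlying error is the premise that the $\Orth(2)$-invariants are few. $\Orth(2)$ fixes a $4$-dimensional complement inside $W_7$, so its invariants in $V^{\Orth(6)}_\lambda$ carry an $\Orth(4)$-representation whose dimension grows with $\lambda$. Hence $P\,\symproj\,P$ is not a bounded-size double-coset matrix: the $S_4$ on registers $4,\ldots,7$ does not act trivially on the range of $P$. Likewise, $\langle f,U_\sigma f\rangle$ for such $f$ is not controlled by a single Gegenbauer value; for instance, the $\Stab(u_j)$-invariant vectors are fixed by $20$ of the $35$ three-register projectors in every degree. Gegenbauer decay is only usable when the stabilizer has codimension one, so that its invariants in each irrep are at most one-dimensional.

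That is exactly the situation the paper engineers, and it is the idea your plan is missing. On the symmetric subspace, $\Xi=\overline{\ReynoldsOp}=\binom{7}{3}^{-1}\sum_{|I|=3}\ReynoldsOp_{\Orth(W_{7,I})}$. \cref{lem:bosonic-bound} proves the representation-independent inequality $I-\overline{\ReynoldsOp}\succeq c_m\bigl(I-\ReynoldsOp_{\Orth(W_m)}\bigr)$ by induction on $m$. At each step it rewrites the $3$-subset average as an average over the $m$ codimension-one stabilizers $\Orth(W_{m,-i})=\Stab(u_i)$, then applies the simplex bound of \cref{lem:bosonic-one-step-bound} (zonal Gram matrix, $|x_\ell|\le x_3$ for $\ell\ge 4$). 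This yields $c_m=c_{m-1}\,m(m-3)/(m-1)^2$ and $c_7=7/45$ with no irrep-by-irrep analysis.
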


\begin{proof}
    Follows from \cref{lem:bosonic-bound} with $m=7$, where $\bosonictestproj^{(7)} \cong \overline{\ReynoldsOp}$.

\end{proof}

\subsubsection{Spherical harmonics}

This section is based on \cite{dai2013approximation}.

\begin{definition}[Reynolds operator]
Let $\rho$ be a representation of a group $G$ with normalized Haar measure $dg$.
The \emph{Reynolds operator} is
\begin{align}
\ReynoldsOp_G(\rho) &= \int_G \rho(g) dg.
\end{align}
Where the representation $\rho$ is clear from context, we may write simply $\ReynoldsOp_G = \ReynoldsOp_G(\rho)$.
\end{definition}

\begin{definition}[Stabilizers]
Let $v_1, \ldots, v_m \in {\mathbb R}^d$.
The stabilizer of $\braces*{v_i}_i$ in $\Orth(d)$ is
\begin{align}
\Stab_{\Orth(d)}\paren*{v_1, \ldots, v_m} &= \bracebar*{R \in \Orth(d)}{\text{$Rv_i = v_i$ for $i=1,\ldots,m$}}.
\end{align}
\end{definition}
Note that $\Stab_{\Orth(d)}\paren*{v_1, \ldots, v_m} \cong \Orth\paren*{d - \dim \Span\braces*{v_1, \ldots, v_m}}$.

\begin{fact}\label{lem:spherical-harmonic-projectors}
Let $x \in {\mathbb R}^d$ be a unit vector and $(\rho, V)$ be an irreducible representation of $\Orth(d)$.
Then $\rank \ReynoldsOp_{\Stab_{\Orth(d)}\paren*{x}} \in \{0, 1\}$.
If $\rank \ReynoldsOp_{\Stab_{\Orth(d)}\paren*{x}} = 1$, then there is a function $\phi: {\mathcal S}^{d-1} \to V$ such that for all normalized $y \in {\mathbb R}^d$,
\begin{align}
\ReynoldsOp_{\Stab_{\Orth(d)}\paren*{y}} &= \ket{\phi(y)}\bra{\phi(y)}, &
\braket{\phi(x) | \phi(y)} &= \frac{
    C_l^{(d-2)/2}\paren*{x \cdot y}
}{
    C_l^{(d-2)/2}\paren*{1}
}.
\label{eq:spherical-harmonic-overlap}
\end{align}
\end{fact}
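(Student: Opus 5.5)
The plan is to treat $(\Orth(d), \Stab_{\Orth(d)}(x))$ as a Gelfand pair and read off both claims from the theory of zonal spherical functions, following \cite{dai2013approximation}.

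\textbf{Step 1: rank at most one.} Since $\ReynoldsOp_{K}$ for a compact group $K$ is the orthogonal projector onto the $K$-invariant vectors, its rank is the multiplicity of the trivial representation of $K = \Stab_{\Orth(d)}(x) \cong \Orth(d-1)$ in $V\downarrow_K$. By the $\Orth(d)\downarrow\Orth(d-1)$ branching rule (the orthogonal analogue of \cref{fct:branching}, via interlacing), restriction is multiplicity-free. In particular the trivial irrep of $\Orth(d-1)$ occurs at most once, so the rank is $0$ or $1$.

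\textbf{Step 2: construct $\phi$.} Suppose the rank is one. Fix a unit vector $\phi(x)$ spanning the invariants at $x$. For any unit $y$, choose $g \in \Orth(d)$ with $gx = y$ and set $\phi(y) \coloneqq \rho(g)\phi(x)$. This is well defined, with no phase ambiguity. Indeed, if $gx = g'x$ then $g^{-1}g' \in \Stab(x)$, and this element fixes $\phi(x)$. Since $\Stab(y) = g\,\Stab(x)\,g^{-1}$ and Haar measure is conjugation invariant, we get $\ReynoldsOp_{\Stab(y)} = \rho(g)\ReynoldsOp_{\Stab(x)}\rho(g)^{\dagger} = \ket{\phi(y)}\bra{\phi(y)}$. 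This also shows the rank is the same for every $y$.

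\textbf{Step 3: the overlap.} Consider $f(y) = \braket{\phi(x) | \phi(y)}$. It is invariant under $y \mapsto hy$ for $h \in \Stab(x)$, since $\rho(h)^\dagger\phi(x) = \phi(x)$. Hence $f$ depends only on $x\cdot y$, and $f(x) = 1$.

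Next, the map $v \mapsto (y \mapsto \braket{v|\phi(y)})$ intertwines $V$ with a subspace of $L^2(\mathcal S^{d-1})$. It is injective because $V$ is irreducible and the map is nonzero. So $V$ is isomorphic to some $\mathcal H_l$, the degree-$l$ harmonic polynomials restricted to the sphere, since these are exactly the irreducibles in $L^2(\mathcal S^{d-1})$, each with multiplicity one. Under this isomorphism, $f$ is the $\Stab(x)$-invariant element of $\mathcal H_l$. That element is the zonal harmonic, i.e. the reproducing kernel $Z_l(x,y) \propto C_l^{(d-2)/2}(x\cdot y)$. Normalizing by $f(x) = 1$ gives \eqref{eq:spherical-harmonic-overlap}.

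\textbf{Main obstacle.} The delicate step is Step 3, specifically identifying $V$ with $\mathcal H_l$ and invoking the zonal-harmonic formula. The $\det$-twisted irreps $\mathcal H_l\otimes\det$ are precisely the case where a reflection in $\Stab(x)$ acts by $-1$ on the would-be invariant, which is why they give rank $0$. For the Gegenbauer identity itself I would cite the standard addition theorem from \cite{dai2013approximation} rather than rederive it.
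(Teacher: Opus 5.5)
Your proposal is correct and follows essentially the paper's route. Like the paper, you embed $V$ into $L^2(\mathcal S^{d-1})$ via matrix coefficients against the invariant vector, identify the image with some $\mathcal H_l$, and use that the $\Stab_{\Orth(d)}(x)$-invariants of $\mathcal H_l$ are spanned by the zonal harmonic $C_l^{(d-2)/2}(x\cdot y)$; your transport definition $\phi(y)=\rho(g)\phi(x)$ makes explicit the consistent choice of $\phi$ that the paper leaves implicit.

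The only divergence is Step 1, where you import the multiplicity-free $\Orth(d)\downarrow\Orth(d-1)$ branching rule. The paper instead gets $\rank \le 1$ for free from the same embedding, which maps the invariant subspace injectively into the one-dimensional space of zonal harmonics in $\mathcal H_l$, so your Step 1 is redundant given Step 3.
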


\begin{proof}
For concision, let 
\begin{align}
G &= \Stab_{\Orth(d)}(x),\\
T &= \ReynoldsOp_{G}\paren*{\rho},\\
W &= \bracebar*{v \in V}{
    \text{
        $\rho(R) v = v$ for all 
        $R \in G$}}.
\end{align}
$T$ is invariant under $\rho$ ($\rho T = T$), Hermitian, and the identity on $W$. 
Therefore, $T$ is the projector onto $W$.

For each non-zero $y \in W$ and some $l$, define a function $f: V \to \mathcal Y_l$ by
\begin{align}
    f(w): Rx & \mapsto \braket{\rho(R) y | w} \label{eq:spherical-harmonic-intertwiner}
\end{align}
for all $w \in V$ and $R \in \Orth(d)$,
where $\mathcal Y_l$ is the space of degree-$l$ homogoneous spherical harmonics on $\mathcal S^{d-1}$; $\mathcal Y_l$ is an irrep of $\Orth(d)$\cite{dai2013approximation}.
Consider some $x' \in \mathcal S^{d-1}$ and $R_1, R_2 \in \Orth(d)$ be such that $x' = R_1 x = R_2 x$.
Then $R_2^{-1} R_1 \in G$ and thus $\rho\paren*{R_2^{-1} R_1} y = y$, i.e., $\rho(R_1)y = \rho(R_2)y$, so $f$ is well defined.
Because $f$ is an intertwiner between the the irreps $(\rho, V)$ and $\mathcal Y_l$, it is either 0 or an isomorphism.
Because $f(w) = \norm{w}^2 = 1 \neq 0$ there is some $l$ such that $V \cong \mathcal Y_l$.

Any degree$-l$ spherical harmonic $Y$ that is invariant under $G$ satisfies~\cite[Lemma 7.1]{dai2013approximation}
\begin{align}
Y(y) & \propto C^{\paren*{\frac{d-2}{2}}}_l \paren*{x \cdot y}.
\end{align}
Therefore, $\dim W \in \{0, 1\}$.
When $\dim W = 1$, set $\ket{\phi(x)}$ to the corresponding spherical harmonic.
Then $\braket{\phi(x) | \phi(y)} \propto C^{\paren*{\frac{d-2}2}}_l \paren*{x \cdot y}$\cite[Theorem 2.6]{dai2013approximation}.
Normalization yields \cref{eq:spherical-harmonic-overlap}.
\end{proof}

\subsubsection{One-step bounds}

Now we consider symmetrizations of the lowest-irrep-projector over \emph{one} more register. (To get a sufficiently strong spectral gap, we'll need to symmetrize over several registers; we will use the one-step bound iteratively to get that in the next section.)

\begin{lemma}\label{lem:bosonic-one-step-bound}
Let $d\geq 2$ be a positive integer and $v_1, \ldots, v_{d+1} \in \mathbb R^d$ be unit vectors such that $v_i \cdot v_j = -1/d$ for $i \neq j$, and $(\rho, V)$ a representation of $\Orth(d)$. 
Define
\begin{align}
\overline{\ReynoldsOp}
&=
\frac{1}{d + 1}
\sum_{i=1}^{d + 1} \ReynoldsOp_{\Stab_{\Orth(d)}(v_i)}\paren*{\rho}.
\end{align}
Then
\begin{align}
    I - \overline{\ReynoldsOp} &\succeq c \paren*{I - \ReynoldsOp_{\Orth(d)}(\rho)}, &
    c &= \frac{(d-2)(d+1)}{d^2}.
\label{eq:bosonic-one-step-bound}
\end{align}
\end{lemma}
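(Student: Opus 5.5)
The plan is to reduce \cref{eq:bosonic-one-step-bound} to an eigenvalue bound on a single irrep, and then to a scalar inequality on Gegenbauer polynomials evaluated at $-1/d$.

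\emph{Reduction to irreps.} Each $\ReynoldsOp_{\Stab_{\Orth(d)}(v_i)}(\rho)=\int \rho(g)\,dg$ lies in the algebra generated by $\rho(\Orth(d))$. Hence it preserves every isotypic component of $V$ and acts as $A\otimes I$ on $V_\lambda\otimes M_\lambda$. The same holds for $\overline{\ReynoldsOp}$ and for $\ReynoldsOp_{\Orth(d)}(\rho)$, so it suffices to prove the inequality on a single irrep $V$.
\begin{itemize}
\item If $V$ is trivial, both sides vanish.
\item If $V$ is nontrivial, the right-hand side is $c\,I$, so we must show $\norm{\overline{\ReynoldsOp}}\le 1-c=(d+2)/d^2$.
\item By \cref{lem:spherical-harmonic-projectors}, each $\ReynoldsOp_{\Stab(v_i)}$ is either $0$ (then the bound is trivial) or $\ket{\phi(v_i)}\bra{\phi(v_i)}$, with $V\cong\calY_l$ for some $l\ge 1$.
\end{itemize}

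\emph{Gram matrix reduction.} Write $\overline{\ReynoldsOp}=\frac{1}{d+1}\Phi\Phi^\dagger$, where $\Phi$ has columns $\ket{\phi(v_i)}$. Its nonzero spectrum equals that of $\frac{1}{d+1}\Phi^\dagger\Phi$. Because $v_i\cdot v_j=-1/d$ for all $i\neq j$, the Gram matrix is $(1-g_l)I+g_lJ$ with
\[
g_l=\frac{C_l^{(d-2)/2}(-1/d)}{C_l^{(d-2)/2}(1)}.
\]
Its eigenvalues are $1-g_l$ and $1+d\,g_l$. The required bound is therefore equivalent to
\[
-\frac{3d+2}{d^2}\;\le\; g_l\;\le\;\frac{3d+2}{d^3}\qquad\text{for all } l\ge 1 .
\]
For $d=2$ we have $c=0$, so the claim is trivial and we may assume $d\ge 3$.

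\emph{Small $l$.} Using the explicit normalized Gegenbauer polynomials in dimension $d$:
\begin{itemize}
\item $g_1=-1/d$;
\item $g_2=(dt^2-1)/(d-1)=-1/d$;
\item $g_3=((d+2)t^3-3t)/(d-1)=(3d^2-d-2)/(d^3(d-1))=(3d+2)/d^3$.
\end{itemize}
So $l=3$ attains the upper bound with equality, which is the source of the constant $c$. The lower bound holds comfortably at $l=1,2$.

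\emph{Main obstacle: all $l\ge 4$.} The hard step is showing $|g_l|\le (3d+2)/d^3$ for every $l\ge4$. This upper bound is of order $3/d^2$, while $|t|=1/d$, so uniform crude bounds such as $|g_l|\le1$ are useless. I would first try induction on the normalized three-term recurrence
\[
(l+d-2)\,g_{l+1}=(2l+d-2)\,t\,g_l-l\,g_{l-1},\qquad t=-1/d .
\]
The hope is to show that the pair $(g_l,g_{l-1})$ stays in a region where $|g_{l+1}|\le \frac{2l+d-2}{d(l+d-2)}|g_l|+\frac{l}{l+d-2}|g_{l-1}|$ is at most $\max(|g_l|,|g_{l-1}|)$, at least from $l=3$ on; I would check the first few $l$ by hand to seed the induction. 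If the recurrence estimate is too lossy, since $\frac{l}{l+d-2}\to1$, I would fall back on a known decay estimate for ultraspherical polynomials at a fixed interior point, e.g.\ a Bernstein-type bound $|g_l(t)|\lesssim (l(1-t^2))^{-(d-2)/2}$. Such a bound would cover large $l$ and leave a finite, explicitly checkable range of moderate $l$.
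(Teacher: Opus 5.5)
\textbf{There is a genuine gap: the case $l\ge 4$ is never actually proved.}

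Your reduction to an irrep $\calY_l$ is correct and follows the paper's route. So are the Gram-matrix eigenvalues $1-g_l$ and $1+d\,g_l$ and the values $g_1=g_2=-1/d$, $g_3=(3d+2)/d^3$. The lower bound $g_l\ge -(3d+2)/d^2$ needs no work, since positive semidefiniteness of the Gram matrix forces $1+d\,g_l\ge 0$.

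The step you flag as the main obstacle, $\lvert g_l\rvert\le (3d+2)/d^3$ for all $l\ge 4$ and all $d\ge 3$, is the whole content of the lemma, and neither of your routes closes it.

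\emph{The recurrence induction fails.} The two coefficients in your triangle-inequality estimate sum to
\[
\frac{2l+d-2}{d(l+d-2)}+\frac{l}{l+d-2}=\frac{(d+2)l+d-2}{d(l+d-2)},
\]
which exceeds $1$ as soon as $2l>(d-1)(d-2)$. For $d=3$ this happens already at $l=2$, and indeed there $\lvert g_3\rvert=11/27>1/3=\max(\lvert g_1\rvert,\lvert g_2\rvert)$. So $\max(\lvert g_l\rvert,\lvert g_{l-1}\rvert)$ does not propagate for large $l$.

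\emph{The Bernstein-type fallback does not reduce to a finite check.} With $\lambda=(d-2)/2$, the constant in a bound of the form $C_\lambda\,(l\sin^2\theta)^{-\lambda}$ grows with $\lambda$. The range of $l$ it fails to cover therefore grows with $d$, leaving infinitely many cases over all $d$. You would also need explicit constants valid for every $\lambda$.

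\emph{The missing idea, which is what the paper uses, is Laplace's integral representation} of the normalized Gegenbauer polynomial:
\[
g_l(t)=\mathbb{E}_z\bigl[(t+\ri\sqrt{1-t^2}\,z)^l\bigr],
\]
where $z=\cos\phi$ and $\phi\in[0,\pi]$ has density proportional to $\sin^{d-3}\phi$. Since $\lvert t+\ri\sqrt{1-t^2}\,z\rvert^2=t^2+(1-t^2)z^2\le 1$, every $l\ge 4$ satisfies
\[
\lvert g_l\rvert\le \mathbb{E}_z\bigl[(t^2+(1-t^2)z^2)^{l/2}\bigr]\le \mathbb{E}_z\bigl[(t^2+(1-t^2)z^2)^{2}\bigr].
\]
This gives monotonicity in $l$, which no recurrence estimate provides. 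Take $t=-1/d$, $\mathbb{E}[z^2]=1/(d-1)$ and $\mathbb{E}[z^4]=3/((d-1)(d+1))$. The right-hand side then equals exactly $(3d+2)/d^3=g_3$, which closes the bound uniformly in $d$ and $l$.
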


\begin{proof}
If the statement is true when $\rho$ is irreducible, then it is true in general; it holds uniformly on each of the orthogonal irrep spaces. 
If $\ReynoldsOp_{\Stab_{\Orth(d)}(v_i)}=0$ for some $i$, the same holds for all $i$ by symmetry and thus $\overline{\ReynoldsOp} = 0$, in which case the inequality is trivially true because $c, \norm{\ReynoldsOp_{\Orth(d)}} \leq 1$. 
Henceforth, we will assume that $\rho$ is irreducible, and thus corresponds to spherical harmonics of some degree $l$.
For concision, let
\begin{align}
P_i = \ReynoldsOp_{\Stab_{\Orth(d)}(v_i)}(\rho);
\end{align}
there is some $P$ such that $P_i = \rho(R_i) P \rho(R_i)^{\dagger}$ with $R_i \in \Orth(d)$.
By \cref{lem:spherical-harmonic-projectors}, 
\begin{align}
\overline{\ReynoldsOp} \paren*{\rho}
&=
\frac{1}{d + 1}
\sum_{i=1}^{d + 1} \ket{\phi(v_i)} \bra{\phi(v_i)}
=
\frac{1}{d + 1} L L^{\dagger},
\end{align}
where
\begin{align}
L &= \begin{pmatrix} \phi(v_1) & \cdots & \phi(v_{d+1}) \end{pmatrix}.
\end{align}
For $i \neq j$, let
\begin{align}
x_l &= \braket{\phi(v_i) | \phi(v_j)} = 
\frac{
    C^{\paren*{\frac{d-2}{2}}}_l \paren*{-1/d}
}{
    C^{\paren*{\frac{d-2}{2}}}_l \paren*{1}
}.
\end{align}
be the ($i,j$-independent) overlaps.
The Gram matrix is
\begin{align}
G &= L^{\dagger} L, & G_{i, j} &= (1 - x_l) \delta_{i, j} + x_l.
\end{align}
The eigenvalues of $G$ are $1 + dx_l$ and $1-x_l$, so the non-zero eigenvalues of $\overline{\ReynoldsOp}$ are
\begin{align}
    \lambda_{l, +} &= \frac{1 + dx_l}{d + 1}, &
    \lambda_{l, -} &= \frac{1 - x_l}{d + 1}.
\end{align}
Because the Gram matrix $G$ is positive-definite, we know that $x_l \geq -1 / d$.
Therefore $\lambda_{l, -} \leq 1 / d$.
Now we'll upper bound 
\begin{align}
    x_l &= 
    \frac{\Gamma\paren*{\alpha + 1}}{\pi^{1/2} \Gamma\paren*{\alpha + \frac12}}
    \int_0^{\pi} \paren*{-\frac{1}{d} + i \frac{\sqrt{d^2 - 1}}{d} \cos\phi}^l \paren*{\sin\phi}^{2\alpha} d\phi
    \\
    &=
    d^{-l}
    \mathbb E_z\bracket*{
    \paren*{-1 + i \sqrt{d^2-1} z}^l}
\end{align}
where $\alpha = (d-3)/2$, $\cos\theta = -1/d$, and $z = \cos \phi$ is distributied according to 
\begin{align}
    p(z) &= 
    \frac{\Gamma\paren*{\alpha + 1}}{\pi^{1/2} \Gamma\paren*{\alpha + \frac12}}
    \paren*{\sin\phi}^{2\alpha}.
\end{align}
The first few values are
\begin{align}
x_1 & = x_2 = -\frac1d, & x_3 &= \frac{2 + 3d}{d^3}.
\end{align}
The first two moments of $z^2$ are
\begin{align}
\mathbb E_z\bracket*{z^2} &= \frac{1}{d-1}, &
\mathbb E_z\bracket*{z^4} &= \frac{3}{(d-1)(d+1)}.
\end{align}
For $l \geq 4$,
\begin{align}
    \abs*{x_l} &\leq 
    \mathbb E_z\bracket*{
        \paren*{\frac{1 + \paren*{d^2 - 1} z^2}{d^2}}^{l/2}
    }
    \\
    &\leq
    \mathbb E_z\bracket*{
        \paren*{\frac{1 + \paren*{d^2 - 1} z^2}{d^2}}^{2}
    }
    \\
    &=
    \frac{1}{d^4} + 2 \frac{d+1}{d^4} + 3 \frac{d^2-1}{d^4} = 
    \frac{3d+2}{d^3} = x_3.
\end{align}
Define
\begin{align}
\lambda_{\star} &= \frac{d + 2}{d^2}.
\end{align}
Then
\begin{align}
    \lambda_{l, +} & \leq \frac{1 + d(3d + 2) / d^3}{d + 1} = \lambda_{\star}, &
    \lambda_{l, -} & \leq \frac{1}{d} < \lambda_{\star}. 
\end{align}
That is $\overline{\ReynoldsOp} \preceq \lambda_{\star} I$ for nontrivial $\rho$.

Let $W \subset V$ be the support of $\ReynoldsOp_{\Orth(d)}(\rho)$.
On $W$,
\begin{align}
    \left. P_i \right|_{W} &= I_W, &
        \left.\overline{\ReynoldsOp}\right|_W &= I_W.
\end{align}
On $W^{\perp}$ (as over all $V$),
\begin{align}
\left.\overline{\ReynoldsOp}\right|_{W^{\perp}} &\preceq \lambda_{\star} I_{W^{\perp}}.
\end{align}
Therefore
\begin{align}
\overline{\ReynoldsOp} &\preceq \ReynoldsOp_{\Orth(d)} + \lambda_{\star} \paren*{I - \ReynoldsOp_{\Orth(d)}},
\end{align}
i.e.,
\begin{align}
    I - \overline{\ReynoldsOp} & \succeq c \paren*{I - \ReynoldsOp_{\Orth(d)}},
\end{align}
where $c=1 - \lambda_{\star}$.
\end{proof}

\subsubsection{General bound}

For a subspace $V \subset W \cong \mathbb R^d$, we write $\Orth\paren*{V} \cong \Orth\paren*{\mathbb R^{\dim V}} = \Orth(\dim V)$ to mean the orthogonal group that acts trivially on $V^{\perp} \subset W$.

\begin{lemma}\label{lem:bosonic-bound}

    For integer $m \geq 3$, define
    \begin{align}
        W_{m, I} &= 
        \bracebar*{x \in \mathbb R^m}{\sum_{i=1}^m x_i = 0; \text{$x_i = 0$ for $i \notin I$} }, \\
        W_m &= W_{m,[m]},\\
        W_{m, -i} &= W_{m, [m]\setminus\braces{i}}, \text{ for $i \in [m]$}.
    \end{align}
 Let $\rho$ be a representation of $\Orth(W_m)$ and define
    \begin{align}
    \overline{\ReynoldsOp}
    &= \frac{1}{\binom{m}{3}} \sum_{\substack{I \subset [m] \\ \abs{I} = 3}} \ReynoldsOp_{\Orth\paren*{W_{m, I}}}(\rho).
    \end{align}
Then
\begin{align}
I - \overline{\ReynoldsOp} & \succeq c_m \paren*{I - \ReynoldsOp_{\Orth\paren*{W_m}}(\rho)}, & c_m = \frac{2m}{3(m-1)(m-2)}.
\end{align}
\end{lemma}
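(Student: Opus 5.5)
The proof will go by induction on $m$: peel off one register at a time using \cref{lem:bosonic-one-step-bound}, and use the inductive hypothesis on each $(m-1)$-register subsystem. The constants $c_m$ are chosen so that they telescope exactly.

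\emph{Base case $m=3$.} The only $3$-subset is $I=[3]$, so $W_{3,I}=W_3$ and $\overline{\ReynoldsOp}=\ReynoldsOp_{\Orth(W_3)}(\rho)$. Since $c_3 = \frac{6}{3\cdot 2\cdot 1}=1$, the claim holds with equality.

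\emph{Inductive step, regrouping the average.} Suppose $m\ge 4$. Each $3$-subset $I\subset[m]$ avoids exactly $m-3$ indices $i$. Hence
\[
\overline{\ReynoldsOp} = \frac1m\sum_{i=1}^m \overline{\ReynoldsOp}_{-i},
\qquad
\overline{\ReynoldsOp}_{-i} \coloneqq \frac{1}{\binom{m-1}{3}}\sum_{\substack{I\subset[m]\setminus\{i\}\\ |I|=3}}\ReynoldsOp_{\Orth(W_{m,I})}(\rho).
\]
This follows from the count $\binom{m}{3}(m-3)=m\binom{m-1}{3}$.

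\emph{Inductive step, applying the hypothesis.} The space $W_{m,-i}$ is isometric to $W_{m-1}$, and under this isometry each $W_{m,I}$ with $I\not\ni i$ corresponds to a $W_{m-1,I'}$. Restrict $\rho$ to $\Orth(W_{m,-i})\subset\Orth(W_m)$; this is a representation of that subgroup. The inductive hypothesis then gives
\[
I-\overline{\ReynoldsOp}_{-i}\succeq c_{m-1}\big(I-\ReynoldsOp_{\Orth(W_{m,-i})}(\rho)\big).
\]
Averaging over $i$ preserves the L\"owner order, so
\[
I-\overline{\ReynoldsOp}\succeq c_{m-1}\Big(I-\frac1m\sum_i \ReynoldsOp_{\Orth(W_{m,-i})}(\rho)\Big).
\]

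\emph{Inductive step, the one-step bound.} Set $d=m-1=\dim W_m\ge 3$. Let $v_i$ be the normalized projection of $e_i$ onto $W_m$, namely $v_i\propto e_i-\frac1m(1,\ldots,1)$. Then $W_{m,-i}=v_i^\perp\cap W_m$, so $\Orth(W_{m,-i})=\Stab_{\Orth(W_m)}(v_i)$. A direct computation gives $v_i\cdot v_j = -\frac{1}{m-1}=-\frac1d$ for $i\neq j$, and there are $m=d+1$ of these vectors. These are exactly the hypotheses of \cref{lem:bosonic-one-step-bound}, which yields
\[
I-\frac1m\sum_i\ReynoldsOp_{\Stab(v_i)}\succeq \frac{(d-2)(d+1)}{d^2}\big(I-\ReynoldsOp_{\Orth(W_m)}(\rho)\big)=\frac{m(m-3)}{(m-1)^2}\big(I-\ReynoldsOp_{\Orth(W_m)}(\rho)\big).
\]

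\emph{Chaining the constants.} Combining the two inequalities gives the constant
\[
c_{m-1}\cdot\frac{m(m-3)}{(m-1)^2}=\frac{2(m-1)}{3(m-2)(m-3)}\cdot\frac{m(m-3)}{(m-1)^2}=\frac{2m}{3(m-1)(m-2)}=c_m,
\]
which closes the induction.

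\emph{Main point of care.} The delicate step is the geometric identification. We need $W_{m,-i}$ to be exactly the orthogonal complement of $v_i$ inside $W_m$, which holds because a vector in $W_m$ is orthogonal to $v_i$ if and only if its $i$th coordinate vanishes. We also need the restricted representation to legitimately feed into the inductive hypothesis through the isometry $W_{m,-i}\cong W_{m-1}$; this works because Reynolds operators depend only on the group action. The $m=4$ case uses the one-step lemma with $d=3$ and $c=\frac49$, so the requirement $d\ge 2$ is satisfied throughout.
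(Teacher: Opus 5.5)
Your proposal is correct and follows the paper's proof essentially step for step. Both arguments proceed by induction from the trivial base case $m=3$ (where $c_3=1$). Both write $\overline{\ReynoldsOp}$ as the average over $i$ of the leave-one-out averages, apply the inductive hypothesis on each $W_{m,-i}\cong W_{m-1}$, and then invoke \cref{lem:bosonic-one-step-bound} with $d=m-1$ on the simplex vectors $v_i\propto e_i-\frac1m(1,\ldots,1)$, after which the constants telescope to $c_m$. You check explicitly that $W_{m,-i}=v_i^\perp\cap W_m$ and that the restricted representation can be fed into the inductive hypothesis; this is slightly more careful than the paper's one-line justifications, but the argument is the same.
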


\begin{proof}
We will prove the bound by induction.
For $m=3$, $\overline{\ReynoldsOp} = \ReynoldsOp_{\Orth\paren*{W_3}}$ and $c_3 = 1$, so the two sides of the inequality are equal.
Let
\begin{align}
    \overline{\ReynoldsOp}_i &= \frac{1}{\binom{m-1}{3}} \sum_{\substack{
            I \subset [m] \setminus \{i\} \\
            \abs{I} = 3
    }} \ReynoldsOp_{\Orth\paren*{W_{m, I}}}.
\end{align}
Suppose the inequality holds for $m-1$ when $m \geq 4$:
\begin{align}
    I - 
   & \overline{\ReynoldsOp}_i
   \succeq c_{m-1} \paren*{I - \ReynoldsOp_{\Orth\paren*{W_{m, -i}}}},
\end{align}
because $W_{m, -i} \cong W_{m -1}$.
Then
\begin{align}
    I - \overline{\ReynoldsOp} &= I - \frac{1}{m} \sum_{i=1}^m \overline{\ReynoldsOp}_i
\succeq c_{m-1} \paren*{I - \frac{1}{m} \sum_i \ReynoldsOp_{\Orth\paren*{W_{m, -i}}}},
\end{align}
where we used the fact that
\begin{align}
    \overline{\ReynoldsOp}
    &= \frac{1}{\binom{m}{3}} \sum_{\substack{
            I \subset [m] \\ \abs{I} = 3
    }} \ReynoldsOp_{\Orth\paren*{W_{m,I}}}
    =
\frac{1}{\binom{m}{3}} \sum_{\substack{
            I \subset [m] \\ \abs{I} = 3
    }} 
    \frac{1}{m-3} \sum_{
    j \in [m] \setminus I} 
    \ReynoldsOp_{\Orth\paren*{W_{m,I}}}
    \\
    &=
    \frac{1}{m} \sum_{i \in [m]}
    \frac{1}{\binom{m-1}{3}}
    \sum_{\substack{I \subset [m] \setminus \{i\} \\ \abs{I} = 3}}
\ReynoldsOp_{\Orth\paren*{W_{m,I}}}
= \frac1m \sum_{i \in [m]} \overline{\ReynoldsOp}_i.
\end{align}
For $j \in [m]$, define
\begin{align}
u_j &= \paren*{
    u_{j, l}
}_{l=1}^m, &
    u_{j, l} &= \sqrt{\frac{m}{m-1}} \paren*{\delta_{j, l} - \frac{1}{m}}.
\end{align}
Each $u_j \in W_m$ and is normalized, and $u_j \cdot u_{j'}= -1/(m-1)$ for $j \neq j'$.
Note that $\Orth\paren*{W_{m, -j}} = \Stab_{\Orth\paren*{W_m}}(u_j)$.
We then apply \cref{lem:bosonic-one-step-bound} with $d=m-1$, 
$\Orth(d) \cong \Orth\paren*{W_m}$:
\begin{align}
    I - \frac{1}{m} \sum_{i=1}^m \ReynoldsOp_{\Orth\paren*{W_{m,-i}}} & \succeq 
    \frac{(m-3)m}{\paren{m-1}^2} \paren*{I - \ReynoldsOp_{\Orth\paren*{W_m}}},
\end{align}
where the present $\ReynoldsOp_{\Orth\paren*{W_{m, -i}}}$ and $\ReynoldsOp_{\Orth\paren*{W_m}}$ are respectively the Reynolds operators on the left and right sides of \cref{eq:bosonic-one-step-bound}.
Therefore
\begin{align}
    I - \overline{\ReynoldsOp} & \succeq 
    c_{m-1} \frac{m(m-3)}{\paren{m-1}^2} \paren*{I - \ReynoldsOp_{\Orth\paren*{W_m}}} = 
    c_{m}  \paren*{I - \ReynoldsOp_{\Orth\paren*{W_m}}}.
\end{align}

\end{proof}

\subsection{The algorithm}\label{sec:bosonic-tester-proof}

\begin{algorithm}[H]
    \caption{Base tester for general bosonic Gaussian states.}\label{alg:bosonic-gaussian-state-tester}
    \KwInput{Three copies of a bosonic pure state $\ket{\psi}$.}

    \BlankLine

    \BlankLine

    Apply the bosonic gaussian unitary $U$ that rotates into a basis in which $M$ (see \cref{eq:rotation-gen-eigenbasis}) is diagonal.

    Measure in the first two registers in the Fock basis $\paren*{
    n_{i, j}}_{\substack{ i \in [n] \\ j \in \{1, 2\}}}$ to get outcomes $\paren*{x_{i, j}}_{i, j}$.
    \uIf{$\sum_{i=1}^n \paren*{x_{i, 1} - x_{i,2}} = 0$}{
        \KwRet{\accept}
    }
    \Else{
        \KwRet{\reject}
    }
\end{algorithm}

\begin{proof}[Proof of \cref{thm:bosonic-gaussian-tester}]
Let $\bosonictestproj$ be the three-copy projector (with perfect completeness), 
$R = I - \bosonictestproj$ and 
$r\paren*{\psi} = \braket{\psi, \ldots, \psi | R | \psi, \ldots, \psi}$.
Let $\bosonicgaussianset$ be the set of pure $n$-mode bosonic Gaussian states.
By \cref{lem:bosonic-bound}, the spectral gap of $R^{(7)}$ is $7/45$.

By \cref{thm:rejection-infidelity-bounds} with $k=3$ and $m=7$,
\begin{align}
    \alpha \epsilon^2 &\leq r \leq 3 \epsilon^2, &
\eps &=
\dist(\ket{\psi}, \bosonicgaussianset), &
\alpha &=\frac{2}{135},
\end{align}
and by \cref{thm:framework-tolerant-testing}, running
\cref{alg:constant-copy-tester} using \cref{alg:bosonic-gaussian-state-tester} as the subroutine $\calA$ yields a $t$-copy $(\epsilon_1, \epsilon_2, \delta)$-tester for the pure bosonic Gaussian states $\bosonicgaussianset$.
\end{proof}

\paragraph{Implementing the projector}
The projector $\bosonictestproj$ corresponds to the trivial representation of $\Orth(2) \subset \Orth(3)$.
Let 
\begin{align}
v_+ = v_1 &= \begin{pmatrix} 1 & -1 & 0 \end{pmatrix} / \sqrt{2},\\
v_- = v_2 &= \begin{pmatrix} 1 & 1 & -2 \end{pmatrix} / \sqrt{6},\\
v_3 &= \begin{pmatrix} 1 & 1 & 1 \end{pmatrix} / \sqrt{3}.
\end{align}
The commutant acts on $\Span\braces*{v_+, v_-}$.
Define
\begin{align}
    b_{i, +} &= \frac{1}{\sqrt{2}} \paren*{a_{i, 1} - a_{i, 2}},\\
b_{i, -} &= \frac{1}{\sqrt{6}} \paren*{a_{i, 1} + a_{i, 2} - 2 a_{i, 3}},
\end{align}
so that the generators of the commutant $\Orth(2)$ are
\begin{align}
    M &= \sum_{i=1}^n \paren*{b_{i,+}^{\dagger} b_{i, -} - b_{i,-}^{\dagger} b_{i, +}}, &
    Z &= \exp\paren*{\frac{\ri \pi}{2} \sum_{i=1}^n 
        \paren*{a_{i,1}^{\dagger} - a_{i, 2}^{\dagger}}
        \paren*{a_{i,1} - a_{i, 2}}
        } = \SWAP_{1, 2}.
        \label{eq:bosonic-commutant-gen-2}
\end{align}
Let $R = \ri M$ be the Hermitian observable.
On the symmetric subspace, $Z$ acts as the identity, so we only need to observe $R$.
Define
\begin{align}
\tilde{a}_{i, 1} &= \paren*{b_{i, +} + i b_{i, -}}/\sqrt{2},\\
\tilde{a}_{i, 2} &= \paren*{b_{i, +} - i b_{i, -}}/\sqrt{2},\\
\tilde{a}_{i, 3} &= \paren*{a_{i, 1} + a_{i, 2} + a_{i, 3}}/\sqrt{3},
\end{align}
so that
\begin{align}
    R &= \sum_{i=1}^{n} \paren*{
        {\tilde{a}}_{i, 1}^{\dagger} \tilde{a}_{i, 1} - 
    {\tilde{a}}_{i,2}^{\dagger} \tilde{a}_{i,2} 
}.
\end{align}
There is a (passive) gaussian change of basis $U$ (over all three registers, taking $\paren*{\tilde{a}_{i, j}}_{i, j}$ to $\paren*{a_{i, j}}_{i, j}$) such that 
\begin{align}
    U R U^{\dagger} &= \sum_{i=1}^n \paren*{ a_{i, 1}^{\dagger} a_{i, 1} - a_{i, 2}^{\dagger} a_{i, 2}}.
\label{eq:rotation-gen-eigenbasis}
\end{align}
The intersection of the kernel of $R$ and the even subspace of $Z$ is the trivial representation of $\Orth(2)$ and thus also the ``top'' irrep space of $\Jacobi(n)$.
The entire symmetric subspace is in the even subspace of $Z$, so it suffices to just check $R$.

\section{Testing zero-mean bosonic gaussian states}\label{sec:zero-mean-boson}

In this section, we show that bosonic Gaussian states with zero displacement can be tested in constantly many copies.

\begin{theorem}\label{thm:testing-zero-mean-bosonic-gaussians}
    Let $\beta = 4$. For all $0 \leq \eps_1 < \eps_2 < 1$ with $\eps_2 > \beta \eps_1$ and $0 < \delta < \frac{1}{2}$, there is an $O\paren*{\frac{\log(1/\delta)}{(\eps_2 - \beta\eps_1)^2}}$-copy $(\eps_1, \eps_2, \delta)$-tester for zero-mean bosonic Gaussian states. The tester acts on two copies at a time.
\end{theorem}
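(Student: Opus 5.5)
The plan is to follow the same template as for fermionic Gaussian states. We use the two-copy operator $\Gamma = \ri\sum_{i=1}^n (a_i\otimes a_i^\dagger - a_i^\dagger\otimes a_i)$ described in the technical overview and set $\zeromeanbosonictestproj$ to be the projector onto $\ker\Gamma$ restricted to $\Sym^2(H)$.

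\textbf{Completeness.} We identify $\ker\Gamma$ with the trivial-isotypic component of the commutant $\Orth(2)$ of $\Mp(2n)^{\otimes 2}$. In the notation of the orthogonal generators with $k=2$, $\Gamma$ is (up to sign) $\ri M_{1,2}$, the generator of $\splorthgrp(2)$. On $\Sym^2(H)$ the reflection $Z$ acts trivially, so $\ker\Gamma\cap\Sym^2(H)$ is exactly the $\Orth(2)$-trivial sector. That sector is the ``top'' $\Mp(2n)$ irrep, by Howe duality (the analogue of \cref{lem:jacobi-decomposition} with commutant $\Orth(k)$).

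Mirroring the proof of \cref{lem:top-irrep-bosonic-gaussians}, this top irrep is the closed span of $\ket{\phi}^{\otimes 2}$ over zero-mean Gaussian $\ket\phi$. For the converse, $\ket\psi^{\otimes2}$ being $\Orth(2)$-invariant means, via \cref{fct:wigner-function-properties}, that $W_\psi(x)W_\psi(y)$ is invariant under rotations of $(x,y)$. Taking the $\pi/2$ rotation $(x,y)\mapsto(-y,x)$ together with $W_\psi(-x)=W_\psi(x)$ (from the reflection), a sign argument along the orbit $(x\cos\theta+y\sin\theta,\,\dots)$ gives $W_\psi\ge0$. Hudson's theorem then shows $\psi$ is Gaussian, and the $x\mapsto -x$ symmetry forces zero mean. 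Operationally, the test applies the passive beam splitter diagonalizing $\Gamma$ into $N\otimes\identity-\identity\otimes N$ and accepts iff the two photon numbers agree.

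\textbf{Spectral gap.} We take $m=3$ and study $\Xi=\symproj^{(1,2,3)}\zeromeanbosonictestproj^{(1,2)}\symproj^{(1,2,3)}$ on $H^{\otimes3}$. The commutant of $\Mp(2n)^{\otimes3}$ is $\Orth(3)$, acting on $\C^3$ mixing the three registers. $\zeromeanbosonictestproj^{(1,2)}$ is the Reynolds operator of $\Orth(2)=\Stab_{\Orth(3)}(e_3)$, and $\symproj^{(1,2,3)}$ is the Reynolds operator of the permutation subgroup $S_3\subset\Orth(3)$. Everything therefore reduces to each $\Orth(3)$-irrep $\calY_\ell$ (possibly $\otimes\det$) paired with its multiplicity space.

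Within $\calY_\ell$, by \cref{lem:spherical-harmonic-projectors}, the $\Orth(2)$-invariants are spanned by the zonal harmonic $\phi(e_3)$ (only for the appropriate parity/det type). As in the proof of \cref{lem:fermionic-spectral-gap}, we pass to $\Theta=\Pi\,\symproj\,\Pi$. By the argument of \cref{prop:Theta_identity} this equals $(\Pi+2\Pi\,\SWAP^{(2,3)}\Pi)/3$, with nonzero eigenvalue
\[
\theta_\ell=\tfrac13\big(1+2\braket{\phi(e_3)|\phi(s_{23}e_3)}\big)=\tfrac13\big(1+2P_\ell(e_3\cdot e_2)\big)=\tfrac13\big(1+2P_\ell(0)\big).
\]
Here \cref{prop:legendre-inner-product-identity} evaluates the zonal overlap as a Legendre polynomial, and sign conventions from $\det$ or parity must be tracked. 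Only even $\ell$ survive because of $Z$/parity, so $P_\ell(0)=(-1)^{\ell/2}\binom{\ell}{\ell/2}/2^\ell$. The possible values are then $1$ at $\ell=0$, $0$ at $\ell=2$, $7/12$ at $\ell=4$, and thereafter values decaying toward $1/3$. This gives $\Delta=5/12$, matching the fermionic case.

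\textbf{Conclusion.} \cref{thm:rejection-infidelity-bounds} with $k=2$, $m=3$, $\Delta=5/12$ yields $\tfrac18\eps^2\le q\le2\eps^2$, and \cref{thm:framework-tolerant-testing} gives $\beta=4$.

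The main obstacle is the spectral step: correctly identifying which $\Orth(3)$ irreps (including the $\det$-twisted ones) occur in $\Sym^3$-compatible positions with nonzero $\Orth(2)$-invariants, and getting the signs right in the zonal overlap. Infinite-dimensionality also needs care: we restrict to finite total-photon-number sectors, which both operators preserve.
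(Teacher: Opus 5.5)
Your spectral-gap argument is the paper's. The paper also splits each photon-number sector into $\orthgrp(3)$-irreps and writes $\zeromeanbosonictestproj^{(1,2)}$ as the Haar average over the stabilizer $\orthgrp(2)$ of $e_3$, using $U_{R(\alpha)}=e^{-\ri\alpha\Gamma}$ together with $\SWAP^{(1,2)}$. It discards the $\det$-twisted irreps, since in $\calY_\ell\otimes\det$ the reflection $\diag(1,-1,1)$ acts on the zonal harmonic by $-1$. It then evaluates $\braket{Y_{\ell,0}|\SWAP^{(2,3)}|Y_{\ell,0}}=P_\ell(0)$, which gives $\Delta=5/12$, $\alpha=1/8$ and $\beta=4$.

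One correction: odd $\ell$ are not excluded by $Z$ or by parity, because the input need not have even photon number. For example, in one mode, $\ket{0}\ket{0}\ket{1}$ lies in the range of $\zeromeanbosonictestproj^{(1,2)}$ and spans the invariant line of a copy of $\calY_1$. Odd $\ell$ simply give $P_\ell(0)=0$, i.e. the eigenvalue $1/3$ of $\Theta$, so the gap is unchanged.

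Your completeness part takes a different route and covers more than the paper does. The paper only checks $[U\otimes U,\Gamma]=0$ and that the vacuum is accepted. It relies on the known characterization for the converse, which \cref{lem:gradient-descent-distance} genuinely needs because the limit of the gradient flow must lie in the class. Your Wigner-function converse is sound, but the sign argument is left vague, and the $\pi/2$ rotation plus evenness alone give nothing new.

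A clean version uses a generic angle:
\begin{itemize}
\item Apply the $\pi/4$ rotation to the point $(u,u)$. This gives $W_\psi(u)^2=W_\psi(0)\,W_\psi(\sqrt{2}\,u)$ for all $u$.
\item Together with $\int W_\psi=1$, this forces $W_\psi(0)>0$ and hence $W_\psi\ge 0$ everywhere, so Hudson's theorem applies.
\item The $\pi$ rotation, integrated over one argument, gives $W_\psi(-x)=W_\psi(x)$, hence zero mean.
\end{itemize}
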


To prove this, we follow a similar strategy used to analyze the fermionic Gaussian property tester from \cref{sec:fermionic-gaussian-tester}. 
We first describe our tester, which is essentially a bosonic analogue of the Bravyi operator. 
Specifically, we will use the following operator introduced by Girardi, Witteveen, Mele, Bittel, Oliviero, Gross, and Walter \cite{girardi2025gaussiantesting}:
\begin{equation}\label{eq:bosonic-gaussian-test-projector}
    \Gamma = \ri \sum_{i=1}^n a_i\otimes a_i^\dagger - a_i^\dagger \otimes a_i = \ri \sum_{i=1}^n \Gamma_i.
\end{equation}

Throughout this section, it will be convenient to denote by $a_{s,i}$ the lowering operator for the $i$th mode of register $s \in [3]$. In this notation, $\Gamma^{(1,2)}$ becomes
\[
\Gamma^{(1,2)} = \ri \sum_{i=1}^n a_{1,i}a_{2,i}^\dagger - a_{1,i}^\dagger a_{2,i}.
\]

Let $\Pi_{\ker \Gamma}$ be the projector onto the $0$-eigenspace of $\Gamma$. Our property tester for zero-mean bosonic Gaussian states will be the projector onto the symmetrization of $\Pi_{\ker \Gamma}$:
\[
\zeromeanbosonictestproj^{(1,2)} = \symproj^{(1,2)}\Pi_{\ker \Gamma^{(1,2)}}.
\]

This projector has a desirable attribute central to many property testers: it lies in the second-order commutant of the group of zero-mean bosonic Gaussian unitaries.

\begin{proposition}\label{prop:zero-mean-bosonic-projector-commutant}
    Let $U$ be any bosonic Gaussian unitary with no displacement component. Then
    \begin{equation}
        [U \otimes U, \zeromeanbosonictestproj] = 0.
    \end{equation}
\end{proposition}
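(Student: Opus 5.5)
The plan is to show separately that $U\otimes U$ commutes with $\symproj^{(1,2)}$ and with $\Pi_{\ker\Gamma}$, and then use that $\zeromeanbosonictestproj$ is their product. Since the kernel projector of a Hermitian operator is a spectral projector, it is enough to prove that $\Gamma$ itself commutes with $U\otimes U$. The commutation with $\symproj^{(1,2)}$ is immediate: $\symproj^{(1,2)}=\tfrac12(\identity+\SWAP)$, and $\SWAP$ commutes with any $U\otimes U$. Incidentally, $\Gamma$ is antisymmetric under $\SWAP$, so $\SWAP\,\Gamma\,\SWAP=-\Gamma$. Hence $\SWAP$ preserves $\ker\Gamma$, the two projectors commute, and their product is an orthogonal projector. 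This is not needed for the commutation claim, but it confirms that $\zeromeanbosonictestproj$ is well defined.

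For $\Gamma$, I would work at the level of the Lie algebra. Zero-mean Gaussian unitaries are generated, up to phase, by quadratic Hamiltonians $\exp(\ri H)$ with $H$ in the span of $a_i^\dagger a_j$, $a_ia_j$ and $a_i^\dagger a_j^\dagger$. Their conjugation action on the ladder operators is linear: writing $\mathbf a=(a_1,\dots,a_n,a_1^\dagger,\dots,a_n^\dagger)$, we have $U\mathbf a_\mu U^\dagger=\sum_\nu S_{\mu\nu}\mathbf a_\nu$ for a complex symplectic matrix $S$ satisfying $S\,\Omega_c\,S^\transpose=\Omega_c$. Here $\Omega_c$ is the complex form given by the canonical commutators, $[\mathbf a_\mu,\mathbf a_\nu]=(\Omega_c)_{\mu\nu}$. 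Equivalently, in the real quadratures $v=(q,p)$ the action is $v\mapsto Sv$ with $S\in\symplgrp(2n,\R)$, as stated in the preliminaries.

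The key step is to rewrite $\Gamma$ in quadratures. We have $a\otimes a^\dagger-a^\dagger\otimes a=\ri(p\otimes q-q\otimes p)$, so
\[
\Gamma=\sum_{i=1}^n\paren*{q_i\otimes p_i-p_i\otimes q_i}=\sum_{\mu,\nu}\Omega_{\mu\nu}\,v_\mu\otimes v_\nu ,
\]
which is the symplectic form $\Omega$ from \cref{eq:symplectic-form} contracted across the two registers. Conjugating by $U\otimes U$ then gives $\sum_{\mu\nu}\Omega_{\mu\nu}(Sv)_\mu\otimes(Sv)_\nu=\sum(S^\transpose\Omega S)_{\alpha\beta}v_\alpha\otimes v_\beta=\Gamma$, since $S^\transpose\Omega S=\Omega$. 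So $(U\otimes U)\Gamma(U\otimes U)^\dagger=\Gamma$, and $U\otimes U$ preserves every eigenspace of $\Gamma$, in particular its kernel. Combining this with the commutation with $\symproj^{(1,2)}$ proves the proposition.

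The only real obstacle is the careful bookkeeping: the sign and normalization in the quadrature rewrite of $\Gamma$, and the choice of whether $S$ or $S^{-1}$ appears in the conjugation. Neither affects the conclusion, because $\symplgrp(2n,\R)$ is a group. A second, technical point is that $\Gamma$ is unbounded. I would handle this by noting that $\Gamma$ commutes with the total number operator across the two registers. It therefore decomposes into finite-dimensional blocks, and $\ker\Gamma$ is an honest closed subspace. Since $U\otimes U$ maps the common domain (finite-particle vectors, or a Schwartz-type core) into the domain of $\Gamma$, the intertwining identity $(U\otimes U)\Gamma=\Gamma(U\otimes U)$ holds there, and this suffices for the spectral projector onto the kernel to commute with $U\otimes U$.
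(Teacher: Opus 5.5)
Your proof is correct and follows the same route as the paper: you reduce to showing $(U\otimes U)\Gamma(U\otimes U)^\dagger=\Gamma$ (the symmetric projector commutes trivially), write $\Gamma$ as the symplectic form $\Omega$ contracted across the two registers, and invoke $S^\transpose\Omega S=\Omega$. The differences are cosmetic. The paper works in the complex ladder basis $v=(a,a^\dagger)$, where $\Gamma=\ri\sum_{r,s}\Omega_{rs}v_r\otimes v_s$, rather than in real quadratures, and it does not make your extra remarks on $\SWAP\,\Gamma\,\SWAP=-\Gamma$ or on the unboundedness of $\Gamma$.
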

\begin{proof}
It suffices to show that $[U \otimes U, \Gamma] = 0$. Indeed, if this holds, then $[U \otimes U, \Pi_{\ker \Gamma}] = 0$, and since $U \otimes U$ is symmetric, $[U \otimes U, \symproj^{(1,2)}] = 0$.

We will show the equivalent statement
\begin{equation}
(U \otimes U)\Gamma(U \otimes U)^\dagger = \Gamma.
\end{equation}
There exists a matrix $S \in \symplgrp(2n)$ such that, for all $i \in [2n]$
\[
U v_i U^\dagger = \sum_{j=1}^{2n} S_{ij} v_j,
\]
where $v = (a_1,
\ldots,a_n,a_1^\dagger,\ldots,a_n^\dagger)$.  Now, because $S \in \symplgrp(2n)$, it satisfies $S^\transpose \Omega S = \Omega$, where $\Omega$ is the symplectic form in equation \eqref{eq:symplectic-form}.
Thus, we compute
\begin{align}
    (U \otimes U)\Gamma(U \otimes U)^\dagger &= \ri \sum_{i=1}^n Ua_{i}U^\dagger \otimes Ua_{i}^\dagger U^\dagger - Ua_{i}^\dagger U^\dagger \otimes Ua_{i} U^\dagger \\
    &= \ri \sum_{i=1}^n \sum_{r,s=1}^{2n} S_{i,r}S_{n+i,s} v_r \otimes v_s  -  S_{n+i,r} S_{i,s} v_r \otimes v_s \\
    &= \ri \sum_{r,s=1}^{2n} \left(\sum_{i=1}^n S_{i,r}S_{n+i,s}  -  S_{n+i,r} S_{i,s} \right) v_r \otimes v_s \\
    &= \ri \sum_{r,s=1}^{2n} (S^\transpose \Omega S)_{r,s} v_r \otimes v_s \\
    &= \ri \sum_{r,s=1}^{2n} \Omega_{r,s} v_r \otimes v_s \\
    &=
    \ri \sum_{r = 1}^{n} v_{r} \otimes v_{n + r} - v_{n + r} \otimes v_{r} \\
    &= \ri \sum_{r = 1}^{n} a_{r} \otimes a_{r}^\dagger - a_{r}^\dagger \otimes a_{r} = \Gamma
\end{align}
and we are done.
\end{proof}

\subsection{Perfect completeness}

We first prove perfect completeness of our property tester. This was originally shown in \cite{girardi2025gaussiantesting}, but we give a self-contained proof here. 

\begin{lemma}\label{lem:perfect-completeness-zero-mean}
    If $\ket{\psi} \in \zeromeanbosonicgaussianset$ then $\norm{\zeromeanbosonictestproj \ket{\psi^{\otimes 2}}}^2 = 1$. In other words, our property tester always accepts.
\end{lemma}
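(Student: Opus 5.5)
The plan is to reduce to the vacuum via the invariance in \cref{prop:zero-mean-bosonic-projector-commutant}. Every $\ket{\psi} \in \zeromeanbosonicgaussianset$ has the form $\ket{\psi} = V_S\ket{0^n}$ for a zero-mean Gaussian unitary $V_S$. Since $[V_S \otimes V_S, \zeromeanbosonictestproj] = 0$, we get
\[
\norm{\zeromeanbosonictestproj \ket{\psi}^{\otimes 2}} = \norm{(V_S\otimes V_S)\zeromeanbosonictestproj \ket{0^n}^{\otimes 2}} = \norm{\zeromeanbosonictestproj \ket{0^n}^{\otimes 2}}.
\]
So it suffices to show that $\zeromeanbosonictestproj$ fixes $\ket{0^n}\otimes\ket{0^n}$. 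The projective phase ambiguity of the metaplectic representation is harmless, because it cancels in $U\otimes U$ conjugation and does not affect norms.

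For the vacuum there are two facts to check. First, $\ket{0^n}\otimes\ket{0^n}$ is manifestly symmetric, so $\symproj^{(1,2)}$ fixes it. Second, it lies in $\ker\Gamma$. Each term of $\Gamma = \ri\sum_i (a_i\otimes a_i^\dagger - a_i^\dagger\otimes a_i)$ contains an annihilation operator acting on one of the two vacuum registers, so every term annihilates $\ket{0^n}\otimes\ket{0^n}$ and $\Gamma\ket{0^n}^{\otimes 2}=0$.

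Finally, I need $\symproj^{(1,2)}$ and $\Pi_{\ker\Gamma}$ to commute, so that $\zeromeanbosonictestproj = \symproj^{(1,2)}\Pi_{\ker\Gamma}$ really is a projector fixing every vector lying in both subspaces. Conjugating by $\SWAP$ gives $\SWAP\,\Gamma\,\SWAP = -\Gamma$. Hence $\SWAP$ preserves $\ker\Gamma$ and commutes with its projector, and therefore so does $\symproj^{(1,2)} = (\identity+\SWAP)/2$. With this in hand, $\zeromeanbosonictestproj\ket{0^n}^{\otimes 2} = \ket{0^n}^{\otimes 2}$, and by the reduction above the acceptance probability is $1$ for every $\ket{\psi}\in\zeromeanbosonicgaussianset$.

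I do not expect a genuine obstacle here. The one point needing care is the unbounded-operator setting: $\ker\Gamma$ should be taken for the self-adjoint closure of $\Gamma$, and the vacuum lies in its domain. I will note this rather than belabor it.
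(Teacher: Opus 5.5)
Your proposal is correct and follows essentially the same route as the paper: check that $\ket{0^n}^{\otimes 2}$ is symmetric and annihilated by every term of $\Gamma$, then transport to an arbitrary zero-mean Gaussian state using $[U\otimes U,\zeromeanbosonictestproj]=0$ from \cref{prop:zero-mean-bosonic-projector-commutant}. Your extra check that $\SWAP\,\Gamma\,\SWAP=-\Gamma$, so that $\symproj^{(1,2)}$ and $\Pi_{\ker\Gamma}$ commute and their product is genuinely a projector, is a correct detail that the paper leaves implicit.
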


\begin{proof}
First, we consider the vacuum state $\ket{0^n}$.
    We observe that since each term of $\Gamma$ has a lowering operator $a_i$, it annihilates $\ket{0^n}\otimes \ket{0^n}$, and so $\ket{0^n}\otimes \ket{0^n}$ is in the symmetric $0$-eigenspace of $\Gamma$. Thus the test always accepts.
    Now let $\ket{\psi}$ be any Gaussian pure state with mean $0$. There exists a Gaussian unitary $U$ with zero displacement such that $\ket{\psi} = U \ket{0^n}$. Thus the acceptance probability of the test is
    \begin{align}
        \norm{\zeromeanbosonictestproj \ket{\psi^{\otimes 2}}}^2 &= \norm{\zeromeanbosonictestproj (U \otimes U) \ket{0^n}\!\ket{0^n}} \\
        &= \braket{0^n,0^n| (U \otimes U)^\dagger \zeromeanbosonictestproj (U \otimes U)|0^n,0^n} \\
        &= \braket{0^n,0^n| (U \otimes U)^\dagger (U \otimes U) \zeromeanbosonictestproj |0^n,0^n} \label{eq:apply-zero-mean-bosonic-test-commutant} \\
         &= \braket{0^n,0^n| \zeromeanbosonictestproj |0^n,0^n} = 1,
    \end{align}
    where in equation \eqref{eq:apply-zero-mean-bosonic-test-commutant} we have applied \cref{prop:zero-mean-bosonic-projector-commutant}.
    Thus, when $\ket{\psi} \in \zeromeanbosonicgaussianset$, we accept with certainty.
\end{proof}

\subsection{Spectral gap analysis}
Our approach will mirror that of the previous two sections. We define the operators
\[
\Xi = \symproj^{(1,2,3)}\zeromeanbosonictestproj^{(1,2)}\symproj^{(1,2,3)}, \quad \Theta = \zeromeanbosonictestproj^{(1,2)}\symproj^{(1,2,3)}\zeromeanbosonictestproj^{(1,2)},
\]
observe that they have identical eigenspectra, and decompose
\[
\Theta = \frac{\zeromeanbosonictestproj^{(1,2)} + 2Q}{3},\quad Q = \zeromeanbosonictestproj^{(1,2)}\SWAP^{(2,3)}\zeromeanbosonictestproj^{(1,2)}.
\]
By the analysis in \cref{sec:spectral-gap-abstract-proof}, demonstrating a spectral gap on $Q$ implies a spectral gap on $\Xi$, which gives a constant sample complexity upper bound for our tester.  Concretely, our goal will be to show the following theorem.

\begin{theorem}\label{lem:spec_gap_zero_mean}
    Let $\lambda_1$ be the largest eigenvalue of $\Xi$ and let $\lambda$ be any other (distinct) eigenvalue of $\Xi$. Then $\lambda_1 = 1$ and $\lambda \in [0, 7/12]$.
\end{theorem}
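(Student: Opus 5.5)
The plan is to reduce $\Xi$ to the commutant $\Orth(3)$ of three copies of the metaplectic representation, just as the fermionic case was handled with $\spingrp(2n)$. First I would invoke the Howe-duality decomposition for three copies of the $n$-mode Fock space under zero-mean Gaussian unitaries. This is the $\Mp(2n)$ analogue of \cref{lem:jacobi-decomposition}, with commutant $\Orth(3)$ in place of $\Orth(2)$:
\[
H^{\otimes 3} \cong \bigoplus_{\lambda} V^{\Mp(2n)}_{\lambda} \otimes W_\lambda ,
\]
where each $W_\lambda$ is an irrep of $\Orth(3)$, hence of the form $\calY_\ell$ or $\calY_\ell \otimes \det$.

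Next I would realize both projectors inside $\Orth(3)$. The register permutations are permutation matrices in $\Orth(3)$, with the representation given by the Wigner-function covariance in \cref{fct:wigner-function-properties}. The operator $\Gamma^{(1,2)}$ is, up to a factor of $\ri$, the generator $M_{1,2}$ of the rotations in the $(e_1,e_2)$-plane, and $\SWAP^{(1,2)}$ is a reflection in that plane. Both fix the axis $e_3$. Hence $\zeromeanbosonictestproj^{(1,2)} = \symproj^{(1,2)}\Pi_{\ker\Gamma^{(1,2)}}$ is exactly the Reynolds operator $\ReynoldsOp_{\Stab_{\Orth(3)}(e_3)}$. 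Everything then acts only on the $W_\lambda$ factor, and the multiplicity spaces are inert.

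By \cref{lem:spherical-harmonic-projectors}, this Reynolds operator has rank at most one on each irrep.
\begin{itemize}
\item On $\calY_\ell \otimes \det$ it vanishes, because the reflection $\diag(1,1,-1)$ lies in the stabilizer of $e_3$ but acts by $-1$ on the invariant line. More precisely, I will check that for the stabilizer of $e_3$ the invariant line survives only in $\calY_\ell$.
\item On $\calY_\ell$ it is the rank-one projector $\ketbra{\phi(e_3)}{\phi(e_3)}$, with overlaps $\braket{\phi(x)|\phi(y)} = P_\ell(x\cdot y)$.
\end{itemize}

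Since $\Xi$ and $\Theta = \Pi_{\mathrm{test}}\symproj^{(1,2,3)}\Pi_{\mathrm{test}}$ share their nonzero spectrum, the eigenvalue on $\calY_\ell$ is
\[
\braket{\phi(e_3)|\symproj^{(1,2,3)}|\phi(e_3)} = \frac{1}{6}\sum_{\sigma\in S_3} P_\ell(e_3\cdot \sigma e_3).
\]
In this sum, the identity and the transposition $(12)$ fix $e_3$ and contribute $P_\ell(1) = 1$. The other four permutations send $e_3$ to $e_1$ or $e_2$ and contribute $P_\ell(0)$. The eigenvalue is therefore
\[
\theta_\ell = \frac{1 + 2P_\ell(0)}{3},
\]
which matches $\Theta = (\Pi_{\mathrm{test}} + 2Q)/3$ with $Q$ having eigenvalue $P_\ell(0)$. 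Using $P_\ell(0) = 0$ for odd $\ell$ and $P_\ell(0) = (-1)^{\ell/2}\binom{\ell}{\ell/2}/2^\ell$ for even $\ell$, we get $\theta_0 = 1$. The ratio argument from the proof of \cref{lem:fermionic-spectral-gap} then shows that $|P_\ell(0)|$ decreases in $\ell$. So for $\ell \ge 1$ the largest value is $\theta_4 = (1 + 3/4)/3 = 7/12$, and all values lie in $[0,7/12]$.

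The main obstacle is making the identification of representations rigorous. I need to confirm three things:
\begin{itemize}
\item the permutation operators on $H^{\otimes 3}$ coincide with the $\Orth(3)$ representation of permutation matrices with no extra sign;
\item the symmetrized kernel of $\Gamma^{(1,2)}$ is exactly the $\Stab(e_3)$-invariant subspace, including the role of $\SWAP^{(1,2)}$ as the reflection, in parallel with \cref{eq:bosonic-commutant-gen-2};
\item the Howe duality statement holds, so that the $\Orth(3)$ action is the full commutant and the multiplicities factor as claimed.
\end{itemize}
I would settle the first two by a direct computation of how $\Gamma$ and the $\SWAP$s act on the rotated modes $b_{i,p}$, and cite \cite{goodman2004multiplicity} for the third.
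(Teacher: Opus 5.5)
Your argument is essentially the paper's proof. The paper likewise writes $\zeromeanbosonictestproj^{(1,2)}$ as the Haar average over the copy of $\orthgrp(2)$ fixing the third register inside the $\orthgrp(3)$ action on registers. It finds the invariant line $Y_{\ell,0}\propto P_\ell(z)$ in each $\calY_\ell$ and none in $\calY_\ell\otimes\det$. It then obtains eigenvalue $P_\ell(0)$ for $\zeromeanbosonictestproj^{(1,2)}\SWAP^{(2,3)}\zeromeanbosonictestproj^{(1,2)}$. Your $S_3$-sum of zonal overlaps $P_\ell(e_3\cdot\sigma e_3)$ is the same computation as the paper's Legendre integral identity.

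Howe duality is unnecessary. Since every operator involved lies in the span of the $U_R$, it suffices, as the paper does, to decompose each finite-dimensional $\orthgrp(3)$-invariant particle-number sector into $\orthgrp(3)$ irreps.

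One slip needs fixing. $\diag(1,1,-1)$ sends $e_3\mapsto -e_3$, so it is not in $\Stab_{\orthgrp(3)}(e_3)$. Your stated reason for the vanishing on $\calY_\ell\otimes\det$ is therefore wrong as written; indeed, on $Y_{\ell,0}$ there this element acts by $(-1)^{\ell+1}$, not by $-1$. Use instead a reflection that fixes $e_3$, e.g.\ $\diag(1,-1,1)$ or the transposition matrix implementing $\SWAP^{(1,2)}$. Such a reflection fixes $P_\ell(z)$ in $\calY_\ell$, and hence acts by $-1$ on the candidate line in $\calY_\ell\otimes\det$.
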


We begin by defining invariant subspaces under $\Gamma$ and use this to decompose the Hilbert space into finite-dimensional subspaces, where $\Xi$ can be block-diagonalized by representation-theoretic methods. Choose $k \in \N$ and define the total-occupation number sectors
\[
V_k = \linspan\{\ket{x}\ket{y}\ket{z}: x,y,z \in \N^n,\quad \sum_{i=1}^n x_i + y_i + z_i = k\}
\]

\begin{lemma}\label{lem:particle-number-invariant-subspace}
    The $V_k$ are preserved under $\SWAP^{(2,3)}, \symproj^{(1,2,3)},$ and $\zeromeanbosonictestproj^{(1,2)}$.
\end{lemma}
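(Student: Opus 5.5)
The plan is to show that each of the three operators commutes with the total number operator
\[
N_{\mathrm{tot}} \coloneqq \sum_{s=1}^{3}\sum_{i=1}^n a_{s,i}^\dagger a_{s,i},
\]
since $V_k$ is exactly the $k$-eigenspace of $N_{\mathrm{tot}}$. Any operator commuting with $N_{\mathrm{tot}}$ preserves each of its eigenspaces, so this gives the lemma directly.

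For $\SWAP^{(2,3)}$ the check is immediate on the Fock basis: it maps $\ket{x}\ket{y}\ket{z}\mapsto\ket{x}\ket{z}\ket{y}$, which leaves $\sum_i (x_i+y_i+z_i)$ unchanged. The same holds for every permutation of the three registers. Since $\symproj^{(1,2,3)}$ is the average of the six permutation operators, it also maps $V_k$ into $V_k$.

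For $\zeromeanbosonictestproj^{(1,2)} = \symproj^{(1,2)}\Pi_{\ker\Gamma^{(1,2)}}$ (tensored with the identity on register $3$), the factor $\symproj^{(1,2)}$ is handled as above. For $\Pi_{\ker\Gamma^{(1,2)}}$, observe that each term $a_{1,i}a_{2,i}^\dagger$ and $a_{1,i}^\dagger a_{2,i}$ moves one boson between registers $1$ and $2$ within mode $i$. Using $[a,a^\dagger a]=a$, one verifies $[\Gamma^{(1,2)},N_{\mathrm{tot}}]=0$, and indeed $\Gamma$ commutes with each per-mode total $a_{1,i}^\dagger a_{1,i}+a_{2,i}^\dagger a_{2,i}$. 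Consequently $\Gamma^{(1,2)}$ restricted to $V_k$ is a Hermitian operator on the finite-dimensional space $V_k$. Its kernel on the full space is therefore $\bigoplus_k(\ker\Gamma\cap V_k)$, so $\Pi_{\ker\Gamma^{(1,2)}}$ is block diagonal with respect to the $V_k$ and preserves each one.

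The only point needing care is this last one, since $\Gamma$ is unbounded on an infinite-dimensional space. What makes it work is that each $V_k$ is finite-dimensional, of dimension $\binom{3n+k-1}{k}$, and invariant under $\Gamma$. Hence $\Gamma$ decomposes as a direct sum of Hermitian matrices, and its kernel projector is the direct sum of the finite-dimensional kernel projectors. Composing the three invariance statements then proves the lemma.
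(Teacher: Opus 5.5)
Your proof is correct and follows essentially the same route as the paper: the register permutations (hence $\SWAP^{(2,3)}$ and both symmetrizers) preserve total occupation, and $\Gamma^{(1,2)}$ is a sum of hopping terms commuting with the total number operator, so its kernel projector is block diagonal across the $V_k$. Your additional observation that each $V_k$ is finite-dimensional and $\Gamma$-invariant, so that the kernel projector is a direct sum of finite-dimensional kernel projectors, spells out the ``standard linear algebra'' step that the paper invokes for the unbounded operator $\Gamma$.
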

\begin{proof}
    The statement is trivial for $\SWAP^{(2,3)}$ and $\symproj^{(1,2,3)}$. We now show that $\zeromeanbosonictestproj^{(1,2)}$ preserves the $V_k$. Each term of $\Gamma$ can be written as $\ri(a_{1,i}a_{2,i}^\dagger - a_{1,i}^\dagger a_{2,i})$. For any vector $\ket{x}\!\ket{y}\!\ket{z} \in V_k$ that isn't annihilated by $a_{1,i}a_{2,i}^\dagger$, it's clear that the action of $a_{1,i}a_{2,i}^\dagger$ is to move a photon from the $i$th mode of register $1$ to the $i$th mode of register $2$. The opposite is accomplished $a_{1,i}^\dagger a_{2,i}$. That is, all terms in $\Gamma$ are hopping terms between modes, and so preserve the total particle number and the subspace $V_k$. Since $\Gamma^{(1,2)}$ is Hermitian and $\Pi_{\ker \Gamma^{(1,2)}}$ is an eigenprojector of $\Gamma^{(1,2)}$, standard results from linear algebra tell us that $V_k$ is invariant under $\Pi_{\ker \Gamma^{(1,2)}}$. The invariance of $V_k$ under $\zeromeanbosonictestproj^{(1,2)}$ follows from its invariance under $\Pi_{\ker \Gamma^{(1,2)}}$ and $\symproj^{(1,2)}$.
\end{proof}

Now, for $R \in \orthgrp(3)$, define $U_R$ as follows:
\begin{equation}\label{eq:o3-representation-fock}
        U_R a_{s,i} U_R = \sum_{t=1}^3 R_{t,s} a_{t,i}
\end{equation}
We show that when $R$ consists of a rotation about the $z$-axis, $U_{R(\alpha)}$ takes a particularly nice form.

\begin{lemma}\label{lem:so-representation-conjugate-action}
    Let
    \[
    R(\alpha) = \begin{pmatrix}
        \cos \alpha & - \sin \alpha & 0 \\
        \sin \alpha & \cos \alpha & 0 \\
        0 & 0 & 1 
    \end{pmatrix}
    \]
    for $\alpha \in [0, 2\pi)$. Then
    \[
    U_{R(\alpha)} = e^{-\ri \alpha \Gamma}.
    \]
\end{lemma}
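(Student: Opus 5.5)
The plan is to show that the two unitaries $U_{R(\alpha)}$ and $W(\alpha) \coloneqq e^{-\ri\alpha\Gamma^{(1,2)}}$ induce the same linear transformation of the annihilation operators $a_{s,i}$ under conjugation. They then agree up to a global phase, and that phase is fixed by their common action on the vacuum. Throughout, I read \cref{eq:o3-representation-fock} as $U_R a_{s,i} U_R^\dagger = \sum_t R_{t,s} a_{t,i}$. I also take $U_R$ to be the passive (number-preserving) Gaussian unitary realizing this map with the normalization $U_R\ket{0}=\ket{0}$; this is the standard convention, since passive linear optics fixes the vacuum.

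\textbf{Step 1 (Heisenberg equations).} Set $a_{s,i}(\alpha) \coloneqq W(\alpha) a_{s,i} W(\alpha)^\dagger$. Then
\[
\tfrac{d}{d\alpha} a_{s,i}(\alpha) = -\ri\, W(\alpha)\,[\Gamma^{(1,2)}, a_{s,i}]\,W(\alpha)^\dagger .
\]
The commutators follow from $[a_{s,i}^\dagger a_{t,j}, a_{u,k}] = -\delta_{su}\delta_{ik} a_{t,j}$. Only the $i$th term of $\Gamma^{(1,2)} = \ri\sum_i (a_{1,i}a_{2,i}^\dagger - a_{1,i}^\dagger a_{2,i})$ contributes, and this gives
\[
[\Gamma^{(1,2)}, a_{1,i}] = \ri\, a_{2,i}, \qquad [\Gamma^{(1,2)}, a_{2,i}] = -\ri\, a_{1,i}, \qquad [\Gamma^{(1,2)}, a_{3,i}] = 0 .
\]
Hence $\tfrac{d}{d\alpha}a_{1,i}(\alpha) = a_{2,i}(\alpha)$, $\tfrac{d}{d\alpha}a_{2,i}(\alpha) = -a_{1,i}(\alpha)$, and $a_{3,i}(\alpha)=a_{3,i}$. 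This is a closed linear ODE on the span of $\{a_{1,i},a_{2,i}\}$. Its unique solution is
\[
a_{1,i}(\alpha) = \cos\alpha\, a_{1,i} + \sin\alpha\, a_{2,i}, \qquad a_{2,i}(\alpha) = -\sin\alpha\, a_{1,i} + \cos\alpha\, a_{2,i}.
\]

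\textbf{Step 2 (matching with $R(\alpha)$).} Read off the columns of $R(\alpha)$. The first column gives $\sum_t R_{t,1}a_{t,i} = \cos\alpha\, a_{1,i} + \sin\alpha\, a_{2,i}$, and the second gives $-\sin\alpha\, a_{1,i} + \cos\alpha\, a_{2,i}$. The third column leaves $a_{3,i}$ fixed. This is exactly the map from Step 1, so $U_{R(\alpha)}^\dagger W(\alpha)$ commutes with every $a_{s,i}$ and, taking adjoints, with every $a_{s,i}^\dagger$. The Fock representation of the CCR on $H^{\otimes 3}$ is irreducible, so by Schur's lemma $W(\alpha) = e^{\ri\theta(\alpha)} U_{R(\alpha)}$ for some phase.

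\textbf{Step 3 (fixing the phase).} Every term of $\Gamma^{(1,2)}$ contains an annihilation operator, so $\Gamma^{(1,2)}\ket{0}=0$ and therefore $W(\alpha)\ket{0}=\ket{0}$. Since $U_{R(\alpha)}\ket{0}=\ket{0}$ as well, the phase is trivial.

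The main obstacle is not the computation but the conventions. The sign of $\alpha$ and the transpose in $R_{t,s}$ must line up; the check in Step 2 is where a sign error would appear, and it requires the $-\ri$ in the exponent. The phase ambiguity of a projective representation must also be pinned down, which Step 3 does by the vacuum normalization. Unboundedness of the bosonic operators is handled by working within each finite-dimensional sector $V_k$ of \cref{lem:particle-number-invariant-subspace}. Both $\Gamma^{(1,2)}$ and $U_{R(\alpha)}$ preserve these sectors, so the ODE argument is rigorous there.
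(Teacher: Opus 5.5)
Your proposal is correct and follows essentially the same route as the paper. Both arguments reduce the claim to matching the conjugation action on the $a_{s,i}$ through the commutators $[\Gamma^{(1,2)},a_{1,i}]=\ri a_{2,i}$ and $[\Gamma^{(1,2)},a_{2,i}]=-\ri a_{1,i}$; the paper sums the nested-commutator (Campbell) series where you integrate the equivalent Heisenberg ODE, and your Step 3 (fixing the global phase via the vacuum) usefully tightens the paper's appeal to Gaussian unitaries being ``uniquely defined'' by their conjugation action, which holds only up to a phase.
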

\begin{proof}
    For simplicity, we will denote $\Gamma = \Gamma^{(1,2)}$ throughout this proof.
    Since Gaussian unitaries are uniquely defined by their action on the raising and lowering operators by conjugation, it suffices to show the following $3$ statements:
    \begin{equation}\label{eq:first-conjugate-G-statement}
    e^{-\ri \alpha \Gamma}a_{1,i} e^{\ri \alpha \Gamma} = \cos \alpha \cdot a_{1,i} + \sin \alpha \cdot a_{2,i}
    \end{equation}
    \begin{equation}\label{eq:second-conjugate-G-statement}
    e^{-\ri \alpha \Gamma}a_{2,i} e^{\ri \alpha \Gamma} = -\sin \alpha \cdot a_{1,i} + \cos \alpha \cdot a_{2,i}
    \end{equation}
    \begin{equation}\label{eq:third-conjugate-G-statement}
    e^{-\ri \alpha \Gamma}a_{3,i} e^{\ri \alpha \Gamma} = a_{3,i}
\end{equation}

To show equations \eqref{eq:first-conjugate-G-statement} and \eqref{eq:second-conjugate-G-statement}, we will use the well-known Campbell identity:
\[
e^{A}Be^{-A} = \sum_{k=0}^\infty \frac{[A,B]_k}{k!},
\]
where $[A,B]_k$ is the $k$-th order nested commutator of $A$ and $B$.

Applying the bosonic canonical commutation relations, we have
\begin{align}
    [\Gamma, a_{1,i}] &= \ri [a_{1,i}a_{2,i}^\dagger - a_{1,i}^\dagger a_{2,i}, a_{1,i}] \\
    &= \ri\left(a_{1,i}a_{2,i}^\dagger a_{1,i} - a_{1,i}^\dagger a_{2,i} a_{1,i} -  a_{1,i}^2a_{2,i}^\dagger +  a_{1,i} a_{2,i}^\dagger a_{2,i}^\dagger \right)\\
    &= \ri[a_{1,i}, a_{1,i}^\dagger] \cdot a_{2,i} = \ri a_{2,i}.
\end{align}
Antisymmetrically, we have
\begin{align}
    [\Gamma, a_{2,i}] &= \ri[a_{1,i}a_{2,i}^\dagger - a_{1,i}^\dagger a_{2,i}, a_{2,i}] \\
    &= \ri\left(a_{1,i}a_{2,i}^\dagger a_{2,i} - a_{1,i}^\dagger a_{2,i}^2 -  a_{2,i}a_{1,i}a_{2,i}^\dagger +  a_{2,i} a_{1,i}^\dagger a_{2,i}^\dagger\right) \\
    &= \ri  a_{1,i} \cdot [a_{2,i}^\dagger, a_{2,i}]  = - \ri a_{1,i}.
\end{align}
We now observe a pattern in the nested commutators of $-\ri \alpha \Gamma$ and the $a_{1,i},a_{2,i}$:
\begin{equation}
    [-\ri \alpha \Gamma, a_{1,i}]_0 = a_{1,i}, \  [-\ri \alpha \Gamma, a_{1,i}]_1 = \alpha a_{2,i}, \  [-\ri \alpha \Gamma, a_{1,i}]_2 = -\alpha^2 a_{1,i}, \  [-\ri \alpha \Gamma, a_{1,i}]_3 = -\alpha^3 a_{2,i}
\end{equation}

\begin{equation}
    [-\ri \alpha \Gamma, a_{2,i}]_0 = a_{2,i}, \  [-\ri \alpha \Gamma, a_{2,i}]_1 = -\alpha \alpha a_{1,i}, \  [-\ri \alpha \Gamma, a_{2,i}]_2 = -\alpha^2 a_{2,i}, \  [-\ri \alpha \Gamma, a_{2,i}]_3 = \alpha^3 a_{1,i}
\end{equation}

Thus, by the Campbell identity,
\begin{align}
    e^{-\ri \alpha \Gamma}a_{1,i}e^{\ri \alpha \Gamma} &= \sum_{k=0}^{\infty} \frac{[-\ri \alpha \Gamma, a_{1,i}]_k}{k!} 
    \\ &= a_{1,i} \sum_{k=0}^\infty \frac{ (-1)^k\alpha^{2k}}{(2k)!} + a_{2,i} \sum_{k=0}^\infty \frac{(-1)^k\alpha^{2k+1}}{(2k+1)!} \\
    &= \cos\alpha \cdot a_{1,i} + \sin \alpha \cdot a_{2,i}
\end{align}
and
\begin{align}
    e^{-\ri \alpha \Gamma}a_{2,i}e^{\ri \alpha \Gamma} &= \sum_{k=0}^{\infty} \frac{[-\ri \alpha \Gamma, a_{2,i}]}{k!} 
    \\ &= \alpha_{2,i} \sum_{k=0}^\infty \frac{(-1)^k\alpha^{2k}}{(2k)!} - a_{1,i} \sum_{k=0}^\infty \frac{(-1)^k \alpha^{2k+1}}{(2k+1)!} \\
    &= -\sin\alpha \cdot a_{1,i} + \cos \alpha \cdot a_{2,i}.
\end{align}

Finally, equation \eqref{eq:third-conjugate-G-statement} simply follows from the fact that $\Gamma^{(1,2)}$ acts trivially on the third register.
\end{proof}

We show that this acceptance projector is the Haar average of a subrepresentation of $\orthgrp(3)$.

\begin{lemma} \label{lem:zero-mean-projector-invariant}
    Let $\mu_{\orthgrp(2)}$ be the Haar measure on $\orthgrp(2)$ and define
    \begin{equation}
    H = \left\{\begin{pmatrix}
        R^\prime & 0 \\ 0 & 1
    \end{pmatrix} : R^\prime \in \orthgrp(2)\right\}.
    \end{equation}
    Then
    \begin{equation}
    \zeromeanbosonictestproj^{(1,2)} = \int_{H} U_{R} \ \rd \mu_{\orthgrp(2)}(R).
    \end{equation}
\end{lemma}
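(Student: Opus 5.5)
The plan is to show that both sides are orthogonal projectors onto the same subspace of $\calH^{\otimes 3}$, namely the vectors fixed by every $U_R$ with $R \in H$.

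\textbf{Right-hand side.} Set $T \coloneqq \int_H U_R \, \rd\mu_{\orthgrp(2)}(R)$. This is the Reynolds operator of the restricted representation $R \mapsto U_R$ on $H \cong \orthgrp(2)$. Each $U_R$ is a passive Gaussian unitary, hence unitary, and the Haar measure is invariant under $R \mapsto R^{-1}$. Standard properties of the Reynolds operator then give $T = T^\dagger = T^2$, with $T$ the orthogonal projector onto $\{v : U_R v = v \ \forall R \in H\}$. It remains to identify this fixed space with the range of $\zeromeanbosonictestproj^{(1,2)} = \symproj^{(1,2)}\Pi_{\ker\Gamma^{(1,2)}}$.

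\textbf{Generators of $H$.} The group $\orthgrp(2)$ is generated by $\splorthgrp(2) = \{R(\alpha)\}$ together with one reflection. For the reflection I take $\begin{pmatrix} 0&1\\1&0\end{pmatrix}$ padded to $3\times 3$. By \cref{eq:o3-representation-fock} it maps $a_{1,i} \leftrightarrow a_{2,i}$ and fixes $a_{3,i}$. The unitary $\SWAP^{(1,2)}$ has the same conjugation action and fixes the vacuum, so $U_R = \SWAP^{(1,2)}$ for this $R$. A possible phase ambiguity in $U_R$ is removed by requiring that $U_R$ fix the vacuum, which holds for both operators.

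\textbf{Fixed space.} Consider first the rotations. By \cref{lem:so-representation-conjugate-action}, $U_{R(\alpha)} = e^{-\ri\alpha\Gamma^{(1,2)}}$. The operator $\Gamma^{(1,2)}$ preserves each finite-dimensional sector $V_k$ by \cref{lem:particle-number-invariant-subspace}, so it is diagonalizable there with real spectrum. Hence a vector is fixed by all $e^{-\ri\alpha\Gamma}$, $\alpha\in[0,2\pi)$, if and only if it lies in $\ker\Gamma^{(1,2)}$. The direction $\ker\Gamma^{(1,2)} \Rightarrow$ fixed is immediate. For the converse, differentiate at $\alpha=0$.

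Now add the reflection. The fixed space of $H$ is $\ker\Gamma^{(1,2)} \cap \{\SWAP^{(1,2)} v = v\}$. Since $\SWAP^{(1,2)}\Gamma^{(1,2)}\SWAP^{(1,2)} = -\Gamma^{(1,2)}$, the swap preserves $\ker\Gamma^{(1,2)}$, so $\symproj^{(1,2)}$ and $\Pi_{\ker\Gamma^{(1,2)}}$ commute. Their product is therefore the orthogonal projector onto the intersection, which is exactly the fixed space. This gives $T = \zeromeanbosonictestproj^{(1,2)}$.

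\textbf{Expected obstacle.} I expect the delicate step to be the domain issue. $\Gamma$ is unbounded on Fock space, so the kernel and spectral statements need care. I would handle it by working sector by sector in the finite-dimensional $V_k$. Both sides preserve each $V_k$, and $\bigoplus_k V_k$ is dense, so equality on every $V_k$ extends to the whole space by boundedness. A secondary point is confirming the sign and transpose convention in \cref{eq:o3-representation-fock}, but this only reparametrizes $\alpha$ and does not affect the fixed space.
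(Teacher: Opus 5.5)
Your argument is correct, but it takes a different route from the paper's proof.

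The paper computes the integral directly. On the rotation subgroup $K\cong\splorthgrp(2)$ it averages $U_{R(\alpha)}=e^{-\ri\alpha\Gamma}$ over $\alpha\in[0,2\pi)$ by expanding the exponential as a power series. It then uses that $\Gamma$ has integer eigenvalues, so each nonzero eigenvalue $\lambda$ contributes $(1-e^{2\pi\ri\lambda})/(2\pi\ri\lambda)=0$, which gives $\Pi_{\ker\Gamma^{(1,2)}}$. Finally it splits $H$ into $K$ and the coset $SK$, with $U_S=\SWAP^{(1,2)}$, to obtain $\frac12\bigl(I+\SWAP^{(1,2)}\bigr)\Pi_{\ker\Gamma^{(1,2)}}$ as an explicit operator identity.

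You instead identify the Haar average abstractly as the orthogonal projector onto $H$-invariant vectors. You characterize those vectors as $\ker\Gamma^{(1,2)}\cap\{\SWAP^{(1,2)}v=v\}$, and use $\SWAP^{(1,2)}\Gamma^{(1,2)}\SWAP^{(1,2)}=-\Gamma^{(1,2)}$ to show the two projectors commute, so their product is the projector onto the intersection.

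What each route buys:
\begin{itemize}
\item The paper's computation needs no commutation argument, since the equality holds as an operator identity regardless. It also makes the integer spectrum, i.e.\ $2\pi$-periodicity of $e^{-\ri\alpha\Gamma}$, explicit.
\item Your route avoids termwise power-series manipulation of the unbounded $\Gamma$ by working sector by sector in the finite-dimensional $V_k$.
\item It absorbs periodicity into the fact that $R\mapsto U_R$, with the vacuum-fixing phase convention, is a genuine unitary representation of a compact group.
\item As a bonus, it proves that $\symproj^{(1,2)}\Pi_{\ker\Gamma^{(1,2)}}$ really is an orthogonal projector. The paper takes this for granted when it calls $\zeromeanbosonictestproj^{(1,2)}$ a projector.
\end{itemize}
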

\begin{proof}
    Write
    \[
    R(\alpha) = \begin{pmatrix}
        \cos \alpha & - \sin \alpha & 0 \\
        \sin \alpha & \cos \alpha & 0 \\
        0 & 0 & 1 
    \end{pmatrix}.
    \]
    This forms a subgroup $K \subseteq H$ isomorphic to $\splorthgrp(2)$.
    Now, by definition, we have
    \[
    U_R a_{1,i} U_R^\dagger = \cos \alpha \cdot a_{1,i} + \sin \alpha \cdot a_{2,i}, \quad U_R a_{2,i} U_R^\dagger = -\sin \alpha \cdot a_{1,i} + \cos \alpha \cdot a_{2,i}, \quad U_R a_{3,i} U_R^\dagger = a_{3,i}.
    \]
    By \cref{lem:so-representation-conjugate-action}, we have that $U_{R(\alpha)} = e^{-\ri \alpha \Gamma}$.
    Thus we compute
    \begin{align}
    \int_{K} U_{R} \ \rd \mu_{\splorthgrp(2)}(R) &= \frac{1}{2\pi}\int_0^{2\pi} U_{R(\alpha)} \ \rd \alpha \\
    &= \frac{1}{2\pi}\int_{0}^{2 \pi} e^{-\ri\alpha \Gamma} \ \rd \alpha \\
    &= \frac{1}{2\pi}\sum_{k=0}^\infty \int_{0}^{2 \pi} \frac{(-\ri)^k \alpha^k \Gamma^k}{k!} \ \rd \alpha \\
    &= \frac{1}{2\pi}\sum_{k=0}^\infty \frac{(-\ri)^k (2\pi)^{k+1} \Gamma^k}{(k+1)!} \\
    &= I + \sum_{k=1}^\infty \frac{(-\ri)^k (2\pi)^{k} \Gamma^k}{(k+1)!}
    \end{align}
    Now, $\Gamma$ has integer eigenvalues. On the $0$ eigenspace, the above operator acts as the identity. For an nonzero integer eigenvalue $\lambda$, the above operator acts as 
    \begin{equation}
        \sum_{k=0}^\infty \frac{(-2\pi \ri \lambda)^{k}}{(k+1)!} = -\frac{1}{2\pi \ri \lambda} \sum_{k=1}^\infty \frac{(-2\pi \ri \lambda)^{k}}{k!} = \frac{1 - e^{2\pi \ri \lambda}}{2\pi \ri \lambda} = 0.
    \end{equation}
    Thus 
    \begin{equation}
        \int_{K} U_{R} \ \rd \mu_{\splorthgrp(2)}(R) = \Pi_{\ker \Gamma^{(1,2)}}.
    \end{equation}

    We now compute the Haar integral over $\orthgrp(2)$. Consider the coset $F$ of $H / K$ generated by $SK$, where $S$ is the $3\times 3$ permutation matrix that swaps the first and second rows. We have that $H = K \sqcup F$, and that $\mu_{\orthgrp(2)}(F) =  \mu_{\orthgrp(2)}(K)$. Thus, by multiplicative invariance of the Haar measure,
    \begin{align}
        \int_{H} U_R \ \rd \mu_{\orthgrp(2)}(R) &= \frac{1}{2} \int_{K} U_R \ \rd \mu_{\splorthgrp(2)}(R) + \frac{1}{2} \int_{K} U_{SR} \ \rd \mu_{\splorthgrp(2)}(R). \\
        &= \frac{1}{2} \int_{K} U_R \ \rd \mu_{\splorthgrp(2)}(R) + U_{S}\frac{1}{2} \int_{K} U_{R} \ \rd \mu_{\splorthgrp(2)}(R). \\
        &= \frac{I + \SWAP^{(1,2)}}{2} \Pi_{\ker \Gamma^{(1,2)}} \\
        &= \symproj^{(1,2)} \Pi_{\ker \Gamma^{(1,2)}}. \qedhere
    \end{align}
\end{proof}

Now, the entire result follows from the fact that the spectral gap of the three-copy projector operator is identical to that of the fermionic Gaussian tester.
\begin{lemma}\label{lem:spec_gap_zero_mean_helper}
    Let $P = \zeromeanbosonictestproj^{(1,2)}\SWAP^{(2,3)}\zeromeanbosonictestproj^{(1,2)}$ and let $\lambda$ be any eigenvalue of $P$ that is not equal to $1$. Then $\lambda \in [-1/2,3/8]$.
\end{lemma}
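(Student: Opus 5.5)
The plan is to recast $P$ entirely in terms of the $\orthgrp(3)$ action $R \mapsto U_R$ from \cref{eq:o3-representation-fock}, and then read off its eigenvalues one irrep at a time.

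\textbf{Step 1: restrict to finite dimensions and identify the operators.} By \cref{lem:particle-number-invariant-subspace}, every operator in sight preserves the finite-dimensional sectors $V_k$. Each $U_R$ is passive, i.e.\ it mixes the $a_{s,i}$ linearly, so it also preserves total occupation. Hence I work inside a fixed $V_k$, where $R\mapsto U_R$ is a genuine finite-dimensional unitary representation of $\orthgrp(3)$. Two identifications drive the argument:
\begin{itemize}
\item By \cref{lem:zero-mean-projector-invariant}, $\zeromeanbosonictestproj^{(1,2)} = \ReynoldsOp_{\Stab_{\orthgrp(3)}(e_3)}$, the projector onto $\Stab(e_3)$-invariant vectors.
\item From the defining conjugation action, $\SWAP^{(2,3)} = U_{S}$, where $S$ is the permutation matrix exchanging $e_2$ and $e_3$.
\end{itemize}
I will check the second identity directly, including the absence of any sign.

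\textbf{Step 2: decompose into irreps and locate the invariant vectors.} Decompose $V_k \cong \bigoplus V_\rho\otimes M_\rho$ over $\orthgrp(3)$-irreps; $P$ acts as $P_\rho\otimes \identity_{M_\rho}$. Each irrep is $\calY_\ell$ or $\calY_\ell\otimes\det$. By \cref{lem:spherical-harmonic-projectors} with $d=3$, the $\Stab(e_3)$-invariant subspace of an irrep has dimension at most one.
\begin{itemize}
\item In $\calY_\ell$ it is spanned by the zonal harmonic $\phi(e_3)\propto Y_{\ell,0}$. Indeed, this is invariant under rotations about the $z$-axis and under reflections through planes containing $e_3$.
\item In $\calY_\ell\otimes\det$, such a reflection acts on $Y_{\ell,0}$ by $+1$ but $\det=-1$, so there is no invariant vector and $P_\rho=0$ there.
\end{itemize}
On $\calY_\ell$ the operator therefore has the form
\[
P_\rho = \ket{\phi(e_3)}\!\bra{\phi(e_3)}\,U_S\,\ket{\phi(e_3)}\!\bra{\phi(e_3)}.
\]
Since $U_S\ket{\phi(e_3)}$ is invariant under $S\,\Stab(e_3)S^{-1}=\Stab(e_2)$, it equals $\ket{\phi(e_2)}$. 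Equivalently this follows from the equivariance $\phi(Ry)=U_R\phi(y)$ built into the construction in that Fact, with no extra phase on $\calY_\ell$.

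\textbf{Step 3: evaluate the eigenvalues.} The only possibly nonzero eigenvalue on $\calY_\ell$ is $\braket{\phi(e_3)|\phi(e_2)}$. By \cref{eq:spherical-harmonic-overlap} with $d=3$ (so the Gegenbauer index is $1/2$ and $C^{(1/2)}_\ell=P_\ell$), this equals $P_\ell(e_3\cdot e_2)=P_\ell(0)$. Therefore
\[
\spec(P)\subseteq\{0\}\cup\{P_\ell(0)\}_{\ell\ge0},
\qquad
P_\ell(0)=\begin{cases}(-1)^{r}\dfrac{\binom{2r}{r}}{4^{r}}, & \ell=2r,\\[4pt] 0, & \ell \text{ odd}.\end{cases}
\]
This is exactly the sequence $\nu_r$ obtained in the proof of \cref{lem:fermionic-spectral-gap}. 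So $\ell=0$ gives the eigenvalue $1$, and the same ratio argument used there, $\abs{\nu_{r+1}/\nu_r}=\frac{2r+1}{2r+2}<1$, shows every other eigenvalue lies in $[\nu_1,\nu_2]=[-1/2,3/8]$.

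\textbf{Main obstacle.} I expect the delicate step to be making the identifications of Step 1 airtight. This means confirming that $U_R$ is an honest (non-projective) representation of all of $\orthgrp(3)$ on the Fock space, including reflections. It also means confirming that $\SWAP^{(2,3)}$ and $\SWAP^{(1,2)}$ coincide with $U_S$ exactly, not merely up to a phase, since a stray sign would flip $P_\ell(0)$. I would verify both by checking the conjugation action on the $a_{s,i}$ together with the fact that each operator fixes the three-register vacuum.
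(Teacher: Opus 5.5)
Your proposal is correct and follows essentially the same route as the paper: restrict to the sectors $V_k$, decompose under the $\orthgrp(3)$ action $U_R$, use \cref{lem:zero-mean-projector-invariant} to see that only the zonal vector $Y_{\ell,0}\in\calY_\ell$ survives (and nothing survives in $\calY_\ell\otimes\det$), evaluate the swap's matrix element as $P_\ell(0)$, and reuse the fermionic bound on the $\nu_r$. The only cosmetic difference is in evaluating that matrix element. The paper factors the swap as $R_x(-\pi/2)\,\diag(1,-1,1)$ and applies \cref{prop:legendre-inner-product-identity}. You instead transport the zonal harmonic by equivariance, $U_S\phi(e_3)=\phi(e_2)$, and read off $P_\ell(0)$ from the Gegenbauer overlap formula; note that the concrete zonal model is what fixes this identity exactly, since the stabilizer-conjugation argument alone only determines $U_S\phi(e_3)$ up to a phase.
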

\begin{proof}
    Since $P$ preserves $V_k$ by \cref{lem:particle-number-invariant-subspace}, it suffices to find the eigenvalues of $P$ on each block corresponding to the subspace $V_k$.
    
    Consider the total number operator $\hat{N}$:
    \begin{equation}
    \hat{N} = \sum_{i=1}^n a_{1,i}^\dagger a_{1,i} + a_{2,i}^\dagger a_{2,i} + a_{3,i}^\dagger a_{3,i},
    \end{equation}
    where $a_{s,i}$ indicates the lowering operator on mode $i$ of register $s$. 

    Choose an $R \in \orthgrp(3)$ and recall the unitary representation of $\orthgrp(3)$ defined in \eqref{eq:o3-representation-fock} as
    \begin{equation}
        U_R a_{s,i}^\dagger U_R^\dagger = \sum_{t=1}^3 R_{t,s} a_{t,i}^\dagger
    \end{equation}
    for each $s \in \{1,2,3\}$. We observe that
    \begin{align}
        U_R\hat{N}U_R^\dagger &= \sum_{i=1}^n \sum_{s=1}^3 U_R a_{s,i}^\dagger a_{s,i} U_R^\dagger \\
        &= \sum_{i=1}^n \sum_{s=1}^3 U_R a_{s,i}^\dagger U_R^\dagger U_R a_{s,i} U_R^\dagger \\ %
        &= \hat{N},
    \end{align}
    so the action of $\orthgrp(3)$ defined by $U_R$ preserves total particle number.
    
    Observe that because $V_k$ is invariant under the action $R \to U_R$, it is a representation of $\orthgrp(3)$. $V_k$ is also clearly finite dimensional. Indeed, we have
    \[
    \dim(V_k) = \binom{k + 3n - 1}{3n - 1}.
    \]
    As such, fixing $k$, we can decompose $V_k$ into the irreducible representations of $\orthgrp(3)$:
    \begin{equation}
        V_k \cong \left(\bigoplus_{\ell \geq 0} \calY_{\ell}^{\oplus m_{k,\ell}^+} \right) \oplus \left(\bigoplus_{\ell \geq 0} \left(\calY_{\ell} \otimes {\det}\right)^{\oplus m_{k,\ell}^-} \right) = W_k,
    \end{equation}
    where $m_{k,\ell}^+$ and $m_{k,\ell}^-$ indicate multiplicities.
    We will now compute the eigenvalues of the operator $P$ on this subspace. The action of $P$ is also block-diagonal on these irreps, and it suffices to calculate the eigenvalues of $P_{\calY_\ell}$ and $P_{\calY_\ell \otimes {\det}}$ --- the multiplicities $m_{k,\ell}^+$ and $m_{k,\ell}^-$ clearly do not affect the spectrum. Let $\wt{U}_R$ be the corresponding mapping of the representation $U_R$ on the $W_k$
    
    Consider the subgroup $H \cong \orthgrp(2)$ of $\orthgrp(3)$ that acts nontrivially only on the first two registers:
    \begin{equation}
    H = \left\{\begin{pmatrix}
        R^\prime & 0 \\ 0 & 1
    \end{pmatrix} : R^\prime \in \orthgrp(2)\right\}
    \end{equation}

    By \cref{lem:zero-mean-projector-invariant}, $\zeromeanbosonictestproj$ is the projector onto the subspace invariant under $U_{R_z(\alpha)}$ for $R_z(\alpha) \in H$. Fix $\ell$ and consider the subspace of $\calY_\ell$ that is invariant under some $R_z(\alpha) \in H$. By \cref{eq:spherical-z-rotation-action}, we have that
    \[
    \wt{U}_{R_z(\alpha)}Y_{\ell, m}(\theta,\phi) = e^{-\ri m \alpha}Y_{\ell, m}(\theta,\phi),
    \]
    and so the only subspace of $\calY_\ell$ invariant under all $\wt{U}_R$ for $R \in H$ is spanned by $Y_{\ell, 0}$. We denote this by $\calY_\ell^0$ for convenience. Now, we turn our attention to the representations $\calY_\ell \otimes {\det}$. If $R \in H$ consists of a reflection as well as a rotation, then $\calY_\ell \otimes {\det}$ negates it, so $\left.P\right|_{\calY_\ell \otimes {\det}} = 0$. Otherwise, $\left.\wt{U}_{R}\right|_{\calY_\ell \otimes {\det}} $ is essentially identical to $\left.\wt{U}_{R}\right|_{\calY_\ell}$. Thus it suffices to consider the eigenspectrum of $\left.P\right|_{\calY_\ell^0}$. The $\SWAP^{(2,3)}$ term of $P$ can be written as the action of a permutation matrix in $\orthgrp(3)$:
    \[
     R_{\SWAP_{y,z}} = \begin{pmatrix}
        1 & 0 & 0 \\ 0 & 0 & 1 \\ 0 & 1 & 0
    \end{pmatrix} = \begin{pmatrix}
        1 & 0 & 0 \\ 0 & 0 & -1 \\ 0 & 1 & 0
    \end{pmatrix}\begin{pmatrix}
        1 & 0 & 0 \\ 0 & -1 & 0 \\ 0 & 0 & 1
    \end{pmatrix} = R_x(-\pi/2) \cdot \diag(1,-1,1).
    \]
    The second matrix is within $H$, and so does not change $\calY_\ell^0$. It suffices now to compute
    \[
    \int \overline{Y_{\ell,0}}(\theta, \phi)  \wt{U}_{R_{\SWAP_{y,z}}} Y_{\ell,0}(\theta, \phi)\ \rd \theta \rd \phi = \int \overline{Y_{\ell,0}}(\theta, \phi)  \wt{U}_{R_{x}(-\pi/2)} Y_{\ell,0}(\theta, \phi)\ \rd \theta \rd \phi
    \]
    We will compute $\left.\wt{U}_{R_{x}(-\pi/2)}\right|_{Y_{\ell,0}}$.  It will be convenient to write $Y_{\ell, \theta}$ as a function of rectangular coordinates $x = \sin \theta \cos \phi, y = \sin \theta \sin \phi,$ and $z = \cos \theta$.
    Recall that, by \eqref{eq:spherical-harmonics}
    \[
    Y_{\ell,0}(\theta, \phi) = \sqrt{\frac{2 \ell + 1}{4\pi}} P_\ell(\cos \theta) = \sqrt{\frac{2 \ell + 1}{4\pi}} P_\ell(z)
    \]
    which is independent of $\phi$. The action of $R_{x}(-\pi/2)$ swaps the $y$ and $z$ coordinates:
    \[
    \wt{U}_{R_{x}(-\pi/2)} Y_{\ell,0}(x,y,z) = Y_{\ell,0}(x,z,y).
    \]
    Thus we have
    \begin{align}
    \int_{S^2} \overline{Y_{\ell,0}}(x,y,z)  Y_{\ell,0}(x,z,y)\ \rd x \rd y \rd z &= \frac{2\ell + 1}{4 \pi} \int_{S^2} P_{\ell}(z) P_{\ell}(y)\ \rd x \rd y \rd z \\
    &= \frac{2 \ell + 1}{4 \pi} \int_{S^2} P_{\ell}(e_3 \cdot (x,y,z)) P_{\ell}(e_2 \cdot (x,y,z)) \\
    &= P_\ell(e_3 \cdot e_2) = P_\ell(0) & \tag{By \cref{prop:legendre-inner-product-identity}} 
    \end{align}
    Thus, when $\ell = 2r$ the associated nonzero eigenvalue of $P$ is
    \[
    P_{2r}(0) = (-1)^r \cdot \frac{\binom{2r}{r}}{4^r}
    \]
    and when $\ell = 2r+1$ the associated eigenvalue of $P$ is $P_{2r+1}(0) = 0$. This exactly matches the spectrum of the corresponding $P$ in the fermionic case, and so every eigenvalue of $P$ is either $1$ or in $[-1/2,3/8]$. \qedhere
    
\end{proof}

\subsection{The algorithm}

We now state our algorithm that efficiently constructs the projector $\zeromeanbosonictestproj$. The idea is to consider a Gaussian unitary $V_i$ that acts on $i$th modes of both the first and second register that satisfies:
\[
V_i a_{1,i} V_i^\dagger = \frac{a_{1,i} +  a_{2,i}}{\sqrt{2}}, \quad V_i a_{2,i} V_i^\dagger = \frac{\ri(a_{1,i} - a_{2,i})}{\sqrt{2}}.
\]
Applying $V_i$ to each term $\Gamma_i$ of $\Gamma^{(1,2)}$ gives us
\begin{align}
V_i \Gamma_iV_i^\dagger &=\ri V_i\left(a_{1,i}a_{2,i}^\dagger - a_{1,i}^\dagger a_{2,i}\right) V_i^\dagger \\
&= a_{1,i}^\dagger a_{1,i} - a_{2,i}^\dagger a_{2,i} \\
&= \hat{N}_{1,i} - \hat{N}_{2,i},
\end{align}
where $\hat{N}_{t,i}$ is the number operator for mode $i$ of register $t$. Taking $V = \bigotimes_{i=1}^n V_i$, we have that
\begin{equation}
    V\Gamma V^\dagger = \hat{N}_{1} - \hat{N}_{2}
\end{equation}
Thus measuring the equal-photon subspace projects onto $\ker \Gamma^{(1,2)}$. Since we supply a state $\ket{\psi}^{\otimes 2}$, we are already within the symmetric subspace. Thus \cref{alg:zero-mean-bosonic-tester} efficiently implements the projector $\zeromeanbosonictestproj$ given an input state $\ket{\psi}^{\otimes 2}$.

\begin{algorithm}[H]
    \caption{Base tester for zero-mean bosonic Gaussian states.}\label{alg:zero-mean-bosonic-tester}
    \KwInput{Two copies of a bosonic pure state $\ket{\psi}$.}

    \BlankLine

    Apply the Gaussian unitary $V = \bigotimes_{i} V_i$ defined above.
    
    Measure the occupation numbers $N_1$ and $N_2$ 
    of the first and second register, respectively.

    \uIf{$N_1 = N_2$}{
        \KwRet{\accept}
    }
    \Else{
        \KwRet{\reject}
    }
\end{algorithm}

We remark that the Gaussian rotation $V$ can be seen as a bosonic analogue of Bell sampling. Since $V$ is also local and a passive Gaussian, it can also be implemented in an experimentally friendly manner.

Now we show that repeating \cref{alg:zero-mean-bosonic-tester} results as a subroutine of \cref{alg:constant-copy-tester} yields an effective tolerant tester for zero-mean bosonic Gaussian states.

\begin{proof}[Proof of \cref{thm:testing-zero-mean-bosonic-gaussians}]
Once again, we invoke \cref{alg:constant-copy-tester} with \cref{alg:zero-mean-bosonic-tester} serving as the subroutine $\calA$. Perfect completeness of the test follows from \cref{lem:perfect-completeness-zero-mean}. Invoking \cref{thm:rejection-infidelity-bounds} with $m = 3$, $k = 2$, and a spectral gap of $\Delta = 5/12$ (\cref{lem:spec_gap_zero_mean}), we have that the rejection probability for any state $\ket{\psi}$ obeys
    \begin{equation}
        \frac{1}{8} \, \dist(\psi, \zeromeanbosonicgaussianset)^2 \leq q(\psi) \leq 2 \, \dist(\psi, \zeromeanbosonicgaussianset)^2
    \end{equation}
    Hence \cref{thm:framework-tolerant-testing} yields the claim with $\beta = 4$.
\end{proof}

\section{Testing bosonic coherent states}\label{sec:coherent-state-testing}

The last of our property testing algorithms is for bosonic coherent states. %
These states can be written as tensor products of single-mode coherent states:
\[
\ket{\alpha} = e^{-|\alpha|^2/2}\sum_{k=0}^\infty \frac{\alpha^k}{\sqrt{k!}} \ket{k},
\]
where $\alpha \in \mathbb{C}$.

We present the following property testing result for $\coherentstateset_n$:
\begin{theorem}\label{thm:testing-coherent-states}
    Let $\beta = 2\sqrt{2}$. For all $0 \leq \eps_1 < \eps_2 < 1$ with $\eps_2 > \beta \eps_1$ and $0 < \delta < \frac{1}{2}$, there is an $O\paren*{\frac{\log(1/\delta)}{(\eps_2 - \beta\eps_1)^2}}$-copy $(\eps_1, \eps_2, \delta)$-tester for coherent states. The tester acts on two copies at a time.
\end{theorem}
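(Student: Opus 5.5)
The plan is to run the same pipeline as for the other classes: \cref{thm:rejection-infidelity-bounds} with $k=2$, $m=3$, followed by \cref{thm:framework-tolerant-testing}. The target spectral gap is $\Delta = 1/2$. This gives $\alpha_2 = (3\Delta-1)/2 = 1/4$ and $\alpha_1 = 2$, hence $\beta = \sqrt{\alpha_1/\alpha_2} = 2\sqrt 2$, as claimed.

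\textbf{Tester and completeness.} The two-copy tester is
\[
\coherenttestproj = \beamsplitter^\dagger(\identity\otimes\ketbra{0^n}{0^n})\beamsplitter ,
\]
the projector onto the vacuum of the modes $b_i = (a_{1,i}-a_{2,i})/\sqrt2$ for all $i$. It is permutation-invariant, since SWAP only sends $b_i\mapsto -b_i$.

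By the beam-splitter identity, $\ket{\alpha}^{\otimes 2}$ is accepted with certainty. Conversely, suppose $\ket\psi^{\otimes2}$ lies in the $b$-vacuum. The characteristic function satisfies $\chi_{\psi\otimes\psi}(\xi_1,\xi_2)=\chi_\psi(\xi_1)\chi_\psi(\xi_2)$, and $b$-vacuum forces it to factor through the $c_i=(a_{1,i}+a_{2,i})/\sqrt2$ quadratures times a Gaussian in the $b$ quadratures. From this we expect the functional equation $\chi_\psi(\xi_1)\chi_\psi(\xi_2) = g(\xi_1+\xi_2)\,e^{-|\xi_1-\xi_2|^2/4}$, whose only solutions are characteristic functions of coherent states. (Alternatively, in the Fock picture the $b$-vacuum condition says the $2n$-mode state is $\ket{0}_b\otimes\ket{\phi}_c$, and tracing yields purity constraints forcing a product coherent state.)

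\textbf{Spectral gap.} Exactly as in \cref{prop:Theta_identity}, the nonzero spectrum of $\Xi$ equals that of
\[
\Theta = \tfrac13\bigl(\coherenttestproj^{(1,2)}+2P\bigr), \qquad P=\coherenttestproj^{(1,2)}\SWAP^{(2,3)}\coherenttestproj^{(1,2)} .
\]
The key step is computing $P$. Everything factorizes over the mode index $i$, because both the projector and $\SWAP^{(2,3)}$ are passive linear optics acting identically on each triple $(a_{1,i},a_{2,i},a_{3,i})$.

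Change to the orthonormal mode basis $(b_i,c_i,a_{3,i})$. Then $\SWAP^{(2,3)}$ is the passive unitary associated with a $3\times3$ orthogonal matrix $O$. Compressing a passive Gaussian unitary between $b$-vacua gives, on the Fock space of $(c,a_3)$, the second quantization $\Gamma(T)$ of the $2\times 2$ submatrix $T$ of $O$. This follows from the standard fact $\bra{0_b}U_O\ket{0_b}=\Gamma(T)$ for a contraction $T$, which I would verify on coherent states, since they span.

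Writing $a_1=(b+c)/\sqrt2$, the swap sends $c\mapsto (a_1+a_3)/\sqrt2 = \tfrac12 b+\tfrac12 c+\tfrac1{\sqrt2}a_3$ and $a_3\mapsto a_2 = (c-b)/\sqrt2$. So
\[
T=\begin{pmatrix}1/2&1/\sqrt2\\1/\sqrt2&0\end{pmatrix},
\]
which is symmetric with eigenvalues $1$ and $-1/2$. Hence $\Gamma(T)$, tensored over all $n$ modes, has eigenvalues $(-1/2)^j$ for $j\in\mathbb N_0$, where $j$ is the total occupation of the $-1/2$ normal modes. Restricting to the symmetric subspace only removes eigenvalues.

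It follows that the eigenvalues of $\Theta$ are $\theta_j=\bigl(1+2(-1/2)^j\bigr)/3$, namely $1, 0, 1/2, 1/4, 3/8,\dots$. For $j\ge1$ these lie in $[0,1/2]$, so $\Delta = 1/2 > 1/3$. \cref{thm:rejection-infidelity-bounds} then gives $\tfrac14\,\dist(\psi,\coherentstateset_n)^2\le q(\psi)\le 2\,\dist(\psi,\coherentstateset_n)^2$, and \cref{thm:framework-tolerant-testing} finishes the proof.

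\textbf{Main obstacle.} I expect the main obstacle to be making the compression identity $\bra{0_b}U_O\ket{0_b}=\Gamma(T)$ rigorous on infinite-dimensional Fock space. This includes handling the unbounded setting in the gradient-flow lemma, which I would do by restricting to the invariant finite-dimensional total-photon-number sectors as in \cref{lem:particle-number-invariant-subspace}. A secondary point is the converse direction of completeness, which \cref{thm:rejection-infidelity-bounds} needs so that the unit eigenvectors on product states are exactly the coherent states.
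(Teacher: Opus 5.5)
Your proposal is correct and is essentially the paper's own argument. The paper likewise writes $\Theta=(\coherenttestproj^{(1,2)}+2Q)/3$ and computes how $\SWAP^{(2,3)}$ acts on the generators of the accepted subspace (the sum-mode creation operators and $a_{3,i}^\dagger$) after discarding the projected-out difference mode, which is exactly your matrix $T$; it then diagonalizes this via modes $u_i,v_i$ with eigenvalues $1$ and $-1/2$, giving $\Delta=1/2$, $\alpha=1/4$ and $\beta=2\sqrt{2}$. The one place you go beyond the paper is in flagging the ``only if'' direction of completeness, which the paper leaves implicit. It does hold, most cleanly via Bargmann functions: writing $\braket{\alpha|\psi}=e^{-\|\alpha\|^2/2}F(\bar\alpha)$, the difference-mode-vacuum condition becomes $F(z)F(w)=G\bigl((z+w)/\sqrt{2}\bigr)$, so $\nabla F(z)\,F(w)=F(z)\,\nabla F(w)$ and hence $F(z)=Ce^{\lambda\cdot z}$, i.e., $\ket{\psi}$ is coherent.
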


The key idea behind our property tester is the action of the evenly-mixing beam splitter $\beamsplitter = \exp\left(\frac{\pi}{4}\sum_{i=1}^n a_{i}^\dagger \otimes a_{i} - a_{i} \otimes a_{i}^\dagger\right)$. For $\ket{\alpha}, \ket{\beta} \in \coherentstateset_n$, we have the well-known identity
\begin{equation}\label{eq:beamsplitter-on-coherent-states}
\beamsplitter \ket{\alpha}\!\ket{\beta} = \ket{\frac{\alpha + \beta}{\sqrt{2}}}\!\ket{\frac{\beta - \alpha}{\sqrt{2}}}.
\end{equation}
For a proof of this fact, we refer the reader to \cite{serafini2017continuous}. Our property testing algorithm thus simply applies $\beamsplitter$ to $\ket{\alpha}\!\ket{\alpha}$ and checks that the second register is in the vacuum state. Perfect completeness of this protocol is essentially trivial.

Let $\coherenttestproj$ be the projector implemented by this algorithm. As it turns out, this projector is proportional to a $2$-copy Haar average over coherent states.

\begin{proposition}\label{prop:acceptance-projector}
    Let $\coherenttestproj$ be the projector onto the accepting subspace of the test. We can write
    \[
    \left(\frac{2}{\pi}\right)^n\int_{\mathbb{C}^n}\ketbra{\alpha}{\alpha}^{\otimes2} \ \rd^{2n}\alpha.
    \]
\end{proposition}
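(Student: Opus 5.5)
The plan is to identify $\coherenttestproj$ explicitly as $\beamsplitter^\dagger(\identity\otimes\ketbra{0^n}{0^n})\beamsplitter$, the pullback of "second register in vacuum" through the beam splitter, and then evaluate the Haar-type integral by the same change of variables. By \cref{eq:beamsplitter-on-coherent-states} with $\beta=\alpha$,
\[
\beamsplitter\ket{\alpha}\ket{\alpha}=\ket{\sqrt2\,\alpha}\ket{0^n},
\]
so
\[
\int_{\C^n}\ketbra{\alpha}{\alpha}^{\otimes 2}\,\rd^{2n}\alpha=\beamsplitter^\dagger\Big(\int_{\C^n}\ketbra{\sqrt2\alpha}{\sqrt2\alpha}\,\rd^{2n}\alpha\otimes\ketbra{0^n}{0^n}\Big)\beamsplitter .
\]

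Next I substitute $\gamma=\sqrt2\alpha$. Then $\rd^{2n}\alpha=2^{-n}\rd^{2n}\gamma$, since each complex coordinate contributes a real Jacobian factor of $2$. The $n$-mode version of \cref{fact:coherent-identity-resolution} gives $\int_{\C^n}\ketbra{\gamma}{\gamma}\,\rd^{2n}\gamma=\pi^n\identity$; this is just the single-mode fact tensored over modes. The bracketed operator therefore becomes $(\pi/2)^n\,\identity\otimes\ketbra{0^n}{0^n}$. Multiplying by $(2/\pi)^n$ yields exactly $\beamsplitter^\dagger(\identity\otimes\ketbra{0^n}{0^n})\beamsplitter$, which is the acceptance projector of the algorithm "apply $\beamsplitter$, accept iff register two is vacuum".

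The main point to justify carefully is the interchange of the unitary conjugation with the operator-valued integral. The integral converges only weakly/strongly, not in trace norm, so I would verify the identity on matrix elements $\bra{u}\cdot\ket{v}$ for Fock-basis vectors (or finite-energy vectors) and then extend by density and boundedness. Boundedness follows because the final result is a projector. The convention of the beam splitter sign is immaterial, since $\ket{(\beta-\alpha)/\sqrt2}$ is the vacuum at $\beta=\alpha$ for either sign. The result also reveals $\coherenttestproj$ as the (scaled) second-moment operator of coherent states, which lies in the symmetric subspace, consistent with the framework.
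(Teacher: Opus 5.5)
Your proof is correct and is essentially the paper's argument run in the opposite direction. The paper starts from $\coherenttestproj = \beamsplitter^\dagger(\identity\otimes\ketbra{0^n}{0^n})\beamsplitter$, inserts the coherent-state resolution of the identity on the first register, pulls each term back through the beam splitter to $\ketbra{\beta/\sqrt2}{\beta/\sqrt2}^{\otimes 2}$, and rescales via $\alpha=\beta/\sqrt2$; these are exactly your three ingredients, and your extra care in justifying the interchange of conjugation with the weakly convergent integral via matrix elements is something the paper omits.
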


\begin{proof}
We will use \cref{fact:coherent-identity-resolution}.
\begin{align*}
\coherenttestproj &= \beamsplitter^\dagger(I \otimes \ketbra{0^n}{0^n})\beamsplitter \\
&= \pi^{-n}\int_{\mathbb{C}^n}\beamsplitter^\dagger(\ketbra{\beta}{\beta} \otimes \ketbra{0^n}{0^n})\beamsplitter \ \rd^{2n}\beta \\
&= \pi^{-n}\int_{\mathbb{C}^n}\ketbra{\beta/\sqrt{2}}{\beta/\sqrt{2}} \otimes \ketbra{\beta/\sqrt{2}}{\beta/\sqrt{2}} \ \rd^{2n}\beta \\
&= \left(\frac{2}{\pi}\right)^n\int_{\mathbb{C}^n}\ketbra{\alpha}{\alpha}^{\otimes2} \ \rd^{2n}\alpha. \qedhere
\end{align*}
\end{proof}

While this exact fact will not be critical to our analysis, we remark that the form of this projector matches the general approach outlined in \cref{sec:framework}.
Thus the spectral gap technique once again seems natural, and we will use our framework to show soundness.

\subsection{Spectral gap analysis}
\begin{lemma}\label{lem:coherent-states-spectral-gap}

Let $\lambda_1$ be the largest eigenvalue of $\Xi$ and let $\lambda$ be any other (distinct) eigenvalue of $\Xi$. Then $\lambda_1 = 1$ and $\lambda \in [0, 1/2]$.
\end{lemma}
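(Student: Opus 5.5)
The plan is to follow the template of \cref{sec:spectral-gap-concrete-proof}. Set $\Theta = \coherenttestproj^{(1,2)}\symproj^{(1,2,3)}\coherenttestproj^{(1,2)}$, which has the same nonzero spectrum as $\Xi$. First I would check that $\coherenttestproj^{(1,2)}$ is invariant under $\SWAP^{(1,2)}$. Writing $b_{\pm,i} = (a_{1,i}\pm a_{2,i})/\sqrt2$, the projector $\coherenttestproj^{(1,2)}$ projects onto the vacuum of all the $b_{-,i}$ modes. Under $\SWAP^{(1,2)}$ we have $b_-\mapsto -b_-$, which fixes that vacuum, and the argument also shows the range of $\coherenttestproj$ lies in $\Sym^2$. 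With this, the computation of \cref{prop:Theta_identity} applies verbatim and gives
\[
\Theta = \tfrac13\bigl(\coherenttestproj^{(1,2)} + 2P\bigr),\qquad P = \coherenttestproj^{(1,2)}\SWAP^{(2,3)}\coherenttestproj^{(1,2)}.
\]
So it suffices to show that every eigenvalue of $P$ on the range of $\coherenttestproj^{(1,2)}$ lies in $\{1\}\cup[-\tfrac12,\tfrac14]$, since $(1+2p)/3\le 1/2$ exactly when $p\le 1/4$.

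Next I would factor over modes. Both $\coherenttestproj^{(1,2)} = \bigotimes_i \Pi_i$ and $\SWAP^{(2,3)} = \bigotimes_i \SWAP_i$ are tensor products over the mode index $i$, where each factor acts on the three copies $a_{1,i},a_{2,i},a_{3,i}$ of mode $i$. Hence $P=\bigotimes_i P_i$, and the spectrum of $P$ consists of products of single-mode eigenvalues. For a single mode, I would work in the passive basis $(b_+, b_-, a_3)$. There, $\Pi$ is the projector onto the $b_-$ vacuum, tensored with the identity on the $(b_+,a_3)$ Fock space. $\SWAP^{(2,3)}$ is a passive linear-optical unitary, acting on the mode operators by a real orthogonal $3\times3$ matrix. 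Concretely, $b_+\mapsto \tfrac12 b_+ + \tfrac12 b_- + \tfrac1{\sqrt2}a_3$ and $a_3\mapsto \tfrac1{\sqrt2}(b_+-b_-)$.

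The key step is the following standard fact about compressions of passive unitaries. Let $U$ be a passive unitary with mode matrix $M$. Compressing $U$ to the vacuum of one mode yields the ``second quantization'' $\Gamma(A)$ of the corresponding submatrix $A$ of $M$. Explicitly, $\Gamma(A)$ maps $\prod_j (c_j^\dagger)^{n_j}\ket{0}$ to $\prod_j \bigl(\sum_k A_{kj} c_k^\dagger\bigr)^{n_j}\ket{0}$. I would prove this by expanding $U\prod (c_j^\dagger)^{n_j}\ket0$ and noting that every term containing a $b_-^\dagger$ is killed by the projection. Here
\[
A=\begin{pmatrix}1/2 & 1/\sqrt2\\ 1/\sqrt2 & 0\end{pmatrix},
\]
which is real symmetric with eigenvalues $1$ and $-\tfrac12$. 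Diagonalizing $A$ by a real rotation of the modes $(b_+,a_3)$, $\Gamma(A)$ becomes diagonal in the rotated Fock basis, with eigenvalues $1^{s}(-\tfrac12)^{t}$ for occupation numbers $s,t\ge0$.

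Taking products over modes, the eigenvalues of $P$ on the range of $\coherenttestproj$ are $(-\tfrac12)^{T}$, where $T\in\mathbb N_0$ is the total occupation of the $-\tfrac12$ modes. This gives $1$ when $T=0$ and values in $[-\tfrac12,\tfrac14]$ otherwise. The corresponding eigenvalues of $\Theta$ are $(1+2(-\tfrac12)^T)/3$, namely $1$, $0$, $\tfrac12$, and all further values lie in $[\tfrac14,\tfrac38]\subset[0,\tfrac12]$; on the kernel of $\coherenttestproj$ the eigenvalue is $0$. This proves the claim, with gap $\Delta=\tfrac12>\tfrac13$.

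I expect the main obstacle to be justifying the compression identity carefully in infinite dimensions. Since everything preserves total photon number, I would restrict to finite-dimensional total-number sectors, as in \cref{lem:particle-number-invariant-subspace}. The sign and normalization of the transformation for $\SWAP^{(2,3)}$ also need care, since a mistake there would change the eigenvalue $-\tfrac12$.
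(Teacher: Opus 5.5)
Your proposal is correct and takes essentially the same route as the paper. The paper likewise writes $\Theta = \frac{1}{3}\bigl(\coherenttestproj^{(1,2)} + 2Q\bigr)$ and spans the range of the projector by the generators $b_i$, $a_{3,i}^\dagger$. It then drops the $d_i$ component of the swapped generators and diagonalizes the resulting $2\times 2$ action in the basis $u_i, v_i$, with eigenvalues $1$ and $-\tfrac{1}{2}$. This yields the eigenvalues $\frac{1}{3}\bigl(1 + 2(-\tfrac{1}{2})^k\bigr)$ for $\Theta$. Your explicit checks of the $\SWAP^{(1,2)}$-invariance of $\coherenttestproj^{(1,2)}$ and of the compression identity $\Pi U \Pi = \Gamma(A)$ make rigorous the generator-level manipulations that the paper performs without comment.
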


\begin{proof}
    As in the previous sections, we analyze the equivalent eigengap of
    \[
    \Theta = \coherenttestproj^{(1,2)} \symproj^{(1,2,3)}\coherenttestproj^{(1,2)}  = \frac{\coherenttestproj^{(1,2)} + 2 Q }{3}.
    \]
    where $Q = \coherenttestproj^{(1,2)} \SWAP^{(2,3)}\coherenttestproj^{(1,2)}$. In particular, it will suffice to show that the second largest eigenvalue of $\Theta$ is at most $\frac{1}{2}$.
    
    Borrowing notation from previous sections, we write $a_{t,i}$ as the lowering operator acting on mode $i$ of the $t$-th register.
    For each mode $i \in [n]$, define
    \begin{equation}
        b_i = \frac{a_{1,i}^\dagger + a_{2,i}^\dagger}{\sqrt{2}},\quad d_i = \frac{a_{1,i}^\dagger - a_{2,i}^\dagger}{\sqrt{2}}.    
    \end{equation}

    Observe that %
    \[
    \beamsplitter^\dagger a_{1,i}^\dagger \beamsplitter = b_i
    \]
    and
    \[
    \beamsplitter^\dagger a_{2,i}^\dagger \beamsplitter = -d_i.
    \]

    The only nonzero eigenvalues of $\Theta$ correspond to eigenvectors that exist in the subspace $\calV$ which $\coherenttestproj^{(1,2)}$ projects to: that is, states whose second register is the vacuum state after applying $\beamsplitter$. As such, this subspace is the set of all states obtained by applying some product of $b_i$ and $a_{3,i}^\dagger$ to the vacuum state $\ket{0^n,0^n,0^n}$. We consider the action of $\SWAP^{(2,3)}$ on these generators:
    \begin{equation}
        \SWAP^{(2,3)} b_i \SWAP^{(2,3)} = \frac{a_{1,i}^\dagger + a_{3,i}^\dagger}{\sqrt{2}} = \frac{1}{2}b_{i} + \frac{1}{2} d_i + \frac{1}{\sqrt{2}}a_{3,i}^\dagger
    \end{equation}
    \begin{equation}
        \SWAP^{(2,3)} a_{3,i}^\dagger \SWAP^{(2,3)} = a_{2,i}^\dagger = \frac{1}{\sqrt{2}}b_{i} - \frac{1}{\sqrt{2}}d_i.
    \end{equation}
    Since $d_i$ is projected out by $\coherenttestproj^{(1,2)}$, we have 
    \begin{equation}\label{eq:swap-bi-action-projected}
        \SWAP^{(2,3)} b_i \SWAP^{(2,3)} = \frac{1}{2}b_{i} + \frac{1}{\sqrt{2}}a_{3,i}^\dagger
    \end{equation}
    \begin{equation}\label{eq:swap-a3i-action-projected}
        \SWAP^{(2,3)} a_{3,i}^\dagger \SWAP^{(2,3)} = \frac{1}{\sqrt{2}}b_i
    \end{equation}
    in $\calV$.

    Now, we have
    \begin{equation}
        \calV = \linspan\left\{ \prod_{i=1}^n b_{i}^{x_i}{a_{3,i}^\dagger}^{y_i}\ket{0^n,0^n,0^n}\right\}.
    \end{equation}
    Writing
    \begin{equation}
        u_i = \sqrt{\frac{2}{3}}b_i + \frac{1}{\sqrt{3}} a_{3,i}^\dagger, \quad v_i = \frac{1}{\sqrt{3}}b_i - \sqrt{\frac{2}{3}} a_{3,i}^\dagger.
    \end{equation}
    This is a unitary change of basis from the generators $b_{i},a_{3,i}^\dagger$, and so, defining
    \begin{equation}
        \ket{s,t} = \prod_{i=1}^n u_i^{s_i} v_i^{t_i} \ket{0^n,0^n,0^n},
    \end{equation}
    we have $\calV = \linspan\{\ket{s,t}: s,t \in \N^n\}$.
    A simple calculation reveals that 
    \begin{equation}
        Q u_i  = \frac{1}{\sqrt{6}}b_i + \frac{1}{\sqrt{3}}a_{3,i}^\dagger + \frac{1}{\sqrt{6}}b_i = \sqrt{\frac{2}{3}}b_i + \frac{1}{\sqrt{3}} a_{3,i}^\dagger = u_i
    \end{equation}
    and
    \begin{equation}
        Q v_i =\frac{1}{2\sqrt{3}}b_i + \frac{1}{\sqrt{6}}a_{3,i}^\dagger - \frac{1}{\sqrt{3}} b_i = - \frac{1}{2\sqrt{3}} b_i + \frac{1}{\sqrt{6}} a_{3,i}^\dagger = -\frac{1}{2} v_i.
    \end{equation}
    Thus the $\ket{s,t}$ fully diagonalize $Q$:
    \begin{equation}
        Q\ket{s,t} = \left(-\frac{1}{2}\right)^{\sum_{i=1}^n t_i} \ket{s,t}.
    \end{equation}
    As such, the eigenvalues of $\Theta$ on $\calV$ (and therefore the eigenvalues of $\Xi$ on $\calV$) can be written
    \begin{equation}
        \nu_k = \frac{1 + 2\left(-\frac{1}{2}\right)^k}{3}
    \end{equation}
    for $k \in \N$. In this way, we obtain that the second largest distinct eigenvalue of $\Theta$ is for $k=2$, yielding $\nu_2 = \frac{1}{2}$ and proving the statement.
\end{proof}

\subsection{The algorithm}

Now we are ready to state and analyze our efficient algorithm to test coherent states.

\begin{algorithm}[H]
    \caption{Base tester for bosonic coherent states.}\label{alg:coherent-state-tester}
    \KwInput{Two copies of a bosonic pure state $\ket{\psi}$.}

    \BlankLine

    Apply the Gaussian unitary $\beamsplitter = \exp\left(\frac{\pi}{4}\sum_{i=1}^n a_{i}^\dagger \otimes a_{i} - a_{i} \otimes a_{i}^\dagger\right)$ to $\ket{\psi^{\otimes 2}}$.
    
    Measure the second register in the Fock basis and let $N_2$ be the number of bosons detected.

    \uIf{$N_2 = 0$}{
        \KwRet{\accept}
    }
    \Else{
        \KwRet{\reject}
    }
\end{algorithm}

Using the spectral gap derived in \cref{lem:coherent-states-spectral-gap}, we are able to complete the soundness portion of the argument and show that \cref{alg:coherent-state-tester} yields a tolerant property tester for $\coherentstateset_n$.

\begin{proof}[Proof of \cref{thm:testing-coherent-states}]
We run \cref{alg:constant-copy-tester} using  \cref{alg:coherent-state-tester} serving as the subroutine $\calA$. 
Perfect completeness is trivial from equation \eqref{eq:beamsplitter-on-coherent-states}.
By \cref{thm:rejection-infidelity-bounds} with the spectral gap $\Delta = 1/2$ (\cref{lem:coherent-states-spectral-gap}), $k=2$, and $m=3$, we have that the rejection probability for any state $\ket{\psi}$ obeys
    \begin{equation}
        \frac{1}{4} \, \dist(\psi, \coherentstateset_n)^2 \leq q(\psi) \leq 2 \, \dist(\psi, \coherentstateset_n)^2
    \end{equation}
    Hence \cref{thm:framework-tolerant-testing} yields the claim with $\beta = 2\sqrt{2}$.
\end{proof}

\section{Lower bounds}\label{sec:lower_bounds}

In this section we prove \cref{thm:general-testing-lower-bound}, which establishes lower bounds for testing each of our state classes.

\subsection{Reduction to binary state discrimination}

The argument is based on Helstrom's bound applied to two states \cite{helstrom1969quantum}. A lower bound for this problem implies a lower bound for testing, because any tester must be able to at least distinguish between one state close to the class $\calC$ and another state far from $\calC$.

\begin{proposition}[Binary state discrimination lower bound]\label{prop:helstrom}
    Let $\ket{\psi_A}, \ket{\psi_B}$ be pure states such that $F \coloneqq |{\braket{\psi_A | \psi_B}}|^2 \neq 0$. Every quantum algorithm which can successfully distinguish between the two states with probability at least $1 - \delta$ must consume at least
    \begin{equation}
        N \geq \frac{\log\mathopen{}\left( \frac{1}{4\delta(1-\delta)} \right)}{\log(1/F)}
    \end{equation}
    copies of the unknown state.
\end{proposition}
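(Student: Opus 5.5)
The plan is the standard Helstrom argument applied to $N$ copies. First, any algorithm that consumes $N$ copies of the unknown state is, without loss of generality, a two-outcome POVM $\{M, \identity - M\}$ on $\calH^{\otimes N}$ acting on either $\ket{\psi_A}^{\otimes N}$ or $\ket{\psi_B}^{\otimes N}$. Adaptive or ancilla-assisted procedures can be absorbed into such a POVM by Naimark dilation, and copies that are left unused only make the task harder.

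The Helstrom bound then says that the best worst-case (equivalently, equal-prior) success probability is
\[
\frac{1}{2} + \frac{1}{2}\,\dist\paren*{\psi_A^{\otimes N}, \psi_B^{\otimes N}} = \frac12 + \frac12\sqrt{1 - F^N}.
\]
Here I use the pure-state identity $\dist(\psi,\phi) = \sqrt{1 - \abs{\braket{\psi|\phi}}^2}$ from the preliminaries, together with the observation that the fidelity of the $N$-fold tensor powers is $F^N$. If the algorithm succeeds with probability at least $1-\delta$ on each state, it also succeeds with probability at least $1-\delta$ under uniform priors, so
\[
1 - \delta \leq \frac12 + \frac12\sqrt{1 - F^N}.
\]

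It remains to rearrange. Since $\delta < 1/2$, we have $1 - 2\delta > 0$, and squaring gives $(1-2\delta)^2 \leq 1 - F^N$. Equivalently, $F^N \leq 1 - (1-2\delta)^2 = 4\delta(1-\delta)$. Taking logarithms and using $0 < F < 1$, so that $\log(1/F) > 0$, gives
\[
N \log(1/F) \geq \log\paren*{\frac{1}{4\delta(1-\delta)}},
\]
which is the claimed bound. If $F = 1$ the states coincide, and then the bound is vacuous or infinite, consistent with indistinguishability; so we may assume $F < 1$.

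The only step requiring care is the reduction of an arbitrary, possibly adaptive, quantum algorithm to a single joint measurement on $\calH^{\otimes N}$. I would handle it by noting that the overall process is a quantum channel followed by a binary measurement, hence a POVM on the $N$-copy input. After that, Helstrom's theorem applies directly and everything else is routine algebra.
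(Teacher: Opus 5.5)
Your proof is correct and is essentially the paper's argument: averaging the two success conditions under a uniform prior and applying Helstrom's bound is the same as the paper's step $\tr\bigl(M(\psi_A^{\otimes N}-\psi_B^{\otimes N})\bigr)\ge 1-2\delta$ combined with the variational bound $\tr\bigl(M(\rho-\sigma)\bigr)\le\dist(\rho,\sigma)=\sqrt{1-F^N}$, followed by the same rearrangement. You also make explicit two points the paper leaves implicit: that $\delta<1/2$ is needed before squaring, and that adaptive strategies reduce to a single $N$-copy POVM.
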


\begin{proof}
    Denote $\psi_j \coloneqq \ketbra{\psi_j}{\psi_j}$. Let $\{M, I-M\}$ be any POVM on $N$ copies such that the algorithm outputs the label for $\psi_A$ upon seeing $M$. Correctness of the algorithm implies
    \begin{equation}
        \tr(M \psi_A^{\otimes N}) \geq 1 - \delta, \quad  \tr(M \psi_B^{\otimes N}) \leq \delta.
    \end{equation}
    Hence $\tr(M (\psi_A^{\otimes N} - \psi_B^{\otimes N})) \geq 1 - 2\delta$. Meanwhile the variational characterization of the trace distance with $0 \preceq M \preceq I$ gives
    \begin{equation}
        \tr(M (\psi_A^{\otimes N} - \psi_B^{\otimes N})) \leq \dist(\psi_A^{\otimes N}, \psi_B^{\otimes N}) = \sqrt{1 - F^N}.
    \end{equation}
    Solving for $N$ yields the claim.
\end{proof}

Next we establish a sufficient condition for constructing hard-to-distinguish states. This is one version of such a hard instance; another version useful for the bosonic classes is presented in \cref{lem:interp_dist_AB}.

\begin{lemma}[Hard instance interpolation I]\label{lem:interpolation_state}
    Let $\calC$ be a class of pure states. Suppose there exist orthonormal states $\ket{A}, \ket{B}$ where $\ket{A} \in \calC$ and it holds that
    \begin{equation}
        |{\braket{A | \phi}}| + |{\braket{B | \phi}}| \leq 1 \quad \text{for every } \ket{\phi} \in \calC.
    \end{equation}
    For any $0 \leq \theta \leq \frac{\pi}{4}$, define
    \begin{equation}\label{eq:interpolation_state}
        \ket{\Psi_\theta} \coloneqq \cos\theta \ket{A} + \sin\theta \ket{B}.
    \end{equation}
    Then
    \begin{equation}
        \dist(\Psi_\theta, \calC) = \sin\theta.
    \end{equation}
\end{lemma}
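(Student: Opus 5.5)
We prove the two inequalities $\dist(\Psi_\theta,\calC)\le\sin\theta$ and $\dist(\Psi_\theta,\calC)\ge\sin\theta$ separately. Throughout we use that for pure states $\dist(\Psi_\theta,\phi)^2 = 1-\abs{\braket{\phi|\Psi_\theta}}^2$. The claim is therefore equivalent to
\[
\sup_{\ket{\phi}\in\calC}\abs{\braket{\phi|\Psi_\theta}} = \cos\theta .
\]

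For the upper bound we take $\ket{\phi}=\ket{A}\in\calC$. Orthonormality gives $\braket{A|\Psi_\theta}=\cos\theta$, so $\dist(\Psi_\theta,A)=\sqrt{1-\cos^2\theta}=\sin\theta$. Hence $\dist(\Psi_\theta,\calC)\le\sin\theta$.

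For the lower bound, fix any $\ket{\phi}\in\calC$ and write $x=\abs{\braket{A|\phi}}$ and $y=\abs{\braket{B|\phi}}$. The hypothesis gives $x+y\le 1$. By the triangle inequality,
\[
\abs{\braket{\phi|\Psi_\theta}}\le x\cos\theta + y\sin\theta .
\]
Since $0\le\theta\le\pi/4$, we have $\sin\theta\le\cos\theta$, and both are nonnegative. Therefore
\[
x\cos\theta+y\sin\theta\le (x+y)\cos\theta\le\cos\theta .
\]
So every $\ket{\phi}\in\calC$ satisfies $1-\abs{\braket{\phi|\Psi_\theta}}^2\ge 1-\cos^2\theta=\sin^2\theta$. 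Taking the infimum over $\calC$ yields $\dist(\Psi_\theta,\calC)\ge\sin\theta$.

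The argument has no real obstacle. The one step to get right is replacing $\sin\theta$ by $\cos\theta$, which is exactly where the restriction $\theta\le\pi/4$ is used; without it the bound $x\cos\theta+y\sin\theta\le\cos\theta$ can fail. We should also note that $\ket{A},\ket{B}$ need not span $\ket{\phi}$. Any component of $\ket{\phi}$ orthogonal to both is annihilated by $\bra{\Psi_\theta}$, so it does not enter the overlap estimate above.
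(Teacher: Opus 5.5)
Your proposal is correct and follows the paper's proof essentially verbatim: bound $\abs{\braket{\phi|\Psi_\theta}}\le\cos\theta\,\abs{\braket{\phi|A}}+\sin\theta\,\abs{\braket{\phi|B}}\le\cos\theta$ using $\sin\theta\le\cos\theta$ on $[0,\pi/4]$, then observe that $\ket{\phi}=\ket{A}$ attains it. Splitting the argument into separate upper- and lower-bound directions is only a cosmetic difference.
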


\begin{proof}
    By restricting $0 \leq \theta \leq \frac{\pi}{4}$, we have $0 \leq \sin\theta \leq \cos\theta$. Thus by hypothesis, for any $\ket{\phi} \in \calC$ we have
    \begin{equation}
        |{\braket{\phi | \Psi_\theta}}| \leq \cos\theta |{\braket{\phi | A}}| + \sin\theta |{\braket{\phi | B}}| \leq \cos\theta.
    \end{equation}
    This is saturated at $\ket{\phi} = \ket{A}$, implying $\max_{\ket{\phi} \in \calC} |{\braket{\phi | \Psi_\theta}}| = \cos(\theta)$. Hence $\dist(\Psi_\theta, \calC) = \sqrt{1 - \cos^2\theta}$.
\end{proof}

\begin{theorem}[Lower bound master theorem]\label{thm:generic_testing_lower_bound}
    Let $\calC$ be a class of pure states, and let $0 \leq \eps_A < \eps_B \leq \frac{1}{\sqrt{2}}$. Suppose that the states $\ket{A}, \ket{B}$ of \cref{lem:interpolation_state} exist for the class $\calC$. Then there exist states $\ket{\psi_A}, \ket{\psi_B}$ which obey
    \begin{equation}\label{eq:state_dist_boundary}
        \dist(\psi_A, \calC) \leq \eps_A, \quad \dist(\psi_B, \calC) \geq \eps_B,
    \end{equation}
    such that any quantum algorithm which can distinguish these two states with probability at least $2/3$ must consume at least
    \begin{equation}
        N \geq \frac{\log(9/8)}{4(\eps_B - \eps_A)^2}
    \end{equation}
    copies of the unknown state.
\end{theorem}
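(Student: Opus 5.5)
The plan is to instantiate \cref{lem:interpolation_state} at two angles and then apply \cref{prop:helstrom}. Set $\theta_A \coloneqq \arcsin \eps_A$ and $\theta_B \coloneqq \arcsin \eps_B$. Since $0 \leq \eps_A < \eps_B \leq 1/\sqrt{2}$, both angles lie in $[0, \pi/4]$. We then take $\ket{\psi_A} \coloneqq \ket{\Psi_{\theta_A}}$ and $\ket{\psi_B} \coloneqq \ket{\Psi_{\theta_B}}$, defined as in \cref{eq:interpolation_state}. By \cref{lem:interpolation_state},
\[
\dist(\psi_A, \calC) = \eps_A \quad \text{and} \quad \dist(\psi_B, \calC) = \eps_B,
\]
so \cref{eq:state_dist_boundary} holds with equality.

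Next we compute the fidelity. Because $\ket{A}$ and $\ket{B}$ are orthonormal and the coefficients are real,
\[
\braket{\psi_A | \psi_B} = \cos\theta_A \cos\theta_B + \sin\theta_A \sin\theta_B = \cos(\theta_B - \theta_A).
\]
Hence $F = \cos^2(\theta_B - \theta_A)$. This is nonzero because $0 \leq \theta_B - \theta_A \leq \pi/4$.

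Now apply \cref{prop:helstrom} with $\delta = 1/3$. This gives $\log\paren*{\tfrac{1}{4\delta(1-\delta)}} = \log(9/8)$, so
\[
N \geq \frac{\log(9/8)}{\log\paren*{1/\cos^2(\theta_B - \theta_A)}}.
\]
It remains to upper bound the denominator by $4(\eps_B - \eps_A)^2$. This is the only step that needs care, and it splits into two elementary inequalities.

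\emph{Angle difference.} On $[0, 1/\sqrt{2}]$ the function $\arcsin$ has derivative $1/\sqrt{1-x^2} \leq \sqrt{2}$. By the mean value theorem, $\theta_B - \theta_A \leq \sqrt{2}\,(\eps_B - \eps_A)$.

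\emph{Log-cosine bound.} For $x \in [0, \pi/4]$ we want $-\log\cos^2 x \leq 2x^2$. Write $-\log \cos^2 x = -\log(1 - \sin^2 x)$. Since $\sin^2 x \leq 1/2$, the inequality $-\log(1-u) \leq u/(1-u) \leq 2u$ gives $-\log\cos^2 x \leq 2\sin^2 x \leq 2x^2$.

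Combining the two, the denominator is at most $2(\theta_B - \theta_A)^2 \leq 4(\eps_B - \eps_A)^2$, which gives the stated bound on $N$.
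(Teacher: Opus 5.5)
Your proposal is correct and follows the paper's proof step for step: the choice $\theta_j = \arcsin \eps_j$, the fidelity $F = \cos^2(\theta_B - \theta_A)$, the bound $\log\frac{1}{1-x} \leq 2x$ on $[0, \frac{1}{2}]$ combined with $\sin x \leq x$, the mean-value estimate for $\arcsin$, and Helstrom's bound with $\delta = 1/3$. Your justification of the logarithmic inequality via $-\log(1-u) \leq u/(1-u)$ is a detail the paper states without proof.
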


\begin{proof}
    Set $\theta_j = \arcsin(\eps_j)$ and put $\ket{\psi_j} = \ket{\Psi_{\theta_j}}$ according to \cref{eq:interpolation_state}. \cref{eq:state_dist_boundary} follows from \cref{lem:interpolation_state}. Observe that
    \begin{equation}
    \begin{split}
        F \equiv |{\braket{\psi_A | \psi_B}}|^2 &= |{\cos(\theta_A)\cos(\theta_B) + \sin(\theta_A)\sin(\theta_B)}|^2\\
        &= \cos^2(\theta_B - \theta_A).
    \end{split}
    \end{equation}
    Denote $x = \sin^2(\theta_B - \theta_A)$ such that $F = 1-x$. Since $0 \leq \theta_B - \theta_A \leq \frac{\pi}{4}$ and $\sin^2$ is increasing on this interval, we have $x \leq \frac{1}{2}$. The elementary inequality $\log\frac{1}{1-x} \leq 2x$ for $x \in [0, \frac{1}{2}]$ implies $\log\frac{1}{F} \leq 2\sin^2(\theta_B - \theta_A)$. We further bound this using
    \begin{equation}
        \sin(\theta_B - \theta_A) \leq \theta_B - \theta_A = \arcsin(\eps_B) - \arcsin(\eps_A) \leq \frac{\eps_B - \eps_A}{\sqrt{1 - \eps_B^2}} \leq \sqrt{2} (\eps_B - \eps_A),
    \end{equation}
    where the penultimate inequality follows from the mean-value theorem applied to $\arcsin'(t) = \frac{1}{\sqrt{1-t^2}}$. Apply this bound, $\log\frac{1}{F} \leq 4(\eps_B - \eps_A)^2$, to \cref{prop:helstrom} with $\delta = 1/3$ to obtain the copy complexity lower bound.
\end{proof}

To apply this lower bound to a particular class of states, we merely have to show that there exist states $\ket{A}, \ket{B}$ satisfying the premise of \cref{lem:interpolation_state}. In the sequel, we exhibit this property for fermionic Gaussian states that are (i) pure and (ii) also number eigenstates (Slater determinants).

\subsection{Hard instances for fermionic Gaussian states}

\paragraph{Pure Gaussian states.} Take $\fermionicgaussianset$, the set of all pure fermionic Gaussian states on $n \geq 4$ modes (as otherwise, all pure even states on at most $3$ modes are Gaussian \cite{bravyi2005classical} so the tester trivially passes using zero copies).

\begin{lemma}\label{lem:hard_instance_gaussian}
    Let $\ket{A} = \ket{0^n}$ and $\ket{B} = \ket{1^4 0^{n-4}}$. For every $n \geq 4$ and every $\ket{\phi} \in \fermionicgaussianset$,
    \begin{equation}
        |{\braket{A | \phi}}| + |{\braket{B | \phi}}| \leq 1.
    \end{equation}
\end{lemma}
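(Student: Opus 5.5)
Since $\ket{A}=\ket{0^n}$ and $\ket{B}=\ket{1^40^{n-4}}$ are both even-parity states, only the even-parity part of $\ket{\phi}$ contributes to either overlap, and a pure fermionic Gaussian state has definite parity. So it suffices to treat even Gaussian states $\ket{\phi}$; odd ones give $0\le 1$ trivially.

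The key tool will be the Pfaffian (Wick) structure of even pure Gaussian states. If $\braket{0^n|\phi}\neq 0$, then up to a phase
\[
\ket{\phi}=\braket{0^n|\phi}\,\exp\Big(\tfrac12\sum_{i,j} Z_{ij}c_i^\dagger c_j^\dagger\Big)\ket{0^n}
\]
for some complex antisymmetric $n\times n$ matrix $Z$. Then $\braket{B|\phi}=\pm\braket{0^n|\phi}\,\mathrm{Pf}(Z_{[4]})$, where $Z_{[4]}$ is the principal $4\times4$ block on modes $1,\dots,4$. If instead $\braket{0^n|\phi}=0$, the inequality will follow from a limiting argument, since $\fermionicgaussianset$ is closed and both overlaps are continuous; alternatively one checks directly that $|\braket{B|\phi}|\le 1$.

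The next step is to reduce to four modes. Let $\Pi$ project onto $\mathrm{span}\{\ket{S}\otimes\ket{0^{n-4}}:S\subseteq[4]\}$. The state $\Pi\ket{\phi}$ is, up to normalization, the restricted Gaussian state with matrix $Z_{[4]}$. Both overlaps depend only on $\Pi\ket{\phi}$, and $\|\Pi\phi\|\le 1$, so normalizing can only increase the left-hand side. It therefore suffices to prove the claim for $n=4$. In that case the state is
\[
\ket{\phi}=\mathcal N\Big(\ket{0}+\sum_{i<j}Z_{ij}\ket{ij}+\mathrm{Pf}(Z)\ket{1111}\Big),
\]
with $\mathcal N^{-2}=1+\sum_{i<j}|Z_{ij}|^2+|\mathrm{Pf} Z|^2$.

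The claim then becomes
\[
(1+|\mathrm{Pf}Z|)^2\le 1+\sum_{i<j}|Z_{ij}|^2+|\mathrm{Pf}Z|^2,
\]
which is equivalent to $2|\mathrm{Pf}Z|\le\sum_{i<j}|Z_{ij}|^2$. This follows from
\[
|\mathrm{Pf}Z|=|Z_{12}Z_{34}-Z_{13}Z_{24}+Z_{14}Z_{23}|\le\sum_{\text{3 pairings}}|Z_{ab}||Z_{cd}|
\]
together with AM--GM applied to each pairing, $2|Z_{ab}||Z_{cd}|\le|Z_{ab}|^2+|Z_{cd}|^2$. Since the three pairings partition the six entries, summing gives exactly $\sum_{i<j}|Z_{ij}|^2$.

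The main obstacle I expect is justifying the reduction to four modes rigorously. One must confirm that the overlaps with $\ket{A}$ and $\ket{B}$ depend only on the $[4]$-block amplitudes and that these amplitudes have the stated Pfaffian form. For this I would expand the exponential and use the fact that amplitudes of a Gaussian state are Pfaffians of principal minors of $Z$. The degenerate case $\braket{0|\phi}=0$ is handled by the density and closedness argument above.
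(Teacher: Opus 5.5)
Your proof is correct, but it takes a different route from the paper's.

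\textbf{The paper's route.} For $n=4$ it uses the Bloch--Messiah normal form $U(u_1+v_1c_1^\dagger c_2^\dagger)(u_2+v_2c_3^\dagger c_4^\dagger)\ket{0^4}$, where the particle-preserving $U$ is only a phase on the $0$- and $4$-particle sectors, and then applies Cauchy--Schwarz. For $n>4$ it \emph{traces out} modes $5,\ldots,n$. It then appeals to two facts: the reduced state is a mixed Gaussian state, and such a state is a convex mixture of pure Gaussian states. Finally it tests against $(\ket{0000}+e^{i\alpha}\ket{1111})/\sqrt{2}$.

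\textbf{Your route.} You use the Thouless/Pfaffian parametrization and \emph{project} modes $5,\ldots,n$ onto the vacuum, which is the device the paper itself uses for its bosonic hard instances. You then finish with $2|\mathrm{Pf}Z|\le\sum_{i<j}|Z_{ij}|^2$. This inequality holds in any mode basis, and in the Bloch--Messiah basis it is equivalent to the paper's Cauchy--Schwarz step.

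\textbf{Comparison.} Your argument is more elementary and self-contained: it needs neither the normal form nor the structural facts about reduced and mixed Gaussian states. The price is that your parametrization only covers $\braket{0^n|\phi}\neq 0$.

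\textbf{The case $\braket{0^n|\phi}=0$.} This case is trivial, because the claim then reads $|\braket{B|\phi}|\le 1$. You should drop the limiting argument. As phrased, it cites closedness of $\fermionicgaussianset$, whereas such an argument would actually need density of the states with nonzero vacuum amplitude.

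\textbf{The reduction step.} The step you flagged as the main obstacle is easy to make rigorous.
\begin{itemize}
\item The operators $c_i^\dagger c_j^\dagger$ commute, so the exponential factors as the $[4]$-block exponential times a second factor.
\item Every non-identity term of the second factor occupies some mode $j\ge 5$. The block factor does not touch those modes, so $\Pi$ annihilates all such contributions.
\item Hence $\Pi\ket{\phi}=\braket{0^n|\phi}\exp\bigl(\sum_{i<j\le 4}Z_{ij}c_i^\dagger c_j^\dagger\bigr)\ket{0^n}$.
\item Without renormalizing, this gives $(|\braket{A|\phi}|+|\braket{B|\phi}|)^2=|\braket{0^n|\phi}|^2(1+|\mathrm{Pf}Z_{[4]}|)^2\le\|\Pi\ket{\phi}\|^2\le 1$.
\end{itemize}
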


\begin{proof}
    We first prove the $n = 4$ statement, then extend to larger $n$ afterwards. If $\ket{\phi}$ has odd parity then both amplitudes clearly vanish. Henceforth we assume it has even parity. By the Bloch--Messiah normal form, any such state can be written as
    \begin{equation}
        \ket{\phi} = U (u_1 + v_1 a_1^\dagger a_2^\dagger) (u_2 + v_2 a_3^\dagger a_4^\dagger) \ket{0^4}
    \end{equation}
    where $U$ is a particle-preserving Gaussian unitary and $u_j, v_j \in \C$ such that $|u_j|^2 + |v_j|^2 = 1$ \cite{bloch1962canonical,kraus2009pairing}. Because the zero- and four-particle sectors are one-dimensional, $U$ acts as a trivial phase on them so
    \begin{equation}
        |{\braket{0000 | \phi}}| = |u_1 u_2| \quad \text{and} \quad |{\braket{1111 | \phi}}| = |v_1 v_2|.
    \end{equation}
    Apply Cauchy-Schwarz to the unit vectors $(|u_1|, |v_1|)$ and $(|u_2|, |v_2|)$:
    \begin{equation}\label{eq:4-mode_bound}
        |u_1 u_2| + |v_1 v_2| \leq 1,
    \end{equation}
    which proves the claim for $n = 4$.

    Now we prove the statement for $n > 4$. Denote
    \begin{equation}
        x \coloneqq \braket{A | \phi} \quad \text{and} \quad y \coloneqq \braket{B | \phi}.
    \end{equation}
    We are always free to choose a phase $\alpha \in \R$ such that $|x + e^{-i\alpha} y| = |x| + |y|$. If we define the four-mode state $\ket{\chi} \coloneqq \frac{\ket{0000} + e^{i\alpha}\ket{1111}}{\sqrt{2}}$ and the extended $n$-mode state $\ket{\widetilde{\chi}} \coloneqq \ket{\chi} \ket{0^{n-4}}$, then
    \begin{equation}
        |{\braket{\widetilde{\chi} | \phi}}|^2 = \frac{|{\braket{0^n | \phi} + e^{-i\alpha} \braket{1^4 0^{n-4} | \phi}}|^2}{2} = \frac{(|x| + |y|)^2}{2}.
    \end{equation}
    Define $\sigma_4 \coloneqq \tr_{5,\ldots,n}(\ketbra{\phi}{\phi})$ where the partial trace is over the last $n-4$ modes. Observe that
    \begin{equation}
        \braket{\chi | \sigma_4 | \chi} = \sum_{b \in \{0, 1\}^{n-4}} |z_b|^2
    \end{equation}
    where $z_b = \braket{\chi, b | \phi}$. In particular, $z_{0^{n-4}} = \braket{\widetilde{\chi} | \phi}$ implying that
    \begin{equation}
        |{\braket{\widetilde{\chi} | \phi}}|^2 = \frac{(|x| + |y|)^2}{2} \leq \braket{\chi | \sigma_4 | \chi}.
    \end{equation}
    We now aim to bound this four-mode overlap.

    Recall two standard facts about Gaussian states (for instance, see \cite{bittel2025optimal} for proofs). First, the mode-reduced state of a Gaussian state is mixed Gaussian;~and second, every mixed Gaussian state can be written as a convex combination of pure Gaussian states (although parity is no longer conserved). These imply that $\sigma_4$ is a mixed Gaussian state which can be expressed as $\sigma_4 = \sum_{j} p_j \ketbra{\eta_j}{\eta_j}$ for some four-mode Gaussian states $\ket{\eta_j}$. But we have already proven in \cref{eq:4-mode_bound} that on four modes, $|{\braket{0000 | \eta_j}}| + |{\braket{1111 | \eta_j}}| \leq 1$, which means that
    \begin{equation}
        |{\braket{\chi | \eta_j}}| = \frac{|{\braket{0000 | \eta_j} + e^{-i\alpha} \braket{1111 | \eta_j}}|}{\sqrt{2}} \leq \frac{1}{\sqrt{2}}.
    \end{equation}
    Thus ${\braket{\chi | \sigma_4 | \chi}} = \sum_j p_j |{\braket{\chi | \eta_j}}|^2 \leq \frac{1}{2}$, implying that $|x| + |y| \leq 1$, which is what we set out to show.
\end{proof}

As a consequence, we have the claimed lower bound for pure Gaussian state testing.

\begin{corollary}
    Put any $n \geq 4$ and let $\fermionicgaussianset$ be the class of all pure $n$-mode fermionic Gaussian states. The lower bound of \cref{thm:generic_testing_lower_bound} applies to $\fermionicgaussianset$.
\end{corollary}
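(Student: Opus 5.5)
The statement is a corollary that the lower bound master theorem (\cref{thm:generic_testing_lower_bound}) applies to $\fermionicgaussianset$ for every $n \geq 4$. My plan is simply to verify the hypotheses of that theorem. These are exactly the existence of orthonormal states $\ket{A}, \ket{B}$ with $\ket{A} \in \fermionicgaussianset$ satisfying the overlap condition of \cref{lem:interpolation_state}.

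I take $\ket{A} = \ket{0^n}$ and $\ket{B} = \ket{1^4 0^{n-4}}$, as in \cref{lem:hard_instance_gaussian}. I then check three things in order.

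\emph{Orthonormality.} The two states are distinct Fock basis states, hence orthonormal.

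\emph{Membership.} $\ket{A} = \ket{0^n}$ is the vacuum, which is a pure fermionic Gaussian state by definition (the orbit of $\ket{0^n}$ under Gaussian unitaries, taking the identity). So $\ket{A} \in \fermionicgaussianset$.

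\emph{Overlap condition.} The inequality $|\braket{A|\phi}| + |\braket{B|\phi}| \le 1$ for every $\ket{\phi} \in \fermionicgaussianset$ is precisely the content of \cref{lem:hard_instance_gaussian}, valid for all $n \geq 4$.

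With these hypotheses in place, \cref{lem:interpolation_state} gives $\dist(\Psi_\theta, \fermionicgaussianset) = \sin\theta$ for $\theta \in [0, \pi/4]$. \Cref{thm:generic_testing_lower_bound} then produces, for any $0 \le \eps_A < \eps_B \le 1/\sqrt{2}$, the states $\ket{\psi_A}, \ket{\psi_B}$ and the bound $N \ge \log(9/8)/(4(\eps_B-\eps_A)^2)$.

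There is no real obstacle here, since all the work is in \cref{lem:hard_instance_gaussian}. The only point needing care is the remark about parity. The interpolating states $\cos\theta\ket{0^n} + \sin\theta\ket{1^4 0^{n-4}}$ are even-parity, so they are physical fermionic states. This ensures the hard instance is legitimate even under the superselection rule.

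I would also note that the restriction $n \ge 4$ is necessary. For $n \le 3$ every even pure state is Gaussian, so no such $\ket{B}$ exists and the tester is trivial.
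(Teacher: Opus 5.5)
Your proposal is correct and matches the paper's proof, which obtains the corollary by combining \cref{thm:generic_testing_lower_bound} with the pair $\ket{A}=\ket{0^n}$, $\ket{B}=\ket{1^4 0^{n-4}}$ from \cref{lem:hard_instance_gaussian}. The checks you spell out (orthonormality, $\ket{0^n}\in\fermionicgaussianset$, even parity of $\ket{\Psi_\theta}$) are the routine hypotheses the paper leaves implicit.
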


\begin{proof}
    Combine \cref{thm:generic_testing_lower_bound} with \cref{lem:hard_instance_gaussian}.
\end{proof}

\paragraph{Slater determinants.} Note that the vacuum state and all one-fermion states are Slater determinants. Moreover by particle--hole duality, the unique $n$-fermion state can be identified with the vacuum, and all $(n-1)$-fermion states can be identified with one-fermion states;~hence these are all Slater determinants as well. Thus Slater testing is only nontrivial when the particle number is between $2 \leq \eta \leq n - 2$. We assume $\slaterdeterminantset$ is the set of all Slater determinants within this particle-number interval.

\begin{lemma}\label{lem:hard_instance_slater}
    Let $\eta \geq 2$ be an integer and $A, B \subset [n]$ be disjoint subsets of size $\eta$. Every $\ket{\phi} \in \slaterdeterminantset$ satisfies
    \[
    |{\braket{A | \phi}}| + |{\braket{B | \phi}}| \leq 1.
    \]
\end{lemma}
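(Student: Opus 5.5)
The plan is to reduce the claim to a determinant inequality for the orbital matrix of the Slater determinant, and then close it with Minkowski's determinant inequality.

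\emph{Reduction.} Without loss of generality $\ket{\phi}$ is an $\eta$-particle Slater determinant; any other particle number makes both amplitudes vanish, since $\ket{A},\ket{B}$ are $\eta$-particle Fock states. So write $\ket{\phi} = \prod_{k=1}^{\eta} \paren*{\sum_{j} W_{jk} c_j^\dagger}\ket{0^n}$, where $W \in \C^{n\times \eta}$ has orthonormal columns ($W^\dagger W = \identity_\eta$). By the standard determinant formula for Slater determinants,
\[
\braket{A|\phi} = \det W_A, \qquad \braket{B|\phi} = \det W_B ,
\]
up to a sign, where $W_S$ denotes the $\eta\times\eta$ submatrix of rows indexed by $S$. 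Define the positive semidefinite matrices $X \coloneqq W_A^\dagger W_A$ and $Y \coloneqq W_B^\dagger W_B$. Because $A$ and $B$ are disjoint, $X + Y \preceq W^\dagger W = \identity_\eta$. Moreover $|\braket{A|\phi}| = \sqrt{\det X}$ and $|\braket{B|\phi}| = \sqrt{\det Y}$.

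\emph{Key step.} Apply Minkowski's determinant inequality for positive semidefinite $\eta\times\eta$ matrices:
\[
(\det X)^{1/\eta} + (\det Y)^{1/\eta} \leq \det(X+Y)^{1/\eta} \leq 1,
\]
where the last inequality uses monotonicity of $\det$ on the PSD cone together with $X+Y \preceq \identity$. Setting $s = (\det X)^{1/\eta}$ and $t = (\det Y)^{1/\eta}$, we have $s,t \in [0,1]$ and $s+t\le 1$.

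\emph{Conclusion.} The quantity to bound is $s^{\eta/2} + t^{\eta/2}$. Since $\eta \geq 2$, the exponent satisfies $\eta/2 \geq 1$, so $s^{\eta/2} \leq s$ and $t^{\eta/2} \leq t$. Hence $s^{\eta/2} + t^{\eta/2} \leq s + t \leq 1$, which is the claim. This is also exactly where the hypothesis $\eta\ge 2$ is needed. For $\eta=1$ the argument only gives $|a|^2+|b|^2\le 1$, and the claim indeed fails, as it must since every one-particle state is a Slater determinant.

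The main obstacle I expect is the naive route via Cauchy–Binet. It yields only $\sum_S |\det W_S|^2 \le 1$, i.e.\ an $\ell_2$ bound, which is too weak to give the needed $\ell_1$ bound $|\braket{A|\phi}|+|\braket{B|\phi}|\le 1$. The point of the plan is to use the Minkowski inequality to obtain the $\ell_1$-type bound on $\eta$-th roots, and then let the exponent $\eta/2\ge 1$ absorb the gap. The only remaining routine check is the sign and normalization convention in the determinant formula, which is irrelevant since only absolute values enter.
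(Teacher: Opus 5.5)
Your proof is correct and follows essentially the same route as the paper: you use the same Gram matrices $X = W_A^\dagger W_A$ and $Y = W_B^\dagger W_B$ with $X+Y \preceq \identity$, and the same use of $\eta/2 \geq 1$ to pass from $\eta$-th roots to the square roots $\sqrt{\det X}$ and $\sqrt{\det Y}$. The only difference is the inequality that gives $(\det X)^{1/\eta} + (\det Y)^{1/\eta} \leq 1$: you use Minkowski's determinant inequality, whereas the paper uses the more elementary AM--GM bound $(\det P)^{1/\eta} \leq \tr P/\eta$ on each term separately and then $\tr(P+Q) \leq \eta$.
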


\begin{proof}
    If $\ket{\phi}$ has particle number different from $\eta$, the statement trivially holds. Then, let $u \in \C^{n \times \eta}$ be the column-orthonormal matrix describing the $\eta$-particle $\ket{\phi}$. The amplitudes are $\braket{A | \phi} = \det u[A]$ where $u[A] \in \C^{\eta \times \eta}$ selects the $A$-rows of $u$, and similarly for $\braket{B | \phi}$. Define $P = u[A]^\dagger u[A]$ and $Q = u[B]^\dagger u[B]$. Since $A$ and $B$ are disjoint, $P$ and $Q$ are PSD, and $u^\dagger u = I$, we have
    \[
    P + Q \preceq I \implies \tr(P + Q) \leq \eta.
    \]
    Observe that
    \[
    |{\det u[A]}| = \sqrt{\det P} \leq \left( \frac{\tr \, P}{\eta} \right)^{\eta/2} \leq \frac{\tr \, P}{\eta},
    \]
    where the first bound is the AM-GM inequality on the eigenvalues of $P$, and the second holds because $\frac{\tr \, P}{\eta} \in [0, 1]$ and $\eta/2 \geq 1$. The same estimate holds for $\det u[B]$ via $Q$. Hence
    \[
    |{\det u[A]}| + |{\det u[B]}| \leq \frac{\tr(P+Q)}{\eta} \leq 1. \qedhere
    \]
\end{proof}

\begin{corollary}
    Put any $n \geq 4$ and let $\slaterdeterminantset$ be the class of all $n$-mode Slater determinants with particle numbers within $\{2, 3, \ldots, n-2\}$. The lower bound of \cref{thm:generic_testing_lower_bound} applies to $\slaterdeterminantset$.
\end{corollary}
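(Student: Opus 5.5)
The corollary follows by combining \cref{thm:generic_testing_lower_bound} with \cref{lem:hard_instance_slater}, exactly as in the fermionic Gaussian case. The only work is to verify that a valid pair $\ket{A},\ket{B}$ exists for every $n \geq 4$. The class $\slaterdeterminantset$ here is restricted to particle numbers in $\{2,\ldots,n-2\}$, so we need a particle number $\eta$ in this range for which two disjoint $\eta$-subsets of $[n]$ exist. This requires $2\eta \leq n$, so we take $\eta = 2$. For $n \geq 4$ we set $A = \{1,2\}$ and $B = \{3,4\}$, giving Fock states $\ket{A} = \ket{1100\,0^{n-4}}$ and $\ket{B} = \ket{0011\,0^{n-4}}$.

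Next we check the hypotheses of \cref{lem:interpolation_state}.
\begin{itemize}
\item[(i)] $\ket{A}$ and $\ket{B}$ are orthonormal, since they are distinct Fock basis states.
\item[(ii)] $\ket{A} \in \slaterdeterminantset$, since a Fock state with two occupied modes is a Slater determinant (its 1-RDM is a rank-2 projector, by \cref{fact:rdm_properties}) and $2 \in \{2,\ldots,n-2\}$ when $n \geq 4$.
\item[(iii)] The overlap condition $|\braket{A|\phi}| + |\braket{B|\phi}| \leq 1$ for all $\ket{\phi} \in \slaterdeterminantset$ is exactly \cref{lem:hard_instance_slater} with $\eta = 2$. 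Slater determinants with a different particle number have zero overlap with both states, which the lemma already covers.
\end{itemize}

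With these conditions in place, \cref{thm:generic_testing_lower_bound} applies. For any $0 \leq \eps_A < \eps_B \leq 1/\sqrt{2}$ it yields states at distance at most $\eps_A$ and at least $\eps_B$ from $\slaterdeterminantset$ that require $\Omega(1/(\eps_B-\eps_A)^2)$ copies to distinguish with probability $2/3$. Any $(\eps_1,\eps_2,\tfrac13)$-tester distinguishes such a pair, so the bound transfers to testing.

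There is no real obstacle here. The one point needing care is the particle-number range: the lemma requires disjoint $A$ and $B$ with $|A| = |B| = \eta \geq 2$ and $\eta \leq n-2$, and $\eta = 2$ with $n \geq 4$ meets both conditions.
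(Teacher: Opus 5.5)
Your proof is correct for the class as literally stated, and it is shorter than the paper's. Because $\slaterdeterminantset$ is the union over $\eta\in\{2,\ldots,n-2\}$, a single hard pair suffices, and $\ket{\{1,2\}}$, $\ket{\{3,4\}}$ works for all $n\geq 4$ via \cref{lem:hard_instance_slater} with $\eta=2$.

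The paper instead proves the claim sector by sector. For each $\eta\in\{2,\ldots,n-2\}$ it exhibits a hard pair inside the $\eta$-particle sector:
\begin{itemize}
\item when $\eta\leq\lfloor n/2\rfloor$, it uses disjoint $\eta$-subsets;
\item when $\eta>\lfloor n/2\rfloor$, where disjoint $\eta$-subsets do not exist, it uses the particle--hole unitary $U^\dagger a_j U = a_j^\dagger$. This unitary maps $\eta$-particle Slater determinants onto $(n-\eta)$-particle ones and preserves trace distance to the class, since the class is particle--hole symmetric.
\end{itemize}

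Your route saves that duality step. The paper's route buys a lower bound that survives a promise that the input has a fixed particle number $\eta$, including $\eta>n/2$. Your pair covers only $\eta=2$, or $\eta\leq n/2$ if redone with larger subsets. That stronger per-sector form is what supports the introduction's remark that such promises can improve only constant factors.
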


\begin{proof}
    Combining \cref{thm:generic_testing_lower_bound} with \cref{lem:hard_instance_slater} implies the statement for $2 \leq \eta \leq \lfloor n/2 \rfloor$, as there is enough room to always choose disjoint $\eta$-subsets of $[n]$. For $\lfloor n/2 \rfloor < \eta \leq n - 2$, consider the particle--hole transformation $U^\dagger a_j U = a_j^\dagger$. By unitary invariance of the trace distance, this preserves the promise of \cref{eq:state_dist_boundary}. Thus we can map every such instance to the $2 \leq \eta \leq \lfloor n/2 \rfloor$ case with states $U\ket{\psi_A}$ and $U\ket{\psi_B}$, for which the preceding argument applies.
\end{proof}

\subsection{Relaxed reduction criterion}

Now we turn to proving lower bounds for bosonic families of states. In this case, we need a helper lemma which relaxes the condition of \cref{lem:interpolation_state} for finding hard-to-distinguish instances. This is because many classes of bosonic states lack exact orthogonality between their members;~for example, any two coherent states have overlap $|{\braket{\alpha | \beta}}|^2 = e^{-|\alpha - \beta|^2} > 0$.

\begin{lemma}[Hard instance interpolation II]\label{lem:interp_dist_AB}
    Let $\calC$ be any class of pure states. Suppose that a pure state $\ket{X}$ has a closest state $\ket{A} \in \calC$ in fidelity,
    \[
    q \coloneqq |{\braket{A | X}}|^2 = \sup_{\ket{\phi} \in \calC} |{\braket{\phi | X}}|^2,
    \]
    such that $q < 1$. Choose the global phase of $\ket{A}$ so that $\braket{A | X} = \sqrt{q}$ and define
    \[
    \ket{B} \coloneqq \frac{\ket{X} - \sqrt{q}\ket{A}}{\sqrt{1-q}}.
    \]
    Then $\ket{A}, \ket{B}$ are orthonormal, and the state $\ket{\Psi_\theta} = \cos\theta \ket{A} + \sin\theta \ket{B}$ obeys
    \begin{equation}\label{eq:interp_dist_sin}
    \dist(\Psi_\theta, \calC) = \sin\theta \quad \text{for } 0 \leq \theta \leq T,
    \end{equation}
    where $T = \arccos\sqrt{q}$. When $q \leq 1/2$, we can instead fix $T = \pi/4$.
\end{lemma}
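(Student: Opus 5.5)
Orthonormality of $\ket{A},\ket{B}$ is immediate from the construction: $\braket{A|B} = (\braket{A|X} - \sqrt{q})/\sqrt{1-q} = 0$, and $\norm{\ket{X}-\sqrt q\ket{A}}^2 = 1 - 2q + q = 1-q$. So $\ket{\Psi_\theta}$ is a unit vector, and $\ket{X} = \sqrt{q}\ket{A} + \sqrt{1-q}\ket{B} = \ket{\Psi_T}$ with $T=\arccos\sqrt q$. The upper bound $\dist(\Psi_\theta,\calC)\le\sin\theta$ follows by testing against $\ket{A}\in\calC$, since $|\braket{A|\Psi_\theta}|=\cos\theta$. The real content is therefore the matching lower bound $\sup_{\phi\in\calC}|\braket{\phi|\Psi_\theta}|\le\cos\theta$ for $0\le\theta\le T$.

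For the lower bound, fix $\ket{\phi}\in\calC$ and write $a=\braket{\phi|A}$ and $b=\braket{\phi|B}$. Since $\ket{A},\ket{B}$ are orthonormal, Bessel's inequality gives $|a|^2+|b|^2\le 1$. The maximality hypothesis says $|\braket{\phi|X}|^2 = |\sqrt q\,a + \sqrt{1-q}\,b|^2 \le q$.

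The plan is to view $\braket{\phi|\Psi_\theta} = \cos\theta\, a + \sin\theta\, b$ as the inner product of the vector $(a,b)$ with the real unit vector $(\cos\theta,\sin\theta)$. I then want to show that $g(\theta) = |\cos\theta\,a + \sin\theta\,b|^2 - \cos^2\theta$ is nonpositive on $[0,T]$. At the endpoints this holds: $g(0)=|a|^2-1\le 0$, and $g(T)=|\braket{\phi|X}|^2-q\le0$.

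To handle the interior, expand $g$ in terms of $\cos2\theta$ and $\sin2\theta$:
\[
g(\theta) = \tfrac12\bigl(|a|^2+|b|^2-1\bigr) + \tfrac12\bigl(|a|^2-|b|^2-1\bigr)\cos2\theta + \Re(\bar a b)\sin2\theta .
\]
This is a constant $c_0\le 0$ plus a sinusoid $R\cos(2\theta-\varphi)$ in $2\theta$. The main obstacle is that such a function can have an interior maximum even when both endpoints are nonpositive, so endpoint checks alone are not enough.

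The step I expect to use is a geometric reformulation. Let $u_\theta=(\cos\theta,\sin\theta)$, and let $w=(a,b)$ with real part and imaginary part handled jointly. The quantity $|\braket{\phi|\Psi_\theta}|^2$ equals $u_\theta^\transpose M u_\theta$, where $M=\Re(\bar w w^\transpose)$ is a real PSD $2\times2$ matrix of rank at most $2$ and trace at most $1$. The set $\{u: u^\transpose M u\le (u\cdot e_1)^2\}$, with $e_1$ the $A$ direction, is the solution set of a quadratic form $u^\transpose(M-e_1e_1^\transpose)u\le 0$. Since $\tr(M - e_1e_1^\transpose)\le 0$, this indefinite-or-negative form has a nonpositive cone that is a union of at most two opposite double-sectors. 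The condition $\tr\le 0$ makes the nonpositive region at least as large as the positive region, so it is connected on the half-circle $\theta\in[0,\pi/2]$ whenever it contains both endpoints $0$ and $T$ of an arc of length at most $\pi/2$. I expect this connectivity argument (the nonpositive region of a trace-nonpositive binary quadratic form is an arc of angular length $\ge\pi/2$) to be the delicate part, and I would verify it carefully by diagonalizing $M-e_1e_1^\transpose$.

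With this in hand, $|\braket{\phi|\Psi_\theta}|\le\cos\theta$ holds for every $\phi$, which gives the equality $\dist(\Psi_\theta,\calC)=\sin\theta$. Finally, if $q\le 1/2$ then $T=\arccos\sqrt q\ge\pi/4$. Restricting to $\theta\le\pi/4\le T$ is then allowed, which justifies fixing $T=\pi/4$.
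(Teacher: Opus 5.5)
\textbf{There is a genuine gap in the step you flag as delicate.} The setup is fine: orthonormality, the upper bound via $\ket{A}$, and the reduction to $g(\theta)=|a\cos\theta+b\sin\theta|^2-\cos^2\theta$ with $g(0)\le 0$ and $g(T)\le 0$ are all correct. The claimed lemma is also true as stated: $\{g\le 0\}$ is an arc of $\mathbb{R}/\pi\mathbb{Z}$ of length at least $\pi/2$. The inference you draw from it fails, however. An arc containing $0$ and $T$ can be either of the two arcs joining them. The complementary arc $[T,\pi]$ has length $\pi-T\ge\pi/2$, so your length bound does not exclude it.

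Concretely, take the form $N$ with $N_{11}=0$, $N_{12}=\sin T$, $N_{22}=-2\cos T$, so $u_\theta^\transpose N u_\theta=\cos(2\theta-T)-\cos T$. Then $\tr N=-2\cos T\le 0$ and the form vanishes at $\theta=0$ and $\theta=T$. Yet it equals $1-\cos T>0$ at $\theta=T/2$. So trace-nonpositivity plus nonpositivity at both endpoints cannot give nonpositivity on $[0,T]$. Diagonalizing $M-e_1e_1^\transpose$ will not help, because the information you are using is insufficient.

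\textbf{The missing ingredient} is the one piece of structure your $g$ has that an arbitrary trace-nonpositive form lacks: $g(\pi/2)=|b|^2\ge 0$. Equivalently, $(M-e_1e_1^\transpose)_{22}=M_{22}\ge 0$, which is exactly what the example above violates. With this the argument closes, and the trace condition is not even needed:
\begin{itemize}
\item Since $g(\theta)=c_0+R\cos(2\theta-\varphi)$, the set $\{g>0\}$ is a single open arc of $\mathbb{R}/\pi\mathbb{Z}$ avoiding $0$ and $T$.
\item If it met $(0,T)$, it would lie inside $(0,T)$. It would then miss $\pi/2$, because $T\le\pi/2$.
\item That forces $|b|^2\le 0$, so $b=0$. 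Then $g=(|a|^2-1)\cos^2\theta\le 0$, a contradiction.
\end{itemize}

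The paper avoids this case analysis entirely. Since $\ket{X}=\ket{\Psi_T}$, one has $\fubiniStudyDistance(\Psi_\theta,X)=T-\theta$ for $0\le\theta\le T$. The Fubini--Study triangle inequality then gives $T\le\fubiniStudyDistance(\phi,X)\le\fubiniStudyDistance(\phi,\Psi_\theta)+T-\theta$ for every $\ket{\phi}\in\calC$. This is exactly $|\braket{\phi|\Psi_\theta}|\le\cos\theta$. Your two-dimensional reduction, once repaired, is a valid elementary alternative, but the metric argument is shorter.
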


\begin{proof}
    Recall the Fubini--Study distance between pure states:
    \[
    \fubiniStudyDistance(\psi_1, \psi_2) = \arccos{\abs{\braket{\psi_1|\psi_2}}},
    \]
    which lies within $[0, \pi/2]$. By hypothesis, $\inf_{\ket{\phi} \in \calC} \fubiniStudyDistance(\phi, X) = T$, and by construction,
    \[
    \ket{X} = \cos T \ket{A} + \sin T \ket{B}.
    \]
    Hence for $\theta \leq T$,
    \[
    \begin{split}
        \fubiniStudyDistance(\Psi_\theta, X) &= \arccos{\abs{\cos(T)\cos(\theta) + \sin(T)\sin(\theta)}}\\
        &= \arccos(\cos(T-\theta))\\
        &= T - \theta.
    \end{split}
    \]
    The infimum implies that for any $\ket{\phi} \in \calC$, we can apply triangle inequality to the Fubini--Study metric to get
    \[
    T \leq \fubiniStudyDistance(\phi, X) \leq \fubiniStudyDistance(\phi, \Psi_\theta) + \fubiniStudyDistance(\Psi_\theta, X) = T - \theta + \fubiniStudyDistance(\phi, \Psi_\theta).
    \]
    Rearranging implies $\fubiniStudyDistance(\phi, \Psi_\theta) \geq \theta$. In particular, the infimum attains this at $\ket{\phi} = \ket{A} \in \calC$. \cref{eq:interp_dist_sin} follows by the standard conversion $\dist(\cdot, \cdot) = \sqrt{1 - \cos^2(\fubiniStudyDistance(\cdot, \cdot))}$.
\end{proof}

When $q \leq 1/2$, we recover the same premise as in \cref{lem:interpolation_state} and so the lower bound of \cref{thm:generic_testing_lower_bound} applies here as well. Note that this does not require the more restrictive condition that $\max_{\ket{\phi} \in \calC} \left( \abs{\braket{\phi | A}} + \abs{\braket{\phi | B}} \right) \leq 1$.

\subsection{Hard instances for bosonic Gaussian states}

\paragraph{Pure Gaussian states.} We represent the elements of $\bosonicgaussianset$ in their Bargmann form \cite{bargmann1961hilbert,cariolaro2015fock}:
\begin{equation}\label{eq:bargmann}
    \ket{\phi} = C \exp\mathopen{}\left( \frac{1}{2} \sum_{j,k=1}^n Z_{jk} a_j^\dagger a_k^\dagger + \sum_{j=1}^n \mu_j a_j^\dagger \right) \ket{0^n}
\end{equation}
where $Z = Z^T \in \C^{n \times n}$, $\mu \in \C^n$, and $C$ is a normalization constant. In fact, we will only need the single-mode case, from which we can tensorize to get the $n$-mode hard instance.

\begin{lemma}\label{lem:one-photon_overlap_gauss}
    Let $\ket{X} = \ket{1, 0^{n-1}}$ be the Fock state with a single boson in the first mode. It holds that
    \[
    \sup_{\ket{\phi} \in \bosonicgaussianset} \abs{\braket{X | \phi}}^2 = \frac{3\sqrt{3}}{4e} < \frac{1}{2},
    \]
    which is maximized by (for instance)
    \begin{equation}\label{eq:A_state_bgs}
    \ket{A} = \sqrt{\frac{\sqrt{3}}{2e}} \exp\mathopen{}\left(- \frac{1}{4} a_1^\dagger a_1^\dagger + \sqrt{\frac{3}{2}} a_1^\dagger  \right) \ket{0^n}.
    \end{equation}
\end{lemma}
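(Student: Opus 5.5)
The plan is to first reduce to a single mode and then optimize the Bargmann parameters directly. For the reduction: since $\ket{X} = \ket{1}\otimes\ket{0^{n-1}}$, we have $\braket{X|\phi} = \braket{1|\,(\identity\otimes\bra{0^{n-1}})\ket{\phi}}$. I would project an $n$-mode Gaussian state onto vacuum on modes $2,\dots,n$. In the Bargmann form \cref{eq:bargmann}, setting $a_j^\dagger\to 0$ for $j\geq 2$ in the generating function gives an unnormalized single-mode Gaussian $C\exp(\tfrac12 Z_{11}a_1^{\dagger2}+\mu_1 a_1^\dagger)\ket{0}$. Its norm is at most $1$, because it is a projection of a normalized state. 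Hence $\abs{\braket{X|\phi}}^2 \le \sup_{\ket{\phi_1}\in\bosonicgaussianset(1)}\abs{\braket{1|\phi_1}}^2$. Equality is attained by taking $\ket{\phi}=\ket{\phi_1}\otimes\ket{0^{n-1}}$, which is why \cref{eq:A_state_bgs} involves only mode $1$.

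For the single-mode problem I would write $\ket{\phi}=C\exp(\tfrac{z}{2}a^{\dagger2}+\mu a^\dagger)\ket{0}$ with $\abs{z}<1$, so that $\braket{1|\phi}=C\mu$. The normalization is the standard Gaussian integral
\[
\abs{C}^{-2}=\frac{1}{\sqrt{1-\abs{z}^2}}\exp\!\left(\frac{\abs{\mu}^2+\Re(\bar z\mu^2)}{1-\abs{z}^2}\right).
\]
One can read this off from the Bargmann inner product $\frac1\pi\int e^{-\abs{w}^2}\abs{f(w)}^2 d^2w$. The objective is therefore
\[
\abs{\braket{1|\phi}}^2 = \abs{\mu}^2\sqrt{1-s^2}\,\exp\!\left(-\frac{\abs{\mu}^2+\Re(\bar z\mu^2)}{1-s^2}\right), \qquad s=\abs{z}.
\]

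To optimize, I would first choose the phase of $z$ so that $\Re(\bar z\mu^2)=-s\abs{\mu}^2$. The exponent then becomes $-\abs{\mu}^2/(1+s)$. Setting $t=\abs{\mu}^2$ and maximizing $t\,e^{-t/(1+s)}$ gives $t=1+s$. This reduces the objective to $\frac{(1+s)\sqrt{1-s^2}}{e}$. Maximizing $g(s)=(1+s)^{3/2}(1-s)^{1/2}$ over $s\in[0,1)$ gives $s=1/2$ and $g=\frac{3\sqrt3}{4}$, so the supremum is $\frac{3\sqrt3}{4e}\approx 0.478<\tfrac12$.

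These choices correspond to $z=-\tfrac12$ and $\mu=\sqrt{3/2}$ real, which matches \cref{eq:A_state_bgs}. The normalization constant $C^2=\sqrt{3/4}\,e^{-1}=\frac{\sqrt3}{2e}$ also matches. Attainment of the supremum then justifies the word ``maximized''.

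I expect the main obstacle to be the normalization formula. Getting its exponent and sign conventions right requires care; I would verify the formula by checking the limits $z=0$, which should give $e^{-\abs{\mu}^2}$ for a coherent state, and $\mu=0$, which should give the squeezed vacuum with $\abs{C}^2=\sqrt{1-s^2}$. A secondary point is the reduction: I should check that every pure Gaussian state (including the boundary case $\abs{z}\to1$) has the Bargmann form. Since Gaussian states are the closure of this family, taking the supremum over it suffices.
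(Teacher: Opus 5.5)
Your proposal is correct and follows the paper's proof essentially step for step. You use the same Bargmann normalization $|C|^2=\sqrt{1-s^2}\exp\bigl(-(|\mu|^2+\Re(\bar z\mu^2))/(1-s^2)\bigr)$, the same phase alignment, the same optimization at $|\mu|^2=1+s$ and $s=1/2$, and the same reduction to one mode by projecting modes $2,\ldots,n$ onto the vacuum (you just do the reduction first). Your closing caveat can be dropped: every normalizable pure Gaussian state has nonzero vacuum amplitude, so it is exactly of Bargmann form with $\|Z\|<1$, and no closure or boundary case $|z|\to 1$ needs to be handled.
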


\begin{proof}
    First, consider a single mode. By \cref{eq:bargmann} we write
    \[
    \ket{\phi_1} = C \exp\mathopen{}\left( \frac{\lambda}{2} a^\dagger a^\dagger + \mu a^\dagger \right) \ket{0}
    \]
    where we assume the convention $|\lambda| < 1$. The normalization factor $C$ can be determined by a complex Gaussian integral. This is given by \cite[Eq.~(40)]{cariolaro2015fock}, which is parametrized as $\lambda = e^{i\theta} \tanh r$, $\mu = \alpha \sech r$:
    \begin{equation}\label{eq:bargmann_norm}
    \begin{split}
        |C|^2 &= (\sech r) \exp\mathopen{}\left( -|\alpha|^2 - \Re[\alpha^2 e^{-i\theta}] \tanh r \right)\\
        &= \sqrt{1 - |\lambda|^2} \exp\mathopen{}\left( -\frac{|\mu|^2 + \Re[\lambda^* \mu^2]}{1 - |\lambda|^2} \right).
    \end{split}
    \end{equation}
    Expanding the exponential $e^{\frac{\lambda}{2}(a^\dagger)^2 + \mu a^\dagger} = 1 + \mu a^\dagger + \cdots$ where all other terms are at least degree two in $a^\dagger$, we have that $\braket{1 | \phi_1} = C\mu$. Denote $s = |\lambda|$ and $x = |\mu|^2$ so that
    \[
    \begin{split}
    \abs{\braket{1 | \phi_1}}^2 &= \sqrt{1 - s^2} \exp\mathopen{}\left( -\frac{x + \Re[\lambda^* \mu^2]}{1 - s^2} \right) \mu\\
    &\leq \sqrt{1 - s^2} \exp\mathopen{}\left( -\frac{x - sx}{1 - s^2} \right) x\\
    &= \sqrt{1 - s^2} \exp\mathopen{}\left( -\frac{x}{1 + s} \right) x,
    \end{split}
    \]
    where the inequality follows from choosing optimal phases such that $\Re[\lambda^* \mu^2] \geq -sx$. For fixed $s$, this bound is maximized at $x = 1 + s$, implying
    \[
    \abs{\braket{1 | \phi_1}}^2 \leq \frac{(1+s) \sqrt{1-s^2}}{e}.
    \]
    By inspection, this is maximized on $s \in [0, 1)$ at $s = \frac{1}{2}$. Equality is attained by setting $\lambda = -\frac{1}{2}$ and $\mu = \sqrt{\frac{3}{2}}$, so the bound is tight.

    Now we extend to $n$ bosonic modes. Let $\ket{\phi}$ be any pure Gaussian state and let
    \[
    \ket{g} = (\identity \otimes \bra{0^{n-1}}) \ket{\phi}
    \]
    be its projection onto the vacuum on modes $2, \ldots, n$. From \cref{eq:bargmann} we have that either $\ket{g} = \sqrt{p} \ket{\widetilde{g}}$ for some $p \in (0, 1]$ and single-mode $\ket{\widetilde{g}}$, corresponding to $Z = \lambda \oplus 0_{(n-1) \times (n-1)}$ and $\mu = (\mu_1, 0, \ldots, 0)$; otherwise, $\ket{g} = 0$ (at least one of the $n-1$ modes have no support with the vacuum). Consequently, we can uniformly write the overlap as
    \[
    \abs{\braket{1, 0^{n-1} | \phi}}^2 = p \abs{\braket{1 | \widetilde{g}}}^2 \leq \sup_{\ket{\phi_1}} \abs{\braket{1 | \phi_1}}^2 = \frac{3\sqrt{3}}{4e},
    \]
    and a maximizing state is found by tensoring the maximizing single-mode state with $n-1$ vacua.
\end{proof}

\begin{corollary}\label{cor:bgs_lower_bound}
    Let $\bosonicgaussianset$ be the class of all $n$-mode pure bosonic Gaussian states. The lower bound of \cref{thm:generic_testing_lower_bound} applies to $\bosonicgaussianset$.
\end{corollary}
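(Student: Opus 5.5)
This corollary follows by combining \cref{lem:one-photon_overlap_gauss} with \cref{lem:interp_dist_AB} and then running the argument of \cref{thm:generic_testing_lower_bound}. Take $\ket{X} = \ket{1,0^{n-1}}$. By \cref{lem:one-photon_overlap_gauss}, the supremum $q = \sup_{\ket{\phi}\in\bosonicgaussianset}\abs{\braket{\phi|X}}^2 = \frac{3\sqrt3}{4e}$ is attained by the explicit state $\ket{A}\in\bosonicgaussianset$ of \cref{eq:A_state_bgs}. Moreover $q \approx 0.478 < 1/2$.

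First, I check the hypotheses of \cref{lem:interp_dist_AB}. The supremum is attained, which is needed because the lemma requires a closest state $\ket{A}$ in the class. Also $q<1$, and in fact $q \le 1/2$. Fixing the phase of $\ket{A}$ so that $\braket{A|X} = \sqrt q$, I set $\ket{B} = (\ket{X}-\sqrt q\ket{A})/\sqrt{1-q}$. The lemma then gives $\dist(\Psi_\theta,\bosonicgaussianset) = \sin\theta$ for every $0\le\theta\le\pi/4$.

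Second, I repeat the proof of \cref{thm:generic_testing_lower_bound} with this family in place of \cref{lem:interpolation_state}. That proof uses only two facts: the distance identity $\dist(\Psi_\theta,\calC)=\sin\theta$ on $[0,\pi/4]$, and the fidelity formula $\abs{\braket{\Psi_{\theta_A}|\Psi_{\theta_B}}}^2=\cos^2(\theta_B-\theta_A)$. The second fact holds because $\ket{A},\ket{B}$ are orthonormal. So for $0\le\eps_A<\eps_B\le 1/\sqrt2$, I set $\theta_j=\arcsin\eps_j\in[0,\pi/4]$ and $\ket{\psi_j}=\ket{\Psi_{\theta_j}}$. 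The same estimate $\log(1/F)\le 4(\eps_B-\eps_A)^2$ combined with \cref{prop:helstrom} at $\delta=1/3$ then yields $N\ge \log(9/8)/(4(\eps_B-\eps_A)^2)$.

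Third, I transfer this to testers. A $t$-copy $(\eps_1,\eps_2,1/3)$-tester with $\eps_2\le 1/\sqrt2$ distinguishes $\ket{\psi_A}$ (with $\eps_A=\eps_1$) from $\ket{\psi_B}$ (with $\eps_B=\eps_2$) with success probability $2/3$. Hence $t=\Omega(1/(\eps_2-\eps_1)^2)$. When $\eps_2>1/\sqrt2$, the claimed bound is only $\Omega(1)$, and that holds trivially.

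The only step that needs genuine care is the first one: confirming that the supremum in \cref{lem:one-photon_overlap_gauss} is actually attained within $\bosonicgaussianset$, and that $\frac{3\sqrt3}{4e}<\frac12$. The explicit maximizer settles attainment. The numeric inequality is equivalent to $3\sqrt3<2e$, i.e.\ $5.196<5.437$, which holds. Everything after that is a direct reuse of arguments already in the paper.
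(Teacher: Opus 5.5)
Your proposal is correct and follows the paper's route. You take $\ket{X}=\ket{1,0^{n-1}}$, use \cref{lem:one-photon_overlap_gauss} to get an attained supremum $q=\frac{3\sqrt3}{4e}<\frac12$, build $\ket{B}$ via \cref{lem:interp_dist_AB}, and rerun the argument of \cref{thm:generic_testing_lower_bound}. You also note that this argument needs only the distance identity $\dist(\Psi_\theta,\calC)=\sin\theta$ on $[0,\pi/4]$ and orthonormality of $\ket{A},\ket{B}$, not the literal premise of \cref{lem:interpolation_state}; this makes precise a step the paper states loosely.
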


\begin{proof}
    Combining \cref{lem:one-photon_overlap_gauss} and \cref{lem:interp_dist_AB} implies that there exist states $\ket{A} \in \bosonicgaussianset$, given by \cref{eq:A_state_bgs}, and
    \[
    \ket{B} = \frac{\ket{1, 0^{n-1}} - \sqrt{q} \ket{A}}{\sqrt{1 - q}}, \quad \text{where } q = \frac{3\sqrt{3}}{4e},
    \]
    that satisfy the premise of \cref{lem:interp_dist_AB} with $q \leq 1/2$. Hence \cref{thm:generic_testing_lower_bound} applies.
\end{proof}

\paragraph{Zero-mean Gaussian states.} The set of zero-mean bosonic Gaussian states $\zeromeanbosonicgaussianset \subset \bosonicgaussianset$ consists of all $\mu = 0$ such states. The previous argument can be minimally modified to handle this class.

\begin{lemma}\label{lem:two-photon_overlap}
    The two-boson state $\ket{X} = \ket{2, 0^{n-1}}$ satisfies
    \[
    \sup_{\ket{\phi} \in \zeromeanbosonicgaussianset} \abs{\braket{X | \phi}}^2 = \frac{1}{3\sqrt{3}} < \frac{1}{2},
    \]
    which is maximized by (for instance)
    \begin{equation}\label{eq:hard_A_zero-mean}
    \ket{A} = \frac{1}{3^{1/4}} \exp\mathopen{}\left( \frac{1}{2} \sqrt{\frac{2}{3}} a_1^\dagger a_1^\dagger \right) \ket{0^n}.
    \end{equation}
\end{lemma}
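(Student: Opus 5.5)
This mirrors the proof of \cref{lem:one-photon_overlap_gauss}, specialized to $\mu = 0$. We first settle the single-mode case, then reduce $n$ modes to one mode by projecting modes $2,\ldots,n$ onto the vacuum.

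\textbf{Single mode.} A zero-mean pure Gaussian state has Bargmann form
\[
\ket{\phi_1} = C \exp\mathopen{}\left(\tfrac{\lambda}{2} a^\dagger a^\dagger\right)\ket{0}, \qquad |\lambda|<1 .
\]
Setting $\mu=0$ in \cref{eq:bargmann_norm} gives $|C|^2=\sqrt{1-|\lambda|^2}$. The only degree-two term in the exponential is $\tfrac{\lambda}{2}(a^\dagger)^2$, and $(a^\dagger)^2\ket{0}=\sqrt{2}\ket{2}$, so $\braket{2|\phi_1} = C\lambda/\sqrt{2}$. With $s=|\lambda|$ this yields
\[
\abs{\braket{2|\phi_1}}^2 = \frac{s^2\sqrt{1-s^2}}{2}.
\]
Maximizing over $s\in[0,1)$: setting $u=s^2$, the function $u\sqrt{1-u}$ has derivative zero when $\sqrt{1-u} = u/(2\sqrt{1-u})$, i.e.\ $u=2/3$. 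There the value is $\tfrac12\cdot\tfrac23\cdot\tfrac{1}{\sqrt3} = \tfrac{1}{3\sqrt3}$. The maximum is attained at $\lambda=\sqrt{2/3}$ with $|C|^2=1/\sqrt3$, which gives $C=3^{-1/4}$ and recovers the state in \cref{eq:hard_A_zero-mean}. Finally, $\tfrac{1}{3\sqrt3}\approx0.19<\tfrac12$.

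\textbf{Extension to $n$ modes.} Take an $n$-mode state in $\zeromeanbosonicgaussianset$ and write it in Bargmann form with symmetric $Z$ and $\mu=0$. Let $\ket{g}=(\identity\otimes\bra{0^{n-1}})\ket{\phi}$. Applying $\bra{0}$ on modes $2,\ldots,n$ removes every term containing one of those creation operators, so only the $Z_{11}$ term survives: $\ket{g} = C\exp\mathopen{}\left(\tfrac{Z_{11}}{2}a_1^\dagger a_1^\dagger\right)\ket{0}$.

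Two cases arise.
\begin{itemize}
\item If $\ket{g}=0$, the overlap is zero.
\item Otherwise $\ket{g}=\sqrt{p}\,\ket{\widetilde g}$ with $p=\norm{g}^2\le 1$ and $\ket{\widetilde g}$ a normalized single-mode zero-mean Gaussian state. Here we need $|Z_{11}|<1$ for normalizability, which should follow from the reduced state on mode $1$ being Gaussian, or from the Schur-complement form of the Bargmann representation.
\end{itemize}
In the second case $\abs{\braket{2,0^{n-1}|\phi}}^2 = p\abs{\braket{2|\widetilde g}}^2 \le \tfrac{1}{3\sqrt3}$. The bound is attained by tensoring the single-mode maximizer with $n-1$ vacua.

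\textbf{Main obstacle.} The delicate step is the $n$-mode reduction. One must justify that the projection $\ket{g}$ is a scaled single-mode zero-mean Gaussian state, and in particular that $|Z_{11}|<1$. The cleanest route is to observe that $\ket{g}/\norm{g}$ is normalizable, because $\norm{g}\le 1$, while $\exp\mathopen{}\left(\tfrac{\lambda}{2}(a^\dagger)^2\right)\ket{0}$ is normalizable only if $|\lambda|<1$. The single-mode computation itself is routine calculus.
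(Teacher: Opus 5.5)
Your proposal is correct and follows the paper's proof essentially verbatim. Both proofs compute the single-mode amplitude $\abs{\braket{2|\phi_1}}^2 = \tfrac12|\lambda|^2\sqrt{1-|\lambda|^2}$, maximize it at $|\lambda|^2 = 2/3$, and then apply the same vacuum-projection reduction used in \cref{lem:one-photon_overlap_gauss}. Your argument that $|Z_{11}|<1$, from $\norm{g}\le 1$ and the divergence of $\exp(\tfrac{\lambda}{2}(a^\dagger)^2)\ket{0}$ for $|\lambda|\ge 1$, validly fills a step the paper leaves implicit; also, the case $\ket{g}=0$ never occurs, since a Bargmann-form state has nonzero vacuum amplitude $C$.
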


\begin{proof}
    On one mode, the Bargmann representation of zero-mean Gaussian states is (see \cref{eq:bargmann_norm} for the normalization factor)
    \[
    \ket{\phi_1} = (1 - |\lambda|^2)^{1/4} \exp\mathopen{}\left( \frac{\lambda}{2} (a^\dagger)^2 \right) \ket{0}.
    \]
    Again by expanding in degrees of $a^\dagger$, the only two-boson support has squared amplitude
    \[
    \abs{\braket{2 | \phi_1}}^2 = \sqrt{1 - |\lambda|^2} \frac{|\lambda|^2}{2}.
    \]
    This is maximized on $|\lambda|^2 \in (0, 1]$ at $|\lambda|^2 = \frac{2}{3}$. By the same argument as in the proof of \cref{lem:one-photon_overlap_gauss}, tensoring with $\ket{0^{n-1}}$ extends the supremum to $n$-mode states.
\end{proof}

\begin{corollary}
    Let $\zeromeanbosonicgaussianset$ be the class of all $n$-mode zero-mean pure bosonic Gaussian states. The lower bound of \cref{thm:generic_testing_lower_bound} applies to $\zeromeanbosonicgaussianset$.
\end{corollary}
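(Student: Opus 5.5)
The plan is to mirror the proof of \cref{cor:bgs_lower_bound}, substituting \cref{lem:two-photon_overlap} for \cref{lem:one-photon_overlap_gauss}. Take $\ket{X} = \ket{2, 0^{n-1}}$ and let $\ket{A} \in \zeromeanbosonicgaussianset$ be the maximizer from \cref{eq:hard_A_zero-mean}. By \cref{lem:two-photon_overlap}, the optimal fidelity is
\[
q = \sup_{\ket{\phi} \in \zeromeanbosonicgaussianset} \abs{\braket{X|\phi}}^2 = \frac{1}{3\sqrt{3}} < \frac12,
\]
and this supremum is attained. After fixing the phase of $\ket{A}$ so that $\braket{A|X} = \sqrt{q}$, we define
\[
\ket{B} = \frac{\ket{X} - \sqrt{q}\ket{A}}{\sqrt{1-q}}.
\]
\cref{lem:interp_dist_AB} then shows that $\ket{A}, \ket{B}$ are orthonormal and that $\dist(\Psi_\theta, \zeromeanbosonicgaussianset) = \sin\theta$ for all $0 \le \theta \le \pi/4$, where the range up to $\pi/4$ is allowed because $q \le 1/2$.

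Next, we check that the proof of \cref{thm:generic_testing_lower_bound} uses only this distance identity for $\theta \in [0,\pi/4]$, together with the fidelity computation $\abs{\braket{\Psi_{\theta_A}|\Psi_{\theta_B}}}^2 = \cos^2(\theta_B-\theta_A)$. The latter holds for any orthonormal pair $\ket{A}, \ket{B}$. The stronger hypothesis of \cref{lem:interpolation_state} enters that proof only through its conclusion, so it can be replaced by \cref{lem:interp_dist_AB}. With $\theta_j = \arcsin \eps_j \le \pi/4$ for $\eps_j \le 1/\sqrt2$, the Helstrom argument (\cref{prop:helstrom}) then applies verbatim and gives $N \ge \log(9/8)/(4(\eps_B-\eps_A)^2)$.

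The main point needing care is that the supremum in \cref{lem:two-photon_overlap} is attained by a state that actually lies in $\zeromeanbosonicgaussianset$. This holds because $\ket{A}$ in \cref{eq:hard_A_zero-mean} is a single-mode squeezed vacuum, with $\abs{\lambda}^2 = 2/3 < 1$, tensored with vacua. The only other thing to confirm is that \cref{lem:interp_dist_AB} needs $\ket{A}$ to be a genuine maximizer rather than merely a near-optimizer. Both facts are supplied by the lemma, so no further obstacle is expected.
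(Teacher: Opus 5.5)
Your proposal is correct and takes the same route as the paper. The paper also proves the corollary by rerunning \cref{cor:bgs_lower_bound} with \cref{lem:two-photon_overlap}, the squeezed-vacuum state $\ket{A}$ of \cref{eq:hard_A_zero-mean}, and $\ket{B} = (\ket{2,0^{n-1}} - \sqrt{q}\ket{A})/\sqrt{1-q}$ with $q = 1/(3\sqrt{3})$. Your explicit check that the proof of \cref{thm:generic_testing_lower_bound} uses only the conclusion $\dist(\Psi_\theta, \zeromeanbosonicgaussianset) = \sin\theta$ on $[0,\pi/4]$ is a point the paper leaves implicit. That check is what justifies substituting \cref{lem:interp_dist_AB} for \cref{lem:interpolation_state}.
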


\begin{proof}
    The argument follows \cref{cor:bgs_lower_bound}, except using \cref{lem:two-photon_overlap} and states $\ket{A} \in \zeromeanbosonicgaussianset$ as in \cref{eq:hard_A_zero-mean} and
    \[
    \ket{B} = \frac{\ket{2, 0^{n-1}} - \sqrt{q} \ket{A}}{\sqrt{1 - q}}, \quad \text{where } q = \frac{1}{3\sqrt{3}}. \qedhere
    \]
\end{proof}

\paragraph{Coherent states.} Recall that the set of coherent states $\coherentstateset_n$ on $n$ modes consists of all
\[
\ket{\alpha} = e^{-\|\alpha\|^2/2} \exp\mathopen{}\left( \sum_{j=1}^n \alpha_j a_j^\dagger \right) \ket{0^n}, \quad \alpha \in \C^n.
\]
We exhibit a very simple state which has squared overlap less than $1/2$ with any coherent state, thereby satisfying the conditions of \cref{lem:interp_dist_AB}.

\begin{lemma}\label{lem:one-photon_overlap}
    The one-boson state $\ket{X} = \ket{1, 0^{n-1}}$ satisfies
    \[
    \sup_{\alpha \in \C^n} \abs{\braket{\alpha | X}}^2 = e^{-1} < \frac{1}{2},
    \]
    which is attained by $\alpha = (1, 0, \ldots, 0)$.
\end{lemma}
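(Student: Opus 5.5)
The plan is to compute the overlap of an arbitrary $n$-mode coherent state with $\ket{1,0^{n-1}}$ in closed form and then maximize it over $\alpha\in\C^n$.

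\textbf{Step 1: reduce to one mode.} Since $\ket{\alpha}=\ket{\alpha_1}\otimes\cdots\otimes\ket{\alpha_n}$ is a product state, the overlap factorizes:
\[
\braket{\alpha|X}=\braket{\alpha_1|1}\prod_{j=2}^n\braket{\alpha_j|0}.
\]

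\textbf{Step 2: evaluate each factor.} Read the amplitudes directly off the Fock expansion $\ket{\alpha_j}=e^{-|\alpha_j|^2/2}\sum_k \alpha_j^k\ket{k}/\sqrt{k!}$. This gives
\[
|\braket{\alpha_1|1}|^2=|\alpha_1|^2e^{-|\alpha_1|^2},\qquad |\braket{\alpha_j|0}|^2=e^{-|\alpha_j|^2}.
\]
Hence
\[
|\braket{\alpha|X}|^2=|\alpha_1|^2\,e^{-\|\alpha\|^2}\le |\alpha_1|^2e^{-|\alpha_1|^2},
\]
with equality exactly when $\alpha_2=\cdots=\alpha_n=0$.

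\textbf{Step 3: maximize.} Set $x=|\alpha_1|^2\ge 0$. The function $f(x)=xe^{-x}$ has derivative $f'(x)=(1-x)e^{-x}$, so it increases on $[0,1]$, decreases afterward, and attains its maximum $e^{-1}$ at $x=1$. The supremum is therefore attained, for instance at $\alpha=(1,0,\ldots,0)$, and equals $e^{-1}$.

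\textbf{Step 4: compare with $1/2$.} The inequality $e^{-1}<1/2$ is equivalent to $e>2$, which holds.

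No step here is a real obstacle; the only point needing care is the product structure in Step 1, which is what lets the $n$-mode problem collapse to a single-mode calculation.
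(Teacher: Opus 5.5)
Your proposal is correct and takes essentially the same route as the paper: both derive the closed form $\abs{\braket{\alpha|X}}^2 = |\alpha_1|^2 e^{-\|\alpha\|^2}$ and then maximize $xe^{-x}$ at $x=1$. The differences are cosmetic. The paper obtains the amplitude by expanding $\exp(\sum_j \alpha_j^* a_j)$ acting on $\ket{X}$ rather than by the product factorization, and it bounds $|\alpha_1|^2 \le \|\alpha\|^2$ where you bound $e^{-\|\alpha\|^2} \le e^{-|\alpha_1|^2}$.
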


\begin{proof}
    To compute the overlap we consider $\exp\mathopen{}\left( \sum_{j=1}^n \alpha_j^* a_j \right) \ket{X}$. Take the Taylor expansion of the exponential and recognize that the only term that survives is the term linear in $a_1$:
    \[
    \braket{\alpha | X} = \braket{0^n | e^{-\|\alpha\|^2/2} \alpha_1^* a_1 | 1, 0^{n-1}} = e^{-\|\alpha\|^2/2} \, \alpha_1^*.
    \]
    Let $x = \|\alpha\|^2$. The squared magnitude of the above is at most $e^{-\|\alpha\|^2} |\alpha_1|^2 \leq e^{-x} x$, which is maximized when $x = 1$. Indeed, this is achievable when $\alpha$ is a standard unit vector.
\end{proof}

\begin{corollary}
    Let $\coherentstateset_n$ be the class of all $n$-mode coherent states. The lower bound of \cref{thm:generic_testing_lower_bound} applies to $\coherentstateset_n$.
\end{corollary}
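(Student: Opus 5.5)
The plan is to mirror the proof of \cref{cor:bgs_lower_bound}, this time with the coherent-state instance from \cref{lem:one-photon_overlap}. We take $\ket{X} = \ket{1, 0^{n-1}}$. By \cref{lem:one-photon_overlap}, its maximal squared overlap with $\coherentstateset_n$ is $q = e^{-1}$. This supremum is attained by $\ket{A} = \ket{\alpha}$ with $\alpha = (1,0,\ldots,0)$, so the closest element exists and lies in the class, as \cref{lem:interp_dist_AB} requires. We then fix the global phase of $\ket{A}$ so that $\braket{A|X} = \sqrt{q}$; here $\braket{\alpha|X} = e^{-1/2}$ is already real and positive. With this choice we set
\[
\ket{B} = \frac{\ket{1,0^{n-1}} - e^{-1/2}\ket{A}}{\sqrt{1 - e^{-1}}}.
\]

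Since $q = e^{-1} < 1/2$, \cref{lem:interp_dist_AB} applies with $T = \pi/4$. It gives $\dist(\Psi_\theta, \coherentstateset_n) = \sin\theta$ for all $0 \le \theta \le \pi/4$, where $\ket{\Psi_\theta} = \cos\theta\ket{A} + \sin\theta\ket{B}$. This is exactly the conclusion of \cref{lem:interpolation_state}.

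Next we run the proof of \cref{thm:generic_testing_lower_bound} unchanged. Given $0 \le \eps_A < \eps_B \le 1/\sqrt{2}$, we set $\theta_j = \arcsin\eps_j \in [0,\pi/4]$ and $\ket{\psi_j} = \ket{\Psi_{\theta_j}}$. The fidelity is $F = \cos^2(\theta_B - \theta_A)$, which gives $\log(1/F) \le 4(\eps_B - \eps_A)^2$. \cref{prop:helstrom} with $\delta = 1/3$ then yields $N \ge \log(9/8)/(4(\eps_B-\eps_A)^2)$.

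The only point that needs checking is that the master theorem uses \cref{lem:interpolation_state} solely through the distance identity $\dist(\Psi_\theta,\calC) = \sin\theta$ on $[0,\pi/4]$ and the orthonormality of $\ket{A}$ and $\ket{B}$. Both are supplied by \cref{lem:interp_dist_AB}, so there is no real obstacle. The stronger hypothesis $\abs{\braket{A|\phi}} + \abs{\braket{B|\phi}} \le 1$ is never needed.
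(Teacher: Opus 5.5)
Your proof is correct and is essentially the paper's argument: the paper also takes $\ket{X}=\ket{1,0^{n-1}}$, $\ket{A}=\ket{(1,0,\ldots,0)}$ and the same $\ket{B}$, invokes \cref{lem:interp_dist_AB} with $q=e^{-1}\le 1/2$, and then applies \cref{thm:generic_testing_lower_bound} just as in \cref{cor:bgs_lower_bound}. Your check that the master theorem uses \cref{lem:interpolation_state} only through the distance identity and the orthonormality of $\ket{A},\ket{B}$ matches the justification in the paper's remark following \cref{lem:interp_dist_AB}.
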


\begin{proof}
    The argument follows \cref{cor:bgs_lower_bound}, except using \cref{lem:one-photon_overlap} and states
    \[
    \ket{A} = \ket{(1, 0, \ldots, 0)} \in \coherentstateset_n, \quad \ket{B} = \frac{\ket{1, 0^{n-1}} - e^{-1/2} \ket{A}}{\sqrt{1 - e^{-1}}}. \qedhere
    \]
\end{proof}

\bibliographystyle{alphaurl}
\bibliography{refs}
\addcontentsline{toc}{section}{References}

\end{document}